\theoremstyle{plain}
\newtheorem{theorem}{Theorem}
\newtheorem{proposition}[theorem]{Proposition}
\newtheorem{lemma}[theorem]{Lemma}
\newtheorem{definition}[theorem]{Definition}
\newtheorem{assumption}{Assumption}
\theoremstyle{remark}
\newtheorem{remark}[theorem]{Remark}
\newcommand{\ud}{\mathrm{d}} 
\newcommand{\R}{\mathbb{R}}  
\newcommand{\Z}{\mathbb{Z}}  
\newcommand{\N}{\mathbb{N}}  
\renewcommand{\P}{\mathbb{P}} 
\newcommand{\E}{\mathbb{E}}  
\newcommand{\uarg}{\;\cdot\;} 
\newcommand{\charfun}[1]{\mathbf{1}\left(#1\right)} 
\newcommand{\tv}{\mathrm{tv}}
\newcommand{\Sp}{\mathbb{S}}  
\newcommand{\X}{\mathbb{X}}  
\newcommand{\var}{\mathrm{var}} 
\newcommand{\cov}{\mathrm{cov}} 
\newcommand{\W}[1]{\mathbb{W}_{#1}}
\newcommand{\angles}[1]{\langle #1 \rangle}
\newcommand{\braces}[1]{\lbrace #1 \rbrace}
\newcommand{\bigbraces}[1]{\big\lbrace #1 \big\rbrace}
\newcommand{\Bigbraces}[1]{\Big\lbrace #1 \Big\rbrace}
\newcommand{\biggbraces}[1]{\bigg\lbrace #1 \bigg\rbrace}
\newcommand{\floor}[1]{\lfloor #1 \rfloor}
\newcommand{\C}{^c}
\newsavebox{\topleftimage}
\newsavebox{\bottomleftimage}
\newsavebox{\bigimage}
\newcounter{experiment}
\newcommand{\experiment}{
  \refstepcounter{experiment}
  \subsubsection*{Experiment \theexperiment}
}
\title[I-SIR with adaptive number of proposals]{Iterated sampling importance resampling with adaptive number of proposals}
\author{Pietari Laitinen$^1$}
\author{Matti Vihola$^1$}
\address{$^1$ Department of Mathematics and Statistics, University of Jyväskylä, Finland}
\begin{document}

\begin{abstract}
   Iterated sampling importance resampling (i-SIR) is a Markov chain Monte Carlo (MCMC) algorithm which is based on $N$ independent proposals. As $N$ grows, its samples become nearly independent, but with an increased computational cost. We discuss a method which finds an approximately optimal number of proposals $N$ in terms of the asymptotic efficiency. The optimal $N$ depends on both the mixing properties of the i-SIR chain and the (parallel) computing costs. Our method for finding an appropriate $N$ is based on an approximate asymptotic variance of the i-SIR, which has similar properties as the i-SIR asymptotic variance, and a generalised i-SIR transition having fractional `number of proposals.' These lead to an adaptive i-SIR algorithm, which tunes the number of proposals automatically during sampling. Our experiments demonstrate that our approximate efficiency and the adaptive i-SIR algorithm have promising empirical behaviour. We also present new theoretical results regarding the i-SIR, such as the convexity of asymptotic variance in the number of proposals, which can be of independent interest.
\end{abstract}
   
\maketitle

\section{Introduction}\label{sec:introduction}

Importance sampling is an ubiquitous method in Bayesian statistics which can be used to estimate expectations of the form
$$
   \pi(f) = \E_\pi[f(X)] = \int_{\X} f(x) \pi(x) \ud x,
$$
where $\pi$ is a probability density on a state space $\X$,\footnote{Usually, $\X\subset\R^d$ and the integral is with respect to the Lebesgue measure, but the method works with any underlying space $\X$ equipped with a $\sigma$-finite dominating measure `$\ud x$'.} and $f:\X\to\R$ is a (measurable) test function. In importance sampling, $N$ independent samples are drawn from a `proposal' distribution $q$ which dominates $\pi$,\footnote{That is, $\pi$ is absolutely continuous with respect to $q$, or $q(x)=0$ implies $\pi(x)=0$.} and the samples $X_i$ are associated with importance weights $W_i \propto \pi(X_i)/q(X_i)$.

Iterated sampling importance resampling (i-SIR) algorithm, which was proposed in \citep{tjelmeland} and generalised to a conditional sequential Monte Carlo algorithm in \citep{andrieu-doucet-holenstein}, is a Markov chain Monte Carlo (MCMC) method, which uses the same ingredients as importance sampling, but is iterated $n$ times with fixed (moderate) $N$, and leads to an (approximate and correlated) sample $X_1,\ldots,X_n\sim \pi$. 

In contrast with importance sampling, the i-SIR produces an unweighted sample, which can be more convenient in some applications. 
The i-SIR can also be used as `burn-in' to (self-normalised) importance sampling in order to mitigate its bias; this aspect was investigated in \citep{cardoso-samsonov-thin-moulines-olsson}. The i-SIR can also be readily used within a component-wise MCMC sampler; for instance, in a latent variable model, i-SIR can serve as a Markov update for latent variables given hyperparameters. In this context, there is another strategy to use importance sampling within MCMC, known as the `grouped independent Metropolis--Hastings' (GIMH) \citep{andrieu-roberts}. It is known to suffer from problematic mixing behaviour, which does not improve as $N$ increases, contrary to the i-SIR \citep[]{andrieu-vihola-pseudo,andrieu-lee-vihola}. 

We present an adaptive i-SIR algorithm, which automatically selects the number of proposals $N$, which minimises the cost weighted approximate asymptotic variance \citep{glynn-whitt}. We start by presenting the i-SIR and its continuous-number-of-proposals relaxation (Section \ref{sec:isir}) and summarise some of its theoretical properties (Section \ref{sec:isir-properties}). In particular, Theorem \ref{thm:isir-asvar-properties} guarantees that the asymptotic variance of i-SIR is strictly decreasing and convex with respect to the number of proposals. We then discuss its efficiency and present an approximate efficiency in Section \ref{sec:efficiency}, which our adaptive algorithm discussed in Section \ref{section:adaptation} aims to minimise. We prove a strong law of large numbers for our adaptive i-SIR algorithm (Theorem \ref{thm:adaptive-isir-slln}), which guarantees consistency of estimates based on its output.

Sections \ref{section:inner-product-convex-sequence}--\ref{sec:approximate} are devoted to technical results required by our theoretical results. Section \ref{section:inner-product-convex-sequence} is about the behaviour of stationary covariances and acceptance rates, which lead to the convexity of the asymptotic variance in Section \ref{sec:asymptotic-variance}. Section \ref{sec:upper-lower-bounds} presents 
upper and lower bounds for i-SIR acceptance rate and asymptotic variance, and Section \ref{sec:approximate} discusses i-SIR in discrete state spaces, and its alternative approximation.

Our empirical findings in Section \ref{sec:experiments} suggest that our approximate efficiency criterion can be quite robust, but also exposes some problematic cases. Our adaptive algorithm shows promising behaviour in our experiments. Section \ref{sec:discussion} concludes with a discussion, including possible directions in which our algorithm could be extended. 

\section{The i-SIR algorithm}
\label{sec:isir}

The i-SIR algorithm defines a Markov chain Monte Carlo method targeting a probability density $\pi(x)$. It uses samples from a proposal probability density $q(x)$, which is assumed to dominate $\pi(x)$, that is, $q(x)=0$ implies $\pi(x)=0$. Then, the (unnormalised) importance weights $w(x) = c\pi(x)/q(x) \in [0,\infty)$ are well-defined for all $x$, where we use the convention $0/0=0$. The value of the constant $c>0$ need not be known; typically it stems from that we can evaluate only the unnormalised target density $\pi_u(x) = c\pi(x)$, in which case $c = (\int \pi_u(x) \ud x)^{-1}$ is the unknown normalising constant.

Algorithm \ref{alg:isir} summarises the iteration $k$ of i-SIR, that is, the Markov transition $X_{k-1}\to X_k$. In Algorithm \ref{alg:isir} and hereafter, we use the notation $i{:}j$ to denote a range of integers, that is, $i{:}j = (i,i+1,\ldots,j-1,j)$ to mean integers from $i$ to $j$ and $I\sim \mathrm{Categorical}(w^{1:N})$ means a categorical distribution with weights $w^1,\ldots,w^N$, that is, $\P(I=k) = w^k$.
\begin{algorithm}
   \caption{$\textsc{iSIR}(X_{k-1}, N)$}
   \label{alg:isir}
   \begin{algorithmic}[1]
      \State Draw independent $Y_k^{2:N} \sim q$ and set $Y_k^1 = X_{k-1}$
      \State Calculate weights $W_k^i = w(X_k^i)/\sum_{j=1}^N w(X_k^j)$
      \State Sample $I \sim \mathrm{Categorical}(W_k^{1:N})$
      \State Return $X_k = Y_k^I$
   \end{algorithmic}
\end{algorithm}

The i-SIR algorithm does not involve an explicit `accept/reject' decision; instead, sampling $I=1$ leads to a `rejection' because $X_k = X_{k-1}$. In the case $N=2$, the i-SIR coincides with independence sampler having Barker's acceptance rate \cite{barker}. We focus on the i-SIR of Algorithm \ref{alg:isir}, but note that the algorithm can be extended to cover dependent proposals \citep[e.g.][]{mendes-scharth-kohn,karppinen-vihola} with similar flavour as some other multiple-try Metropolis algorithms \citep[see, e.g., the review][]{martino-review}. 

It is not hard to check that the i-SIR algorithm has the following transition probability:
\begin{equation}\label{def:isir-transition-probability}
  P_N(z_1,A) = \int \bigg( \prod_{n=2}^N q( z_n) \bigg) \sum_{i=1}^N \frac{w(z_i)}{\sum_{j=1}^N w(z_j)} \charfun{z_i\in A} \ud z_2 \cdots \ud z_n.
\end{equation}

We present a simple `continuous relaxation' of the i-SIR algorithm, when we discuss the adaptation of the number of proposals in Section \ref{section:adaptation}. It is defined for a `fractional number of proposals' $\lambda \ge 1$ (although later we will focus on $\lambda\ge 2$ only). Namely, let $\lambda\ge 1$ and $N$ be an integer such that $N \le  \lambda < N+1$, then 
$$
  P_\lambda(x,A) = \beta P_{N}(x,A) + (1-\beta) P_{N+1}(x,A),
$$
where the interpolation constant $\beta = N+1-\lambda \in (0,1]$.  Algorithm \ref{alg:continuous-isir} presents a particular way of sampling from $P_\lambda$, which will be useful later.

\begin{algorithm}
   \caption{$\textsc{iSIR}(X_{k-1}, \lambda)$}
   \label{alg:continuous-isir}
   \begin{algorithmic}[1]
      \State Draw independent $Y_k^{2:\bar{N}_k} \sim q$ where $\bar{N}_k = \lfloor \lambda \rfloor +1$
      \State With probability $\beta_k = \bar{N}_k - \lambda$ set $N_k=\bar{N}_k-1$, else set $N_k=\bar{N}_k$,
       and calculate weights $W_k^i = w(X_k^i)/\sum_{j=1}^{N_k} w(X_k^j)$
      \State Sample $I \sim \mathrm{Categorical}(W_k^{1:N_k})$
      \State Return $X_k = Y_k^I$
   \end{algorithmic}
\end{algorithm}

\section{Properties of the i-SIR}
\label{sec:isir-properties}
It is well-known that the i-SIR algorithm $P_N$ is reversible with respect to $\pi$ \cite[e.g.][]{tjelmeland,andrieu-lee-vihola}, and therefore the generalised i-SIR $P_\lambda$, as a convex combination of $P_N$ and $P_{N+1}$ is $\pi$-reversible. We assume the following to hold throughout Sections \ref{sec:isir-properties}--\ref{section:adaptation}:

\begin{assumption}
  \label{a:bounded-weights}
  The weight function is bounded:
  $$
  \hat{w} = \sup_{x\in\mathbb{X}} w(x) < \infty.
  $$
\end{assumption}

This assumption requires, roughly speaking, the target density $\pi$ to have lighter tails than the proposal $q$, which can often be guaranteed by a suitable choice of $q$. It is known that i-SIR is slowly mixing (subgeometric) essentially if \ref{a:bounded-weights} does not hold, and if \ref{a:bounded-weights} holds, then the i-SIR is guaranteed to be well-behaved (uniformly ergodic) \citep[cf.][]{andrieu-lee-vihola}. In fact, for large $\lambda$, $P_\lambda$ corresponds asymptotically to independent sampling from $\pi$. Hereafter, we denote by $\pi$ the probability measure corresponding to the density $\pi$, that is, $\pi(A) = \int_A \pi(x) \ud x$.
\begin{proposition}
   \label{prop:isir-minorisation}
   The i-SIR transition probability $P_\lambda$ satisfies the following strong minorisation condition:
   $$
      P_\lambda(x,A) \ge \frac{\lambda - 1}{2\hat{w} + \lambda - 1} \pi(A).
   $$
\end{proposition}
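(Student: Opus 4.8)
The plan is to reduce everything to the two integer kernels $P_N$ and $P_{N+1}$ that bracket $\lambda$, to prove a sharp minorisation for each integer kernel, and then to check that the convex combination $P_\lambda=\beta P_N+(1-\beta)P_{N+1}$ still dominates the claimed bound. As a preliminary normalisation, I would note that $P_\lambda$ depends on the weights only through the ratios $w(z_i)/\sum_j w(z_j)$, so without loss of generality I may take $\int w(x)q(x)\,\ud x=1$, i.e. $w=\pi/q$. Then $\E_q[w]=1$, $\hat w\ge1$, and $\pi(A)=\int_A w(x)q(x)\,\ud x$; the right-hand side of the stated bound is meaningful only under such a normalisation.

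First, for an integer $N\ge2$ I would establish the \emph{sharper} bound
\[
  P_N(x,A)\ge\frac{N-1}{2\hat w+N-2}\,\pi(A).
\]
Starting from \eqref{def:isir-transition-probability}, I discard the nonnegative $i=1$ (rejection) term and use the exchangeability of the fresh proposals $z_2,\dots,z_N$ to replace the sum over $i=2,\dots,N$ by $(N-1)$ times the $i=2$ term. Conditioning on $z_2=y$ and integrating out $z_3,\dots,z_N$, I then apply Jensen's inequality to the convex map $t\mapsto1/t$: since $\E_q[w]=1$, the sum $\sum_{j=3}^N w(z_j)$ has mean $N-2$, so the inner denominator is bounded by $w(x)+w(y)+(N-2)$, and bounding $w(x),w(y)\le\hat w$ produces the factor $1/(2\hat w+N-2)$. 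Recognising $\int_A w(y)q(y)\,\ud y=\pi(A)$ completes this step.

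Writing $a=2\hat w$ and $h(m)=(m-1)/(a+m-2)$, and letting $M$ be the random number of proposals ($M=N$ with probability $\beta=N+1-\lambda$, else $M=N+1$, so $\E M=\lambda$), the decomposition gives $P_\lambda(x,A)\ge\E[h(M)]\,\pi(A)$, and it remains to prove $\E[h(M)]\ge(\lambda-1)/(a+\lambda-1)$. The main point I want to flag is that this does \emph{not} follow from convexity: both $h$ and the target $g(\lambda)=(\lambda-1)/(a+\lambda-1)$ are concave, so the chord of $h$ lies \emph{below} $h(\lambda)$, and it is precisely the gap between the sharp exponent $a+m-2$ above and the weaker $a+\lambda-1$ claimed that must absorb the discrepancy. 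This is the crux of the argument.

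To settle it I would rewrite $h(m)=1-\tfrac{a-1}{a+m-2}$, reducing the claim to $(a-1)\,\E[(a+M-2)^{-1}]\le a/(a+\lambda-1)$. With $b=a+N-2$ (so $b\ge a$ since $N\ge2$), the two-point expectation equals $(b+\beta)/(b(b+1))$, while $a+\lambda-1=b+2-\beta$; clearing denominators turns the inequality into $ab(b+1)-(a-1)(b+\beta)(b+2-\beta)\ge0$. Expanding, the cross terms in $\beta$ cancel and this collapses to $b(b-a+2)\ge(a-1)\beta(2-\beta)$. Here $b\ge a\ge2$ gives $b(b-a+2)\ge2b\ge2a$, whereas $\beta(2-\beta)=1-(1-\beta)^2\le1$ together with $a-1\ge1$ gives $(a-1)\beta(2-\beta)\le a-1$, so the left side exceeds the right by at least $a+1>0$. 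Finally, the remaining range $\lambda\in[1,2)$, where $N=1$ and $P_1$ is the identity kernel, is immediate: there $\E[h(M)]=(\lambda-1)/(2\hat w)\ge g(\lambda)$ directly.
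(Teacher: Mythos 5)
Your proof is correct, and its first half coincides with the paper's own argument (Lemma \ref{lemma:i-sir-probability-inequality} in Appendix \ref{app:ergodicity-auxiliaries}): the sharp integer bound $P_N(x,A)\ge\frac{N-1}{2\hat{w}+N-2}\pi(A)$ is obtained there exactly as you obtain it, by dropping the $i=1$ term, invoking exchangeability of the fresh proposals, and applying Jensen's inequality; the paper also works under the normalisation $w=\pi/q$ (fixed in Appendix \ref{app:setup}), so your preliminary remark about $\E_q[w]=1$ matches its setup and correctly identifies why the stated constant is meaningful at all. Where you genuinely diverge is the interpolation step, which you rightly flag as the crux. The paper dispatches it in two lines: it weakens the $P_{\floor{\lambda}}$ bound from denominator $2\hat{w}+N-2$ to the common denominator $2\hat{w}+N-1$, so that the convex combination of numerators collapses exactly, $\beta(N-1)+(1-\beta)N=\lambda-1$, giving $P_\lambda(x,A)\ge\frac{\lambda-1}{2\hat{w}+N-1}\pi(A)$, and then uses $N\le\lambda$ to pass to the denominator $2\hat{w}+\lambda-1$. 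You instead retain the exact convex combination $\E[h(M)]$ of the two sharp integer bounds and verify algebraically that it dominates $(\lambda-1)/(2\hat{w}+\lambda-1)$, reducing the claim to the polynomial inequality $b(b-a+2)\ge(a-1)\beta(2-\beta)$, which you settle by crude estimates. Both routes are valid: the paper's denominator relaxation is shorter and needs no case analysis beyond $N=1$, while your computation is sharper in that it lower-bounds the exact interpolant rather than a weakened one, and your observation that concavity of $h$ makes a pure Jensen-type argument insufficient explains precisely why some such device --- the paper's common-denominator trick or your expansion --- cannot be avoided.
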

This result is from \citep{andrieu-lee-vihola} for integer $\lambda$ and detailed in Lemma \ref{lemma:i-sir-probability-inequality} in Appendix \ref{app:ergodicity-auxiliaries} for non-integer $\lambda$.

Our main focus will be on the asymptotic variance of the i-SIR chain, defined for finite-variance functionals. Hereafter, we denote the integral of a (measurable) function $f:\mathbb{X}\to\R$ with respect to $\pi$ by $\pi(f) = \int f(x) \pi(x) \ud x$ and by $L^2(\pi)$ the set of all finite variance functionals with respect to $\pi$, that is, such that $\pi(f^2)  <\infty$, and also define $\var_\pi(f) = \pi(f^2) - \pi(f)^2$.

\begin{definition}
Fix $\lambda\in(1,\infty)$ and assume that $(X_k)_{k\ge 0}$ is the stationary i-SIR chain following $P_\lambda$, that is, $X_0 \sim \pi$ and $X_k\mid (X_{k-1}=x) \sim P_\lambda(x,\uarg)$. The asymptotic variance of $f\in L^2(\pi)$ is:
$$
   \var(P_\lambda,f) = \var\big(f(X_0)\big) + 2\sum_{k=1}^\infty \cov\big(f(X_0), f(X_k)\big).
$$
\end{definition}
Thanks to the uniform ergodicity of $P_\lambda$ ensured by Proposition \ref{prop:isir-minorisation},  $\var(P_\lambda,f)$ is always well-defined and finite. Moreover, a central limit theorem holds \citep[e.g.][Theorem 9]{jones-survey}:
\begin{theorem}
   \label{thm:clt}
   For $\lambda\in(1,\infty)$, let $X_k$ be the i-SIR chain with an arbitrary initial state $X_0=x$. Then, for any $f\in L^2(\pi)$:
$$
\frac{1}{\sqrt{n}} \sum_{k=1}^n \big[f(X_k) - \pi(f)\big] \xrightarrow{n\to\infty} N\big(0, \var(P_\lambda, f)\big).
$$
\end{theorem}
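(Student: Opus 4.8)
The plan is to derive the central limit theorem from two ingredients already in hand: the strong one-step minorisation of Proposition \ref{prop:isir-minorisation}, and the $\pi$-reversibility of $P_\lambda$ noted at the start of this section. Writing $\epsilon = (\lambda-1)/(2\hat{w} + \lambda - 1) \in (0,1]$, the bound $P_\lambda(x,A) \ge \epsilon\,\pi(A)$ holds with the whole state space $\X$ serving as a one-step small set. By the standard Doeblin coupling argument this yields uniform ergodicity with a geometric rate, namely $\sup_{x} \|P_\lambda^n(x,\uarg) - \pi\|_{\tv} \le (1-\epsilon)^n$, with the bound uniform over the starting point $x$.

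The second step is to upgrade this total-variation decay to a spectral gap on $L^2_0(\pi) = \{f\in L^2(\pi) : \pi(f)=0\}$. Because $P_\lambda$ is self-adjoint on $L^2(\pi)$ (by reversibility), the geometric decay in total variation transfers to an operator-norm bound $\|P_\lambda^n\|_{L^2_0(\pi)} \le \rho^n$ for some $\rho\in(0,1)$; for a self-adjoint contraction this is equivalent to $\|P_\lambda\|_{L^2_0(\pi)} = \rho < 1$, i.e.\ a genuine spectral gap $1-\rho>0$.

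The third step is to invoke the reversible (Kipnis--Varadhan type) central limit theorem, which is precisely the content of \citep[Theorem 9]{jones-survey}. Assuming without loss of generality that $\pi(f)=0$, the stationary autocovariances are $\cov\big(f(X_0),f(X_k)\big) = \langle f, P_\lambda^k f\rangle_\pi$, and the spectral gap gives $|\langle f, P_\lambda^k f\rangle_\pi| \le \rho^k \var_\pi(f)$. Hence the defining series for $\var(P_\lambda,f)$ converges absolutely and is finite, and the CLT with this asymptotic variance holds for the stationary chain. To obtain the statement for an arbitrary deterministic start $X_0=x$, I would note that uniform ergodicity makes the limit insensitive to the initial law: coupling the $x$-started chain with a $\pi$-started copy via the minorisation shows that the two $n^{-1/2}$-scaled partial sums differ by $o_{\P}(1)$, so the limiting normal law is unchanged.

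The main obstacle is the second step: passing from total-variation ergodicity to an $L^2(\pi)$ spectral gap valid on all of $L^2(\pi)$, rather than only for bounded functionals. This is exactly where reversibility is indispensable---self-adjointness is what lets the total-variation rate control the operator norm on $L^2_0(\pi)$, and in turn makes the autocovariance series summable for every $f\in L^2(\pi)$ without any moment assumption beyond finite variance.
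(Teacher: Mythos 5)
Your proof is correct and follows essentially the same route as the paper: the paper's own argument is simply that Proposition \ref{prop:isir-minorisation} yields uniform ergodicity of $P_\lambda$, whence $\var(P_\lambda,f)$ is finite and the CLT follows from \citep[Theorem 9]{jones-survey}, which is exactly the result you invoke. The additional detail you supply (the Doeblin coupling rate, the $L^2_0(\pi)$ spectral gap via self-adjointness, and the coupling argument removing dependence on the initial state $x$) is standard machinery inside that cited theorem rather than a genuinely different approach.
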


Theorem \ref{thm:clt} shows that $\var(P_\lambda, f)$ quantifies the statistical (per sample) efficiency of estimating $\pi(f)$ with a Markov chain following $P_\lambda$. While we do not know $\var(P_\lambda, f)$ in general, we know that it has certain properties.

\begin{theorem}
   \label{thm:isir-asvar-properties}
   Let $\lambda\in(1,\infty)$ and $f\in L^2(\pi)$ such that $\var_\pi(f)>0$. Then,
   \begin{enumerate}[(i)]
      \item \label{item:asvar-upper-bound} $\var_\pi(f) \le \var(P_\lambda, f) \le \frac{4 \hat{w} + \lambda - 1}{\lambda - 1} \var_\pi(f)$.
      \item \label{item:asvar-convexity} $\lambda\mapsto \var(P_\lambda, f)$ is strictly convex and decreasing on $(1,\infty)$.
   \end{enumerate}
\end{theorem}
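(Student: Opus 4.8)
\emph{Setup and spectral representation.} The plan is to pass to the spectral representation of the asymptotic variance and to reduce everything to operator-theoretic properties of $P_\lambda$ acting on $L^2(\pi)$. Without loss of generality take $\pi(f)=0$ and work on $L^2_0(\pi)$. Since $P_\lambda$ is $\pi$-reversible it is self-adjoint, and Proposition~\ref{prop:isir-minorisation} bounds its operator norm on $L^2_0(\pi)$ by $1-\epsilon_\lambda$ with $\epsilon_\lambda=(\lambda-1)/(2\hat{w}+\lambda-1)$: writing the minorisation as $P_\lambda=\epsilon_\lambda\Pi+(1-\epsilon_\lambda)R_\lambda$ with the independent kernel $\Pi(x,\uarg)=\pi(\uarg)$ and a Markov residual $R_\lambda$, one has $\Pi f=0$ for centred $f$, so $P_\lambda f=(1-\epsilon_\lambda)R_\lambda f$ and $\|P_\lambda f\|_\pi\le(1-\epsilon_\lambda)\|f\|_\pi$. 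I would also record that $P_\lambda$ is a \emph{positive} operator: generating the stationary pair $(X_0,X_1)$ by drawing a configuration $Y^{1:N}$ with one slot from $\pi$ and the remaining $N-1$ from $q$ and setting $X_0=Y^{J_0}$, $X_1=Y^{J_1}$ with $J_0,J_1$ conditionally i.i.d.\ given $Y^{1:N}$ with $\P(J=k\mid Y^{1:N})\propto w(Y^k)$, one obtains $\langle f,P_Nf\rangle_\pi=\E[f(X_0)f(X_1)]=\E[G_N^2]\ge 0$ with $G_N=\sum_k w(Y^k)f(Y^k)/\sum_j w(Y^j)$, and positivity of $P_\lambda$ follows by convex combination. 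Hence the spectral measure $E_f$ of $P_\lambda$ associated with $f$ is supported on $[0,1-\epsilon_\lambda]$ with total mass $\var_\pi(f)$, and $\var(P_\lambda,f)=\int \frac{1+t}{1-t}\,E_f(\ud t)$.

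\emph{Part (\ref{item:asvar-upper-bound}).} The integrand $t\mapsto\frac{1+t}{1-t}$ is increasing, so on the support $[0,1-\epsilon_\lambda]$ it lies between its value $1$ at $t=0$ and its value $\frac{2-\epsilon_\lambda}{\epsilon_\lambda}$ at $t=1-\epsilon_\lambda$. Integrating against $E_f$ and using $E_f([0,1-\epsilon_\lambda])=\var_\pi(f)$ gives $\var_\pi(f)\le\var(P_\lambda,f)\le\frac{2-\epsilon_\lambda}{\epsilon_\lambda}\var_\pi(f)$, and a direct simplification shows $\frac{2-\epsilon_\lambda}{\epsilon_\lambda}=\frac{4\hat{w}+\lambda-1}{\lambda-1}$, which is the claimed bound.

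\emph{Part (\ref{item:asvar-convexity}), reduction.} I would write $\var(P_\lambda,f)=2\langle f,(I-P_\lambda)^{-1}f\rangle_\pi-\var_\pi(f)$, valid since $\|P_\lambda\|<1$ on $L^2_0(\pi)$. Monotonicity then reduces to the operator ordering $P_N\succeq P_{N+1}$ on $L^2_0(\pi)$, equivalently that $v_N(f):=\langle f,P_Nf\rangle_\pi$ is decreasing for every centred $f$: the interpolation makes $\lambda\mapsto P_\lambda$ operator-decreasing on $(1,\infty)$, so $\lambda\mapsto(I-P_\lambda)^{-1}$ is operator-decreasing (the inverse is antitone on positive operators) and $\var(P_\lambda,f)$ is decreasing. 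For convexity I would proceed in two steps. On each interval $[N,N+1]$ the map $\lambda\mapsto I-P_\lambda$ is affine and positive definite, so operator convexity of $X\mapsto X^{-1}$ makes $\lambda\mapsto\langle f,(I-P_\lambda)^{-1}f\rangle_\pi$ convex there. At each integer junction it remains to check that the right derivative dominates the left one, which a short computation reduces to $\langle g_N,(P_{N+1}-2P_N+P_{N-1})g_N\rangle_\pi\ge 0$ with $g_N=(I-P_N)^{-1}f$; this follows from the operator convexity $P_{N+1}-2P_N+P_{N-1}\succeq 0$, equivalently that $N\mapsto v_N(f)$ is convex.

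\emph{The main obstacle.} Everything thus reduces to the two scalar statements that, for every centred $f$, the lag-one autocovariance $v_N(f)=\E[G_N^2]$ — the variance of the self-normalised estimator $G_N$ built from one $\pi$-sample and $N-1$ $q$-samples — is strictly decreasing and strictly convex in $N$. These are precisely the results I would develop in Sections~\ref{section:inner-product-convex-sequence} and~\ref{sec:asymptotic-variance}, and I expect them to be the hard part. The difficulty is the self-normalising denominator $S=\sum_j w(Y^j)$, which couples the proposals; the device I would use is to linearise it through $1/S=\int_0^\infty e^{-tS}\,\ud t$, after which the number of proposals enters only via the factor $\phi(t)^{N-1}$, where $\phi(t)=\E_{q}[e^{-tw(Y)}]\in(0,1]$. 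Since $\phi(t)^{N-1}$ is manifestly decreasing and log-convex, hence convex, in $N$, the desired monotonicity and convexity of $v_N(f)$ should follow after handling the numerator, with strictness coming from $\var_\pi(f)>0$ together with the weights not being $\pi$-a.s.\ constant.
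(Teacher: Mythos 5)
Your part~(\ref{item:asvar-upper-bound}) is correct. The two-draw representation of the stationary pair (one slot from $\pi$, the rest from $q$, two conditionally i.i.d.\ selections) does give $\angles{f|P_Nf}_\pi=\E[G_N^2]\ge 0$, the minorisation does confine the spectrum of $P_\lambda$ on $L_0^2(\pi)$ to $[0,1-\epsilon_\lambda]$, and integrating $(1+t)/(1-t)$ against the spectral measure yields exactly the stated bounds; this is an unpacked, self-contained version of what the paper does via Lemma~\ref{lemma:lower-and-upper-bound-for-asymptotic-variance} and Lemma~\ref{lemma:i-sir-probability-inequality}. Your reduction in part~(\ref{item:asvar-convexity}) is also sound and is a genuinely different route from the paper's: you invoke antitonicity of the inverse on positive operators for monotonicity, operator convexity of $X\mapsto X^{-1}$ on each interval $[N,N+1]$, and a one-sided-derivative comparison at integer junctions, whereas the paper (Theorem~\ref{theorem:asymptotic-variance-results}) proves the same operator facts elementarily through the variational identity $\var(P_\lambda,f)=2\sup_{g}[2\angles{f|g}_\pi-\mathcal{E}_\lambda(g)]-\angles{f|f}_\pi$, which additionally delivers \emph{strict} convexity via uniqueness of the maximiser (Lemma~\ref{lemma:not-equal-inverse-general-corollary}). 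Both architectures funnel into the same two scalar statements: that $N\mapsto\angles{g|P_Ng}_\pi$ is (strictly) decreasing and sequentially convex, i.e.\ the paper's Lemmas~\ref{lemma:i-sir-inner-product-differences} and~\ref{lemma:i-sir-inner-product-differences-3}.

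That is where your proposal has a genuine gap: these two statements carry essentially all of the mathematical content of the theorem, and your Laplace-transform sketch does not establish them, because its central claim is false as stated. With $Z_1,\dots,Z_N$ i.i.d.\ from $q$, $T_N=\sum_i w(Z_i)f(Z_i)$ and $S_N=\sum_j w(Z_j)$, linearising $1/S_N=\int_0^\infty e^{-tS_N}\,\ud t$ gives
$$
\E\big[T_N^2e^{-tS_N}\big]=N\,a(t)\,\phi(t)^{N-1}+N(N-1)\,b(t)^2\,\phi(t)^{N-2},\qquad a(t)=\E_q\big[w^2f^2e^{-tw}\big],\quad b(t)=\E_q\big[wfe^{-tw}\big],
$$
so $N$ does \emph{not} enter only through $\phi(t)^{N-1}$: the cross term carries the factor $(N-1)\phi(t)^{N-2}$, which is pointwise \emph{increasing} in $N$ whenever $\phi(t)>(N-1)/N$, i.e.\ for all $t$ in a neighbourhood of $0$. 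Hence the integrand is neither pointwise decreasing nor pointwise convex in $N$, and the "manifest" step collapses; monotonicity and convexity can only emerge from cancellation across the $t$-integral. This can in fact be repaired --- e.g.\ integrate by parts using $(N-1)\phi^{N-2}\phi'=(\phi^{N-1})'$ together with $b(0)=\pi(f)=0$, and then prove positivity of the resulting quadratic form $a-\tfrac{\ud}{\ud t}\big(b^2/\phi'\big)$ --- but that positivity is a substantive lemma of the same calibre as the paper's exchangeability-plus-Sedrakyan arguments (Lemmas~\ref{lemma:corollary-of-sedrakyan} and~\ref{lemma:differences}), and it is absent from your write-up. A secondary flaw: your proposed source of strictness ($\var_\pi(f)>0$ \emph{and} the weights not being $\pi$-a.s.\ constant) is not the right dichotomy. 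Strict convexity must hold for every non-constant $f$ even when $\pi=q$, where the weights are constant but $\var(P_\lambda,f)=\frac{1+b_\lambda}{1-b_\lambda}\var_\pi(f)$ is still strictly convex (Proposition~\ref{proposition:asymptotic-variance-connection}); the correct condition is simply $\bar f\in L_{0,+}^2(\pi)$, as in the paper, and any repaired argument must not lose strictness in the constant-weight case.
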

Theorem \ref{thm:isir-asvar-properties} \eqref{item:asvar-upper-bound} is similar to the bound in \citep{andrieu-lee-vihola}, and follows from Proposition \ref{prop:isir-minorisation}; see Lemma \ref{lemma:lower-and-upper-bound-for-asymptotic-variance}.Theorem \ref{thm:isir-asvar-properties} \eqref{item:asvar-convexity} is to our knowledge new and given in Theorem \ref{theorem:i-sir-asymptotic-variance}.

\section{Overall efficiency and approximation based on average acceptance rate}
\label{sec:efficiency}

The asymptotic variance measures the statistical efficiency of $P_\lambda$, but because sampling $P_\lambda$ involves simulation of $O(\lambda)$ samples and evaluation of the weight function at each simulated sample, the number of computer operations is $O(\lambda)$. This affects the overall efficiency of $P_\lambda$. However, thanks to the independence of the samples, the sampling and evaluation of the weight can be performed in parallel, so number of computer operations might not be directly useful.

We will assume that $c(\lambda)$ is a function that defines the `cost' of (time spent on) one iteration of $P_\lambda$. In particular, we will focus on a simplified cost of the form $c(\lambda) = a + b\lambda$, which assumes an initial (`parallelisation overhead') cost $a>0$ and a term which increases in terms of $\lambda$ (`parallel computing cost'), with rate $b>0$; see \citep{gustafson} for other ways to characterise parallel computing speedups.

When we take into account the $\lambda$-dependent cost $c(\lambda)$, we would ultimately want to minimise the so-called inverse relative efficiency $L(\lambda) = c(\lambda) \var(P_\lambda, f)$ in terms of a test function $f$ of interest \citep[cf.][]{glynn-whitt}. However, this is difficult in practice, for two reasons. The asymptotic variance $\var(P_\lambda, f)$ is unknown, and notoriously difficult to estimate. Furthermore, we often have multiple test functions $f_1,\ldots,f_m$ which we want to estimate, which will generally have different asymptotic variances.

The upper bound of the asymptotic variance \eqref{item:asvar-upper-bound} might be a possible proxy to optimise. However, while the \emph{existence} of $\hat{w}<\infty$ in \ref{a:bounded-weights} might be known, its value is rarely known. It is also unclear how well the upper bound matches the actual asymptotic variance. (We present some numerical comparisons in Section \ref{sec:experiments} regarding this.)

Instead of the asymptotic variance upper bound, we consider the following approximation of the i-SIR transition inspired by 
Proposition \ref{prop:isir-minorisation}:

\begin{definition}
   \label{def:approximate-isir}
   The approximate i-SIR transition $\tilde{P}_\lambda$ of $P_\lambda$ is defined as
   $$
   \tilde{P}_\lambda(x,A) = \big(1-\epsilon(\lambda)\big) \pi(A) + \epsilon(\lambda) \delta_x(A),
   $$
   where the average holding probability $\epsilon(\lambda)$ is
   $$ 
   \epsilon(\lambda) = \int \pi(x) P_\lambda(x, \{x\}) \ud x.
   $$
\end{definition}

In words, $\tilde{P}_\lambda$ corresponds to independent sampling with probability $1-\epsilon(\lambda)$ and `rejection' (not moving) with probability $\epsilon(\lambda)$. Proposition \ref{prop:isir-minorisation} guarantees that we can write $P_\lambda(x,A) = (1-\delta(\lambda))\pi(A) + \delta(\lambda) Q_\lambda(x,A)$, where $\delta(\lambda) \to 0$ as $\lambda\to \infty$, and the `residual' kernel $Q_\lambda(x,A)$ is $\pi$-reversible. This suggests that at least for $\lambda$ large, $\tilde{P}_\lambda \approx P_\lambda$. 

The approximate transition probability $\tilde{P}_\lambda$ is simple enough to admit an explicit form of asymptotic variance: 
\begin{proposition}
\label{proposition:approximate-i-sir-asymptotic-variance}
For any $f\in L^2(\pi)$ such that $\var_\pi(f)>0$
\begin{enumerate}[(i)]
\item \label{item:approx-asvar} $\displaystyle
\var(\tilde{P}_\lambda, f) = \frac{1+\epsilon(\lambda)}{1-\epsilon(\lambda)} \var_\pi(f).
$
\item \label{item:approx-bounds} $\var_\pi(f) \le \var(\tilde{P}_\lambda, f) \le \frac{4 \hat{w} + \lambda - 1}{\lambda - 1} \var_\pi(f)$
\item \label{item:approx-decreasing-and-convex} $\lambda \mapsto \var(\tilde{P}_\lambda, f)$ is strictly convex and decreasing.
\end{enumerate}
\end{proposition}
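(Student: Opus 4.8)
The plan is to treat the three claims in order, deriving everything from the explicit two-point structure of $\tilde P_\lambda$ and from the behaviour of the average holding probability $\epsilon(\lambda)$.

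For \eqref{item:approx-asvar} I would compute the lag-$k$ autocovariances of the stationary chain directly. Writing $g = f - \pi(f)$, the key observation is that under $\tilde P_\lambda$ each step is, independently of the current state, a `hold' with probability $\epsilon(\lambda)$ and otherwise a fresh independent draw from $\pi$. Hence $X_k = X_0$ exactly when all $k$ steps are holds, an event of probability $\epsilon(\lambda)^k$ independent of $X_0$; on the complement some step resamples, so by $\pi$-invariance $X_k\sim\pi$ is independent of $X_0$ and contributes $\E[g(X_0)]\,\E[g(X_k)\mid\cdots]=0$ because $\pi(g)=0$. Thus $\cov(f(X_0),f(X_k)) = \epsilon(\lambda)^k\var_\pi(f)$, and summing the geometric series gives
\[ \var(\tilde P_\lambda,f)=\Big(1+2\sum_{k=1}^\infty \epsilon(\lambda)^k\Big)\var_\pi(f)=\frac{1+\epsilon(\lambda)}{1-\epsilon(\lambda)}\var_\pi(f). \]
Equivalently, one may note that $\tilde P_\lambda g = \epsilon(\lambda)\,g$ for centred $g$ and sum $\langle g,\tilde P_\lambda^k g\rangle_\pi$.

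For \eqref{item:approx-bounds}, set $\phi(t) = \frac{1+t}{1-t}$, so that \eqref{item:approx-asvar} reads $\var(\tilde P_\lambda,f)=\phi(\epsilon(\lambda))\var_\pi(f)$. Since $\epsilon(\lambda)\in[0,1)$ and $\phi$ is increasing with $\phi(0)=1$, the lower bound is immediate. For the upper bound I would invoke the minorisation decomposition $P_\lambda = (1-\delta(\lambda))\pi + \delta(\lambda)Q_\lambda$ supplied by Proposition \ref{prop:isir-minorisation}, with $\delta(\lambda) = \frac{2\hat w}{2\hat w+\lambda-1}$. In the non-atomic setting $\pi(\{x\})=0$, so $P_\lambda(x,\{x\}) = \delta(\lambda)Q_\lambda(x,\{x\})\le \delta(\lambda)$, whence $\epsilon(\lambda) = \int\pi(\ud x)P_\lambda(x,\{x\})\le\delta(\lambda)$. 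Monotonicity of $\phi$ then yields $\var(\tilde P_\lambda,f)\le\phi(\delta(\lambda))\var_\pi(f)$, and a direct simplification gives $\phi(\delta(\lambda)) = \frac{4\hat w+\lambda-1}{\lambda-1}$, matching the claimed bound and Theorem \ref{thm:isir-asvar-properties}\eqref{item:asvar-upper-bound}.

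Claim \eqref{item:approx-decreasing-and-convex} is where the work lies. By \eqref{item:approx-asvar} it suffices to understand $\lambda\mapsto\epsilon(\lambda)$, since $\phi$ is strictly increasing and strictly convex on $[0,1)$: then $\phi\circ\epsilon$ is decreasing once $\epsilon$ is decreasing and convex once $\epsilon$ is convex (a convex increasing function composed with a convex function). Because $P_\lambda$ interpolates $P_N$ and $P_{N+1}$ linearly, $\epsilon(\lambda) = \beta\epsilon_N + (1-\beta)\epsilon_{N+1}$ with $\beta = N+1-\lambda$ is exactly the piecewise-linear interpolant of the integer values $\epsilon_N = \int\pi(\ud x)P_N(x,\{x\})$. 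The claim thus reduces to showing that $(\epsilon_N)$ is strictly decreasing and convex; its interpolant is then strictly decreasing and convex, and $\phi\circ\epsilon$ is strictly convex on each linear piece (as $\phi$ is), with the kink inequalities at integers following from strict convexity of the sequence.

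The main obstacle is therefore the monotonicity and convexity of $(\epsilon_N)$, which I would establish through the representation $\epsilon_N = \E[g(S_{N-1})]$, where $S_{N-1} = \sum_{n=2}^N w(Y_n)$ is a sum of $N-1$ i.i.d.\ copies of $w(Y)$, $Y\sim q$, and $g(s) = \E_{X\sim\pi}\big[\tfrac{w(X)}{w(X)+s}\big]$ is decreasing and convex in $s\ge0$. Coupling $S_{N} = S_{N-1} + w(Y_{N+1})$ gives $\epsilon_{N+1}\le\epsilon_N$ pointwise, hence strict decrease when $w$ is non-degenerate. For convexity, write $\epsilon_{N+1}-\epsilon_N = \E[g(S_{N-1}+W)-g(S_{N-1})]$; since $g$ is convex the forward difference $s\mapsto g(s+w)-g(s)$ is non-decreasing for each $w\ge0$, and since $S_{N-1}$ stochastically dominates $S_{N-2}$ this forces $\epsilon_{N+1}-\epsilon_N\ge\epsilon_N-\epsilon_{N-1}$, i.e.\ convexity (strict when the weights are non-constant, so that $g$ is strictly convex). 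This is exactly the acceptance-rate analysis flagged for Section \ref{section:inner-product-convex-sequence}, and I expect the careful bookkeeping of these stochastic-monotonicity couplings to be the delicate part.
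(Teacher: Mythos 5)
Your proposal is correct, but it reaches the result by a partly different route than the paper, so a comparison is worthwhile. Part (i) is essentially the paper's own argument: the paper's Lemma \ref{lemma:single-proposal-chain-asymptotic-variance} computes exactly the geometric lag-$k$ covariances $\epsilon(\lambda)^k\var_\pi(f)$ and sums the series (your spectral shortcut $\tilde P_\lambda g=\epsilon(\lambda)g$ for centred $g$ is an equivalent one-line version). For part (ii) the paper does \emph{not} use the minorisation decomposition: it proves $\epsilon(\lambda)\le \frac{2\hat w}{2\hat w+\lambda-1}$ by applying Jensen's inequality to the integer-$N$ integral representation of the rejection probability and then interpolating (Lemma \ref{lemma:lower-and-upper-bound-for-rejection} together with Lemma \ref{lemma:upper-bound-for-functions-of-certain-type}); your derivation from Proposition \ref{prop:isir-minorisation}, writing $P_\lambda=(1-\delta)\pi+\delta Q_\lambda$ and using $\pi(\{x\})=0$, is shorter and equally valid, though it makes explicit the identification of holding and rejection probability under \ref{a:non-atomic} --- an identification the paper also makes implicitly (cf.\ Section \ref{sec:approximate}, where the atomic case is treated separately via $\psi$). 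The genuine divergence is in part (iii): the paper establishes strict decrease and sequential convexity of $(\epsilon(N))_N$ algebraically, expressing the second difference as a manifestly positive integral via the partial-sum identities of Lemma \ref{lemma:differences} and exchangeability (Lemma \ref{lemma:i-sir-rejection-inequalities}), then passes to $[1,\infty)$ with the interpolation Lemma \ref{lemma:generalisation-to-continuous-function} and composes with $t\mapsto\frac{1+t}{1-t}$ (Lemma \ref{lemma:convexity-of-the-estimate}). You instead use the representation $\epsilon(N)=\E[g(S_{N-1})]$ with $g(s)=\E_{X\sim\pi}\big[w(X)/(w(X)+s)\big]$ decreasing and convex, the pointwise coupling $S_N=S_{N-1}+w(Y_{N+1})$, and monotonicity of forward differences of a convex function under stochastic dominance; this is a sound and rather more probabilistic argument, though the paper's algebraic machinery has the advantage of extending verbatim to the covariance sequences $\angles{f|P_Nf}_\pi$ needed for Theorem \ref{thm:isir-asvar-properties}. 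Two of your qualifications are unnecessarily cautious rather than wrong: strict decrease of $(\epsilon(N))$ requires no non-degeneracy of $w$ (it follows since $w(X)>0$ $\pi$-a.s.\ makes $g$ strictly decreasing and $\P(w(Y)>0)=q(\Sp)>0$), and $g$ is strictly convex for the same reason; moreover, strict convexity of the composed loss never needs strict convexity of $\epsilon$ at all --- the paper closes this exactly as your final remark suggests, via $\phi\big(\epsilon(t\lambda_1+(1-t)\lambda_2)\big)\le\phi\big(t\epsilon(\lambda_1)+(1-t)\epsilon(\lambda_2)\big)<t\phi\big(\epsilon(\lambda_1)\big)+(1-t)\phi\big(\epsilon(\lambda_2)\big)$, using only convexity plus strict decrease of $\epsilon$ and strict convexity plus monotonicity of $\phi$, which also spares you the piecewise-strict-convexity-plus-kink argument.
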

Proposition \ref{proposition:approximate-i-sir-asymptotic-variance} \eqref{item:approx-asvar} follows from Lemma \ref{lemma:single-proposal-chain-asymptotic-variance} in Appendix \ref{app:asymptotic-variance-results}, 
\eqref{item:approx-bounds} from Lemma \ref{lemma:lower-and-upper-bound-for-rejection}
and \eqref{item:approx-decreasing-and-convex} 
from Lemma \ref{lemma:convexity-of-the-estimate}. Proposition \ref{proposition:approximate-i-sir-asymptotic-variance} indicates that $\var(\tilde{P}_\lambda, f)$ has similar properties as $\var(P_\lambda, f)$ (cf.~Theorem \ref{thm:isir-asvar-properties}). Our method for finding an appropriate `number of proposals' $\lambda$ is based on this expression: we aim for approximately minimising $c(\lambda) \var(\tilde{P}_\lambda, f)$, which we discuss next.

\section{Adaptation of the number of proposals}\label{section:adaptation}

Our method for finding an appropriate $\lambda$ is a stochastic gradient type algorithm. To simplify discussion in this section, we focus on a continuous case (e.g. $\X \subset\R^d$), and assume following to hold:

\begin{assumption}
   \label{a:non-atomic}
The proposal distribution $q$ is non-atomic, that is, if $Y\sim q$, then $\P(Y=c)=0$ for all $c\in\mathbb{X}$.
\end{assumption}

The i-SIR is valid also when $q$ has atoms, like when $\X$ is discrete; we include discussion about this case in Sections \ref{section:inner-product-convex-sequence}--\ref{sec:approximate}.

Under \ref{a:non-atomic}, the i-SIR chain has $X_{k} = X_{k-1}$ exactly when $I=1$, and therefore, the average holding (or rejection) probability of the chain is:
\begin{equation*}
\epsilon(\lambda) = \int \pi( z_1) \bigg( \prod_{i = 2}^{N(\lambda)} q( z_i)\bigg) w(z_1) \bigg( 
   \frac{\beta(\lambda)}{\sum_{j=1}^{N(\lambda)-1} w(z_j)} 
   + \frac{1-\beta(\lambda)}{\sum_{j=1}^{N(\lambda)} w(z_j)} \bigg) \ud z_1\cdots \ud z_{N(\lambda)},
\end{equation*}
where $\beta(\lambda)=\floor{\lambda}+1-\lambda$ and $N(\lambda) = \floor{\lambda}+1$
Clearly, the function $\epsilon$ is differentiable at non-integer points $\lambda$:
\begin{align}\label{eq:intro-derivative}
    \epsilon'(\lambda)
    &= \int \pi( z_1) \bigg( \prod_{i= 2}^{N(\lambda)} q( z_i)\bigg) w(z_1)
    \bigg(
      \frac{1}{\sum_{j=1}^{N(\lambda)} w(z_j)}
      - \frac{1}{\sum_{j=1}^{N(\lambda)-1} w(z_j)}
    \bigg) \ud z_1\cdots \ud z_{N(\lambda)},
\end{align}
and we extend $\epsilon'$ to all $\lambda\ge 1$ by taking the cádlág extension
of the right-hand side of \eqref{eq:intro-derivative}. 
If we iterated Algorithm \ref{alg:continuous-isir} (with fixed $\lambda$), we could estimate $\epsilon(\lambda)$ and $\epsilon'(\lambda)$ by averaging the following quantities: 
\begin{align}
  \hat{\epsilon}_k(\lambda) &= 
  w(Y_{k}^1) \bigg( 
   \frac{\beta_k}{\sum_{j=1}^{\bar{N}_k-1} w(Y_k^j)}
   + \frac{1-\beta_k}{\sum_{j=1}^{\bar{N}_k} w(Y_k^j)} \bigg)
   \label{eq:epsilon-est} \\
  \hat{\epsilon}'_k(\lambda) & = 
  w(Y_k^1)
    \bigg(
      \frac{1}{\sum_{j=1}^{\bar{N}_k} w(Y_k^j)}
      - \frac{1}{\sum_{j=1}^{\bar{N}_k-1} w(Y_k^j)}
    \bigg).
   \label{eq:depsilon-est}
\end{align}
Under stationarity, that is, when $X_{k-1}\sim \pi$, it is easy to see that the expected values of $\hat{\epsilon}_k(\lambda)$ and $\hat{\epsilon}'_k(\lambda)$ are indeed $\epsilon(\lambda)$ and $\epsilon'(\lambda)$, respectively. 
Ignoring the multiplicative constant $\mathrm{var}_\pi(f)$ in $\var(\tilde{P}_\lambda,f)$, which does not depend on $\lambda$, we consider the following loss function:
\begin{equation}
\tilde{L}(\lambda) = 
   \frac{1+\epsilon(\lambda)}{1-\epsilon(\lambda)} c(\lambda), 
   \label{eq:approximate-loss}
\end{equation}
which is continuous and differentiable (at non-integer $\lambda$):
\begin{align*}
    \tilde{L}'(\lambda)
   &= c'(\lambda) \frac{1 + \epsilon(\lambda)}{1 - \epsilon(\lambda)} + c(\lambda) \frac{\epsilon'(\lambda)\big(1-\epsilon(\lambda)\big) + \epsilon'(\lambda)\big(1+\epsilon(\lambda)\big)}{\big(1-\epsilon(\lambda)\big)^2} \\ 
   & = \frac{c'(\lambda) (1 - \epsilon^2(\lambda)) + 2 c(\lambda)\epsilon'(\lambda)}{\big(1-\epsilon(\lambda)\big)^2}.
\end{align*}
For a gradient-like minimisation algorithm, we may ignore the positive denominator, which scales the gradient. 
Furthermore, we simplify $\epsilon^2(\lambda)\approx \epsilon_s(\lambda)$, which is where ergodic averages of the squares $\hat{\epsilon}^2_k(\lambda)$ of \eqref{eq:epsilon-est} converge; see \eqref{eq:ergodic-averages-jensen-2} in Section \ref{sec:upper-lower-bounds}. Thanks to Jensen's inequality, $\epsilon^2(\lambda) \le \epsilon_s(\lambda)$.

We end up with a rescaled approximate gradient of the form
$$
G(\lambda) = c'(\lambda) (1 - \epsilon_s(\lambda)) + 2 c(\lambda) \epsilon'(\lambda).
$$
Furthermore, when we adapt $\lambda$, we use a monotone reparametrisation $\lambda = 1 + e^{\xi}$, where $\xi\in\R$ is the parameter we adapt.

One iteration of the algorithm, that is, update $(\xi_{k-1},X_{k-1})$ to $(\xi_{k},X_{k})$ is summarised in Algorithm \ref{alg:adaptive-isir}.
\begin{algorithm}
   \caption{$\textsc{iSIRAdapt}(X_{k-1}, \xi_{k-1})$}
   \label{alg:adaptive-isir}
\begin{algorithmic}[1]
   \State Let $\lambda_{k-1} = e^{\xi_{k-1}} + 1$
   \State Let $Y_{k}^1 = X_{k-1}$ and simulate $Y_{k}^{2:(\floor{\lambda_{k-1}}+1)} \overset{\text{i.i.d.}}{\sim} q$
   \State With probability $\lambda_{k-1}-\floor{\lambda_{k-1}}$ set $N_k=\floor{\lambda_{k-1}}+1$, else $N_k=\floor{\lambda_{k-1}}$
   \State 
   Draw $I \sim \mathrm{Categ}(w(Y_{k}^1),\ldots,w(Y_k^{N_k}))$
   \State Set $X_{k} = Y_{k}^I$
   \State Calculate $\hat{\epsilon}_{k}(\lambda_{k-1})$ and $\hat{\epsilon}'_{k}(\lambda_{k-1})$ as in \eqref{eq:epsilon-est} and \eqref{eq:depsilon-est}, respectively
   \State Calculate $\xi_k = \mathrm{Proj}\big\{\xi_{k-1} - \gamma_k 
   \big[ c'(\lambda_{k-1})\big(1-\hat{\epsilon}_k(\lambda_{k-1})^2\big)
   + 2 c(\lambda_{k-1}) \hat{\epsilon}'_k(\lambda_{k-1}) \big]\big\}$
\end{algorithmic}
\end{algorithm}
The algorithm uses stochastic observations, whose stationary expectations coincide with $G$. It is in instance of stochastic approximation type adaptive MCMC \citep[cf.][]{andrieu-robert,andrieu-moulines,andrieu-thoms} where $\gamma_k$ is a constant step size sequence, which we choose as $k^{-\beta}$ where $\beta = 0.75$; values $\beta\in(1/2,1)$ are advisable \citep[e.g.][]{andrieu-moulines}. The additional projection function $\mathrm{Proj}\{\xi\} = \min\big\{\max\{0,\xi\}, \log(N_{\max}-1)\big\}$ ensures that $2 \le \lambda_k \le N_{\max} $ when $N_{\max}$ is finite, and $2 \le \lambda_k<\infty$ when $N_{\max}=\infty$. A finite upper bound is not necessary for the validity, but it can simplify the practical implementation of the method. The initial value of $\xi_0$ is arbitrary, and can correspond for instance to $\lambda_0 = N_{\max}/2$.

\begin{theorem}
   \label{thm:adaptive-isir-slln}
Suppose that $\gamma_k = k^{-\beta}$ where $\beta>0$. Suppose the cost function $c$ is continuous and differentiable, and in the case $N_{\max}=\infty$, additionally $c(\lambda)=O(\lambda)$ and $c'$ is bounded.
Then, the variables $X_1,X_2,\ldots$ generated by iteration of Algorithm \ref{alg:adaptive-isir} satisfy the following strong law of large numbers for any bounded (measurable) test function $f$:
$$
\frac{1}{n}\sum_{k=1}^n f(X_k) \xrightarrow{n\to\infty} \pi(f).
$$
\end{theorem}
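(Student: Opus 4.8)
The plan is to recognise Algorithm \ref{alg:adaptive-isir} as an instance of stochastic‑approximation adaptive MCMC \citep[cf.][]{andrieu-robert,andrieu-moulines,andrieu-thoms}, with adaptation parameter $\lambda_{k}$ (equivalently $\xi_k$) and $\pi$‑reversible kernels $P_\lambda$, and to deduce the SLLN from two standard ingredients: a simultaneous uniform minorisation and diminishing adaptation. Let $\mathcal F_k=\sigma(X_0,\xi_0,\ldots,X_k,\xi_k)$ and note that, conditionally on $\mathcal F_{k-1}$, $X_k\sim P_{\lambda_{k-1}}(X_{k-1},\uarg)$. First I would record uniform ergodicity: the projection forces $\lambda_k\ge 2$, so Proposition \ref{prop:isir-minorisation} gives, for every admissible $\lambda$ and every $x$, the one‑step bound $P_\lambda(x,\uarg)\ge\eta\,\pi(\uarg)$ with the common constant $\eta=(2\hat w+1)^{-1}>0$ (since $(\lambda-1)/(2\hat w+\lambda-1)$ increases in $\lambda$). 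As all kernels share this minorising measure and the same invariant law $\pi$, this is a simultaneous Doeblin condition, whence $\sup_x\|P_\lambda^n(x,\uarg)-\pi\|_{\tv}\le(1-\eta)^n$ uniformly in $\lambda\ge2$; containment is automatic, and the Poisson solutions $\hat f_\lambda=\sum_{j\ge0}\big(P_\lambda^jf-\pi(f)\big)$ are well defined with $\|\hat f_\lambda\|_\infty\le\eta^{-1}\|f-\pi(f)\|_\infty$ uniformly in $\lambda$.

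Second I would verify diminishing adaptation, i.e.\ $D_k:=\sup_x\|P_{\lambda_k}(x,\uarg)-P_{\lambda_{k-1}}(x,\uarg)\|_{\tv}\to0$ a.s. Because $P_\lambda=\beta P_N+(1-\beta)P_{N+1}$ with $\beta=N+1-\lambda$, the map $\lambda\mapsto P_\lambda$ is Lipschitz in total variation: for $\lambda\le\lambda'$, $\sup_x\|P_\lambda(x,\uarg)-P_{\lambda'}(x,\uarg)\|_{\tv}\le|\lambda-\lambda'|\sup_N\|P_N-P_{N+1}\|_{\tv}\le|\lambda-\lambda'|$. Hence it suffices to show $|\lambda_k-\lambda_{k-1}|\to0$ a.s. In the compact case $N_{\max}<\infty$ this is immediate: $\hat\epsilon_k\in[0,1]$ and $|\hat\epsilon'_k|\le1$ (both summands in \eqref{eq:depsilon-est} are dominated by the common reciprocal), while $c,c'$ are bounded on the compact $\lambda$‑range, so the bracketed increment is bounded and $|\xi_k-\xi_{k-1}|\le C\gamma_k\to0$; since $\lambda=1+e^\xi$ is Lipschitz on the bounded $\xi$‑range, $|\lambda_k-\lambda_{k-1}|\to0$.

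The hard part is the noncompact case $N_{\max}=\infty$, where $\lambda_k$ may a priori grow. Here $|\lambda_k-\lambda_{k-1}|\lesssim \lambda_{k-1}\,\gamma_k|H_k|$ with $H_k=c'(\lambda_{k-1})(1-\hat\epsilon_k^2)+2c(\lambda_{k-1})\hat\epsilon'_k$, and since $c(\lambda)=O(\lambda)$ the worst‑case bound is only $O(\lambda_{k-1}^2\gamma_k)$, which does not vanish unless $\lambda_k$ is controlled. I would therefore first establish stability of the recursion. This is exactly the role of the assumptions $c=O(\lambda)$ and $c'$ bounded: they make the mean field $G(\lambda)=c'(\lambda)(1-\epsilon_s(\lambda))+2c(\lambda)\epsilon'(\lambda)$ bounded while keeping the excursions of $\lambda_k$ in check (for the affine cost $c(\lambda)=a+b\lambda$ of primary interest, $c'\equiv b>0$ and $\epsilon(\lambda),\epsilon'(\lambda)\to0$, so $G(\lambda)\to b>0$ and the drift of $\xi_k$ points downward whenever $\lambda_{k-1}$ is large). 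Combined with $\gamma_k\to0$ and moment control of the noise $H_k-\E[H_k\mid\mathcal F_{k-1}]$ (using $\hat\epsilon_k\in[0,1]$ and the bound on $\hat\epsilon'_k$), together with the ergodicity needed to relate the state‑dependent conditional mean to $G$, a standard stochastic‑approximation stability argument \citep[cf.][]{andrieu-moulines} gives an a.s.\ growth bound on $(\lambda_k)$ strong enough that $\lambda_{k-1}^2\gamma_k\to0$, and hence $D_k\to0$ a.s. This stability/diminishing‑adaptation step is the main obstacle.

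Finally, with simultaneous minorisation and diminishing adaptation in hand, I would conclude by the Poisson‑equation martingale decomposition
$$
\sum_{k=1}^n\big(f(X_k)-\pi(f)\big)=M_n+R_n+B_n,
$$
where $M_n=\sum_{k=1}^n\big(\hat f_{\lambda_{k-1}}(X_k)-P_{\lambda_{k-1}}\hat f_{\lambda_{k-1}}(X_{k-1})\big)$ is an $\mathcal F_k$‑martingale with bounded increments, $B_n$ is a telescoping boundary term bounded by $2\sup_k\|\hat f_{\lambda_k}\|_\infty$, and $R_n=\sum_{k=2}^n\big(P_{\lambda_{k-1}}\hat f_{\lambda_{k-1}}-P_{\lambda_{k-2}}\hat f_{\lambda_{k-2}}\big)(X_{k-1})$ is controlled by the Lipschitz dependence of $\hat f_\lambda$ on $\lambda$ (inherited from that of $P_\lambda$) together with $D_{k-1}\to0$. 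Dividing by $n$: $M_n/n\to0$ a.s.\ by the SLLN for martingales with bounded increments, $B_n/n\to0$ since $B_n$ is bounded, and $R_n/n\to0$ by Cesàro since its increments tend to $0$ a.s. This yields $\frac1n\sum_{k=1}^n f(X_k)\to\pi(f)$ a.s.\ for every bounded measurable $f$, as claimed; apart from the stability step of the previous paragraph, everything is a routine application of the simultaneous‑minorisation adaptive MCMC machinery.
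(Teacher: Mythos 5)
Your overall architecture (simultaneous Doeblin minorisation, diminishing adaptation, Poisson-equation/martingale decomposition) is the right one, and it matches what the paper's appeal to \cite[Lemma 16 (ii), Theorem 9 (ii)]{laitinen-vihola-adaptive} encapsulates; your argument is essentially complete in the compact case $N_{\max}<\infty$. The genuine gap is the noncompact case $N_{\max}=\infty$: you reduce diminishing adaptation to an almost-sure stability bound on $(\lambda_k)$ strong enough that $\lambda_{k-1}^2\gamma_k\to 0$, and then invoke ``a standard stochastic-approximation stability argument.'' That step is not routine and is not carried out. In particular, the conditional mean of $H_k$ given the past is \emph{not} the mean field $G(\lambda_{k-1})$, because $X_{k-1}$ is not at stationarity; turning the sign of $G$ at infinity into an a.s.\ growth bound on $\lambda_k$ would require its own perturbation/Poisson-equation analysis of the adaptation noise (note the paper explicitly states that convergence of the adaptation is left open, and Proposition \ref{proposition:between-positive-negative} is presented only as a partial result). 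So the step you yourself flag as ``the main obstacle'' remains unproven.

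The paper closes this obstacle with two observations that make stability of $\lambda_k$ unnecessary. First, the total variation distance between i-SIR kernels is controlled by the \emph{relative}, not absolute, difference in the number of proposals: Lemma \ref{lemma:auxiliary-1} gives $d_\tv\big(P_\lambda(x,\cdot),P_{\lambda'}(x,\cdot)\big)\le 6|\lambda-\lambda'|/\max\{\lambda,\lambda'\}$, so in the adapted parametrisation $\lambda=1+e^{\xi}$ the kernel map is uniformly TV-Lipschitz in $\xi$: $d_\tv\big(P_{e^{\xi'}+1}(x,\cdot),P_{e^{\xi}+1}(x,\cdot)\big)\le 6|\xi'-\xi|$. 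Diminishing adaptation therefore only requires control of $|\xi_k-\xi_{k-1}|=\gamma_k|\tilde H_k|$, however large $\lambda_k$ may be; your absolute Lipschitz bound $\le|\lambda-\lambda'|$ is what forces you into the stability question in the first place. Second, instead of the worst-case bound $|\hat\epsilon'_k|\le 1$, the paper bounds the \emph{conditional expectation} via exchangeability of the proposals: $\E\big[|\hat\epsilon'_k(\lambda_{k-1})|\,\big|\,X_{k-1},\xi_{k-1}\big]\le 1/\lfloor\lambda_{k-1}\rfloor$, so the factor $c(\lambda_{k-1})=O(\lambda_{k-1})$ is cancelled and $\E[|H_k|]\le M$ uniformly, with no assumption on where $\lambda_k$ lives. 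This yields the summability $\sum_k k^{-(1+\beta)}\E[|\tilde H_k|]<\infty$, which together with the uniform-in-$\lambda$ ergodicity (your first step; the paper's Lemma \ref{lemma:auxiliary-2}) are exactly the hypotheses of the cited adaptive-MCMC SLLN --- which, note, only needs the adaptation increments to be summable in this expected sense, not $D_k\to 0$ almost surely.
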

\begin{proof}
When $N_{\max}$ is finite, we set $A=[0,\log(N_{\max}-1)]$, and when $N_{\max}=\infty$, we set $A=[0,\infty)$. Let us denote
$$H_k=-\big[ c'(\lambda_{k-1})\big(1-\hat{\epsilon}_k(\lambda_{k-1})^2\big) + 2 c(\lambda_{k-1}) \hat{\epsilon}'_k(\lambda_{k-1}) \big].$$
By definition
 \begin{equation*}
\mathrm{Proj}(\xi_{k-1}+\gamma_k H_k) 
=\begin{cases}
      \xi_{k-1} + \gamma_k H_k, & \xi_{k-1} + \gamma_k H_k \in A, \\
      0, & \xi_{k-1} + \gamma_k H_k<0 \\
      \log(N_{\max}-1), & \xi_{k-1} + \gamma_k H_k>\log(N_{\max}-1).
   \end{cases}
\end{equation*}
Let $\tilde{H}_k=\frac{\mathrm{Proj}(\xi_{k-1}+\gamma_k H_k)-\xi_{k-1}}{\gamma_k}$, then we obtain
$$\xi_k=\mathrm{Proj}(\xi_{k-1}+\gamma_k H_k)=\xi_{k-1}+\gamma_k \tilde{H}_k,$$
$|\xi_k-\xi_{k-1}|= \gamma_k |\tilde{H}_k|$ and clearly $\gamma_k|\tilde{H}_k| \le \gamma_k |H_k|$.
We have $|1-\hat{\epsilon}_k(\lambda_{k-1})^2|\le 1$. Now $\lambda_{k-1}=e^{\xi_{k-1}}+1$, let us denote $N(x)=\floor{e^{x}+1}$, then
\begin{align*}
\E\big[|\hat{\epsilon}_k'(\lambda_{k-1})| \;\big|\; X_{k-1},\xi_{k-1}\big]
&=\E\Bigg[\bigg|\frac{w(Y_{k}^1)}{\sum_{j=1}^{N(\xi_{k-1})+1} w(Y_k^j)}-\frac{w(Y_{k}^1)}{\sum_{j=1}^{N(\xi_{k-1})} w(Y_k^j)}\bigg| \;\Bigg|\; X_{k-1},\xi_{k-1}\Bigg] \\
&=\E\bigg[\frac{w(Y_{k}^1)w(Y_{k}^{N(\xi_{k-1})+1})}{\big(\sum_{i=1}^{N(\xi_{k-1})} w(Y_k^i)\big)\big(\sum_{j=1}^{N(\xi_{k-1})+1} w(Y_k^j)\big)} \;\bigg|\;X_{k-1},\xi_{k-1}\bigg] \\
&\le \E\bigg[\frac{w(Y_{k}^{N(\xi_{k-1})+1})}{\sum_{j=1}^{N(\xi_{k-1})+1} w(Y_k^j)} \;\bigg|\;X_{k-1},\xi_{k-1}\bigg] \\
&=\E\bigg[\frac{1}{N(\xi_{k-1})}\frac{\sum_{i=2}^{N(\xi_{k-1})+1} w(Y_k^i)}{\sum_{j=1}^{N(\xi_{k-1})+1} w(Y_k^j)} \;\bigg|\;X_{k-1},\xi_{k-1}\bigg]
\le \frac{1}{N(\xi_{k-1})},
\end{align*}
where the last equality follows from the fact that $Y_2,\dots,Y_N$ are exchangeable for fixed $N$. Therefore
\begin{align*}
\E\big[|c(\lambda_{k-1})\hat{\epsilon}_k'(\lambda_{k-1})|\;\big|\; X_{k-1},\xi_{k-1}\big]\le\frac{|c(e^{\xi_{k-1}}+1)|}{\floor{e^{\xi_{k-1}}+1}}\le \frac{|c(e^{\xi_{k-1}}+1)|}{e^{\xi_{k-1}}}.
\end{align*}
When $N_{\max}$ is finite, functions $x \mapsto \frac{|c(e^{x}+1)|}{e^{x}}$ and $x\mapsto c'(e^x)$ are bounded on $A$. When $N_{\max}=\infty$ and the additional assumptions hold, then $x\mapsto \frac{|c(e^{x}+1)|}{e^{x}}$ is continuous and $\frac{|c(e^{x}+1)|}{e^{x}}=O(1)$, hence bounded on $A$.
Thus, in both cases
\begin{align*}
\E\big[|H_k|\big]
\le \E\big[|c'(\lambda_{k-1})|(1-\hat{\epsilon}_k(\lambda_{k-1})^2\big]
+2\E\big[\E\big[|c(\lambda_{k-1})\hat{\epsilon}'_k(\lambda_{k-1})| \;\big|\;  X_{k-1},\xi_{k-1}\big]\big]\le M,
\end{align*}
for some $M>0$, hence
\begin{align*}
\sum_{k=1}^\infty \frac{\gamma_k}{k} \E\big[|\tilde{H}_k|\big]\le \sum_{k=1}^\infty \frac{\gamma_k}{k} \E\big[|H_k|\big]=\sum_{k=1}^\infty k^{-(\beta+1)} \E\big[|H_k|\big]\le M\sum_{k=1}^\infty k^{-(\beta+1)}<\infty.
\end{align*}
Let $\xi,\xi'\in A$, then by Lemma \ref{lemma:i-sir-probability-inequality}
\begin{align*}
d_\tv\big(P_{e^{\xi'}+1}(x,\uarg), P_{e^{\xi}+1}(x,\uarg)\big)& \le \frac{6|e^{\xi'}-e^{\xi}|}{\max\braces{e^{\xi'}+1,e^{\xi}+1}} \le 6|1-e^{-|\xi'-\xi|}| \le 6|\xi'-\xi|,
\end{align*}
where the last inequality follows since $x\mapsto x-(1-e^{-x})$ is non-negative and increasing on $[0,\infty)$.
Therefore, by combining all the previous and Lemma \ref{lemma:auxiliary-2} in Appendix \ref{app:ergodicity-auxiliaries} with \cite[Lemma 16 $(ii)$, Theorem 9 $(ii)$]{laitinen-vihola-adaptive}, we obtain strong law of large numbers.
\end{proof}

Theorem \ref{thm:adaptive-isir-slln} ensures that the simulated samples lead to consistent inference, but does not guarantee the convergence of the adaptation, that is, whether there exists a limit $\lambda_*$ such that $\lambda_k\to\lambda_*$. Our empirical observations in Section \ref{sec:experiments} suggest that such convergence happens, and that $\lambda_*$ does indeed (approximately) minimise the loss in \eqref{eq:approximate-loss} (and also approximately minimises the inverse relative efficiency). However, precise theoretical results about this convergence are out of the scope of this paper. We conclude this section with a discussion of some partial results related to the convergence analysis.

Here, we consider a gradient descent like stochastic approximation for an unbounded interval $\lambda\in(1,\infty)$.
If it converges, the limit point is the zero of the `gradient-like' function $G(\lambda)$.

\begin{proposition}\label{proposition:between-positive-negative}
Let $c(\lambda)=a+b\lambda$, $a\ge 0$, $b>0$ and $h_1,h_2,h_3:[1,\infty)\to \R$,
\begin{align*}
h_1(\lambda)&=c'(\lambda)(1-\lambda^{-2}) + 2c(\lambda) \epsilon'(\lambda) \\
h_2(\lambda)&=c'(\lambda)(1-\epsilon(\lambda)^2) + 2c(\lambda) \epsilon'(\lambda) \\
h_3(\lambda)&=c'(\lambda)(1-\epsilon_s(\lambda)) + 2c(\lambda) \epsilon'(\lambda).
\end{align*}
Then $h_3\le h_2 \le h_1$, $\liminf_{\lambda \to \infty} h_i(\lambda) > 0$ and $\limsup_{\lambda\to 1+} h_i(\lambda) < 0$ for all $i$.
\end{proposition}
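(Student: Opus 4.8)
The three functions share the term $2c(\lambda)\epsilon'(\lambda)$ and differ only in their first summand, whose common prefactor is $c'(\lambda)=b>0$. Hence the ordering $h_3\le h_2\le h_1$ is equivalent to the pointwise chain $\lambda^{-2}\le\epsilon(\lambda)^2\le\epsilon_s(\lambda)$. The right inequality is Jensen's inequality $\epsilon(\lambda)^2\le\epsilon_s(\lambda)$ already recorded in this section; note that, since $\hat\epsilon_k\in[0,1]$ in \eqref{eq:epsilon-est}, one in fact has $\epsilon(\lambda)^2\le\epsilon_s(\lambda)\le\epsilon(\lambda)$ (as $\hat\epsilon_k^2\le\hat\epsilon_k$). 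For the left inequality I would prove $\epsilon(\lambda)\ge 1/\lambda$. For integer $\lambda=N$ I would rewrite the holding probability, via the change of measure $\pi(z_1)=q(z_1)w(z_1)/c$ and the exchangeability of the proposals, in the i.i.d.\ form $\epsilon(N)=\tfrac{1}{cN}\,\E_{q^{\otimes N}}\!\big[Q_N/S_N\big]$, where $S_N=\sum_{j=1}^N w_j$, $Q_N=\sum_{j=1}^N w_j^2$ and $w_j:=w(Y^j)$; Cauchy--Schwarz ($Q_N\ge S_N^2/N$) then gives $\E[Q_N/S_N]\ge\E[S_N]/N=c$, i.e.\ $\epsilon(N)\ge 1/N$. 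Since $\epsilon(\lambda)=\beta\epsilon(N)+(1-\beta)\epsilon(N+1)$ is the piecewise-linear interpolation on $[N,N+1]$ with $\lambda=\beta N+(1-\beta)(N+1)$, convexity of $x\mapsto 1/x$ upgrades this to $\epsilon(\lambda)\ge 1/\lambda$ for all $\lambda>1$.

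For the two limits I first record the boundary behaviour of $\epsilon$, $\epsilon_s$ and $\epsilon'$. From $Q_N\le\hat{w}\,S_N$ and the same representation, $\epsilon(N)\le\hat{w}/(cN)\to 0$, so $\epsilon(\lambda)\to 0$ as $\lambda\to\infty$; and since $\epsilon(1)=1$ (a single proposal never moves), piecewise-linearity gives $\epsilon(\lambda)\to 1$ as $\lambda\to 1^+$. The squeeze $\epsilon(\lambda)^2\le\epsilon_s(\lambda)\le\epsilon(\lambda)$ then yields $\epsilon_s(\lambda)\to 0$ and $\epsilon_s(\lambda)\to 1$ at the two ends, respectively. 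Finally, piecewise-linearity of $\epsilon$ means that the right-continuous extension of $\epsilon'$ from \eqref{eq:intro-derivative} equals $\epsilon(N+1)-\epsilon(N)\le 0$ on $(N,N+1)$; in particular $\epsilon'(\lambda)\to\epsilon(2)-1<0$ as $\lambda\to 1^+$, the strict inequality holding because $\P_q(w>0)>0$ forces $\epsilon(2)=\E\big[w(Y^1)/(w(Y^1)+w(Y^2))\big]<1$.

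These facts settle the limit at $\lambda\to 1^+$ immediately: in every $h_i$ the first summand tends to $b(1-1)=0$ while the common term tends to $2(a+b)\big(\epsilon(2)-1\big)<0$, so all three $h_i$ share the same strictly negative limit and $\limsup_{\lambda\to1^+}h_i<0$. At $\lambda\to\infty$ the first summands all tend to $b>0$, and it remains to show that the common term vanishes, i.e.\ $c(\lambda)\epsilon'(\lambda)\to 0$; then each $h_i\to b>0$, giving $\liminf_{\lambda\to\infty}h_i>0$.

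The crux is therefore the decay rate of $\epsilon'$: since $c(\lambda)=O(\lambda)$, I need $\epsilon(N)-\epsilon(N+1)=o(1/N)$, which is sharper than the $O(1/N)$ bound used in the proof of Theorem~\ref{thm:adaptive-isir-slln}. I would start from the telescoping identity $\epsilon(N)-\epsilon(N+1)=\tfrac1c\,\E\big[w_1^2 w_{N+1}/(S_N S_{N+1})\big]$ over $N+1$ i.i.d.\ weights, use $S_{N+1}\ge S_N$ together with the independence of $w_{N+1}$ from $(w_1,S_N)$ and $\E[w_{N+1}]=c$ to reduce it to $\E[w_1^2/S_N^2]$. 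The boundedness assumption \ref{a:bounded-weights} then does the rest: splitting on the event $\{S_N-w_1\ge Nc/2\}$, on which $w_1^2/S_N^2\le 4\hat{w}^2/(N^2c^2)$, and bounding its complement by a Hoeffding inequality for the bounded i.i.d.\ sum $w_2+\cdots+w_N$ (probability exponentially small, integrand $\le 1$ there), yields $\E[w_1^2/S_N^2]=O(1/N^2)$ and hence $c(\lambda)\epsilon'(\lambda)=O(1/N)\to 0$. I expect this concentration estimate to be the main technical obstacle. A softer alternative, if convexity of $\lambda\mapsto\epsilon(\lambda)$ is available, is to note that convexity makes $|\epsilon'|$ monotone, so $\epsilon(\lambda)=\int_\lambda^\infty|\epsilon'(t)|\,\ud t\ge\int_\lambda^{2\lambda}|\epsilon'(t)|\,\ud t\ge\lambda\,|\epsilon'(2\lambda)|$, whence the bound $\epsilon(\lambda)=O(1/\lambda)$ gives $|\epsilon'(\lambda)|=O(1/\lambda^2)$ directly.
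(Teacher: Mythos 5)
Your proof is correct, and its skeleton matches the paper's: the ordering reduces to the pointwise chain $\lambda^{-2}\le\epsilon(\lambda)^2\le\epsilon_s(\lambda)$ (the paper gets $\epsilon(\lambda)\ge 1/\lambda$ from Lemma \ref{lemma:lower-and-upper-bound-for-rejection} and the second inequality from Jensen, i.e.\ Lemma \ref{lemma:ergodic-averages-jensen}), and both boundary limits follow from the behaviour of $\epsilon$, $\epsilon_s$ and $\epsilon'$ near $1$ and $\infty$. Where you genuinely diverge is the crucial decay estimate $|\epsilon'(\lambda)|=o(1/\lambda)$, which is what makes $c(\lambda)\epsilon'(\lambda)\to 0$. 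The paper's Lemma \ref{lemma:continuous-isir-rejection-derivative} proves the explicit bound $|\epsilon'(\lambda)|\le \hat{w}/\big(\floor{\lambda}(\floor{\lambda}+1)\big)$ by a short exchangeability argument applied to the same telescoping identity you write down: bound $w_1^2\le\hat{w}\,w_1$, symmetrize over $z_1,\dots,z_N$ to turn $w_1/S_N$ into $1/N$, then symmetrize over $z_1,\dots,z_{N+1}$ to turn $w_{N+1}/S_{N+1}$ into $1/(N+1)$ --- no concentration inequality needed. Your Route A loses exactly this structure when you pass to $\E[w_1^2/S_N^2]$ (the factor $w_{N+1}/S_{N+1}$ that makes the symmetrization click is gone), which is why you are then forced into a Hoeffding splitting; that argument does go through, but it is the heaviest path to the estimate. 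Your Route B is the more interesting alternative: since $\epsilon$ is convex (Theorem \ref{theorem:i-sir-convexity-2}), $|\epsilon'|$ is decreasing, so $\lambda\,|\epsilon'(2\lambda)|\le\int_\lambda^{2\lambda}|\epsilon'(t)|\,\ud t\le\epsilon(\lambda)=O(1/\lambda)$; this converts the crude tail bound on $\epsilon$ into the needed $O(1/\lambda^2)$ bound on $\epsilon'$ with no distributional work at all, and would survive in settings where the exchangeability computation is unavailable. Two smaller differences are also worth noting: your bound $\epsilon_s\le\epsilon$ (from $\hat{\epsilon}_k\in[0,1]$) substitutes for the paper's $\epsilon_s(\lambda)\le\hat{w}/\floor{\lambda}$ in Lemma \ref{lemma:ergodic-averages-jensen}, and either suffices; and having established $h_3\le h_2\le h_1$, you only need $\limsup_{\lambda\to 1+}h_1<0$ and $\liminf_{\lambda\to\infty}h_3>0$, so the separate limit computations for each $h_i$ are redundant (though harmless).
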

Proposition \ref{proposition:between-positive-negative} follows from properties listed in Theorem \ref{theorem:i-sir-convexity-2}, Lemma \ref{lemma:lower-and-upper-bound-for-rejection}, \ref{lemma:ergodic-averages-jensen} and \ref{lemma:continuous-isir-rejection-derivative}.
If we consider an affine cost function $c(\lambda) = a + b\lambda$, then Proposition \ref{proposition:between-positive-negative} implies that if we used $G=h_3$ as the gradient like function, the gradient descent is expected to remain bounded, and therefore the stochastic gradient descent is expected to converge to a stationary point within $(1,\infty)$. Furthermore, the minimiser of $\tilde{L}(\lambda)$ corresponds to the zero of $h_2$, which has similar behaviour. We conclude this section by a (crude) upper bound for the minimiser of true asymptotic variance:
\begin{remark}
For a cost function $c(\lambda) = a + b\lambda$ where $a\ge 0$ and $b>0$ and a test function $f\in L^2(\pi)$, $\mathrm{var}_\pi(f)>0$, the minimiser $\lambda^*$ of $c(\lambda)\mathrm{var}(P_\lambda,f)$ satisfies $2\le \lambda^*\le 4\sqrt{\hat{w}(a/b+1)}+4\hat{w}+1$ by Proposition \ref{proposition:lower-and-upper-bound-cost-function} and \ref{propostion:cost-function-with-asymptotic-variance-1} in Appendix \ref{app:cost-function}.
\end{remark}

\section{Convexity and monotonicity of i-SIR covariances and acceptance rates} \label{section:inner-product-convex-sequence}

Next, we discuss some technical properties of the i-SIR (Algorithm \ref{alg:isir}) and its continuous relaxation (Algorithm \ref{alg:continuous-isir}). Namely, Sections \ref{subsection:inner-product-convex-sequence} and \ref{subsection:rejection-probabilities-convex-sequence} establish properties for asymptotic covariances and rejection probability for the i-SIR, and Section \ref{section-convexity-monotonicity-of-isir} extends these properties to the continuous relaxation. In particular, the results are necessary for establishing convexity of asymptotic variance in Section \ref{sec:asymptotic-variance}. 
Recall the notion of convexity of a sequence, or sequential convexity \cite[cf.][]{mitrinovic}.

\begin{definition}
A real sequence $(a_i)$ is called convex if $2a_{i+1}\le a_{i}+a_{i+2}$ for all $i$.
\end{definition}

Sequential convexity is convenient because it ensures convexity for the interpolations of the sequence; see Lemma \ref{lemma:generalisation-to-continuous-function} \eqref{enum:generalisation-1} in Appendix \ref{app:modification}.

We will now give the definitions we require.
We denote the natural numbers as $\N=\braces{0,1,2,\dots}$ and positive natural numbers as $\N_+=\braces{1,2,\dots}$. Throughout the upcoming sections we have fixed $\pi$ and $q$ as mentioned in Section \ref{sec:introduction}  (more detailed in Appendix \ref{app:setup}). 
We use the shorthand notation $\pi(\ud x)=\pi(x)\ud \mu(x)$ and $q(\ud x)=q(x)\ud \mu(x)$, where $\mu$ is the dominating measure.
When we speak of atoms, we mean atoms with respect to $\pi$.
The support of $\pi$ is denoted by $\Sp$, the set of atoms in the support by $\Sp_+=\braces{x\in \Sp:\pi(\braces{x})>0}$ and the set of non-atoms by $\Sp_0=\braces{x\in \Sp:\pi(\braces{x})=0}$.
We define the inner product 
$\angles{g|h}_\pi=\int g(x)h(x)\pi(\ud x)$ for $g,h\in L^2(\pi)$, the corresponding norm $\|f\|_\pi=\sqrt{\angles{f|f}_\pi}$, and the function spaces
\begin{align*}
L_0^2(\pi)=\braces{g\in L^2(\pi):\E_\pi[g]=0}, \qquad L_{0,+}^2(\pi)=\braces{g\in L_0^2(\pi):\|g\|_\pi>0}.
\end{align*}
The centred version of a $\pi$-integrable function is denoted as $\bar{f}=f-\E_\pi[f]$. 
The counting measure, that is, the number of elements in a set, is denoted by $\#$.

In our proofs, (permutation) symmetricity of functions plays a central role, because of exchangeability of the variables.

\begin{definition}\label{def:symmetricity}
Let $n,m\in \N_+$, $m<n$, $U\subset \X^n$ and $f:U \to \R$. The function is called symmetric if
$$f(x_1,\dots,x_n)=f(x_{i_1},\dots,x_{i_n})$$
for any permutation $i_1,\dots,i_n$ of $1,\dots,n$, and symmetric with respect to the first $m$ arguments if
$$f(x_1,\dots,x_m,x_{m+1},\dots x_n)=f(x_{i_1},\dots,x_{i_m},x_{m+1},\dots x_n)$$
for any permutation $i_1,\dots,i_m$ of $1,\dots,m$. 
\end{definition}
Note that in Definition \ref{def:symmetricity} the arguments $x_i\in \X$. Let $N\in \N_+$, then in particular the mapping $z\mapsto \sum_{i=1}^Nw(z_i)$ is symmetric. We use the following notation for set of states where this mapping is positive:
$$\W{N}
=\biggbraces{z\in \X^N\;:\;\sum_{i=1}^Nw(z_i)>0} = \bigcup_{i=1}^N \biggbraces{z\in \X^N\;:\;w(z_i)>0}.
$$
We will also use the following notation
$$\int_{\X^{N-1}}f(x) \bigg(\prod_{n=2}^N q(\ud z_n)\bigg)=f(x),$$
when $N=1$.
When we use a test set $A$ we always mean a measurable set.
The transition probability which we obtain from Algorithm \ref{alg:isir} when $N\in \N_+$ will be denoted by $P_N$ throughout the upcoming sections, and is defined in \eqref{def:isir-transition-probability},
\begin{equation*}
P_N(z_1,\ud y) =\int_{\X^{N-1}}  \sum_{i=1}^N \frac{w(z_i)}{\sum_{j=1}^N w(z_j)}\delta_{z_i}(\ud y)\bigg( \prod_{n=2}^N q(\ud z_n) \bigg)
\end{equation*}
with $P_1(z_1,\ud y)= \delta_{z_1}(\ud y)$.

\subsection{Stationary covariances}\label{subsection:inner-product-convex-sequence}

If $(X_k)_{k\ge 1}$ is the stationary i-SIR chain, then the expressions $\E[f(X_k)f(X_{k+1})]$ may be written in terms of inner products $\angles{f|P_Nf}_\pi$. In particular, when $\pi(f)=0$, this expression coincides with $\mathrm{Cov}\big(f(X_k), f(X_{k+1})\big)$, so we will call these inner products `covariances' hereafter.

For $f\in L^2(\pi)$ and $N\in \N_+$, we have $P_Nf\in L^2(\pi)$ and therefore the inner products are well-defined and finite. They can be written as
\begin{align}\label{equation:isir-inner-product}
\begin{split}
\angles{f|P_Nf}_\pi 
&=\int_{\Sp}\int_{\X^{N-1}} \sum_{i=1}^N \frac{w(z_i)f(z_1)f(z_i)}{\sum_{j=1}^N w(z_j)} \bigg( \prod_{n=2}^N q(\ud z_n) \bigg)\pi(\ud z_1) \\
&= \int_{\W{N}} \sum_{i=1}^N \frac{w(z_1)w(z_i)f(z_1)f(z_i)}{\sum_{j=1}^N w(z_j)} \bigg( \prod_{n=1}^N q(\ud z_n) \bigg),
\end{split}
\end{align}
because $w(z_1)q(\ud z_1) = \pi(\ud z_1)$. Note also that $\angles{f|P_1f}=\angles{f|f}_\pi$. The following result shows that integrals similar to the one above can be written in a form involving a quadratic term:

\begin{lemma}\label{lemma:i-sir-exchangeable-property}
Let $N,M\in \N_+$, $f\in L^2(\pi)$ and $h:\W{N}\times M\to [0,\infty)$ be measurable bounded function and symmetric with respect to the first $N$ arguments, then
\begin{align*}
&\int_{\W{N}\times \X^M} h(z)\sum_{i=1}^N \frac{w(z_1)w(z_i)f(z_1)f(z_i)}{\sum_{j=1}^N w(z_j)} \bigg( \prod_{n=1}^{N+M} q(\ud z_n) \bigg) \\
&\quad=\int_{\W{N}\times \X^M} h(z)\frac{1}{N}\frac{\Big(\sum_{i=1}^Nw(z_i)f(z_i)\Big)^2}{\sum_{j=1}^N w(z_j)} \bigg( \prod_{n=1}^{N+M} q(\ud z_n) \bigg).
\end{align*}
\end{lemma}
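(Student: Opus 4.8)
The plan is to exploit the exchangeability of the product measure together with the symmetry of $h$ and of the denominator, so as to symmetrise the one asymmetric factor $w(z_1)f(z_1)$ over the first $N$ coordinates. First I would rewrite the left-hand integrand by pulling out the singled-out coordinate,
\[
\sum_{i=1}^N \frac{w(z_1)w(z_i)f(z_1)f(z_i)}{\sum_{j=1}^N w(z_j)} = \frac{w(z_1)f(z_1)}{\sum_{j=1}^N w(z_j)}\sum_{i=1}^N w(z_i)f(z_i),
\]
so that the only factor not symmetric in $z_1,\dots,z_N$ is $w(z_1)f(z_1)$, whereas the denominator $\sum_{j} w(z_j)$, the sum $\sum_{i} w(z_i)f(z_i)$, the set $\W{N}$, the weight $h$, and the product measure $\prod_{n=1}^{N+M} q(\ud z_n)$ are all invariant under permutations of the first $N$ arguments.

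Before symmetrising, I would secure absolute integrability so that Tonelli's theorem and the forthcoming change of variables are legitimate. Using $w(z_i)\le \sum_{j} w(z_j)$ on $\W{N}$ and the elementary bound $|f(z_1)f(z_i)|\le \tfrac12\big(f(z_1)^2+f(z_i)^2\big)$, each summand (times $h$) is dominated by $\tfrac12\|h\|_\infty\big(w(z_1)f(z_1)^2 + w(z_i)f(z_i)^2\big)$. Since $w(z)q(\ud z)=\pi(\ud z)$ and $q$ is a probability measure, integrating this dominating function over $\W{N}\times\X^M$ yields at most $\|h\|_\infty\|f\|_\pi^2<\infty$ for each term, so the full integrand (a finite sum of such terms) is absolutely integrable.

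Next, for each $k\in\{1,\dots,N\}$ I would apply the change of variables induced by the transposition $\sigma_k$ that swaps coordinates $1$ and $k$ and fixes the rest. Because the product measure is i.i.d., and $\W{N}\times\X^M$, $h$, $\sum_{j} w(z_j)$ and $\sum_{i} w(z_i)f(z_i)$ are all $\sigma_k$-invariant, this transformation leaves everything untouched except turning $w(z_1)f(z_1)$ into $w(z_k)f(z_k)$; hence
\[
\int h(z)\frac{w(z_1)f(z_1)}{\sum_{j} w(z_j)}\sum_{i} w(z_i)f(z_i)\prod_{n=1}^{N+M} q(\ud z_n) = \int h(z)\frac{w(z_k)f(z_k)}{\sum_{j} w(z_j)}\sum_{i} w(z_i)f(z_i)\prod_{n=1}^{N+M} q(\ud z_n).
\]
Averaging this identity over $k=1,\dots,N$ replaces $w(z_1)f(z_1)$ by $\tfrac1N\sum_{k} w(z_k)f(z_k)$, and since $\big(\sum_{k} w(z_k)f(z_k)\big)\big(\sum_{i} w(z_i)f(z_i)\big)=\big(\sum_{i} w(z_i)f(z_i)\big)^2$, the right-hand side of the claimed identity follows.

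I expect the only genuinely delicate point to be the integrability step, which legitimises interchanging the finite sum with the integral and applying the measure-preserving permutation; once that is in place, the symmetrisation is routine, the domain $\W{N}$ being permutation-invariant by its very definition $\W{N}=\bigcup_{i=1}^N\{z\in\X^N : w(z_i)>0\}$.
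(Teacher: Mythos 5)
Your proof is correct and takes essentially the same route as the paper's: the paper likewise exploits exchangeability of $z_1,\dots,z_N$ under the product measure together with the symmetry of $h$ to replace the distinguished factor $w(z_1)f(z_1)$ by the average $\frac{1}{N}\sum_{k=1}^N w(z_k)f(z_k)$, turning the sum into $\frac{1}{N}\big(\sum_i w(z_i)f(z_i)\big)^2$ over the denominator. Your explicit integrability check is a detail the paper leaves implicit (it relies on the observation, made just before the lemma, that $P_N f\in L^2(\pi)$ so the relevant integrals are finite), but it does not change the substance of the argument.
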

\begin{proof}
Since $h$ is symmetric with respect to the first $N$ arguments and $z_1,\dots,z_{N}$ are exchangeable under the integral, the  expression on the left can be written as
\begin{align*}
\int_{\W{N}\times \X^M} h(z)\frac{1}{N}\sum_{i=1}^N\sum_{k=1}^N \frac{w(z_i)w(z_k)f(z_i)f(z_k)}{\sum_{j=1}^N w(z_j)} \bigg( \prod_{n=1}^{N+M} q(\ud z_n) \bigg),
\end{align*}
which equals the expression on the right.
\end{proof}

Lemma \ref{lemma:i-sir-exchangeable-property} shows that the covariances in equation \eqref{equation:isir-inner-product} are positive:

\begin{lemma}\label{lemma:inner-product-another-representation}
Let $N\in \N_+$ and $f\in L^2(\pi)$, then
$$
 \angles{f|P_Nf}_\pi=\frac{1}{N}\int_{\W{N}}\frac{\Big(\sum_{i=1}^Nw(z_i)f(z_i)\Big)^2}{\sum_{j=1}^N w(z_j)} \bigg( \prod_{n=1}^N q(\ud z_n) \bigg)\ge 0,
$$
and $\angles{f|P_Nf}_\pi=0$ if and only if $\|f\|_\pi=0$. 
\end{lemma}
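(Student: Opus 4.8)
The plan is to apply Lemma \ref{lemma:i-sir-exchangeable-property} directly to the expression for $\angles{f|P_Nf}_\pi$ given in \eqref{equation:isir-inner-product}, and then to analyse the resulting integrand to extract positivity and the equality characterisation. Concretely, the second line of \eqref{equation:isir-inner-product} writes $\angles{f|P_Nf}_\pi$ as an integral over $\W{N}$ of $\sum_{i=1}^N \frac{w(z_1)w(z_i)f(z_1)f(z_i)}{\sum_{j=1}^N w(z_j)}$ against the product measure $\prod_{n=1}^N q(\ud z_n)$. This is exactly the left-hand side of Lemma \ref{lemma:i-sir-exchangeable-property} in the special case $M=0$ and $h\equiv 1$ (the constant function is bounded, measurable, and trivially symmetric in its first $N$ arguments). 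Applying the lemma immediately yields the claimed identity
$$
\angles{f|P_Nf}_\pi=\frac{1}{N}\int_{\W{N}}\frac{\Big(\sum_{i=1}^Nw(z_i)f(z_i)\Big)^2}{\sum_{j=1}^N w(z_j)} \bigg( \prod_{n=1}^N q(\ud z_n) \bigg).
$$

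For the inequality, I observe that the integrand is manifestly nonnegative: the numerator is a square, the denominator $\sum_{j=1}^N w(z_j)$ is strictly positive on $\W{N}$ by definition of $\W{N}$, and the prefactor $1/N$ is positive. Hence the integral is $\ge 0$, giving $\angles{f|P_Nf}_\pi\ge 0$. The only remaining point is the equality characterisation. The direction $\|f\|_\pi=0 \Rightarrow \angles{f|P_Nf}_\pi=0$ is immediate, since $\|f\|_\pi=0$ means $f=0$ $\pi$-almost everywhere, which forces the integrand to vanish $q$-almost everywhere on $\W{N}$ (each term $w(z_i)f(z_i)$ vanishes $q$-a.e.\ because $w(z_i)>0$ only where $\pi$ and $q$ share mass).

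For the converse, suppose $\angles{f|P_Nf}_\pi=0$. Since the integrand is nonnegative, the integral vanishes only if the integrand is zero $q^{\otimes N}$-almost everywhere on $\W{N}$, i.e.\ $\sum_{i=1}^N w(z_i)f(z_i)=0$ for almost every $z\in\W{N}$. The key step is to deduce from this that $f=0$ $\pi$-a.e. I would argue by restricting to the ``diagonal-like'' region where a single coordinate carries all the weight: consider $z$ with $z_1$ ranging over the support of $\pi$ and $z_2,\dots,z_N$ chosen (with positive $q$-probability) so that $w(z_2)=\cdots=w(z_N)$ is arbitrarily small, or more cleanly integrate out $z_2,\dots,z_N$ to isolate the contribution of $f(z_1)$. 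The cleanest route is to take $N=1$ as a base observation, where $\angles{f|P_1f}_\pi=\angles{f|f}_\pi=\|f\|_\pi^2$, and for general $N$ to use that the vanishing of $\sum_i w(z_i)f(z_i)$ $q$-a.e., combined with the product structure and the nonnegativity of $w$, forces $w(z_1)f(z_1)=0$ $q$-a.e., hence $f=0$ $\pi$-a.e. (since $\pi(\ud z_1)=w(z_1)q(\ud z_1)$).

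The main obstacle will be making the converse argument rigorous: extracting pointwise vanishing of $f$ from the $q$-a.e.\ vanishing of the \emph{sum} $\sum_{i=1}^N w(z_i)f(z_i)$. Because the coordinates $z_2,\dots,z_N$ are independent of $z_1$ under $q^{\otimes N}$, I expect a Fubini-type argument to work: for $q$-almost every fixed configuration $(z_2,\dots,z_N)$ lying in a positive-measure set, the map $z_1\mapsto w(z_1)f(z_1) + \sum_{i=2}^N w(z_i)f(z_i)$ must vanish for $q$-a.e.\ $z_1$; since the additive constant $\sum_{i=2}^N w(z_i)f(z_i)$ does not depend on $z_1$ while $w(z_1)f(z_1)$ does, varying $z_1$ over a set where $w$ takes two distinct values forces both $w(z_1)f(z_1)\equiv 0$ and the constant to vanish. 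This yields $f=0$ $\pi$-a.e., i.e.\ $\|f\|_\pi=0$, completing the characterisation. Care must be taken with degenerate cases (e.g.\ $w$ constant on its support), but these can be handled by noting that $\W{N}$ has positive measure whenever $\|f\|_\pi>0$ and reducing to the $N=1$ identity.
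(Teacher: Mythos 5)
Your proposal reaches the identity and the nonnegativity exactly as the paper does (Lemma \ref{lemma:i-sir-exchangeable-property} with $h\equiv 1$), but it parts ways with the paper on the ``only if'' direction, and the comparison is instructive. The paper argues by contraposition: if $\|f\|_\pi>0$, then one of $A=\braces{x\in\Sp:f(x)>0}$ or $B=\braces{x\in\Sp:f(x)<0}$ is $\pi$-positive, hence $q$-positive; restricting the integral to $A^N$ (or $B^N$), where every $w(z_i)>0$ and every $f(z_i)$ has the same strict sign, makes the integrand strictly positive on a set of positive $q^{\otimes N}$-measure, so $\angles{f|P_Nf}_\pi>0$. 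This one-line localisation avoids all of the measure-theoretic work your direct route requires. Your route --- vanishing integral $\Rightarrow$ $\sum_{i=1}^N w(z_i)f(z_i)=0$ a.e. $\Rightarrow$ $f=0$ $\pi$-a.e.\ --- is also viable, and the cleanest way to finish it is slightly different from what you sketch: note that off $\W{N}$ every $w(z_i)=0$, so in fact $\sum_i w(z_i)f(z_i)=0$ holds $q^{\otimes N}$-a.e.\ on all of $\X^N$; by Fubini, for a.e.\ fixed $(z_2,\dots,z_N)$ the function $z_1\mapsto w(z_1)f(z_1)$ equals a constant $q$-a.e., so $wf$ is $q$-a.e.\ equal to some constant $c$, and the sum equation then forces $Nc=0$, i.e.\ $wf=0$ $q$-a.e., whence $\int|f|\,\ud\pi=\int w|f|\,q(\ud x)=0$. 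This finish also dissolves your worry about degenerate cases (``$w$ constant on its support''), and corrects the imprecise step about ``$w$ taking two distinct values'' (what matters is $wf$, and no case split is needed). Two minor technical repairs: Lemma \ref{lemma:i-sir-exchangeable-property} is stated only for $M\in\N_+$, so you cannot invoke it with $M=0$; do as the paper does and take any $M\ge 1$, observing that the extra coordinates integrate out since $q$ is a probability measure. Overall, your argument buys a direct (rather than contrapositive) proof at the cost of a Fubini argument; the paper's restriction-to-$A^N$ trick is shorter and is the same device it reuses for strictness elsewhere (e.g.\ in Lemma \ref{lemma:i-sir-inequality}).
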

\begin{proof}
Equation follows from Lemma \ref{lemma:i-sir-exchangeable-property} for any $M\in \N_+$ and $h=1$. If $\|f\|_\pi=0$, then $f=0$ $\pi$-almost surely, hence \eqref{equation:isir-inner-product} implies $\angles{f|P_Nf}_\pi=0$.
If $\|f\|_\pi\neq 0$, then $f\neq 0$ in a $\pi$-positive set and thus at least $A=\braces{x\in \Sp:f(x)>0}$ or $B=\braces{x\in \Sp:f(x)<0}$ has to be $\pi$-positive, which is also $q$-positive. It holds for all $x\in \Sp$ that $w(x)>0$, hence if $A$ is $\pi$-positive it follows that
$$
\angles{f|P_Nf}_\pi\ge \frac{1}{N}\int_{A^N}\frac{\Big(\sum_{i=1}^Nw(z_i)f(z_i)\Big)^2}{\sum_{j=1}^N w(z_j)} \bigg( \prod_{n=1}^N q(\ud z_n) \bigg)>0
$$
and if $B$ is $\pi$-positive, we get similar inequality.
\end{proof}

The following lemma connects expressions similar to the one above for different $N$:

\begin{lemma}\label{lemma:i-sir-inequality}
Let $N\in \N_+$, $f\in L^2(\pi)$ and let $h:\W{N+1}\times \X\to [0,\infty)$ be measurable bounded function and symmetric with respect to the first $N+1$ arguments, then
\begin{align}\label{eq:i-sir-inequality-1}
\begin{split}
&\frac{1}{N}\int_{\W{N}\times \X^2} h(z)\frac{\Big(\sum_{i=1}^{N}w(z_i)f(z_i)\Big)^2}{\sum_{j=1}^{N} w(z_j)}\bigg( \prod_{n=1}^{N+2} q(\ud z_n) \bigg)\\
&\quad \ge \frac{1}{N+1}\int_{\W{N+1}\times \X}h(z)\frac{\Big(\sum_{i=1}^{N+1}w(z_i)f(z_i)\Big)^2}{\sum_{j=1}^{N+1} w(z_j)} \bigg( \prod_{n=1}^{N+2} q(\ud z_n) \bigg).
\end{split}
\end{align}
If $\bar{f}\in L_{0,+}^2(\pi)$ and there exists a $\pi$-positive set $A$ such that $h(z)>0$ for every $z\in \W{N+1}\times A$, then the inequality is strict.
\end{lemma}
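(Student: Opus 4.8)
The plan is to reduce the asserted integral inequality to a pointwise inequality between finitely many weighted terms, which then follows from the Cauchy--Schwarz inequality in Engel (``Titu'') form. Write $S=\sum_{i=1}^{N+1}w(z_i)$ and $T=\sum_{i=1}^{N+1}w(z_i)f(z_i)$, so that $S-w(z_k)$ and $T-w(z_k)f(z_k)$ are the corresponding sums with the $k$-th term deleted. First I would symmetrise the left-hand side over the first $N+1$ coordinates. Since $h$ is symmetric in these coordinates and $z_1,\dots,z_{N+1}$ are exchangeable under $\prod_{n=1}^{N+1}q(\ud z_n)$, the integral $\int h(z)\frac{(\sum_{i=1}^{N}w(z_i)f(z_i))^2}{\sum_{i=1}^{N}w(z_i)}\charfun{\sum_{i=1}^{N}w(z_i)>0}\prod q(\ud z_n)$ is unchanged if the excluded index $N+1$ is replaced by any $k\in\{1,\dots,N+1\}$. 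Averaging over $k$ then shows that the left-hand side equals the integral, against $\prod_{n=1}^{N+2}q(\ud z_n)$, of
$$\frac{1}{N(N+1)}\,h(z)\sum_{k=1}^{N+1}\frac{\big(T-w(z_k)f(z_k)\big)^2}{S-w(z_k)}\charfun{S-w(z_k)>0}.$$
Here the convention $0/0=0$ is harmless, since $S-w(z_k)=0$ forces the corresponding numerator to vanish as well, which is why the indicator $\charfun{S-w(z_k)>0}$ may be inserted freely.

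It then suffices to prove, pointwise on $\W{N+1}\times\X$ (where $S>0$), the inequality $\sum_{k=1}^{N+1}\frac{(T-w(z_k)f(z_k))^2}{S-w(z_k)}\ge N\,\frac{T^2}{S}$, since multiplying by $h(z)\ge 0$ and integrating against $\prod_{n=1}^{N+2}q(\ud z_n)$ yields exactly the claim. If at least two of the weights $w(z_1),\dots,w(z_{N+1})$ are positive, then every denominator $S-w(z_k)$ is strictly positive, and Cauchy--Schwarz in Engel form gives $\sum_k\frac{(T-w(z_k)f(z_k))^2}{S-w(z_k)}\ge\frac{(\sum_k(T-w(z_k)f(z_k)))^2}{\sum_k(S-w(z_k))}$. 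The two sums telescope: $\sum_{k=1}^{N+1}(T-w(z_k)f(z_k))=NT$ and $\sum_{k=1}^{N+1}(S-w(z_k))=NS$, so the right-hand side equals $\frac{(NT)^2}{NS}=N\frac{T^2}{S}$, as desired. The remaining degenerate case, in which exactly one weight is positive, is checked directly and produces equality.

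For the strictness claim I would track the equality case of Cauchy--Schwarz: equality forces $(T-w(z_k)f(z_k))/(S-w(z_k))$ to be independent of $k$, which upon summing equals $T/S$, and hence forces $f(z_k)=T/S$ for every $k$ with $w(z_k)>0$; that is, $f$ is constant across the positively weighted coordinates. Since $\bar f\in L_{0,+}^2(\pi)$, the function $f$ is not $\pi$-a.s.\ constant, so there are reals $a<b$ with $\pi(f\le a)>0$ and $\pi(f\ge b)>0$; the associated sets $B_1,B_2\subset\Sp$ are $\pi$-positive, hence $q$-positive, and $w>0$ on them. On the positive-measure set $\{z_1\in B_1,\ z_2\in B_2,\ z_{N+2}\in A\}$ (with $z_3,\dots,z_{N+1}$ free) the coordinates $z_1,z_2$ are positively weighted with $f(z_1)\le a<b\le f(z_2)$, so the pointwise inequality is strict there, while $(z_1,\dots,z_{N+1})\in\W{N+1}$ and $z_{N+2}\in A$ ensure $h(z)>0$. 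Integrating over this set then renders the overall inequality strict.

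The conceptual core, and the step I expect to carry the proof, is the symmetrisation that converts the ``$N$ out of $N+1$'' left-hand side into a sum over all single deletions, after which the Engel/Cauchy--Schwarz bound produces the sharp constant $N$ automatically via the telescoping identities. The only points needing genuine care are the bookkeeping of the $0/0$ convention at degenerate configurations (where a denominator vanishes) and the verification that the strictness set has positive $q$-measure and lies in the region where $h>0$.
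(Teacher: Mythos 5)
Your proof is correct and follows essentially the same route as the paper's: you symmetrise the left-hand side over the first $N+1$ exchangeable coordinates, apply the Engel (Sedrakyan) form of Cauchy--Schwarz to the single-deletion sums with the telescoping identities supplying the constant $N$, and obtain strictness from the Cauchy--Schwarz equality case on a positive-measure set where $h>0$. The paper packages exactly these steps into its appendix Lemmas (the Sedrakyan corollary and the linear-independence condition), so the only difference is that you carry out the pointwise inequality and equality analysis inline rather than citing standalone lemmas.
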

\begin{proof}
Let us define $g_{k}:\W{N+1}\times \X\to \R$,
$$
g_{k}(z_1,\dots,z_{N+2})=
\begin{cases}
\frac{\big(\sum_{i=1,i\neq k}^{N+1}w(z_i)f(z_i)\big)^2}{\sum_{j=1,j\neq k}^{N+1} w(z_j)}, &\sum_{l=1,l\neq k}^{N+1}w(z_l)>0 \\
0, &\sum_{l=1,l\neq k}^{N+1}w(z_l)=0,
\end{cases}
$$
for $k\in \braces{1,2,\dots,N+2}$. We have
\begin{align}\label{eq:i-sir-inequality-2}
\int_{B}\frac{h(z)}{(N+1)N}\sum_{k=1}^{N+1}g_{k}(z) \bigg( \prod_{n=1}^{N+2} q(\ud z_n) \bigg) 
\ge \int_{B}\frac{h(z)}{N+1}g_{N+2}(z) \bigg( \prod_{n=1}^{N+2} q(\ud z_n) \bigg)
\end{align}
for any $B\subset \W{N+1}\times \X$ that is measurable with respect to the product sigma-algebra by Lemma \ref{lemma:corollary-of-sedrakyan} in Appendix \ref{app:differences}. When $B=\W{N+1}\times \X$, the right-hand side of \eqref{eq:i-sir-inequality-2} equals the right-hand side of \eqref{eq:i-sir-inequality-1}, and
\begin{align*}
&\int_{\W{N+1}\times \X}\frac{h(z)}{(N+1)N}\sum_{k=1}^{N+1}g_{k}(z) \bigg( \prod_{n=1}^{N+2} q(\ud z_n) \bigg) 
= \int_{\W{N+1}\times \X}\frac{h(z)}{N}g_{N+1}(z) \bigg( \prod_{n=1}^{N+2} q(\ud z_n) \bigg) \\
&\quad = \int_{\W{N}\times \X^2} \frac{h(z)}{N}\frac{\Big(\sum_{i=1}^{N}w(z_i)f(z_i)\Big)^2}{\sum_{j=1}^{N} w(z_j)}\bigg( \prod_{n=1}^{N+2} q(\ud z_n) \bigg),
\end{align*}
where the first equality follows since $h$ is symmetric with respect to the first $N+1$ arguments and $z_1,\dots,z_{N+1}$ are exchangeable under the integral, and the second from $\W{N}\times \X\subset \W{N+1}$.

If $\bar{f}\in L_{0,+}^2(\pi)$, then there exist two disjoint $\pi$-positive sets $U,V\subset \Sp$ such that $f$ differs on them, these sets and set $A$ in the assumption are $q$-positive. Let $W=U\times V\times \Sp^{N-1}\times A$ (for $N=1$, $W=U\times V\times A$). If $B=W$, we get strict inequality for \eqref{eq:i-sir-inequality-2} by Lemma \ref{lemma:corollary-of-sedrakyan} and \ref{lemma:linearly-independent-condition} in Appendix \ref{app:differences}, thus strict inequality also holds when $B=\W{N+1}\times \X=W\cup ((\W{N+1}\times \X)\setminus W)$, from which the claim follows.
\end{proof}

In particular, Lemma \ref{lemma:i-sir-inequality} implies that covariance decreases with respect to $N$:
\begin{lemma}\label{lemma:i-sir-inner-product-differences}
Let $N\in \N_+$ and $f\in L^2(\pi)$, then
\begin{equation*}
\angles{f|P_Nf}_\pi \ge \angles{f|P_{N+1}f}_\pi,
\end{equation*}
where the inequality is strict if and only if $\bar{f}\in L_{0,+}^2(\pi)$.
\end{lemma}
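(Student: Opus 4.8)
The plan is to derive both the inequality and the characterisation of its strictness directly from the two preceding lemmas, specialised to the trivial weight $h\equiv 1$. By Lemma~\ref{lemma:inner-product-another-representation}, the two covariances admit the quadratic representations
$$\angles{f|P_Nf}_\pi=\frac{1}{N}\int_{\W{N}}\frac{\big(\sum_{i=1}^Nw(z_i)f(z_i)\big)^2}{\sum_{j=1}^N w(z_j)}\bigg(\prod_{n=1}^N q(\ud z_n)\bigg),$$
together with the analogous expression for $N+1$. The inequality \eqref{eq:i-sir-inequality-1} of Lemma~\ref{lemma:i-sir-inequality} is exactly a comparison of these two quadratic forms, only padded with extra proposal variables, so the strategy is simply to strip off those extra variables.

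Concretely, I would take $h\equiv 1$ in Lemma~\ref{lemma:i-sir-inequality}. On the left-hand side of \eqref{eq:i-sir-inequality-1} the integrand depends only on $z_1,\dots,z_N$, and the domain $\W{N}\times\X^2$ constrains only those same coordinates; since $q$ is a probability measure, Fubini lets me integrate out $z_{N+1}$ and $z_{N+2}$ (each contributing a factor $\int q=1$), which recovers precisely $\angles{f|P_Nf}_\pi$ via Lemma~\ref{lemma:inner-product-another-representation}. Likewise, the right-hand integrand depends only on $z_1,\dots,z_{N+1}$, so integrating out $z_{N+2}$ yields $\angles{f|P_{N+1}f}_\pi$. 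Thus \eqref{eq:i-sir-inequality-1} collapses to the claimed inequality $\angles{f|P_Nf}_\pi\ge\angles{f|P_{N+1}f}_\pi$.

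For the characterisation of strictness I treat the two directions separately. For sufficiency, observe that with $h\equiv 1$ the hypothesis of Lemma~\ref{lemma:i-sir-inequality} requiring a $\pi$-positive set $A$ with $h>0$ on $\W{N+1}\times A$ holds trivially (take $A=\Sp$); hence whenever $\bar{f}\in L_{0,+}^2(\pi)$ the inequality is strict. For necessity I argue by contraposition: if $\bar{f}\notin L_{0,+}^2(\pi)$, then $\|\bar{f}\|_\pi=0$, so $f=\pi(f)$ $\pi$-almost surely. Since $\angles{f|P_Nf}_\pi$ depends only on the $L^2(\pi)$-equivalence class of $f$, and $P_N$ is a Markov kernel fixing constants, both inner products equal $\pi(f)^2$, giving equality.

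The argument is short because the heavy lifting already sits in Lemma~\ref{lemma:i-sir-inequality}; the only points that genuinely demand care are the validity of integrating out the dummy proposals — which hinges on the integrands being independent of $z_{N+1},z_{N+2}$ and on the domains $\W{N}\times\X^2$ and $\W{N+1}\times\X$ not constraining those coordinates — and, in the necessity direction, the observation that the covariances are functions of the $L^2(\pi)$-class of $f$ alone, so that the behaviour of $f$ off the support of $\pi$ plays no role.
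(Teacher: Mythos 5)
Your proposal is correct and takes essentially the same route as the paper's proof: both apply Lemma \ref{lemma:i-sir-inequality} with $h\equiv 1$ (you merely make explicit the integration-out of the dummy proposals via Lemma \ref{lemma:inner-product-another-representation} and the choice $A=\Sp$ for the strictness hypothesis, which the paper leaves implicit), and both settle the necessity direction by observing that $\bar{f}\notin L_{0,+}^2(\pi)$ forces $f$ to be $\pi$-almost surely constant, so that $\angles{f|P_Nf}_\pi$ is the same for every $N$.
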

\begin{proof}
If $\bar{f}\notin L_{0,+}^2(\pi)$, then there exists $c\in \R$ such that $f=c$ $\pi$-almost surely, hence $\angles{f|P_Nf}_\pi=c^2$ for all $N\in \N_+$.
Combining this with Lemma \ref{lemma:i-sir-inequality} with $h=1$ we obtain the claim.
\end{proof}

The next result shows that the covariances form a convex sequence with respect to $N$:

\begin{lemma}\label{lemma:i-sir-inner-product-differences-3}
Let $N\in \N_+$ and $f\in L^2(\pi)$, then
\begin{equation*}
\angles{f|P_{N}f}-2\angles{f|P_{N+1}f}+\angles{f|P_{N+2}f}\ge 0,
\end{equation*}
where the inequality is strict if and only if $\bar{f}\in L_{0,+}^2(\pi)$.
\end{lemma}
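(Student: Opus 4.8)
The plan is to mirror the proof of Lemma \ref{lemma:i-sir-inequality}, reducing the second-difference inequality to a single pointwise inequality among quadratic-over-linear forms and then establishing that inequality. For a finite index set $S$ write $A_S=\sum_{i\in S}w(z_i)f(z_i)$, $B_S=\sum_{i\in S}w(z_i)$ and $T_S=A_S^2/B_S$ (with $T_S=0$ if $B_S=0$), so that Lemma \ref{lemma:inner-product-another-representation} reads $\angles{f|P_m f}_\pi=\frac1m\int_{\W{m}}T_{\{1,\dots,m\}}\prod_{n=1}^m q(\ud z_n)$. First I would pad $\angles{f|P_Nf}_\pi$, $\angles{f|P_{N+1}f}_\pi$ and $\angles{f|P_{N+2}f}_\pi$ up to a common set of $N+2$ integration variables (the superfluous $q$-factors integrate to one), and then use the exchangeability of $z_1,\dots,z_{N+2}$ under $\prod_{n=1}^{N+2}q(\ud z_n)$ to replace each $T_{\{1,\dots,m\}}$ by the average of $T_S$ over all $S$ with $|S|=m$, exactly as in the exchangeability step of Lemma \ref{lemma:i-sir-inequality}.

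A count of binomial coefficients then turns $\angles{f|P_Nf}_\pi-2\angles{f|P_{N+1}f}_\pi+\angles{f|P_{N+2}f}_\pi\ge 0$ into the pointwise claim
\[
\sum_{|S|=N}T_S-N\sum_{|S|=N+1}T_S+\tfrac{N(N+1)}{2}\,T_{\{1,\dots,N+2\}}\ge 0
\]
for every configuration $(z_1,\dots,z_{N+2})$, the second-order analogue of the Sedrakyan reduction \eqref{eq:i-sir-inequality-2} used for the first difference. To prove it I would exploit that its left-hand side is invariant under the shift $f\mapsto f+c$: writing each $T_S$ for $f+c$ as $T_S-2cA_S+c^2B_S$, the coefficients of $c$ and $c^2$ cancel via the identities $\sum_{|S|=N}A_S=\binom{N+1}{2}A_{\{1,\dots,N+2\}}$ and $\sum_{|S|=N+1}A_S=(N+1)A_{\{1,\dots,N+2\}}$ (and likewise with $A$ replaced by $B$). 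This is the pointwise shadow of the operator-level fact that the second difference is shift-invariant because $P_m\mathbf 1=\mathbf 1$ and $\pi P_m=\pi$. Choosing $c=-A_{\{1,\dots,N+2\}}/B_{\{1,\dots,N+2\}}$ makes $A_{\{1,\dots,N+2\}}=0$ and removes the last term, so, writing $v_i=w(z_i)f(z_i)$, $w_i=w(z_i)$ and $B=\sum_{i=1}^{N+2}w_i$ and passing to complements, it remains to show $\sum_{k<l}\frac{(v_k+v_l)^2}{B-w_k-w_l}\ge N\sum_k\frac{v_k^2}{B-w_k}$ under the constraint $\sum_i v_i=0$.

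I expect this last inequality to be the main obstacle. Unlike the first-order case it does not follow from summing the leave-one-out Sedrakyan bound $\sum_{l\in I}T_{I\setminus l}\ge(|I|-1)T_I$: summing it over the $(N+1)$-subsets only yields $\sum_{|S|=N}T_S\ge\tfrac N2\sum_{|S|=N+1}T_S$, weaker by a factor two, and the unequal-weight case is genuinely delicate — already for $N=1$ it reads $\sum_m v_m^2/w_m\ge\sum_m v_m^2/(B-w_m)$ under $\sum_m v_m=0$, which fails term by term and uses the constraint essentially. I would therefore establish it as a dedicated Cauchy--Schwarz/Sedrakyan estimate, the second-order companion of Lemma \ref{lemma:corollary-of-sedrakyan}, guided by the equal-weight case, where the slack equals the strictly positive quantity $\frac{1}{N+1}\sum_i v_i^2$ and vanishes exactly when the $v_i$ are all equal (hence, after centring, all zero). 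Finally, strictness follows the template at the end of the proof of Lemma \ref{lemma:i-sir-inequality}: if $\bar f\notin L_{0,+}^2(\pi)$ then $f$ is $\pi$-a.s.\ constant and every $\angles{f|P_m f}_\pi$ equals that constant squared, so the second difference vanishes; and if $\bar f\in L_{0,+}^2(\pi)$ there are disjoint $\pi$-positive sets on which $f$ differs, and on the associated positive-measure product set the Cauchy--Schwarz step is strict, so Lemma \ref{lemma:linearly-independent-condition} upgrades the integral inequality to a strict one.
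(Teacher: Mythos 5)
Your reduction is carried out correctly: padding all three inner products to $N+2$ integration variables, symmetrising via exchangeability, the binomial bookkeeping that produces the candidate pointwise inequality, the shift-invariance of its left-hand side, and the passage to complements under the centring constraint are all sound. But the argument stops exactly where the lemma's content lies: the inequality
\[
\sum_{k<l}\frac{(v_k+v_l)^2}{B-w_k-w_l}\;\ge\; N\sum_k\frac{v_k^2}{B-w_k},
\qquad \sum_i v_i=0,
\]
is never proved; you explicitly defer it to ``a dedicated Cauchy--Schwarz/Sedrakyan estimate'' that you do not supply. This is not a routine verification: as you yourself note, summing the leave-one-out Sedrakyan bound of Lemma \ref{lemma:corollary-of-sedrakyan} over $(N+1)$-subsets loses a factor of two, and already for $N=1$ the constraint must be used globally rather than termwise. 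Worse, your pointwise claim is strictly stronger than the lemma itself: it is a sampling-\emph{without}-replacement analogue of the statement, whereas the lemma only asserts nonnegativity of a $q^{\otimes(N+2)}$-integral to which configurations with repeated points also contribute, so a failure of the pointwise claim would not even contradict the lemma. You may therefore have replaced the problem by a harder one. (For what it is worth, the $N=1$ case of your inequality can be checked by a convexity argument in the weights, but a proof for general $N$ is missing.) The ``strict if and only if'' part inherits the same gap: without a proof of the inequality there is no characterisation of its equality cases for Lemma \ref{lemma:linearly-independent-condition} to upgrade to strictness of the integral.

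For contrast, the paper never faces a second-order pointwise inequality. It uses the partial-sum identity of Lemma \ref{lemma:differences} together with exchangeability of only the last two coordinates $z_{N+1},z_{N+2}$ to rewrite the second difference \emph{exactly} as twice a difference of weighted covariance-type integrals, with weight $h(z)=w(z_{N+2})/\sum_{j=1}^{N+2}w(z_j)$; Lemma \ref{lemma:i-sir-exchangeable-property} converts both terms into quadratic forms, and Lemma \ref{lemma:i-sir-inequality} --- which was stated and proved with an arbitrary nonnegative symmetric weight $h$ precisely for this application --- then yields both the inequality and its strictness, resting only on the first-order Sedrakyan estimate. In short, the paper reduces ``second difference'' to ``weighted first difference,'' while you reduce it to an unproven second-order inequality; to complete your argument you would either have to prove that inequality for all $N$ (including its equality cases), or abandon the pointwise route in favour of the paper's weighting trick.
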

\begin{proof}
We first show that
\begin{align}
&\angles{f|P_{N}f}_\pi-2\angles{f|P_{N+1}f}_\pi+\angles{f|P_{N+2}f}_\pi \nonumber \\
&\quad=2\int_{\W{N}\times \X^{2}} 
\frac{w(z_{N+2})}{\sum_{j=1}^{N+2} w(z_j)}\frac{\sum_{i=1}^{N}w(z_1)f(z_1)w(z_i)f(z_i)}{\sum_{k=1}^{N} w(z_k)}\bigg( \prod_{n=1}^{N+2} q(\ud z_n) \bigg) \label{eq:cx-expression} \\
&\qquad-2\int_{\W{N+1}\times \X} 
\frac{w(z_{N+2})}{\sum_{j=1}^{N+2} w(z_j)}\frac{\sum_{i=1}^{N+1}w(z_1)f(z_1)w(z_i)f(z_i)}{\sum_{k=1}^{N+1} w(z_k)}\bigg( \prod_{n=1}^{N+2} q(\ud z_n) \bigg).\nonumber
\end{align}
Let us denote $A_n=\sum_{i=1}^nw(z_1)w(z_i)f(z_1)f(z_i)$ and $B_n=\sum_{i=1}^nw(z_i)$. We have
\begin{align*}
&\angles{f|P_{N}f}-2\angles{f|P_{N+1}f}-\angles{f|P_{N+2}f} \\
&\quad=\int_{\Sp\times \X^{N+1}}\Big(\frac{A_{N}}{B_{N}}-2\frac{A_{N+1}}{B_{N+1}}+\frac{A_{N+2}}{B_{N+2}}\Big)\bigg( \prod_{n=1}^{N+2}q(\ud z_n) \bigg) \\
&\quad=\int_{\Sp\times \X^{N+1}} \frac{B_{N+2}B_{N+1}A_{N}-2B_{N+2}B_{N}A_{N+1}+B_{N+1}B_{N}A_{N+2}}{B_{N+2}B_{N+1}B_{N}} \bigg( \prod_{n=1}^{N+2} q(\ud z_n) \bigg).
\end{align*}
By Lemma \ref{lemma:differences} in Appendix \ref{app:differences}. 
\begin{align*}
&B_{N+2}B_{N+1}A_{N}-2B_{N+2}B_{N}A_{N+1}+B_{N+1}B_{N}A_{N+2} \\
&\quad=2w(z_{N+2})B_{N+1}A_{N}-2w(z_{N+2})B_{N}A_{N+1}\\
&\qquad-B_{N+1}B_{N}w(z_1)f(z_1)[w(z_{N+1})f(z_{N+1})-w(z_{N+2})f(z_{N+2})] \\
&\qquad+B_{N+1}A_{N}[w(z_{N+1})-w(z_{N+2})].
\end{align*}
The claim follows since $z_{N+1}$ and $z_{N+2}$ are exchangeable under the following integrals
\begin{align*}
&\int_{\Sp\times\X^{N+1}} \frac{w(z_1)f(z_1)w(z_{N+1})f(z_{N+1})}{B_{N+2}} \bigg( \prod_{n=1}^{N+2} q(\ud z_n) \bigg) \\
&\quad=\int_{\Sp\times\X^{N+1}}\frac{w(z_1)f(z_1)w(z_{N+2})f(z_{N+2})}{B_{N+2}}\bigg( \prod_{n=1}^{N+2} q(\ud z_n) \bigg)
\end{align*}
and
$$
\int_{\Sp\times\X^{N+1}} \frac{A_{N}w(z_{N+1})}{B_{N+2}B_N}\bigg( \prod_{n=1}^{N+2}q(\ud z_n) \bigg)
=\int_{\Sp\times\X^{N+1}}\frac{A_{N}w(z_{N+2})}{B_{N+2}B_N}\bigg( \prod_{n=1}^{N+2} q(\ud z_n) \bigg).
$$

Combining \eqref{eq:cx-expression} with Lemma \ref{lemma:i-sir-exchangeable-property}, it follows that $\angles{f|P_{N}f}_\pi-2\angles{f|P_{N+1}f}_\pi+\angles{f|P_{N+2}f}_\pi$ may be written as
\begin{align*}
&\frac{2}{N}\int_{\W{N}\times \X^2} 
\frac{w(z_{N+2})}{\sum_{j=1}^{N+2} w(z_j)}\frac{\Big(\sum_{i=1}^{N}w(z_i)f(z_i)\Big)^2}{\sum_{k=1}^{N} w(z_k)}\bigg( \prod_{n=1}^{N+2} q(\ud z_n) \bigg) \\
&\qquad-\frac{2}{N+1}\int_{\W{N+1}\times \X} 
\frac{w(z_{N+2})}{\sum_{j=1}^{N+2} w(z_j)}\frac{\Big(\sum_{i=1}^{N+1}w(z_i)f(z_i)\Big)^2}{\sum_{k=1}^{N+1} w(z_k)}\bigg( \prod_{n=1}^{N+2} q(\ud z_n) \bigg).
\end{align*}
Therefore, the claim follows from Lemma \ref{lemma:i-sir-inequality} and \ref{lemma:i-sir-inner-product-differences}.
\end{proof}

Covariances can be expressed with the help of i.i.d. random variables from $q$. We will use this property and the fact that if $X\sim q$, then $\E[w(X)f(X)]=\E_\pi[f]$ for $\pi$-integrable $f$ and $\E[w(X)]=1$, to show the following limit property:

\begin{proposition}
Let $f\in L^2(\pi)$, then 
$$\angles{f|P_{N}f}\xrightarrow{N\to \infty} \E_\pi[f]^2.$$
\end{proposition}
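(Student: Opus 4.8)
The plan is to rewrite the covariance using the representation from Lemma~\ref{lemma:inner-product-another-representation} as an expectation over i.i.d.\ draws from $q$, and then combine the strong law of large numbers with a (generalised) dominated convergence argument. Concretely, let $Z_1,Z_2,\ldots$ be i.i.d.\ with $Z_i\sim q$, and set $U_i=w(Z_i)f(Z_i)$ and $V_i=w(Z_i)$. Since $\prod_{n=1}^N q(\ud z_n)$ is the law of $(Z_1,\ldots,Z_N)$, Lemma~\ref{lemma:inner-product-another-representation} can be read as
$$
\angles{f|P_Nf}_\pi=\E\Bigg[\frac{\big(\tfrac{1}{N}\sum_{i=1}^N U_i\big)^2}{\tfrac{1}{N}\sum_{j=1}^N V_j}\,\charfun{(Z_1,\ldots,Z_N)\in\W{N}}\Bigg]=:\E[R_N],
$$
where I have divided numerator and denominator by $N$ and set $R_N:=0$ off $\W{N}$ (so $0\le R_N$ everywhere).

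Next I would establish the almost sure limit of the integrand. Using the identities $\E[U_i]=\E[w(X)f(X)]=\E_\pi[f]$ and $\E[V_i]=\E[w(X)]=1$ (for $X\sim q$), together with $\E|U_i|=\E_\pi[|f|]\le\|f\|_\pi<\infty$ and $\E[V_i]=1<\infty$ (both finite since $f\in L^2(\pi)$), the strong law of large numbers gives $\tfrac{1}{N}\sum_{i=1}^N U_i\to\E_\pi[f]$ and $\tfrac{1}{N}\sum_{j=1}^N V_j\to 1$ almost surely. In particular $\sum_{j=1}^N V_j>0$ for all large $N$, so the indicator equals $1$ eventually, and hence $R_N\to\E_\pi[f]^2$ almost surely.

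It then remains to pass the limit through the expectation, which is the crux of the argument. I would dominate $R_N$ by Cauchy--Schwarz: writing $U_i=V_i f(Z_i)$, one has $\big(\sum_i V_i f(Z_i)\big)^2\le\big(\sum_i V_i\big)\big(\sum_i V_i f(Z_i)^2\big)$, so that after cancelling $\sum_j V_j$,
$$
0\le R_N\le \frac{1}{N}\sum_{i=1}^N w(Z_i)f(Z_i)^2=:S_N.
$$
By the strong law of large numbers $S_N\to\E[w(X)f(X)^2]=\E_\pi[f^2]$ almost surely, while $\E[S_N]=\E_\pi[f^2]$ for every $N$, which is precisely the expectation of this almost sure limit. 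Since $0\le R_N\le S_N$ pointwise, $R_N\to\E_\pi[f]^2$ a.s., $S_N\to\E_\pi[f^2]$ a.s., and $\E[S_N]$ converges to $\E_\pi[f^2]$, the generalised dominated convergence theorem (Pratt's lemma) applies and yields $\E[R_N]\to\E_\pi[f]^2$, which is the claim.

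The hard part is exactly this interchange of limit and expectation: the ordinary dominated convergence theorem does not apply because there is no fixed integrable majorant, only the random sequence $S_N$. The Cauchy--Schwarz bound supplies precisely the variable dominating sequence whose expectations converge to the expectation of its limit, so Pratt's lemma closes the gap. I do not anticipate further obstacles, as the remaining ingredients use only $f\in L^2(\pi)\subset L^1(\pi)$ and the standing normalisation $\E[w(X)]=1$; note moreover that Lemma~\ref{lemma:i-sir-inner-product-differences} already guarantees the sequence $\angles{f|P_Nf}_\pi$ converges, so the real content is the identification of the limit as $\E_\pi[f]^2$.
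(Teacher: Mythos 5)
Your proof is correct, but it handles the crux of the argument --- interchanging the limit and the expectation --- by a genuinely different route than the paper. Both proofs start from the quadratic representation of Lemma~\ref{lemma:inner-product-another-representation} and both use i.i.d.\ draws $Z_1,Z_2,\ldots\sim q$ with the strong law of large numbers, but from there the paper splits the quadratic form into its diagonal part $A_N$ and off-diagonal part $B_N$: it kills $\E[A_N]$ by an exchangeability trick ($\E[A_N]=\E[C_N]$, with $C_N$ dominated by the fixed integrable function $w(Z_1)f(Z_1)^2$), and then identifies $\lim \E[B_N]$ by showing $B_N\to\E_\pi[f]^2$ in probability together with uniform integrability, the latter obtained from a second-moment bound $\E[|B_N|^2]\le\E[D_N]\le\E_\pi[f^2]^2$ via Sedrakyan's inequality. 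You instead keep the whole ratio $R_N$ intact, obtain its a.s.\ limit directly, and dominate it by the random sequence $S_N=\frac{1}{N}\sum_i w(Z_i)f(Z_i)^2$ via Cauchy--Schwarz, closing the argument with Pratt's lemma since $\E[S_N]=\E_\pi[f^2]$ is constant and equal to the expectation of the a.s.\ limit of $S_N$. Your route is shorter and arguably cleaner: it avoids the diagonal/off-diagonal bookkeeping, the fourth-moment-type computation with $D_N$, and the uniform-integrability machinery, all under exactly the same hypothesis $f\in L^2(\pi)$. What the paper's approach buys in exchange is that it relies only on the most standard tools (ordinary dominated convergence and $L^2$-boundedness implying uniform integrability) rather than the generalised dominated convergence theorem, and its intermediate objects ($A_N$, $B_N$, and the Sedrakyan bound) are reused elsewhere in the paper's toolkit. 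All the individual steps you use check out: the SLLN applies to $U_i$ and $V_i$ because $\E_\pi[|f|]\le\|f\|_\pi<\infty$ and $\E[w(Z_i)]=1$, the indicator of $\W{N}$ is eventually $1$ almost surely, and the Cauchy--Schwarz domination $R_N\le S_N$ holds on and off $\W{N}$.
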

\begin{proof}
Let $Z_1,Z_2,\dots$ be i.i.d. from $q$. Let us define sequences of random variables $(A_N)$, $(B_N)$, $(C_N)$ and $(D_N)$:
\begin{align*}
A_N&=\frac{1}{N}\frac{\sum_{i=1}^Nw(Z_i)^2f(Z_i)^2}{\sum_{j=1}^Nw(Z_j)},  \qquad &B_N&=\frac{1}{N}\frac{\sum_{i=1}^N\sum_{j=1,j\neq i}^Nw(Z_i)w(Z_j)f(Z_i)f(Z_j)}{\sum_{k=1}^N w(Z_k)}, \\
C_N&=\frac{w(Z_1)^2f(Z_1)^2}{\sum_{j=1}^N w(Z_j)},\qquad &D_N&=\frac{1}{N^2}\frac{\big(\sum_{i=1}^N\sum_{j=1,j\neq i}^N w(Z_i)w(Z_j)f(Z_i)f(Z_j)\big)^2}{\sum_{k=1}^N \sum_{l=1,l\neq k}^N w(Z_k)w(Z_l)},
\end{align*}
whenever the denominator is positive, and otherwise $0$.
We have by Lemma \ref{lemma:inner-product-another-representation}, that
\begin{equation*}
\angles{f|P_{N}f}= \E[A_N]+\E[B_N].
\end{equation*}
By strong law of large numbers $C_N\to 0$ and $A_N+B_N\to \E_\pi[f]^2$ almost surely. Since $|C_N(Z)|\le w(Z_1)f(Z_1)^2$ for all $N$, we have by dominated convergence $\E[C_N]\to 0$ and because $Z_i$ are exchangeable, we have $\E[A_N]=\E[C_N]\to 0$, hence $A_N\to 0$ in probability and we obtain $B_N\to \E_\pi[f]^2$ in probability.
It holds that $\E[|B_N|^2] \le \E[D_N]$ and 
for $N\ge 2$ we have
\begin{align*}
\E[D_N]\le \E\Big[\frac{1}{N^2}\sum_{i=1}^N\sum_{j=1,j\neq i}^Nw(Z_i)w(Z_j)f(Z_i)^2f(Z_j)^2\Big] 
=\frac{N(N-1)}{N^2}\E_\pi[f^2]^2, 
\end{align*}
where the inequality follows from Sedrakyan's inequality (Lemma \ref{lemma:sedrakyan-inequality} in Appendix \ref{app:differences}) and the equality from the fact that $Z_1,\dots,Z_N$ are i.i.d. Thus, $\sup_{N}\E[|B_N|^2]<\infty$, which implies that $(B_N)$ is uniformly integrable and therefore $\E[B_N] \to E_\pi[f]^2$.
\end{proof}

\subsection{Rejection probabilities}\label{subsection:rejection-probabilities-convex-sequence}

The rejection probability for Algorithm \ref{alg:isir} is defined as
\begin{align*}
\epsilon(N,z_1)=\int_{X^{N-1}} \frac{w(z_1)}{\sum_{i=1}^Nw(z_i)} \bigg(\prod_{n=2}^N q(\ud z_n) \bigg)
\end{align*}
and the average rejection probability
\begin{align*}
\epsilon(N)=\int_{\Sp} \epsilon(N,z_1) \pi(\ud z_1)=\int_{\W{N}} \frac{w(z_1)^2}{\sum_{i=1}^Nw(z_i)} \bigg(\prod_{n=1}^N q(\ud z_n) \bigg),
\end{align*}
when $N\in \N_+$. We shall use these notations throughout the upcoming sections. Monotonicity and sequential convexity are quite easy to show for the rejection probabilities:

\begin{lemma}\label{lemma:i-sir-rejection-inequalities}
Let $N\in \N_+$ and $x\in \Sp$, then
\begin{align*}
\epsilon(N)-\epsilon(N+1)&>0,\qquad&\epsilon(N)-2\epsilon(N+1)+\epsilon(N+2)&>0, \\
\epsilon(N,x)-\epsilon(N+1,x)&>0, \qquad &\epsilon(N,x)-2\epsilon(N+1,x)+\epsilon(N+2,x)&>0.
\end{align*}
\end{lemma}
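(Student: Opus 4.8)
The plan is to reduce all four inequalities to elementary monotonicity and convexity of the scalar map $s\mapsto w(x)/(w(x)+s)$, via a probabilistic representation of the rejection probability. Fix $x\in\Sp$, so that $w(x)>0$, let $Z_1,Z_2,\dots$ be i.i.d. from $q$ with partial sums $S_m=\sum_{i=1}^m w(Z_i)$ (and $S_0=0$), and write $g_x(s)=w(x)/(w(x)+s)$. Since the $N-1$ integration variables $z_2,\dots,z_N$ in the definition of $\epsilon(N,x)$ are exactly i.i.d. $q$-samples and $w(x)>0$ keeps the denominator positive, we have the clean representation $\epsilon(N,x)=\E[g_x(S_{N-1})]$. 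The function $g_x$ is strictly decreasing and strictly convex on $[0,\infty)$, since $g_x'(s)=-w(x)/(w(x)+s)^2<0$ and $g_x''(s)=2w(x)/(w(x)+s)^3>0$; these two facts drive everything. I also record that $w(Z)>0$ with positive $q$-probability, because $\E[w(Z)]=\int w\,\ud q=1>0$, so $q(\{w>0\})>0$.

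For the pointwise monotonicity $\epsilon(N,x)-\epsilon(N+1,x)>0$, I couple $S_N=S_{N-1}+w(Z_N)$ and compare termwise: $g_x(S_{N-1})\ge g_x(S_N)$ because $g_x$ is decreasing and $w(Z_N)\ge 0$, with strict inequality on the positive-probability event $\{w(Z_N)>0\}$; taking expectations gives the strict inequality. For the pointwise convexity $\epsilon(N,x)-2\epsilon(N+1,x)+\epsilon(N+2,x)>0$, I use that $W:=w(Z_N)$ and $W':=w(Z_{N+1})$ are i.i.d. and independent of $S_{N-1}$ to symmetrise the middle term (so that $\E[g_x(S_{N-1}+W)]=\E[g_x(S_{N-1}+W')]$), rewriting the second difference as $\E[\,g_x(S_{N-1})-g_x(S_{N-1}+W)-g_x(S_{N-1}+W')+g_x(S_{N-1}+W+W')\,]$. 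For each realisation $S_{N-1}=s$, $W=a\ge 0$, $W'=b\ge 0$ the bracketed mixed second difference is nonnegative because $g_x'$ is increasing: $g_x(s+a+b)-g_x(s+b)=\int_0^a g_x'(s+b+t)\,\ud t\ge\int_0^a g_x'(s+t)\,\ud t=g_x(s+a)-g_x(s)$. It is moreover strictly positive whenever $a>0$ and $b>0$ (strict convexity), an event of probability $q(\{w>0\})^2>0$, so the expectation is strictly positive.

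Finally, the two averaged inequalities follow by integrating the pointwise ones over $\pi(\ud z_1)$: since $\epsilon(N)=\int_\Sp\epsilon(N,z_1)\,\pi(\ud z_1)$ and the integrands $\epsilon(N,z_1)-\epsilon(N+1,z_1)$ and $\epsilon(N,z_1)-2\epsilon(N+1,z_1)+\epsilon(N+2,z_1)$ are strictly positive for every $z_1\in\Sp$ (where $w>0$), and $\pi(\Sp)=1$, the integrals are strictly positive. I expect the convexity step to be the only mildly delicate point — the symmetrisation of the middle term and the reduction of the bracketed mixed second difference to monotonicity of $g_x'$; the remaining three inequalities are immediate consequences of strict monotonicity of $g_x$ and positivity of the weights with positive $q$-probability.
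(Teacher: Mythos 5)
Your proof is correct, and its skeleton matches the paper's: prove the pointwise inequalities for $\epsilon(N,x)$ first, then get the averaged statements by integrating the strictly positive integrands over $\pi(\ud x)$. Your monotonicity step is, in probabilistic dress, identical to the paper's computation $\epsilon(N,x)-\epsilon(N+1,x)=w(x)\int w(z_{N+1})/(B_NB_{N+1})\prod_n q(\ud z_n)>0$. The genuine difference is the convexity step. The paper expands $B_{N+1}B_{N+2}-2B_NB_{N+2}+B_NB_{N+1}$ via the algebraic identity of Lemma \ref{lemma:differences} and then uses exchangeability of $z_{N+1},z_{N+2}$ to cancel the antisymmetric term $[w(z_{N+1})-w(z_{N+2})]B_{N+1}$, leaving the manifestly positive integrand $2w(z_{N+1})w(z_{N+2})/(B_NB_{N+1}B_{N+2})$. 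You instead symmetrise the middle term (the same use of exchangeability) and then finish with the mixed-second-difference property of the strictly convex, strictly decreasing scalar map $g_x(s)=w(x)/(w(x)+s)$, reducing everything to $g_x(s)-g_x(s+a)-g_x(s+b)+g_x(s+a+b)>0$ for $a,b>0$. What your route buys: it is self-contained (no appendix identity needed), conceptually transparent, and applies verbatim whenever a rejection probability can be written as $\E[g(S_{N-1})]$ for a strictly convex decreasing $g$ of an i.i.d.\ partial sum. What the paper's route buys: Lemma \ref{lemma:differences} is the shared workhorse that is reused for the substantially harder sequential convexity of the covariances $\angles{f|P_Nf}_\pi$ (Lemma \ref{lemma:i-sir-inner-product-differences-3}), where the quantity is not an expectation of a scalar convex function of a partial sum and your shortcut is unavailable, so the paper's algebraic formulation keeps both proofs running on one uniform piece of machinery.
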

\begin{proof}
Let us denote $z_1=x$ and $B_n=\sum_{i=1}^n w(z_i)$, then
\begin{align*}
\epsilon(N,z_1)-\epsilon(N+1,z_1)
&=w(z_1)\int_{\X^{N}} \frac{w(z_{N+1})}{B_NB_{N+1}} \bigg( \prod_{n=2}^{N+1} q(\ud z_n) \bigg) >0.
\end{align*}
We have also
\begin{align*}
&\epsilon(N,z_1)-2\epsilon(N+1,z_1)+\epsilon(N+2,z_1) \\
&\quad=w(z_1)\int_{\X^{N+1}}\frac{B_{N+1}B_{N+2}-2B_{N}B_{N+2}+B_{N}B_{N+1}}{B_{N}B_{N+1}B_{N+2}}\bigg( \prod_{n=2}^{N+2} q(\ud z_n) \bigg) \\
&\quad=w(z_1)\int_{\X^{N+1}}\frac{[w(z_{N+1})-w(z_{N+2})]B_{N+1}+2w(z_{N+1})w(z_{N+2})}{B_{N}B_{N+1}B_{N+2}}\bigg( \prod_{n=2}^{N+2} q(\ud z_n) \bigg) \\
&\quad=w(z_1)\int_{\X^{N+1}}\frac{2w(z_{N+1})w(z_{N+2})}{B_{N}B_{N+1}B_{N+2}}\bigg( \prod_{n=2}^{N+2} q(\ud z_n) \bigg)>0,
\end{align*}
where the second equality follows from Lemma \ref{lemma:differences} in Appendix \ref{app:differences}, and the third from
$$
w(z_1)\int_{\X^{N+1}}\frac{w(z_{N+1})}{B_{N}B_{N+2}}\bigg( \prod_{n=2}^{N+2} q(\ud z_n) \bigg) =
w(z_1)\int_{\X^{N+1}}\frac{w(z_{N+2})}{B_{N}B_{N+2}}\bigg( \prod_{n=2}^{N+2} q(\ud z_n) \bigg),
$$
where the equality follows since $z_{N+1}$ and $z_{N+2}$ are exchangeable under the integral.
Strict inequalities hold also for the average rejection probability, since $\epsilon(N)=\int_{\Sp}\epsilon(N,x)\pi(\ud x)$.
\end{proof}

Note that the holding probability of Algorithm \ref{alg:isir} for $z_1\in \Sp$ satisfies
\begin{align*}
P_N(z_1,\braces{z_1})&=\int_{\X^{N-1}}\sum_{i=1}^N \frac{w(z_i)}{\sum_{k=1}^Nw(z_k)} \charfun{z_i=z_1} \bigg( \prod_{n=2}^{N} q(\ud z_n) \bigg).
\end{align*}
In the case $z_1$ is not an atom, the holding probability $P_N(z_1,\braces{z_1})$ coincides with the rejection probability $\epsilon(N,z_1)$, for which the monotonicity and sequential convexity are shown above. When $z_1$ is an atom, we cannot use the same results directly, but we can show instead that there exists $g\in L^2(\pi)$ such that $P_N(z_1,\braces{z_1})=\angles{g|P_\lambda g}_\pi$ (see Lemma \ref{lemma:discrete-i-sir-transition} in Section \ref{section-convexity-monotonicity-of-isir}), so the same properties follow from results in Section \ref{subsection:inner-product-convex-sequence}. 

Rejection probability and holding probability can be expressed with the help of i.i.d. random variables from $q$. We will use this property to show the following lemma:
\begin{lemma}\label{lemma:holding-rejection-limit}
Let $x\in \Sp$. It holds that
\begin{align*}
\lim_{N\to \infty} \epsilon(N)=0, \qquad \lim_{N\to \infty} \epsilon(N,x)=0, \qquad \lim_{N\to \infty} P_N(x,\braces{x})=\pi(\braces{x}).
\end{align*}
\end{lemma}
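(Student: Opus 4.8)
The plan is to express each of the three quantities as an expectation over i.i.d.\ samples from $q$, identify the almost-sure limit of the integrand through the strong law of large numbers, and then pass to the limit by dominated convergence. The single analytic input will be the identity $\E[w(Z)]=1$ for $Z\sim q$ recorded above, which gives $N^{-1}\sum_{i=1}^N w(Z_i)\to 1$ almost surely and hence $\sum_{i=1}^N w(Z_i)\to\infty$.

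First I would treat $\epsilon(N)$. Writing it as $\epsilon(N)=\E\big[w(Z_1)^2/\sum_{i=1}^N w(Z_i)\big]$ with $Z_1,\dots,Z_N$ i.i.d.\ from $q$, the integrand tends to $0$ almost surely (fixed numerator, denominator of order $N$) and is dominated by the $q$-integrable function $w(Z_1)$, because $\sum_{i=1}^N w(Z_i)\ge w(Z_1)$; dominated convergence then yields $\epsilon(N)\to 0$. The same reasoning applied to $\epsilon(N,x)=\E\big[w(x)/(w(x)+\sum_{k=2}^N w(Z_k))\big]$, whose integrand is bounded by $1$ and vanishes almost surely, gives $\epsilon(N,x)\to 0$ for each fixed $x\in\Sp$.

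For the holding probability I would set $Z_1:=x$, take $Z_2,\dots,Z_N$ i.i.d.\ from $q$, and let $K_N=\#\braces{1\le i\le N:Z_i=x}=1+\#\braces{2\le i\le N:Z_i=x}$ count the proposals equal to $x$. Collecting the indicator contributions, $\sum_{i=1}^N w(Z_i)\charfun{Z_i=x}=w(x)K_N$, so
$$P_N(x,\braces{x})=\E\bigg[\frac{w(x)K_N}{w(x)+\sum_{k=2}^N w(Z_k)}\bigg].$$
The strong law gives $K_N/N\to q(\braces{x})$ and $N^{-1}\sum_{k=2}^N w(Z_k)\to 1$ almost surely, so the integrand converges almost surely to $w(x)q(\braces{x})$, which equals $\pi(\braces{x})$ by restricting the identity $w(z_1)q(\ud z_1)=\pi(\ud z_1)$ to $\braces{x}$. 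Dominated convergence then closes the argument.

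The hard part will be the domination in this last step, since the numerator $w(x)K_N$ now grows with $N$ and cannot be bounded by a fixed integrable function of the samples as in the $\epsilon$-cases. The resolution I expect to use is that at least $K_N-1$ of $Z_2,\dots,Z_N$ equal $x$, so the denominator obeys $w(x)+\sum_{k=2}^N w(Z_k)\ge K_N w(x)$ (recall $w(x)>0$ on $\Sp$) and the integrand never exceeds $1$; this uniform bound is exactly what dominated convergence needs. A pleasant by-product is that no separate treatment of atoms is required: for a non-atom $q(\braces{x})=0$ forces $\pi(\braces{x})=0$, and the same formula delivers the limit $0$.
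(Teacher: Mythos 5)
Your proof is correct and follows essentially the same route as the paper: represent $\epsilon(N)$, $\epsilon(N,x)$ and $P_N(x,\braces{x})$ as expectations over i.i.d.\ draws from $q$, obtain almost-sure convergence of the integrands via the strong law of large numbers, and conclude by dominated convergence. The only cosmetic differences are that you treat $\epsilon(N)$ directly with the dominating function $w(Z_1)$ (the paper instead integrates the pointwise limit $\epsilon(N,x)\to 0$ using the bound $\epsilon(N,x)\le 1$), and your lower bound $w(x)+\sum_{k=2}^N w(Z_k)\ge K_N w(x)$ is just another way of seeing the paper's observation that the numerator is termwise dominated by the denominator, so the ratio never exceeds $1$.
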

\begin{proof}
Let $Z_2,Z_3,\dots$ be i.i.d. from $q$, then
$$P_N(x,\braces{x})=\E\bigg[\frac{w(x)+\sum_{i=2}^Nw(Z_i)\charfun{Z_i=x}}{w(x)+\sum_{j=2}^N w(Z_j)} \bigg]
,\qquad \epsilon(N,x)=\E\bigg[\frac{w(x)}{w(x)+\sum_{j=2}^N w(Z_j)} \bigg].
$$
In both equations the term inside the expectation is dominated by $1$ for all $N$ and by strong law of large numbers the term in the first equation converges to $\pi(\braces{x})$ and term in the second equation converges to $0$ almost surely as $N\to \infty$.
Therefore, by dominated convergence we obtain the claim for holding probability and rejection probability. Thus, the claim for average rejection probability follows by dominated convergence.
\end{proof}

\subsection{Properties for the continuous relaxation}\label{section-convexity-monotonicity-of-isir}

We can generalise the properties of the i-SIR (Algorithm \ref{alg:isir}) for its continuous relaxation (Algorithm \ref{alg:continuous-isir}).  This is possible, since as mentioned in Section \ref{sec:isir} it holds for the Markov chain of Algorithm \ref{alg:continuous-isir}, that if $\lambda\in [1,\infty)$, then
$$P_\lambda(x,A)=\beta(\lambda) P_{\floor{\lambda}}(x,A) + \big(1-\beta(\lambda)\big)P_{\floor{\lambda}+1}(x,A),$$
where $\beta(\lambda) = \floor{\lambda}+1-\lambda \in (0,1]$, which yields
\begin{align}
\begin{split}\label{equation:continuous-versions}
\angles{f|P_\lambda f}_\pi&=\beta(\lambda) \angles{f|P_{\floor{\lambda}}f}_\pi + \big(1-\beta(\lambda)\big) \angles{f|P_{\floor{\lambda}+1}f}_\pi \\
\epsilon(\lambda,x)&=\beta(\lambda)\epsilon(\floor{\lambda},x)+\big(1-\beta(\lambda)\big)\epsilon(\floor{\lambda}+1,x) \\
\epsilon(\lambda)&=\beta(\lambda) \epsilon(\floor{\lambda})+\big(1-\beta(\lambda)\big) \epsilon(\floor{\lambda}+1),
\end{split}
\end{align}
where $f\in L^2(\pi)$ and $x\in \Sp$. We use this notation throughout the upcoming sections. Because of the interpolation form of \eqref{equation:continuous-versions}, the properties established for integer $\lambda$ above  can be generalised to the space $[1,\infty)$:

\begin{theorem}\label{theorem:i-sir-convexity}
Let $f\in L^2(\pi)$, then $\lambda\mapsto \angles{f|P_\lambda f}_\pi$, $\lambda\ge1$ is continuous non-negative decreasing convex function. Additionally, we have the following properties:
\begin{enumerate}[(i)]
\item For any $\lambda\ge 1$ it holds that  $\angles{f|P_\lambda f}_\pi=0$ if and only if $\|f\|_\pi=0$.
\item $\lambda\mapsto \angles{f|P_\lambda f}_\pi$ is strictly decreasing if and only if $\bar{f}\in L_{0,+}^2(\pi)$.
\end{enumerate}
\end{theorem}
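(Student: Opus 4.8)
The plan is to reduce the entire statement to the three integer-index lemmas already established, combined with the interpolation identity \eqref{equation:continuous-versions}. Writing $a_N = \angles{f|P_N f}_\pi$ for $N\in\N_+$, Lemma \ref{lemma:inner-product-another-representation} gives $a_N \ge 0$ (with $a_N=0$ iff $\|f\|_\pi=0$), Lemma \ref{lemma:i-sir-inner-product-differences} gives that $(a_N)$ is decreasing (strictly iff $\bar{f}\in L_{0,+}^2(\pi)$), and Lemma \ref{lemma:i-sir-inner-product-differences-3} gives that $(a_N)$ is a convex sequence in the sense of the sequential-convexity definition. By \eqref{equation:continuous-versions}, the function $\lambda \mapsto \angles{f|P_\lambda f}_\pi$ is exactly the piecewise-linear interpolation of $(a_N)$, so the whole theorem amounts to transporting these discrete properties, together with their strict versions, to the interpolant.

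For the four global properties I would argue as follows. Continuity is immediate, since on each interval $[N,N+1]$ the right-hand side of \eqref{equation:continuous-versions} is affine in $\lambda$ through $\beta(\lambda)=\floor{\lambda}+1-\lambda$, and the two definitions agree at the integer endpoints. Non-negativity holds because $\angles{f|P_\lambda f}_\pi$ is a convex combination of $a_{\floor{\lambda}}\ge 0$ and $a_{\floor{\lambda}+1}\ge 0$. For monotonicity, the slope of the interpolant on $[N,N+1]$ equals $a_{N+1}-a_N\le 0$, so the function is decreasing; and for convexity I would invoke Lemma \ref{lemma:generalisation-to-continuous-function} \eqref{enum:generalisation-1}, which is precisely the statement that the piecewise-linear interpolation of a convex sequence is a convex function (equivalently, the successive slopes $a_{N+1}-a_N$ are non-decreasing, by Lemma \ref{lemma:i-sir-inner-product-differences-3}).

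The two additional claims follow from the strict versions of the integer lemmas. For (i), if $\|f\|_\pi=0$ then every $a_N=0$ by Lemma \ref{lemma:inner-product-another-representation}, so the interpolant vanishes identically; conversely, if $\|f\|_\pi>0$ then every $a_N>0$, and since the relevant endpoint values in the convex combination are strictly positive, $\angles{f|P_\lambda f}_\pi>0$ for every $\lambda\ge 1$. For (ii), if $\bar{f}\in L_{0,+}^2(\pi)$ then Lemma \ref{lemma:i-sir-inner-product-differences} makes each slope $a_{N+1}-a_N$ strictly negative, so the continuous interpolant is strictly decreasing on each piece; conversely, if $\bar{f}\notin L_{0,+}^2(\pi)$ then $f$ is $\pi$-a.s.\ equal to a constant $c$ and $a_N=c^2$ for all $N$ (as in the proof of Lemma \ref{lemma:i-sir-inner-product-differences}), making the interpolant constant and hence not strictly decreasing.

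The argument is essentially bookkeeping once the three integer-index lemmas are available; the only genuine external input is Lemma \ref{lemma:generalisation-to-continuous-function}, which upgrades sequential convexity to convexity of the interpolant. The one point I expect to require a little care is the \emph{global} strict monotonicity in (ii): strict negativity of each individual slope, together with continuity at the integer junctions, is what promotes strict decrease on each subinterval to strict decrease across the whole half-line $[1,\infty)$.
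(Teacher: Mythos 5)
Your proof is correct and follows essentially the same route as the paper: the paper's own proof is a one-line citation combining Lemma \ref{lemma:inner-product-another-representation}, Lemma \ref{lemma:i-sir-inner-product-differences}, Lemma \ref{lemma:i-sir-inner-product-differences-3} and the interpolation Lemma \ref{lemma:generalisation-to-continuous-function}, which is exactly the reduction you carry out. Your explicit bookkeeping (including the care taken to glue strict decrease across integer junctions) is a faithful expansion of that argument.
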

\begin{proof}
Follows from Lemma \ref{lemma:generalisation-to-continuous-function} in Appendix \ref{app:modification} when combined with Lemma \ref{lemma:inner-product-another-representation}, \ref{lemma:i-sir-inner-product-differences} and \ref{lemma:i-sir-inner-product-differences-3}. 
\end{proof}

\begin{theorem}\label{theorem:i-sir-convexity-2}
It holds that $\lambda\mapsto \epsilon(\lambda)$, $\lambda\ge 1$ is positive continuous convex and strictly decreasing function, same holds for $\lambda\mapsto \epsilon(\lambda,x)$, $\lambda\ge 1$ when $x\in \Sp$.
\end{theorem}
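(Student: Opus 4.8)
The plan is to mirror the proof of Theorem~\ref{theorem:i-sir-convexity}: first record the relevant properties of the integer-indexed sequences $\big(\epsilon(N)\big)_{N\ge 1}$ and $\big(\epsilon(N,x)\big)_{N\ge 1}$, and then lift them to the continuous parameter $\lambda$ through the interpolation identities in \eqref{equation:continuous-versions} by means of Lemma~\ref{lemma:generalisation-to-continuous-function} in Appendix~\ref{app:modification}.

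First I would settle positivity at the level of the sequence. For fixed $x\in\Sp$ we have $w(x)>0$, and the integrand $w(z_1)/\sum_{i=1}^N w(z_i)$ appearing in $\epsilon(N,x)$ (with $z_1=x$) is everywhere strictly positive and bounded by $1$, so $\epsilon(N,x)\in(0,1]$; integrating against $\pi$ then gives $\epsilon(N)>0$ as well. The strict monotone decrease $\epsilon(N)>\epsilon(N+1)$ and $\epsilon(N,x)>\epsilon(N+1,x)$, together with the sequential convexity $\epsilon(N)-2\epsilon(N+1)+\epsilon(N+2)>0$ and its analogue for $\epsilon(\uarg,x)$, are exactly the content of Lemma~\ref{lemma:i-sir-rejection-inequalities}.

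Next I would invoke the interpolation identities \eqref{equation:continuous-versions}, which express $\epsilon(\lambda)$ and $\epsilon(\lambda,x)$ as affine interpolations of these sequences with weight $\beta(\lambda)=\floor{\lambda}+1-\lambda\in(0,1]$. Continuity is immediate from this piecewise-linear form, since the two branches coincide at integer $\lambda$, and positivity of $\epsilon(\lambda)$ and $\epsilon(\lambda,x)$ follows because each is a convex combination of positive numbers. For convexity and strict monotonicity I would then apply Lemma~\ref{lemma:generalisation-to-continuous-function}: a convex sequence has a convex interpolation, and a strictly decreasing sequence has a strictly decreasing (continuous, piecewise-linear) interpolation, so the properties from Lemma~\ref{lemma:i-sir-rejection-inequalities} pass to the functions $\lambda\mapsto\epsilon(\lambda)$ and $\lambda\mapsto\epsilon(\lambda,x)$ on $[1,\infty)$.

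I do not anticipate a genuine obstacle, as the argument is structurally identical to that of Theorem~\ref{theorem:i-sir-convexity}; the only point requiring care is matching the strictness qualifiers. Sequential convexity only yields non-strict convexity of the interpolation, since it is linear on each interval $[N,N+1]$, which is consistent with the stated conclusion (convex, but not asserted strictly convex). Strict monotonicity of the sequence, on the other hand, does upgrade to strict decrease of the continuous interpolation, because each linear piece has strictly negative slope $\epsilon(N+1,x)-\epsilon(N,x)<0$. This is precisely the output that Lemma~\ref{lemma:generalisation-to-continuous-function} is designed to supply.
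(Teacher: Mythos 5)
Your proposal is correct and follows essentially the same route as the paper's proof: positivity of $\epsilon(N)$ and $\epsilon(N,x)$ from the definition, the strict monotonicity and strict sequential convexity from Lemma~\ref{lemma:i-sir-rejection-inequalities}, and the lift to all $\lambda\ge 1$ via the interpolation \eqref{equation:continuous-versions} and Lemma~\ref{lemma:generalisation-to-continuous-function}. Your remark on the strictness qualifiers (piecewise-linear interpolation gives only non-strict convexity but preserves strict decrease) is exactly the right reading of what that lemma delivers and of what the theorem asserts.
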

\begin{proof}
By definition $\epsilon(N)$ and $\epsilon(N,x)$ are positive for all $N\in \N_+$. Therefore, the claim follows from Lemma \ref{lemma:generalisation-to-continuous-function} in Appendix \ref{app:modification} when combined with Lemma \ref{lemma:i-sir-rejection-inequalities}.
\end{proof}

\begin{lemma}\label{lemma:discrete-i-sir-transition}
Let $\lambda\ge 1$ and $x\in \Sp$. If $x$ is an atom, then
\begin{align*}
P_\lambda(x,\braces{x})=\angles{g_{x}|P_\lambda g_{x}}_\pi,
\end{align*}
where 
$g_{x}(y)=\frac{1}{\sqrt{\pi(\braces{x})}}\charfun{y=x}$.
\end{lemma}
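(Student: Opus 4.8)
The plan is to verify the identity $P_\lambda(x,\{x\}) = \langle g_x | P_\lambda g_x \rangle_\pi$ directly by computing both sides and comparing. The key observation is that $g_x(y) = \frac{1}{\sqrt{\pi(\{x\})}}\charfun{y=x}$ is an indicator (up to normalisation) supported on the single atom $x$, so $g_x \in L^2(\pi)$ with $\|g_x\|_\pi^2 = \frac{1}{\pi(\{x\})}\int \charfun{y=x}\,\pi(\ud y) = 1$. Because $P_\lambda$ is the convex combination $\beta(\lambda)P_{\floor{\lambda}} + (1-\beta(\lambda))P_{\floor{\lambda}+1}$ and both sides of the claimed identity are linear in this combination (the left via the holding-probability decomposition, the right via the bilinearity of the inner product together with \eqref{equation:continuous-versions}), it suffices to prove the integer version $P_N(x,\{x\}) = \langle g_x | P_N g_x\rangle_\pi$ for each $N\in\N_+$; the general $\lambda$ case then follows immediately.

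First I would write out $\langle g_x | P_N g_x\rangle_\pi$ using the inner-product formula \eqref{equation:isir-inner-product}, namely
$$
\langle g_x | P_N g_x\rangle_\pi = \int_{\W{N}} \sum_{i=1}^N \frac{w(z_1)w(z_i)g_x(z_1)g_x(z_i)}{\sum_{j=1}^N w(z_j)} \bigg(\prod_{n=1}^N q(\ud z_n)\bigg).
$$
Since $g_x(z_1)g_x(z_i) = \frac{1}{\pi(\{x\})}\charfun{z_1=x}\charfun{z_i=x}$, the integrand is nonzero only when $z_1=x$, and within the sum only the terms with $z_i = x$ survive. Substituting $\pi(\{x\}) = w(x)q(\{x\})$ (which holds because $w(x)q(\ud x)=\pi(\ud x)$) should collapse the normalising constant against the weight $w(z_1)=w(x)$ and one factor of $q(\{x\})$. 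What remains is an integral over $z_2,\ldots,z_N$ against $q$, with the summand $\sum_{i=1}^N \charfun{z_i = x} w(z_i)/\sum_j w(z_j)$ — which is exactly the integrand appearing in the holding-probability expression $P_N(x,\{x\}) = \int_{\X^{N-1}} \sum_{i=1}^N \frac{w(z_i)}{\sum_k w(z_k)}\charfun{z_i = x}\big(\prod_{n=2}^N q(\ud z_n)\big)$ displayed just before the lemma, after setting $z_1 = x$.

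The main obstacle is bookkeeping around the atom and the dominating measure. Because $x$ is an atom, $q(\{x\}) = \pi(\{x\})/w(x) > 0$, so the variables $z_2,\ldots,z_N$ drawn from $q$ can themselves land on $x$ with positive probability, and these coincidences must be tracked carefully rather than discarded. The delicate point is to confirm that the identity $\pi(\{x\}) = w(x)q(\{x\})$ is used correctly to convert the single outer $\pi(\ud z_1)$-integral (which in \eqref{equation:isir-inner-product} has already been folded into the $q$-product form via $w(z_1)q(\ud z_1)=\pi(\ud z_1)$) back into the conditional form that fixes $z_1 = x$ and integrates the rest against $q$; one must ensure the factor $\frac{1}{\pi(\{x\})}$ from $g_x(z_1)g_x(z_i)$ cancels precisely with $w(z_1)q(\{z_1=x\}) = w(x)q(\{x\}) = \pi(\{x\})$ and leaves no residual mass. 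Once this cancellation is verified, matching the surviving integrand term-by-term against the holding-probability formula is routine, and the claim follows.
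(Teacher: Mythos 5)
Your proof is correct and follows essentially the same route as the paper: a direct verification via the inner-product formula \eqref{equation:isir-inner-product}, using the atom identity $w(x)q(\braces{x})=\pi(\braces{x})$ to collapse the $z_1$-integral onto $\braces{z_1=x}$, together with the convex-combination structure to reduce the non-integer $\lambda$ case to integer $N$. The only cosmetic difference is that you compute from $\angles{g_x|P_N g_x}_\pi$ (in its all-$q$ form) towards $P_N(x,\braces{x})$, whereas the paper transforms the holding probability into the mixed $\pi$/$q$ form of \eqref{equation:isir-inner-product}.
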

\begin{proof}
Since $x$ is an atom, we have $P_1(x,\braces{x})=\angles{g_x|g_x}_\pi$ and if $N\in \N_+$, $N\ge 2$, then
\begin{align*}
P_N(x,\braces{x})&=\int_{\X^{N-1}}\frac{w(x)+\sum_{i=2}^Nw(z_i)\charfun{z_i=x}}{w(x)+\sum_{k=2}^Nw(z_k)}\bigg( \prod_{n=2}^Nq(\ud z_n)\bigg) \\
&=\int_{\Sp}\int_{\X^{N-1}}\sum_{i=1}^N \frac{w(z_i)\charfun{z_i=x}}{\sum_{k=1}^Nw(z_k)}\frac{\charfun{z_1=x}}{\pi(\braces{x})}\bigg( \prod_{n=2}^Nq(\ud z_n)\bigg)\pi(\ud z_1).
\end{align*}
Thus, by \eqref{equation:isir-inner-product} we get the claim for $N\in \N_+$, from which the general version follows.
\end{proof}

\begin{theorem}\label{theorem:i-sir-convexity-3}
If $\#(\Sp)>1$ and $x\in \Sp$, then $\lambda\mapsto P_\lambda(x,\braces{x})$, $\lambda\ge 1$ is positive continuous convex and strictly decreasing function.
\end{theorem}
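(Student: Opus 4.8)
The plan is to prove Theorem \ref{theorem:i-sir-convexity-3} by reducing the assertion about the holding probability $P_\lambda(x,\{x\})$ to a statement about inner products, which we have already analysed thoroughly. The key observation is that the target quantity behaves differently depending on whether $x$ is an atom of $\pi$ or not, so I would split into two cases and handle each separately, then note that continuity and the interpolation structure \eqref{equation:continuous-versions} let us lift everything to all $\lambda \ge 1$.

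First I would dispose of the non-atomic case. When $x \in \Sp_0$, so that $\pi(\{x\})=0$, the holding probability $P_N(x,\{x\})$ coincides with the rejection probability $\epsilon(N,x)$, because under Algorithm \ref{alg:isir} the event $X_k = x$ occurs (up to a $q$-null set) precisely when $I=1$; the probability of landing exactly on $x$ via some $Y_k^i$ with $i \ge 2$ is zero since $x$ is not an atom. Then Theorem \ref{theorem:i-sir-convexity-2} immediately gives that $\lambda \mapsto \epsilon(\lambda,x)$ is positive, continuous, convex, and strictly decreasing, which is exactly the claim.

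Next I would treat the atomic case $x \in \Sp_+$. Here I invoke Lemma \ref{lemma:discrete-i-sir-transition}, which gives the identity $P_\lambda(x,\{x\}) = \angles{g_x | P_\lambda g_x}_\pi$ for the specific function $g_x(y) = \pi(\{x\})^{-1/2}\charfun{y=x}$. This function lies in $L^2(\pi)$ with $\|g_x\|_\pi^2 = \pi(\{x\})^{-1}\pi(\{x\}) = 1 > 0$, so Theorem \ref{theorem:i-sir-convexity} applies: the map $\lambda \mapsto \angles{g_x|P_\lambda g_x}_\pi$ is continuous, non-negative, decreasing, and convex, and by part (i) it is strictly positive since $\|g_x\|_\pi > 0$. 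For strict decrease I need $\bar{g}_x \in L_{0,+}^2(\pi)$, i.e.\ $g_x$ is not $\pi$-almost surely constant; this is where the hypothesis $\#(\Sp) > 1$ enters, since $g_x$ takes the value $\pi(\{x\})^{-1/2}$ on $\{x\}$ and $0$ on $\Sp \setminus \{x\}$, and the latter set is $\pi$-positive precisely because the support contains more than one point. Hence $\var_\pi(g_x) > 0$, and Theorem \ref{theorem:i-sir-convexity}(ii) yields strict monotonicity.

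The main obstacle, such as it is, lies in verifying the $\#(\Sp)>1$ reduction in the atomic case: one must argue carefully that $\Sp \setminus \{x\}$ carries positive $\pi$-mass so that $g_x$ is genuinely non-constant. If $\Sp = \{x\}$ were a single atom, then $g_x$ would be $\pi$-almost surely constant and the map would be flat (equal to $1$ for all $\lambda$), so the strict-decrease conclusion would fail; the hypothesis rules this out. Beyond that point the proof is purely a matter of citing the already-established Theorems \ref{theorem:i-sir-convexity} and \ref{theorem:i-sir-convexity-2} together with Lemma \ref{lemma:discrete-i-sir-transition}, and the only mild care needed is to confirm that the two cases together cover every $x \in \Sp$ and that positivity, convexity, continuity and strict monotonicity hold uniformly in both.
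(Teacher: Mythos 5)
Your proof is correct and follows essentially the same route as the paper: the non-atomic case is exactly Theorem \ref{theorem:i-sir-convexity-2} applied to $\epsilon(\lambda,x)$, and the atomic case combines Lemma \ref{lemma:discrete-i-sir-transition} with Theorem \ref{theorem:i-sir-convexity}, using $\#(\Sp)>1$ to verify $\bar{g}_x\in L_{0,+}^2(\pi)$. Your added justification that the holding and rejection probabilities coincide for non-atoms, and that $\Sp\setminus\{x\}$ is $\pi$-positive, only spells out details the paper leaves implicit.
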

\begin{proof}
When $x$ is not an atom, the claim follows from Theorem \ref{theorem:i-sir-convexity-2}. When $x$ is an atom, the claim follows from Lemma \ref{lemma:discrete-i-sir-transition} and Theorem \ref{theorem:i-sir-convexity} by noticing $\bar{g}_x\in L_{0,+}^2(\pi)$.
\end{proof}

We will also give convexity results for the functions related to the approximate i-SIR.

\begin{lemma}\label{lemma:convexity-of-the-estimate}
The function
$$\lambda\mapsto \frac{1+\epsilon(\lambda)}{1-\epsilon(\lambda)}, \qquad \lambda>1$$
is positive, strictly convex and decreasing.
\end{lemma}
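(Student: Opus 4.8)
The plan is to reduce the statement to elementary properties of the scalar map $g(t) = \frac{1+t}{1-t}$ composed with the rejection probability $\epsilon$. First I would record the range of $\epsilon$ on $(1,\infty)$: evaluating the definition of $\epsilon(N)$ at $N=1$ gives $\epsilon(1) = \int w(z_1)\,q(\ud z_1) = 1$, and Theorem \ref{theorem:i-sir-convexity-2} states that $\epsilon$ is positive and strictly decreasing on $[1,\infty)$, so $0 < \epsilon(\lambda) < \epsilon(1) = 1$ for every $\lambda > 1$. In particular $1 - \epsilon(\lambda) > 0$, so the quotient is well defined, which is precisely why the domain is restricted to $\lambda > 1$.

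Next I would analyse $g$ on $(0,1)$. Writing $g(t) = -1 + \frac{2}{1-t}$ gives $g'(t) = \frac{2}{(1-t)^2} > 0$ and $g''(t) = \frac{4}{(1-t)^3} > 0$, so $g$ is strictly positive, strictly increasing, and strictly convex on $(0,1)$. Positivity of $\lambda\mapsto g(\epsilon(\lambda))$ is then immediate, and monotonicity follows since $g$ is strictly increasing while $\epsilon$ is strictly decreasing (Theorem \ref{theorem:i-sir-convexity-2}), so their composition is strictly decreasing. Global convexity follows from the standard composition rule: since $g$ is convex and nondecreasing and $\epsilon$ is convex, for $t\in(0,1)$ one has $g(\epsilon(t\lambda_1 + (1-t)\lambda_2)) \le g(t\epsilon(\lambda_1) + (1-t)\epsilon(\lambda_2)) \le t g(\epsilon(\lambda_1)) + (1-t)g(\epsilon(\lambda_2))$, the first inequality using convexity of $\epsilon$ together with monotonicity of $g$, the second using convexity of $g$.

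The main obstacle is upgrading this to \emph{strict} convexity, because $\epsilon$ is only piecewise affine: by \eqref{equation:continuous-versions} it interpolates its integer values linearly, so $\epsilon'' = 0$ on each $(N,N+1)$ and the naive second-derivative argument fails. I would instead argue in two steps. On each interval $[N,N+1]$ the map $\epsilon$ is affine and, being strictly decreasing, non-constant; composing the strictly convex $g$ with a non-constant affine map yields a strictly convex function, so $h := g\circ\epsilon$ is strictly convex on each $[N,N+1]$. Combined with the global convexity already established, strictness propagates to all of $(1,\infty)$: if $h$ failed to be strictly convex, there would exist $a<b$ on which $h$ agrees with its chord, and convexity would force $h$ to be affine on the whole of $[a,b]$; but $(a,b)$ necessarily contains a nonempty open subinterval lying inside some $(N,N+1)$, on which $h$ is strictly convex, a contradiction. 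Hence $h$ is strictly convex on $(1,\infty)$, completing the proof.
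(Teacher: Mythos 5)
Your proof is correct, and its backbone is the same as the paper's: restrict $\epsilon$ to $(0,1)$ using $\epsilon(1)=1$ and strict monotonicity, introduce the scalar map $g(t)=\frac{1+t}{1-t}$, and conclude positivity and strict decrease of $g\circ\epsilon$ by composition. The genuine difference is in how strict convexity is obtained. You treat the piecewise affineness of $\epsilon$ as the central obstacle and work around it in two stages: strict convexity of $g\circ\epsilon$ on each $[N,N+1]$ (strictly convex $g$ composed with a non-constant affine map), followed by a propagation-by-contradiction argument (a convex function touching its chord at an interior point is affine on the whole subinterval, which would contradict strictness on some integer-free subinterval). Both stages are sound. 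The paper avoids all of this with one observation you already had available: since $\epsilon$ is \emph{strictly} decreasing it is injective, so for $\lambda_1\neq\lambda_2$ the values $\epsilon(\lambda_1)$ and $\epsilon(\lambda_2)$ are distinct, and strict convexity of $g$ evaluated at two \emph{distinct} points makes the second inequality in your own chain strict:
\begin{equation*}
g\big(\epsilon(t\lambda_1+(1-t)\lambda_2)\big)\le g\big(t\epsilon(\lambda_1)+(1-t)\epsilon(\lambda_2)\big)< t\,g\big(\epsilon(\lambda_1)\big)+(1-t)\,g\big(\epsilon(\lambda_2)\big).
\end{equation*}
This needs nothing about $\epsilon$ beyond (non-strict) convexity and strict monotonicity, so it is both shorter and more general than your route, which leans on the interpolation structure \eqref{equation:continuous-versions}: your argument would need repair if $\epsilon$ were not piecewise affine, whereas the paper's would not.
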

\begin{proof}
Function $\epsilon:(1,\infty)\to [0,1)$, $\epsilon(\lambda)$ is positive convex and strictly decreasing by Theorem \ref{theorem:i-sir-convexity-2}. Thus, by noticing that $g:[0,1)\to \R$, $g(x)=\frac{1+x}{1-x}$ is positive strictly increasing strictly convex function, we get that $g(\epsilon(\lambda))$ is positive strictly decreasing function and it holds for $1<\lambda_1<\lambda_2$ and $t\in (0,1)$ that
\begin{equation*}
g\big(\epsilon(t\lambda_1+(1-t)\lambda_2)\big)\le g\big(t\epsilon(\lambda_1)+(1-t)\epsilon(\lambda_2)\big)<tg\big(\epsilon(\lambda_1)\big)+(1-t)g\big(\epsilon(\lambda_2)\big).	\qedhere
\end{equation*}
\end{proof}

When integrating holding probability, we can divide the integral into two parts
\begin{align*}
\int_{\Sp}P_\lambda(y,\braces{y})\pi(\ud y)&=\int_{\Sp_0}\epsilon(\lambda,y)\pi(\ud y)+\int_{\Sp_+}P_\lambda(y,\braces{y})\pi(\ud y),\\
\int_{\Sp}\pi(\braces{x})\pi(\ud x)&= \int_{\Sp_+}\pi(\braces{x})\pi(\ud x).
\end{align*}
The equations above are helpful in the following lemma:
\begin{lemma}\label{lemma:convexity-of-the-estimate-2}
If $\#(\Sp)>1$, then 
$$\lambda\mapsto\frac{\int_{\Sp}[P_\lambda(x,\braces{x})-\pi(\braces{x})]\pi(\ud x)}{1-\int_{\Sp}\pi(\braces{x})\pi(x)}, \qquad \lambda\ge 1$$ 
is positive convex and strictly decreasing function and
$$\lambda\mapsto \frac{1+\psi(\lambda)}{1-\psi(\lambda)}, \qquad \lambda>1$$
is positive strictly convex and decreasing function.
\end{lemma}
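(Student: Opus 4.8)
The plan is to read $\psi(\lambda)$ as the first displayed map, that is,
$$
\psi(\lambda) = \frac{\int_{\Sp}\big[P_\lambda(x,\braces{x}) - \pi(\braces{x})\big]\pi(\ud x)}{1 - \int_{\Sp}\pi(\braces{x})\pi(\ud x)},
$$
and to derive both claims from the pointwise behaviour of $\lambda\mapsto P_\lambda(x,\braces{x})$ supplied by Theorem~\ref{theorem:i-sir-convexity-3}. First I would record that the denominator $D = 1 - \int_\Sp \pi(\braces{x})\pi(\ud x) = 1 - \sum_{a\in\Sp_+}\pi(\braces{a})^2$ is a strictly positive constant: since $\#(\Sp)>1$ no atom carries full mass, so $\sum_{a}\pi(\braces{a})^2 \le \max_a \pi(\braces{a}) < 1$. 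Consequently the first claim reduces to showing that the numerator $\Phi(\lambda) = \int_\Sp [P_\lambda(x,\braces{x}) - \pi(\braces{x})]\pi(\ud x)$ is positive, convex and strictly decreasing, since dividing by the positive constant $D$ preserves all three.

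For the pointwise input, Theorem~\ref{theorem:i-sir-convexity-3} gives that for each fixed $x\in\Sp$ the map $\lambda\mapsto P_\lambda(x,\braces{x})$ is positive, continuous, convex and strictly decreasing (this is exactly where $\#(\Sp)>1$ is used, treating non-atoms via Theorem~\ref{theorem:i-sir-convexity-2} and atoms via Lemma~\ref{lemma:discrete-i-sir-transition}). Subtracting the constant $\pi(\braces{x})$ preserves convexity and strict monotonicity, and by Lemma~\ref{lemma:holding-rejection-limit} the map decreases strictly to its limit $\lim_{\lambda\to\infty}P_\lambda(x,\braces{x}) = \pi(\braces{x})$, so the integrand $P_\lambda(x,\braces{x}) - \pi(\braces{x})$ is strictly positive for every finite $\lambda$. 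I would then pass these properties through the integral against $\pi$: because the integrands are uniformly bounded by $1$, convexity and positivity survive integration directly, continuity follows by dominated convergence, and strict monotonicity is inherited because $P_{\lambda_1}(x,\braces{x}) > P_{\lambda_2}(x,\braces{x})$ holds for every $x$ when $\lambda_1<\lambda_2$. This yields that $\Phi$, and hence $\psi = \Phi/D$, is positive, convex and strictly decreasing on $[1,\infty)$, establishing the first claim.

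For the second map I would follow the template of Lemma~\ref{lemma:convexity-of-the-estimate}, once I verify the extra fact $\psi(\lambda)\in[0,1)$ for $\lambda>1$. The needed bound is $\Phi(\lambda) < D$, equivalently $\int_\Sp P_\lambda(x,\braces{x})\pi(\ud x) < 1$, which holds because $P_\lambda(x,\braces{x})<1$ for every $x\in\Sp$ once $\lambda>1$: when $\#(\Sp)>1$ the chain has strictly positive probability of moving away from $x$. With $\psi$ positive, convex and strictly decreasing into $[0,1)$, I would compose with $g(u) = \frac{1+u}{1-u}$, which is positive, strictly increasing and strictly convex on $[0,1)$; then for $1<\lambda_1<\lambda_2$ and $t\in(0,1)$ the chain of inequalities
$$
g\big(\psi(t\lambda_1+(1-t)\lambda_2)\big) \le g\big(t\psi(\lambda_1)+(1-t)\psi(\lambda_2)\big) < t\,g\big(\psi(\lambda_1)\big)+(1-t)\,g\big(\psi(\lambda_2)\big)
$$
gives strict convexity (the strict step uses $\psi(\lambda_1)\neq\psi(\lambda_2)$, guaranteed by strict monotonicity), while $g$ increasing and $\psi$ decreasing give that $g\circ\psi$ is decreasing.

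The genuinely new content here is light: the pointwise convexity, monotonicity and limit are already in Theorem~\ref{theorem:i-sir-convexity-3} and Lemma~\ref{lemma:holding-rejection-limit}, and the composition argument is exactly that of Lemma~\ref{lemma:convexity-of-the-estimate}. The only point needing genuine care, and the step I expect to be the main obstacle, is the strict bound $\psi(\lambda)<1$, i.e.\ ruling out $P_\lambda(x,\braces{x}) = 1$ on a $\pi$-positive set; this is where the hypothesis $\#(\Sp)>1$ must be invoked to guarantee that from every state the chain can reach the rest of the support with positive probability.
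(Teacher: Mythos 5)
Your proof is correct and follows essentially the same route as the paper's: pointwise positivity, convexity and strict monotonicity of the holding probabilities from Theorems \ref{theorem:i-sir-convexity-2}--\ref{theorem:i-sir-convexity-3} together with the limits in Lemma \ref{lemma:holding-rejection-limit}, then integration against $\pi$, an affine normalisation, and finally composition with $u\mapsto\frac{1+u}{1-u}$ exactly as in Lemma \ref{lemma:convexity-of-the-estimate}. The only step you single out as the main obstacle, namely $\psi(\lambda)<1$ for $\lambda>1$, is in fact immediate from what you have already established: the paper just observes $\psi(1)=1$ and uses the strict decrease of $\psi$ (equivalently, $P_\lambda(x,\braces{x})<P_1(x,\braces{x})=1$ by Theorem \ref{theorem:i-sir-convexity-3}), so no separate "the chain can move away with positive probability" argument is needed.
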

\begin{proof}
Function $\lambda\mapsto \int_{\Sp_0}\epsilon(\lambda,x)\pi(\ud x)+\int_{\Sp_+}\pi(\braces{x})P_\lambda(x,\braces{x}) \in (\int_{\Sp_+} \pi(\braces{x})\pi(\ud x),1]$ is positive convex and strictly decreasing by Theorem \ref{theorem:i-sir-convexity-2}, \ref{theorem:i-sir-convexity-3} and Lemma \ref{lemma:holding-rejection-limit}. Therefore, we obtain the claim for $\psi$ by noticing $g:(\int_{\Sp_+}\pi(\braces{x})\pi(\ud x),1]\to (0,1]$, $g(y)=\frac{y-\int_{\Sp_+}\pi(\braces{x})\pi(\ud x)}{1-\int_{\Sp_+}\pi(\braces{x})\pi(\ud x)}$ is linear function. Since $\psi(1)=1$, we have $\psi(y)\in [0,1)$ for all $y\in (1,\infty)$, hence rest of the proof is similar to the proof of Lemma \ref{lemma:convexity-of-the-estimate}.
\end{proof}

\section{Asymptotic variance}
\label{sec:asymptotic-variance}

One of our main results is showing for Algorithm \ref{alg:continuous-isir}, that the asymptotic variance as a function $\lambda\mapsto \mathrm{var}(P_\lambda,f)$, $\lambda\in [1,\infty)$ is monotonic and convex for all $f\in L^2(\pi)$. The monotonicity can sometimes be shown with the Peskun order \citep{peskun,tierney-note}, but for i-SIR, a straightforward calculation shows a contradiction with
$$\pi(x)=
\begin{cases}
2x, & x\in [0,1] \\
0, &x\notin [0,1],
\end{cases}
\qquad
q(x)=
\begin{cases}
1, & x\in [0,1] \\
0, &x\notin [0,1],
\end{cases}
$$
when choosing $x=0.2$ and comparing $P_2(x,\uarg)$ and $P_3(x,\uarg)$ with both sets $(0,0.1)$ and $(0.9,1)$. 
However, there is a weaker condition for showing monotonicity of asymptotic variance, namely the covariance ordering \cite{tierney-note,mira-geyer}, which in our setting means the monotonicity of $\lambda\mapsto \angles{f|P_\lambda f}_\pi$ for all $f\in L^2(\pi)$. In this paper monotonicity and convexity of asymptotic variance are obtained in a similar fashion as in \cite[e.g.][Proposition 26]{andrieu-vihola-order} using Dirichlet forms, these results are combined in Theorem \ref{theorem:asymptotic-variance-results} in Appendix \ref{app:hilbert-spaces}. Without loss of generality we can study $L_0^2(\pi)$ functions.

\begin{lemma}\label{lemma:i-sir-properties}
Let $\lambda\ge1$, then $P_\lambda$ is $\pi$-reversible. If Assumption \ref{a:bounded-weights} holds, then $I-P_\lambda:L_0^2(\pi)\to L_0^2(\pi)$ is invertible for $\lambda>1$.
\end{lemma}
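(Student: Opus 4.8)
The statement splits into reversibility and invertibility, which I would handle separately. For reversibility, the plan is to reduce everything to the integer case. The kernel $P_N$ is known to be $\pi$-reversible for each integer $N$ (classical, and also visible from the symmetric representation \eqref{equation:isir-inner-product}), while $P_1 = \delta_{z_1}$ is trivially reversible. Since detailed balance $\pi(\ud x)P(x,\ud y) = \pi(\ud y)P(y,\ud x)$ is linear in the kernel $P$, any convex combination of $\pi$-reversible kernels is again $\pi$-reversible. Applying this to the interpolation $P_\lambda = \beta(\lambda)P_{\floor{\lambda}} + (1-\beta(\lambda))P_{\floor{\lambda}+1}$ from \eqref{equation:continuous-versions} immediately yields the first claim. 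This part is entirely routine.

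For invertibility, the key idea is to convert the minorisation of Proposition \ref{prop:isir-minorisation} into a strict contraction estimate on $L_0^2(\pi)$. Set $\delta = \frac{\lambda-1}{2\hat{w}+\lambda-1}$, which lies in $(0,1)$ for $\lambda>1$ under Assumption \ref{a:bounded-weights} since $0 < \hat{w} < \infty$. The minorisation $P_\lambda(x,A) \ge \delta\,\pi(A)$ then lets me decompose $P_\lambda = \delta\,\Pi + (1-\delta)R$, where $\Pi(x,A) = \pi(A)$ and $R = (1-\delta)^{-1}(P_\lambda - \delta\,\Pi)$. I would first check that $R$ is a genuine Markov kernel: nonnegativity of $R(x,A)$ follows from the minorisation, $R(x,\X) = (1-\delta)^{-1}(1-\delta) = 1$ from normalisation, and $\pi R = (1-\delta)^{-1}(\pi P_\lambda - \delta\pi) = \pi$ from the $\pi$-invariance already established in the first part.

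The crux is the norm bound. On $L_0^2(\pi)$ the minorising part vanishes, $\Pi f = \pi(f) = 0$, so $P_\lambda f = (1-\delta)Rf$; and because $R$ is Markov and $\pi$-invariant, Jensen's inequality gives $|Rf|^2 \le R(f^2)$ pointwise, hence $\|Rf\|_\pi^2 \le \pi(R(f^2)) = \pi(f^2) = \|f\|_\pi^2$. This yields $\|P_\lambda f\|_\pi \le (1-\delta)\|f\|_\pi$ for $f\in L_0^2(\pi)$, so the operator norm of $P_\lambda$ on $L_0^2(\pi)$ is at most $1-\delta < 1$, and $I - P_\lambda$ is invertible through the Neumann series $\sum_{k\ge 0} P_\lambda^k$. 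I expect the main obstacle to be purely bookkeeping rather than any deep difficulty: making the decomposition rigorous by confirming $R$ really is a $\pi$-invariant Markov kernel (so Jensen applies) and that $P_\lambda$ genuinely maps $L_0^2(\pi)$ into itself, the latter being a consequence of $\pi$-reversibility.
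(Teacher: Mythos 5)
Your proof is correct, and while the reversibility half coincides with the paper's (convex combination of the reversible kernels $P_{\floor{\lambda}}$ and $P_{\floor{\lambda}+1}$), the invertibility half takes a genuinely different route. Both arguments start from the same minorisation decomposition $P_\lambda = \delta\,\pi + (1-\delta)R$ with $\delta=\frac{\lambda-1}{2\hat w + \lambda-1}$ (Lemma \ref{lemma:i-sir-probability-inequality}), but they then diverge: the paper bounds $I-P_\lambda$ \emph{from below}, expanding
$$
\angles{(I-P_\lambda)f|(I-P_\lambda)f}_\pi \ge \delta^2\angles{f|f}_\pi
$$
by discarding the nonnegative Dirichlet-form cross terms of the residual kernel, and then invokes the Hilbert-space result Lemma \ref{lemma:invertible-requirement} (whose proof goes through complexification and relies on self-adjointness, i.e.\ on reversibility of $P_\lambda$ and $R$). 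You instead bound $P_\lambda$ \emph{from above}: on $L_0^2(\pi)$ the minorising part annihilates $f$, and Jensen plus $\pi$-invariance of $R$ give $\|P_\lambda f\|_\pi \le (1-\delta)\|f\|_\pi$, after which the Neumann series finishes the job. Your version is more elementary and self-contained — it needs no appendix operator theory, and for the contraction step it only uses $\pi$-invariance of $R$ (a consequence of part one), not reversibility — and it is quantitatively stronger: it shows the spectrum of $P_\lambda$ restricted to $L_0^2(\pi)$ lies in $[-(1-\delta),1-\delta]$, hence $\|(I-P_\lambda)^{-1}\|\le \delta^{-1}$ and an $L^2$-spectral gap, whereas the paper's lower bound only separates the spectrum from the point $1$. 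What the paper's route buys in exchange is consistency with the operator framework it reuses elsewhere (Lemma \ref{lemma:inverse}, Lemma \ref{lemma:not-equal-inverse-general-corollary}, Theorem \ref{theorem:asymptotic-variance-results}), where positivity and self-adjointness of $I-Q_\lambda$ are the standing hypotheses. One small point to make your write-up airtight: justify $\delta<1$ by noting $\hat w \ge 1$ (since $\int w\,\ud q=1$ forces $\sup w \ge 1$), so the decomposition never degenerates.
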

\begin{proof}
We have $P_N$ is $\pi$-reversible for $N\in \N_+$, $N\ge 2$ \cite[e.g.][]{tjelmeland,andrieu-lee-vihola}, which also holds for $N=1$, since $P_1(x,\ud y)=\delta_{x}(\ud y)$.
Therefore, we have $P_\lambda$ is $\pi$-reversible for $\lambda\ge 1$, since
$$P_\lambda(x,\ud y)=(\floor{\lambda}+1-\lambda)P_{\floor{\lambda}}(x,\ud y) + (\lambda-\floor{\lambda})P_{\floor{\lambda}+1}(x,\ud y),$$
If $\lambda>1$, then by Lemma \ref{lemma:i-sir-probability-inequality} in Appendix \ref{app:ergodicity-auxiliaries}
$$P_\lambda(x,A)=c\pi(A)+(1-c)Q(x,A),$$
where $c=\frac{\lambda-1}{2\hat{w}+\lambda-1}$ and $Q$ is $\pi$-reversible Markov kernel. Thus, for $f\in L_0^2(\pi)$
$$(I-P_\lambda)f=f-(1-c)Qf=cf+(1-c)(I-Q)f$$
and
\begin{align*}
\angles{(I-P_\lambda)f|(I-P_\lambda)f}_\pi
&=c^2\angles{f|f}_\pi+c(1-c)\angles{f|(I-Q)f}_\pi+c(1-c)\angles{(I-Q)f|f}_\pi \\
&\quad +(1-c)^2\angles{(I-Q)f|(I-Q)f}_\pi \\
& \ge c^2\angles{f|f}_\pi,
\end{align*}
since the Dirichlet form is non-negative for Markov kernel. Therefore, $I-P_\lambda$ is invertible by Lemma \ref{lemma:invertible-requirement} in Appendix \ref{app:hilbert-spaces}.
\end{proof}

\begin{theorem}\label{theorem:i-sir-asymptotic-variance}
Let Assumption \ref{a:bounded-weights} hold and $f\in L_{0,+}^2(\pi)$, then $\lambda\mapsto \mathrm{var}(P_\lambda,f)$, $\lambda>1$ is strictly convex and decreasing function.
\end{theorem}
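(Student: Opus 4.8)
The plan is to reduce to centred functions and then analyse the resolvent form of the asymptotic variance, feeding in the covariance results already established. Since $\var(P_\lambda,f)$ depends on $f$ only through $\bar f$ and we assume $f\in L_{0,+}^2(\pi)$, we may take $f\in L_0^2(\pi)$ with $\|f\|_\pi>0$. By Lemma \ref{lemma:i-sir-properties}, $P_\lambda$ is $\pi$-reversible and $I-P_\lambda:L_0^2(\pi)\to L_0^2(\pi)$ is invertible for $\lambda>1$, and in fact positive definite (bounded below on $L_0^2(\pi)$ by the minorisation constant). The standard reversible-chain identity $(I+P_\lambda)(I-P_\lambda)^{-1}=2(I-P_\lambda)^{-1}-I$ then gives
\[
\var(P_\lambda,f)=2\angles{f|(I-P_\lambda)^{-1}f}_\pi-\|f\|_\pi^2,
\]
so it suffices to study $\phi(\lambda):=\angles{f|(I-P_\lambda)^{-1}f}_\pi$.

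For global convexity and monotonicity I would invoke the Dirichlet-form machinery packaged in Theorem \ref{theorem:asymptotic-variance-results}. Concretely, positivity of $I-P_\lambda$ yields the variational representation
\[
\phi(\lambda)=\sup_{g\in L_0^2(\pi)}\Big\{2\angles{f|g}_\pi-\|g\|_\pi^2+\angles{g|P_\lambda g}_\pi\Big\},
\]
with optimiser $g=(I-P_\lambda)^{-1}f$. For each fixed $g$, Theorem \ref{theorem:i-sir-convexity} shows that $\lambda\mapsto\angles{g|P_\lambda g}_\pi$ is convex and decreasing, hence so is each bracketed function; as a pointwise supremum of convex, decreasing functions, $\phi$ is convex and decreasing, and therefore so is $\var(P_\lambda,f)$. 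Strict monotonicity I would then obtain by evaluating the supremum at the optimiser: for $1<\lambda_1<\lambda_2$, inserting $g^\ast=(I-P_{\lambda_2})^{-1}f$ into the lower bound for $\phi(\lambda_1)$ gives $\phi(\lambda_1)-\phi(\lambda_2)\ge\angles{g^\ast|P_{\lambda_1}g^\ast}_\pi-\angles{g^\ast|P_{\lambda_2}g^\ast}_\pi$. Since $(I-P_{\lambda_2})^{-1}$ is a bijection of $L_0^2(\pi)$ and $f\neq0$, we have $g^\ast\in L_{0,+}^2(\pi)$, so Theorem \ref{theorem:i-sir-convexity}\,(ii) (equivalently Lemma \ref{lemma:i-sir-inner-product-differences}) makes the right-hand side strictly positive; thus $\var(P_\lambda,f)$ is strictly decreasing.

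The main obstacle is strict convexity, because $\lambda\mapsto\angles{g|P_\lambda g}_\pi$ is only \emph{piecewise linear} — affine on each $[N,N+1]$ by the interpolation \eqref{equation:continuous-versions} — so the supremum argument yields convexity but not strictness. To recover strictness I would compute the curvature of $\phi$ on an open interval $(N,N+1)$, where $P_\lambda=P_N+(\lambda-N)(P_{N+1}-P_N)$ is affine and $I-P_\lambda$ is invertible, so $\lambda\mapsto(I-P_\lambda)^{-1}$ is smooth. Writing $D=P_N-P_{N+1}$ and $u=(I-P_\lambda)^{-1}f$, differentiating twice gives
\[
\phi''(\lambda)=2\,\big\|(I-P_\lambda)^{-1/2}D\,u\big\|_\pi^2\ge0,
\]
which vanishes only if $Du=0$. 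But $u\in L_{0,+}^2(\pi)$ (same bijection argument), so $\angles{u|Du}_\pi=\angles{u|P_Nu}_\pi-\angles{u|P_{N+1}u}_\pi>0$ by Lemma \ref{lemma:i-sir-inner-product-differences}, forcing $Du\neq0$ and $\phi''(\lambda)>0$ on every $(N,N+1)$.

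Finally I would patch the pieces together. A function that is convex on $(1,\infty)$ and has strictly positive second derivative on each open interval $(N,N+1)$ cannot be affine on any non-degenerate subinterval (such a subinterval always contains an integer-free subinterval inside some $(N,N+1)$ where $\phi''>0$), and a convex function with no affine piece is strictly convex. Hence $\phi$, and therefore $\var(P_\lambda,f)$, is strictly convex on all of $(1,\infty)$. The delicate point throughout is that strictness does not come from strict convexity of the quadratic form $\angles{g|P_\lambda g}_\pi$ (which fails, being piecewise linear) but from the curvature introduced by operator inversion, combined with the strict covariance decrease of Lemma \ref{lemma:i-sir-inner-product-differences}.
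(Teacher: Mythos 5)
Your proof is correct, and it shares the paper's backbone: reversibility and invertibility of $I-P_\lambda$ from Lemma \ref{lemma:i-sir-properties}, the variational representation of $\angles{f|(I-P_\lambda)^{-1}f}_\pi$ (Lemma \ref{lemma:inverse}), and the covariance monotonicity/convexity of Theorem \ref{theorem:i-sir-convexity}; your strict-monotonicity step (plugging the optimiser for $\lambda_2$ into the variational problem at $\lambda_1$) is exactly the paper's. Where you genuinely diverge is strict convexity. The paper's proof is a direct application of Theorem \ref{theorem:asymptotic-variance-results} \eqref{enum-asymptotic-variance-result-5}, which is tailored precisely to the obstacle you identify: it extracts \emph{strict} convexity of $\var(P_\lambda,f)$ from strict monotonicity together with merely non-strict convexity of $\lambda\mapsto\angles{g|P_\lambda g}_\pi$. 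The mechanism is that the supremum in Lemma \ref{lemma:inverse} is attained \emph{uniquely} at $g=(I-P_\lambda)^{-1}f$, and strict monotonicity makes $\lambda\mapsto\angles{g|P_\lambda g}_\pi$ injective, so by Lemmas \ref{lemma:not-equal-inverse-general} and \ref{lemma:not-equal-inverse-general-corollary} no single $g$ can attain the suprema at two distinct parameter values; inserting the midpoint optimiser into both endpoint problems then turns the convexity inequality strict. So your remark that ``the supremum argument yields convexity but not strictness'' is not quite accurate --- it does, once uniqueness of optimisers is exploited. Your alternative route is nonetheless sound: on each $(N,N+1)$ the map $\lambda\mapsto(I-P_\lambda)^{-1}$ is analytic (standard resolvent/Neumann-series argument), your formula for the second derivative of $\phi(\lambda)=\angles{f|(I-P_\lambda)^{-1}f}_\pi$, namely $\phi''(\lambda)=2\angles{Du|(I-P_\lambda)^{-1}Du}_\pi$ with $D=P_N-P_{N+1}$ and $u=(I-P_\lambda)^{-1}f$, is correct; $Du\neq 0$ follows from the strict inequality $\angles{u|Du}_\pi>0$ of Lemma \ref{lemma:i-sir-inner-product-differences} since $u\in L_{0,+}^2(\pi)$, and positivity of $\phi''$ can then be concluded from Lemma \ref{lemma:positive-definite-included} applied to $(I-P_\lambda)^{-1}$, avoiding the square root altogether; the patching step (a convex function failing strict convexity is affine on a non-degenerate interval, which must contain an integer-free subinterval where $\phi''>0$) is also valid. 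The trade-off: the paper's argument is abstract, needs no differentiability, and applies to any parametrised reversible family satisfying conditions \eqref{enum-asymptotic-variance-condition-2}--\eqref{enum-asymptotic-variance-condition-3} of Theorem \ref{theorem:asymptotic-variance-results}; yours exploits the piecewise-affine structure of the i-SIR interpolation and, in exchange, delivers an explicit curvature formula for the asymptotic variance, which could be of independent quantitative interest.
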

\begin{proof}
The claim follows from Theorem \ref{theorem:asymptotic-variance-results} \eqref{enum-asymptotic-variance-result-5} in Appendix \ref{app:hilbert-spaces} when combined with Lemma \ref{lemma:i-sir-properties} and Theorem \ref{theorem:i-sir-convexity}.
\end{proof}

In some applications Dirichlet form can be more interesting than the inner product $\angles{f|P_\lambda f}_\pi$. Let $\lambda\ge 1$, then the Dirichlet form for Algorithm \ref{alg:continuous-isir} is defined
$$\mathcal{E}_\lambda(f)=\angles{f|(I-P_\lambda f)}_\pi, \qquad $$
where $f\in L^2(\pi)$. Thus, we can give another version of Theorem \ref{theorem:i-sir-convexity}:
\begin{proposition}
Let $f\in L_{0,+}^2(\pi)$, then for Algorithm \ref{alg:continuous-isir} it holds that $\lambda\mapsto \mathcal{E}_\lambda(f)$, $\lambda\ge 1$ is non-negative, continuous, concave and strictly increasing function.
\end{proposition}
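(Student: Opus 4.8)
The plan is to leverage the direct relationship between the Dirichlet form $\mathcal{E}_\lambda(f) = \angles{f|(I-P_\lambda)f}_\pi$ and the inner product $\angles{f|P_\lambda f}_\pi$ whose properties are fully characterized by Theorem \ref{theorem:i-sir-convexity}. Since $f\in L_{0,+}^2(\pi)$ has $\|f\|_\pi>0$ fixed and independent of $\lambda$, we may write
\begin{equation*}
\mathcal{E}_\lambda(f) = \angles{f|f}_\pi - \angles{f|P_\lambda f}_\pi = \|f\|_\pi^2 - \angles{f|P_\lambda f}_\pi.
\end{equation*}
This expresses $\mathcal{E}_\lambda(f)$ as a constant minus the function $\lambda\mapsto \angles{f|P_\lambda f}_\pi$, so each of the four claimed properties (non-negativity, continuity, concavity, strict monotonicity) can be read off by negating the corresponding property established in Theorem \ref{theorem:i-sir-convexity}.

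First I would note that continuity is immediate: continuity of $\lambda\mapsto\angles{f|P_\lambda f}_\pi$ on $[1,\infty)$ transfers directly, since subtracting from a constant preserves continuity. For concavity, Theorem \ref{theorem:i-sir-convexity} states that $\lambda\mapsto\angles{f|P_\lambda f}_\pi$ is convex; negating a convex function yields a concave function, and adding the constant $\|f\|_\pi^2$ preserves concavity, giving concavity of $\mathcal{E}_\lambda(f)$. Strict increase follows from the same theorem: because $f\in L_{0,+}^2(\pi)$ implies $\bar{f}\in L_{0,+}^2(\pi)$ (centering does not change the non-constancy on a $\pi$-positive set, noting $\|f\|_\pi > 0$ forces $f$ to be non-constant modulo the mean), part (ii) of Theorem \ref{theorem:i-sir-convexity} guarantees that $\lambda\mapsto\angles{f|P_\lambda f}_\pi$ is \emph{strictly} decreasing, so its negation is strictly increasing.

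The only step requiring genuine care is non-negativity, because it does not follow merely from the sign structure but from the fact that $P_\lambda$ is a Markov kernel. For a $\pi$-reversible Markov kernel the Dirichlet form satisfies $\mathcal{E}_\lambda(f)=\angles{f|(I-P_\lambda)f}_\pi\ge 0$; this is the standard non-negativity of the Dirichlet form already invoked in the proof of Lemma \ref{lemma:i-sir-properties}. Concretely, I would either cite that $P_\lambda$ is $\pi$-reversible (Lemma \ref{lemma:i-sir-properties}) so the Dirichlet form is non-negative, or argue directly that $\angles{f|P_\lambda f}_\pi\le\|f\|_\pi^2$. The latter bound is itself a consequence of Theorem \ref{theorem:i-sir-convexity} combined with the limiting value: $\lambda\mapsto\angles{f|P_\lambda f}_\pi$ is decreasing with $\angles{f|P_1 f}_\pi=\angles{f|f}_\pi=\|f\|_\pi^2$ at $\lambda=1$, so $\angles{f|P_\lambda f}_\pi\le\|f\|_\pi^2$ for all $\lambda\ge 1$, giving $\mathcal{E}_\lambda(f)\ge 0$.

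I do not expect any serious obstacle here; the result is essentially a corollary of Theorem \ref{theorem:i-sir-convexity}, and the entire proof is a short transfer argument. The main thing to get right is matching each claimed property of $\mathcal{E}_\lambda$ to the correct (negated) property of $\angles{f|P_\lambda f}_\pi$, and justifying non-negativity via reversibility rather than by sign-chasing. The proof can therefore be stated in a few lines by invoking Theorem \ref{theorem:i-sir-convexity} and the non-negativity of the Dirichlet form for the $\pi$-reversible kernel $P_\lambda$.
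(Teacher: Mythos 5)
Your proposal is correct and takes essentially the same route as the paper: the paper's proof is precisely the identity $\mathcal{E}_\lambda(f)=\angles{f|f}_\pi-\angles{f|P_\lambda f}_\pi$ together with $\angles{f|P_1 f}_\pi=\angles{f|f}_\pi$ and Theorem \ref{theorem:i-sir-convexity}, which is your transfer argument including your second (monotonicity-based) justification of non-negativity. The only simplification available to you is that $f\in L_{0,+}^2(\pi)$ already has $\E_\pi[f]=0$, so $\bar{f}=f$ and the strict-monotonicity hypothesis of Theorem \ref{theorem:i-sir-convexity}(ii) holds trivially.
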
\label{prop:dirichlet-form-i-sir}
\begin{proof}
Since $\mathcal{E}_\lambda(g)=\angles{f|f}_\pi-\angles{f|P_\lambda f}_\pi$ and $\angles{f|P_1 f}_\pi=\angles{f|f}_\pi$, the claim follows from Theorem \ref{theorem:i-sir-convexity}.
\end{proof}

\section{Bounds for i-SIR rejection rate and asymptotic variance}\label{sec:upper-lower-bounds}
We shall now present upper and lower bounds for certain functions obtained from Algorithm \ref{alg:continuous-isir}. These bounds may also provide asymptotic speed of the function.

\subsection{Rejection probabilities}
When Assumption \ref{a:bounded-weights} holds certain upper and lower bounds can be obtained for rejection probabilities with the help of Jensen's inequality.

\begin{lemma}\label{lemma:lower-and-upper-bound-for-rejection}
If Assumption \ref{a:bounded-weights} holds, $\lambda\ge 1$ and $z_1\in \Sp$, then
\begin{align*}
0&<w(z_1)\frac{w(z_1)+2\floor{\lambda}-\lambda}{(w(z_1)+\floor{\lambda})(w(z_1)+\floor{\lambda}-1)} 
\le \epsilon(\lambda,z_1)\le \frac{2\hat{w}}{2\hat{w}+\lambda-1}, \\
\frac{1}{\lambda}&\le b(\lambda)
\le \epsilon(\lambda)\le \frac{2\hat{w}}{2\hat{w}+\lambda-1},
\end{align*}
where
$$b(\lambda)=\frac{1}{\floor{\lambda}}-\frac{\lambda-\floor{\lambda}}{(\floor{\lambda}+1)\floor{\lambda}}$$
and for $\lambda>1$
$$1<\frac{\floor{\lambda}^2+3\floor{\lambda}-\lambda+1}{\floor{\lambda}^2-\floor{\lambda}+\lambda-1}\le \frac{1+\epsilon(\lambda)}{1-\epsilon(\lambda)}\le \frac{4\hat{w}+\lambda-1}{\lambda-1}.$$
\end{lemma}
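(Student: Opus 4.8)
The plan is to establish every bound first at integer values $N\in\N_+$ and then lift to real $\lambda$ through the interpolation identities \eqref{equation:continuous-versions}, which express $\epsilon(\lambda,z_1)$ and $\epsilon(\lambda)$ as convex combinations of their values at $\floor{\lambda}$ and $\floor{\lambda}+1$. The last line will then be read off from the bounds on $\epsilon(\lambda)$ by pushing them through $g(x)=\frac{1+x}{1-x}$, which is increasing on $[0,1)$; since $\epsilon(\lambda)<1$ for $\lambda>1$ by the upper bound, $g(\epsilon(\lambda))$ is well-defined and monotonicity gives $g(b(\lambda))\le\frac{1+\epsilon(\lambda)}{1-\epsilon(\lambda)}\le g\big(\tfrac{2\hat{w}}{2\hat{w}+\lambda-1}\big)$, the two sides simplifying to the stated rational functions.

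For the lower bounds I would use Jensen in the direction dictated by convexity of $t\mapsto 1/t$. Writing $\epsilon(N,z_1)=w(z_1)\,\E[(w(z_1)+\sum_{j=2}^N w(Z_j))^{-1}]$ with $Z_j$ i.i.d.\ from $q$ and $\E[\sum_{j=2}^N w(Z_j)]=N-1$, Jensen gives $\epsilon(N,z_1)\ge \frac{w(z_1)}{w(z_1)+N-1}$. Interpolating this value between $N=\floor{\lambda}$ and $N+1$ with weights $\beta(\lambda),1-\beta(\lambda)$ and collecting into a single fraction reproduces exactly the stated pointwise lower bound; positivity is immediate since $w(z_1)>0$ and $2\floor{\lambda}-\lambda>\floor{\lambda}-1\ge 0$. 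For the average I would write $\epsilon(N)=\E[w(Z_1)^2/\sum_{i=1}^N w(Z_i)]$ with all $Z_i$ i.i.d.\ from $q$, symmetrise to $\frac{1}{N}\E[\sum_i w(Z_i)^2/\sum_i w(Z_i)]$, and apply Sedrakyan's inequality (Lemma \ref{lemma:sedrakyan-inequality}) $\sum_i w(Z_i)^2\ge (\sum_i w(Z_i))^2/N$ to obtain $\epsilon(N)\ge 1/N$. Interpolating $1/\floor{\lambda}$ and $1/(\floor{\lambda}+1)$ yields exactly $b(\lambda)$, and the remaining inequality $b(\lambda)\ge 1/\lambda$ reduces to checking that the concave quadratic $\lambda\mapsto \lambda(2\floor{\lambda}+1-\lambda)-\floor{\lambda}(\floor{\lambda}+1)$ is nonnegative on $[\floor{\lambda},\floor{\lambda}+1]$, which holds because it vanishes at both endpoints.

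The upper bounds are where finiteness of $\hat{w}$ enters, and the factor $2\hat{w}$ comes from a two-fold use of $w(\cdot)\le\hat{w}$. Working with $1-\epsilon(N,z_1)=\E[\sum_{i=2}^N w(Z_i)/(w(z_1)+\sum_{j=2}^N w(Z_j))]$, I would bound the current weight $w(z_1)\le\hat{w}$ and, after isolating the $i$-th fresh term in the denominator, bound $w(Z_i)\le\hat{w}$, so that the $i$-th summand is at least $w(Z_i)/(2\hat{w}+\sum_{j\neq i}w(Z_j))$. Independence of $Z_i$ from the remaining weights, $\E[w(Z_i)]=1$, and Jensen applied to $t\mapsto 1/t$ then yield $\epsilon(N,z_1)\le \frac{2\hat{w}}{2\hat{w}+N-1}$, uniformly over $z_1\in\Sp$ (only $w(z_1)\le\hat{w}$ is used, so atoms require no separate treatment). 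The average upper bound follows by integrating against $\pi(\ud z_1)$.

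The main obstacle is obtaining the clean convex upper bound $\frac{2\hat{w}}{2\hat{w}+\lambda-1}$ at non-integer $\lambda$: because $\epsilon(\lambda,z_1)$ is only the \emph{linear} interpolation of its integer values while $\lambda\mapsto\frac{2\hat{w}}{2\hat{w}+\lambda-1}$ is convex, interpolating the integer bounds produces a value lying \emph{above} the target, which therefore cannot be chained to it. I would instead derive the real-$\lambda$ upper bound directly from the strong minorisation at real $\lambda$ (Proposition \ref{prop:isir-minorisation}, established in Lemma \ref{lemma:i-sir-probability-inequality}): the minorising $\pi$-component is realised by selecting a fresh proposal, i.e.\ on an event with $I\neq 1$, so the rejection probability is at most $1-\frac{\lambda-1}{2\hat{w}+\lambda-1}=\frac{2\hat{w}}{2\hat{w}+\lambda-1}$ for every $z_1\in\Sp$, atoms included. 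Substituting $b(\lambda)$ and $\frac{2\hat{w}}{2\hat{w}+\lambda-1}$ into $g$ then gives the outer bounds of the final display — the upper one collapsing to $\frac{4\hat{w}+\lambda-1}{\lambda-1}$ — while the strict inequality $1<\tfrac{1+b(\lambda)}{1-b(\lambda)}$ follows from $b(\lambda)>0$, equivalently $\lambda<2\floor{\lambda}+1$.
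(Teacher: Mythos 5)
Your proof is correct, and it isolates exactly the right difficulty: the target upper bound $\frac{2\hat{w}}{2\hat{w}+\lambda-1}$ is convex in $\lambda$, so naively interpolating integer-level upper bounds cannot close the argument. Where you and the paper differ is in how this is repaired. The paper proves the \emph{sharper} integer bound $\epsilon(N,z_1)\le\frac{2\hat{w}-1}{2\hat{w}+N-2}$ (your own computation in the third paragraph actually yields this, before you weaken it to $\frac{2\hat{w}}{2\hat{w}+N-1}$), and then invokes a dedicated interpolation lemma (Lemma \ref{lemma:upper-bound-for-functions-of-certain-type}) showing that the chord through the sharper integer bounds stays below the weaker convex target. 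You instead pass through the minorisation (Lemma \ref{lemma:i-sir-probability-inequality}), observing that the minorising $\pi$-component is carried by the event $I\neq 1$, so its total mass bounds $1-\epsilon(\lambda,z_1)$ from below. These are the same computation in different packaging: the paper's proof of Lemma \ref{lemma:i-sir-probability-inequality} consists precisely of the sharper integer bound for the kernel with the $i=1$ term deleted, plus the interpolation algebra that Lemma \ref{lemma:upper-bound-for-functions-of-certain-type} encapsulates. Your route buys economy (one auxiliary lemma reused instead of two); the paper's buys a reusable statement about interpolated sequences. For the lower bounds the two arguments agree, except that you obtain $\epsilon(N)\ge 1/N$ by symmetrisation and Sedrakyan's inequality (Lemma \ref{lemma:sedrakyan-inequality}), while the paper uses a second application of Jensen's inequality with the convex map $x\mapsto x^2/(x+N-1)$; both are fine, as is your endpoint argument for $b(\lambda)\ge 1/\lambda$.

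One caveat on the atomic case: the bare \emph{statement} of Proposition \ref{prop:isir-minorisation} does not suffice there. For an atom $x$, rejection ($I=1$) is a strictly smaller event than holding, so the best one can extract from the statement alone is $1-\epsilon(\lambda,x)\ge P_\lambda(x,\braces{x}\C)\ge \frac{\lambda-1}{2\hat{w}+\lambda-1}\big(1-\pi(\braces{x})\big)$, which misses the claimed bound by the additive term $\frac{\lambda-1}{2\hat{w}+\lambda-1}\pi(\braces{x})$. Your argument therefore genuinely needs the restricted-kernel form of the minorisation --- that the sub-probability kernel obtained by deleting the $i=1$ term already dominates $\frac{\lambda-1}{2\hat{w}+\lambda-1}\pi$ --- which is what the \emph{proof} of Lemma \ref{lemma:i-sir-probability-inequality} establishes and what your phrase ``realised by selecting a fresh proposal'' refers to. Since you state this explicitly the argument is complete, but in a write-up it should be presented as a (mild) strengthening of that lemma rather than as a citation of Proposition \ref{prop:isir-minorisation}.
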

\begin{proof}
Let $N\in \braces{2,3,\dots}$, then
\begin{align*}
\epsilon(N,z_1)&=\int_{X^{N-1}} \frac{w(z_1)}{\sum_{i=1}^Nw(z_i)} \bigg(\prod_{n=2}^N q(\ud z_n) \bigg)=1-\int_{X^{N-1}} \frac{\sum_{j=2}^Nw(z_j)}{\sum_{i=1}^Nw(z_i)} \bigg(\prod_{n=2}^N q(\ud z_n) \bigg) \\
&=1-(N-1)\int_{X^{N-1}} \frac{w(z_2)}{\sum_{i=1}^Nw(z_i)} \bigg(\prod_{n=2}^N q(\ud z_n) \bigg) \\
&\le 1-(N-1)\int_{X^{N-1}} \frac{w(z_2)}{w(z_1)+w(z_2)+N-2} q(\ud z_2) \\
&\le 1-(N-1)\int_{X^{N-1}} \frac{w(z_2)}{2\hat{w}+N-2} q(\ud z_2) \\
&= 1-\frac{N-1}{2\hat{w}+N-2}= \frac{2\hat{w}-1}{2\hat{w}+N-2},
\end{align*}
where the third equality follows from exchangeability of $z_2,\dots, z_N$ under the integral and the first inequality from Jensen's inequality. Therefore
\begin{align*}
\epsilon(N)=\int_\Sp \epsilon(N,z_1)\pi(\ud z_1)\le \frac{2\hat{w}-1}{2\hat{w}+N-2},
\end{align*}
and for $N=1$ the inequality is clear for both cases. If $\lambda\ge 1$, then by Lemma \ref{lemma:upper-bound-for-functions-of-certain-type} in Appendix \ref{app:modification} 
\begin{align*}
\epsilon(\lambda,z_1)\le \frac{2\hat{w}}{2\hat{w}+\lambda-1}, \qquad
\epsilon(\lambda)\le \frac{2\hat{w}}{2\hat{w}+\lambda-1}.
\end{align*}
Let $N\in \N_+$ and $z_1\in \Sp$, then $\epsilon(1)=\epsilon(1,z_1)=1$ and for $N\ge 2$ it holds by Jensen's inequality, that
\begin{align*}
\epsilon(N,z_1)&=\int_{X^{N-1}} \frac{w(z_1)}{\sum_{i=1}^Nw(z_i)} \bigg(\prod_{n=2}^N q(\ud z_n) \bigg)\ge \frac{w(z_1)}{w(z_1)+N-1}
\end{align*}
and
\begin{align*}
\epsilon(N)&\ge \int_{\Sp} \frac{w(z_1)}{w(z_1)+N-1}\pi(\ud z_1)  
=\int_{\X} \frac{w(z_1)^2}{w(z_1)+N-1}q(\ud z_1) 
\ge \frac{1}{N},
\end{align*}
where the last inequality follows from Jensen's inequality, since $g:\R\to \R$,
$$
g(x)=\begin{cases}
\frac{x^2}{x+N-1}, & x>0\\
0, & x\le 0
\end{cases}
$$
is convex.
Thus, for $\lambda\ge 1$ it holds by Lemma \ref{lemma:extended-function} in Appendix \ref{app:modification}
\begin{align*}
\epsilon(\lambda,z_1)
\ge w(z_1)\frac{w(z_1)+2\floor{\lambda}-\lambda}{(w(z_1)+\floor{\lambda})(w(z_1)+\floor{\lambda}-1)}>0
\end{align*}
and
\begin{align*}
\epsilon(\lambda)
\ge \frac{2\floor{\lambda}+1-\lambda}{(\floor{\lambda}+1)\floor{\lambda}}=b(\lambda)\ge \frac{1}{\lambda}.
\end{align*}
Therefore, for $\lambda>1$
\begin{equation*}
1<\frac{\floor{\lambda}^2+3\floor{\lambda}-\lambda+1}{\floor{\lambda}^2-\floor{\lambda}+\lambda-1}
=\frac{1+\frac{2\floor{\lambda}+1-\lambda}{(\floor{\lambda}+1)\floor{\lambda}} }{1-\frac{2\floor{\lambda}+1-\lambda}{(\floor{\lambda}+1)\floor{\lambda}}}
\le \frac{1+\epsilon(\lambda)}{1-\epsilon(\lambda)}\le 
\frac{1+\frac{2\hat{w}}{2\hat{w}+\lambda-1}}{1-\frac{2\hat{w}}{2\hat{w}+\lambda-1}}
=\frac{4\hat{w}+\lambda-1}{\lambda-1}.	\qedhere
\end{equation*}
\end{proof}

We have by Jensen's inequality
\begin{align}\label{eq:ergodic-averages-jensen}
\begin{split}
\epsilon(\lambda)^2
&\le
\int_{\Sp\times \X^{\floor{\lambda}}} \bigg(\frac{\beta(\lambda)w(z_1)}{\sum_{i=1}^{\floor{\lambda}} w(z_i)}+ \frac{(1-\beta(\lambda))w(z_1)}{\sum_{j=1}^{\floor{\lambda}+1} w(z_j)}\bigg)^2 \bigg( \prod_{n=2}^{\floor{\lambda}+1} q(\ud z_n) \bigg)\pi(\ud z_1),
\end{split}
\end{align}
where $\beta(\lambda) = \floor{\lambda}+1-\lambda \in (0,1]$. We shall define the function on the right-hand side of \eqref{eq:ergodic-averages-jensen}, which was already mentioned in Section \ref{section:adaptation}:
For Algorithm \ref{alg:continuous-isir} the upper bound function of squared average rejection probability obtained by Jensen's inequality is denoted by $\epsilon_s:[1,\infty)\to \R$,
\begin{align}\label{eq:ergodic-averages-jensen-2}
\begin{split}
\epsilon_s(\lambda)&=\int_{\Sp\times \X^{\floor{\lambda}}} \bigg(\frac{\beta(\lambda)w(z_1)}{\sum_{i=1}^{\floor{\lambda}} w(z_i)}+ \frac{(1-\beta(\lambda))w(z_1)}{\sum_{j=1}^{\floor{\lambda}+1} w(z_j)}\bigg)^2 \bigg( \prod_{n=2}^{\floor{\lambda}+1} q(\ud z_n) \bigg)\pi(\ud z_1) \\
&=\int_{\W{\floor{\lambda}+1}} w(z_1)^3\bigg(\frac{\beta(\lambda)w(z_{\floor{\lambda}+1})+\sum_{k=1}^{\floor{\lambda}} w(z_k)}{(\sum_{i=1}^{\floor{\lambda}} w(z_i))(\sum_{j=1}^{\floor{\lambda}+1} w(z_j))}\bigg)^2 \bigg( \prod_{n=1}^{\floor{\lambda}+1} q(\ud z_n) \bigg).
\end{split}
\end{align}

\begin{lemma}\label{lemma:ergodic-averages-jensen}
It holds that $\epsilon_s:[1,\infty)\to \R$ is strictly decreasing positive function, $\epsilon(\lambda)^2\le \epsilon_s(\lambda)$ and $\epsilon_s(1)=1$. If Assumption \ref{a:bounded-weights} holds, then
$\epsilon_s(\lambda)\le \floor{\lambda}^{-1}\hat{w}$.
\end{lemma}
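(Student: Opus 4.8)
The claims $\epsilon(\lambda)^2\le\epsilon_s(\lambda)$, $\epsilon_s(1)=1$, and positivity require little work, so I would dispatch them first. The inequality $\epsilon(\lambda)^2\le\epsilon_s(\lambda)$ is precisely the Jensen bound \eqref{eq:ergodic-averages-jensen} combined with the definition \eqref{eq:ergodic-averages-jensen-2}. Substituting $\lambda=1$ (so $\floor{\lambda}=1$ and $\beta(\lambda)=1$) makes the integrand identically equal to $1$ integrated against the probability measure $\pi(\ud z_1)q(\ud z_2)$, giving $\epsilon_s(1)=1$. Positivity is immediate since the integrand is nonnegative and strictly positive on $\Sp\times\X^{\floor{\lambda}}$ (as $w(z_1)>0$ for $z_1\in\Sp$); it also follows from $\epsilon_s(\lambda)\ge\epsilon(\lambda)^2>0$. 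The substance of the lemma is thus the strict decrease and the upper bound, and for both the useful device is to write, for $N=\floor{\lambda}$ and the functions $s_N(z)=w(z_1)/\sum_{i=1}^{N}w(z_i)$ and $s_{N+1}(z)=w(z_1)/\sum_{j=1}^{N+1}w(z_j)$ on $\Sp\times\X^{N}$, the integrand of \eqref{eq:ergodic-averages-jensen-2} as the square of the convex combination $r_\lambda=\beta(\lambda)s_N+(1-\beta(\lambda))s_{N+1}$, where $\beta(\lambda)=N+1-\lambda\in[0,1]$. Because the denominators only grow, $s_N\ge s_{N+1}\ge 0$ pointwise, with strict inequality exactly where $w(z_{N+1})>0$.

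To prove strict decrease I would fix an integer $N$ and compare $\lambda_1<\lambda_2$ inside the closed interval $[N,N+1]$. Both $\epsilon_s(\lambda_i)$ can be written as $\int_{\Sp\times\X^{N}}r_{\lambda_i}^2\,\prod_{n=2}^{N+1}q(\ud z_n)\pi(\ud z_1)$; for $\lambda_2=N+1$ this uses $\beta(N+1)=0$ and integrating out the redundant variable $z_{N+2}$ in the defining integral, which simultaneously shows that the two branches of the definition agree at integers (hence $\epsilon_s$ is continuous). Since $\beta(\lambda_1)>\beta(\lambda_2)$ and $s_N\ge s_{N+1}$, we get $r_{\lambda_1}\ge r_{\lambda_2}\ge 0$ pointwise, strict on $\{w(z_{N+1})>0\}$; this set has positive measure under $\pi(\ud z_1)\prod_{n=2}^{N+1}q(\ud z_n)$ because its $z_{N+1}$-marginal is $q$ and $\int w\,\ud q=1$ forces $q(\{w>0\})>0$. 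Squaring and integrating yields $\epsilon_s(\lambda_1)>\epsilon_s(\lambda_2)$, and chaining these strict inequalities across adjacent closed intervals (which overlap at the integers) gives strict decrease on all of $[1,\infty)$.

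For the upper bound under Assumption \ref{a:bounded-weights} I would use $r_\lambda\le s_N$ (as $r_\lambda$ is a convex combination of $s_N$ and the smaller $s_{N+1}$), which after integrating out $z_{N+1}$ and rewriting $\pi(\ud z_1)=w(z_1)q(\ud z_1)$ gives $\epsilon_s(\lambda)\le\int_{\W{N}}w(z_1)^3/\big(\sum_{i=1}^{N}w(z_i)\big)^2\,\prod_{n=1}^{N}q(\ud z_n)$. Bounding $w(z_1)\le\hat{w}$ leaves $\hat{w}\int_{\W{N}}w(z_1)^2/\big(\sum_{i=1}^{N}w(z_i)\big)^2\,\prod_{n=1}^{N}q(\ud z_n)$; by exchangeability of $z_1,\dots,z_N$ this equals $(\hat{w}/N)\int_{\W{N}}\big(\sum_{k=1}^{N}w(z_k)^2\big)/\big(\sum_{i=1}^{N}w(z_i)\big)^2\,\prod_{n=1}^{N}q(\ud z_n)$, and since $\sum_{k}w(z_k)^2\le\big(\sum_{i}w(z_i)\big)^2$ for nonnegative weights the integrand is at most $1$, so $\epsilon_s(\lambda)\le\hat{w}/N=\hat{w}/\floor{\lambda}$.

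The main obstacle is the strict monotonicity rather than the inequalities: the genuine content is organising the comparison on \emph{closed} intervals so that a single domain $\Sp\times\X^{N}$ and the single representation $r_\lambda=\beta(\lambda)s_N+(1-\beta(\lambda))s_{N+1}$ serve both endpoints — in particular handling $\lambda=N+1$, where the definition switches branch, by integrating out a redundant variable — and then verifying that the strictness set $\{w(z_{N+1})>0\}$ is non-null. The remaining estimates are routine applications of exchangeability and elementary inequalities.
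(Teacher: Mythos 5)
Your proof is correct and follows essentially the same route as the paper: Jensen's inequality for $\epsilon(\lambda)^2\le\epsilon_s(\lambda)$, pointwise monotonicity of the integrand in $\beta(\lambda)$ (with strictness on the non-null set where $w(z_{\floor{\lambda}+1})>0$) for strict decrease, and exchangeability together with $\sum_k w(z_k)^2\le\big(\sum_i w(z_i)\big)^2$ for the bound $\epsilon_s(\lambda)\le\floor{\lambda}^{-1}\hat{w}$. The only cosmetic difference is at integer points: you verify that the two branches of the definition agree and compare on closed intervals $[N,N+1]$, whereas the paper computes the difference $\epsilon_s(\lambda)-\epsilon_s(N+1)$ explicitly on the common domain $\W{N+1}$ --- the same underlying comparison.
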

\begin{proof}
First inequality follows from \eqref{eq:ergodic-averages-jensen} and $\epsilon_s(1)=1$ by the definition.
When $N\in \N_+$ and $N\le \lambda_1<\lambda_2<N+1$, then by the definition of $\epsilon_s$, we have $\epsilon_s(\lambda_1)>\epsilon_s(\lambda_2)$. If $\lambda\in [N,N+1)$, then
\begin{align*}
&\epsilon_s(\lambda)-\epsilon_s(N+1) \\
&=\int_{\W{N+1}} w(z_1)^3\bigg(\frac{\beta(\lambda)w(z_{N+1})+\sum_{k=1}^{N} w(z_k)}{(\sum_{j=1}^{N} w(z_j))(\sum_{l=1}^{N+1} w(z_l))}\bigg)^2 -w(z_1)^3\bigg(\frac{1}{\sum_{l=1}^{N+1} w(z_l)}\bigg)^2 \bigg( \prod_{n=1}^{N+1} q(\ud z_n) \bigg) \\
&=\int_{\W{N+1}}w(z_1)^3\frac{(\beta(\lambda)w(z_{N+1})+\sum_{k=1}^{N} w(z_k))^2-(\sum_{i=1}^{N} w(z_i))^2}{(\sum_{j=1}^{N} w(z_j))^2(\sum_{l=1}^{N+1} w(z_l))^2} \bigg( \prod_{n=1}^{N+1} q(\ud z_n) \bigg) \\
&=\int_{\W{N+1}}w(z_1)^3\frac{\beta(\lambda)w(z_{N+1})[\beta(\lambda)w(z_{N+1})+2\sum_{k=1}^{N} w(z_k)]}{(\sum_{j=1}^{N} w(z_j))^2(\sum_{l=1}^{N+1} w(z_l))^2} \bigg( \prod_{n=1}^{N+1} q(\ud z_n) \bigg)>0.
\end{align*}
Therefore, $\epsilon_s$ is strictly decreasing. When Assumption \ref{a:bounded-weights} holds, we have
\begin{align*}
\epsilon_s(N)&=\int_{\W{N}} w(z_1)^3\bigg(\frac{1}{\sum_{i=1}^{N} w(z_i)}\bigg)^2 \bigg( \prod_{n=1}^{N} q(\ud z_n) \bigg) \\
&\le \hat{w}\int_{\W{N}}\frac{w(z_1)^2}{(\sum_{i=1}^{N} w(z_i))^2} \bigg( \prod_{n=1}^{N} q(\ud z_n) \bigg) 
\le \hat{w}\int_{\W{N}}\frac{w(z_1)^2}{\sum_{i=1}^{N} w(z_i)^2} \bigg( \prod_{n=1}^{N} q(\ud z_n) \bigg) \\
&=\frac{\hat{w}}{N}\int_{\W{N}}\frac{\sum_{j=1}^{N} w(z_j)^2}{\sum_{i=1}^{N} w(z_i)^2} \bigg( \prod_{n=1}^{N} q(\ud z_n) \bigg) 
\le\frac{\hat{w}}{N},
\end{align*}
where the last equality follows from exchangeability of $z_1,\dots z_N$ under the integral. Since $\epsilon_s$ is decreasing $\epsilon_s(\lambda)\le \floor{\lambda}^{-1}\hat{w}$. 
\end{proof}

Let us define for Algorithm \ref{alg:continuous-isir} the derivative of average rejection probability. The derivative function $\epsilon'$ is undefined for $\lambda\in \N_+$, which can be seen from \eqref{equation:continuous-versions} and Lemma \ref{lemma:i-sir-rejection-inequalities}. We will use the same notation for the extended derivative function:
$$\epsilon'(\lambda) = \epsilon(\floor{N}+1)-\epsilon(\floor{N}), \qquad \lambda\in [1,\infty),$$
which is a c\`adl\`ag function. The following bounds hold for this function:

\begin{lemma}\label{lemma:continuous-isir-rejection-derivative}
The extended derivative function $\epsilon'(\lambda):[1,\infty) \to \R$ is increasing negative function. Let $N\in \N_+$, then $\epsilon'(x)=\epsilon'(y)$ if $N\le x\le y<N+1$, and $\epsilon'(x)<\epsilon'(y)$ if $N\le x<N+1\le y$.
If Assumption \ref{a:bounded-weights} holds, then
\begin{align*}
\frac{C}{(2\hat{w}+\floor{\lambda}-1)^2} \le |\epsilon'(\lambda)| \le \frac{\hat{w}}{\floor{\lambda} (\floor{\lambda}+1)},
\end{align*}
where $C=\int_\Sp w(x) \pi(\ud x)\in [1,\hat{w}]$.
\end{lemma}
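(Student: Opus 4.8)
The plan is to reduce everything to the single finite difference $\epsilon(N)-\epsilon(N+1)$ with $N=\floor{\lambda}$, since by definition $\epsilon'$ is the step function equal to $\epsilon(\floor{\lambda}+1)-\epsilon(\floor{\lambda})$ and is constant on each interval $[N,N+1)$. The qualitative assertions then follow immediately from Lemma~\ref{lemma:i-sir-rejection-inequalities}. Writing $d_N=\epsilon(N+1)-\epsilon(N)$, the first inequality $\epsilon(N)-\epsilon(N+1)>0$ gives negativity, and the sequential convexity $\epsilon(N)-2\epsilon(N+1)+\epsilon(N+2)>0$ says precisely that $d_N<d_{N+1}$, so the $d_N$ strictly increase. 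Hence $\epsilon'(x)=\epsilon'(y)=d_N$ whenever $N\le x\le y<N+1$, whereas for $N\le x<N+1\le y$ we have $\floor{y}\ge N+1$ and therefore $\epsilon'(x)=d_N<d_{\floor{y}}=\epsilon'(y)$; this is the claimed (weak) monotonicity with strict jumps across integers.

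For the quantitative bounds I would first record an integral representation. Combining the per-state identity from the proof of Lemma~\ref{lemma:i-sir-rejection-inequalities}, namely $\epsilon(N,z_1)-\epsilon(N+1,z_1)=w(z_1)\int \frac{w(z_{N+1})}{B_NB_{N+1}}\prod_{n=2}^{N+1}q(\ud z_n)$ where $B_n=\sum_{i=1}^n w(z_i)$, with $\pi(\ud z_1)=w(z_1)q(\ud z_1)$, I obtain
\[
|\epsilon'(\lambda)|=\epsilon(N)-\epsilon(N+1)=\int_{\W{N+1}}\frac{w(z_1)^2\,w(z_{N+1})}{B_NB_{N+1}}\bigg(\prod_{n=1}^{N+1}q(\ud z_n)\bigg),\qquad N=\floor{\lambda},
\]
the integrand vanishing off $\W{N+1}$ since it requires $w(z_1)>0$.

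The upper bound is the delicate part, because the naive estimate $w(z_{N+1})/B_{N+1}\le 1$ only yields $\hat w/N$. Since $z_1,\dots,z_{N+1}$ are i.i.d.\ under $\prod q$, the integral is unchanged under relabelling the squared and the ``extra'' variables to any ordered pair of distinct indices $(a,b)$, with $B_N=S-w(z_b)$ and $S=B_{N+1}$. Averaging over all $N(N+1)$ such pairs and using $\sum_{a\ne b}w(z_a)^2\le \hat w\sum_{a\ne b}w(z_a)=\hat w\,(S-w(z_b))$ collapses the double sum, leaving $\frac{1}{N(N+1)}\int \hat w\,\prod q=\hat w/(N(N+1))$. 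This gives $|\epsilon'(\lambda)|\le \hat w/(\floor{\lambda}(\floor{\lambda}+1))$.

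For the lower bound I would drop $B_N\le B_{N+1}$ to get $B_N^{-1}B_{N+1}^{-1}\ge B_{N+1}^{-2}$, then apply Jensen in $z_2,\dots,z_N$ to the convex map $u\mapsto (w(z_1)+w(z_{N+1})+u)^{-2}$, using $\E_q[\sum_{i=2}^N w]=N-1$; here $\E_q[w]=1$ is essential, as it is what yields the additive $N-1$ instead of a multiplicative $\hat w$. Bounding $w(z_1)+w(z_{N+1})\le 2\hat w$ in the denominator and integrating out $z_1,z_{N+1}$ with $\int w^2\,q=\int w\,\pi=C$ and $\int w\,q=1$ then gives $|\epsilon'(\lambda)|\ge C/(2\hat w+\floor{\lambda}-1)^2$. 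Finally $C\in[1,\hat w]$ follows from Jensen, $\int w^2 q\ge(\int w\,q)^2=1$, and $\int w^2 q\le \hat w\int w\,q=\hat w$. The main obstacle throughout is pinning down the sharp constants: the factor $N(N+1)$ forces the exchangeability/symmetrisation trick for the upper bound, and the form $2\hat w+N-1$ forces the $\E_q[w]=1$-driven Jensen step for the lower bound.
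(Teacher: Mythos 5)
Your proof is correct and takes essentially the same route as the paper: the qualitative claims follow from the monotonicity and sequential convexity in Lemma \ref{lemma:i-sir-rejection-inequalities}, the upper bound from $w^2\le\hat{w}w$ plus exchangeability (the paper symmetrises in two sequential steps — over $z_1,\dots,z_N$ and then over $z_1,\dots,z_{N+1}$ — rather than averaging over all $N(N+1)$ ordered pairs at once), and the lower bound from $B_NB_{N+1}\le B_{N+1}^2$, Jensen's inequality using $\E_q[w]=1$, and the bound $w(z_1)+w(z_{N+1})\le 2\hat{w}$. The only cosmetic difference is that the paper proves the lower bound per state, $\epsilon(N,z_1)-\epsilon(N+1,z_1)\ge w(z_1)/(2\hat{w}+N-1)^2$, and then integrates against $\pi$, which is equivalent to your joint-integral computation, and it obtains $C\ge 1$ by Hölder restricted to $\Sp$ where you use Jensen over all of $\X$.
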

\begin{proof}
Let us first assume Assumption \ref{a:bounded-weights} holds. Let $N\in \N_+$ and let us denote $B_n=\sum_{i=1}^n w(z_i)$, then 
\begin{align*}
&\epsilon(N)-\epsilon(N+1) \\
&\quad= \int_{\Sp}\int_{\X^{N}} \frac{w(z_1)w(z_{N+1})}{B_NB_{N+1}} \bigg( \prod_{n=2}^{N+1} q(\ud z_n) \bigg) \pi(\ud z_1)
= \int_{\W{N}\times \X} \frac{w(z_1)^2w(z_{N+1})}{B_NB_{N+1}} \bigg( \prod_{n=1}^{N+1} q(\ud z_n) \bigg) \\
&\quad \le \hat{w} \int_{\W{N}\times \X} \frac{w(z_1)w(z_{N+1})}{B_NB_{N+1}} \bigg( \prod_{n=1}^{N+1} q(\ud z_n) \bigg)
=\frac{\hat{w}}{N}\int_{\W{N}\times \X} \frac{B_Nw(z_{N+1})}{B_NB_{N+1}} \bigg( \prod_{n=1}^{N+1} q(\ud z_n) \bigg) \\
&\quad =\frac{\hat{w}}{N}\int_{\W{N+1}\times \X} \frac{w(z_{N+1})}{B_{N+1}} \bigg( \prod_{n=1}^{N+1} q(\ud z_n) \bigg)
=\frac{\hat{w}}{N(N+1)}\int_{\W{N+1}\times \X} \frac{B_{N+1}}{B_{N+1}} \bigg( \prod_{n=1}^{N+1} q(\ud z_n) \bigg) \\
&\quad \le \frac{\hat{w}}{N(N+1)},
\end{align*}
where the third equality follows from exchangeability of $z_1,\dots,z_N$ and the last from exchangeability of $z_1,\dots,z_{N+1}$ under the integral. We have also
\begin{align*}
&\epsilon(N,z_1)-\epsilon(N+1,z_1) \\
&\quad=\int_{\X^{N}} \frac{w(z_1)w(z_{N+1})}{B_NB_{N+1}} \bigg( \prod_{n=2}^{N+1} q(\ud z_n) \bigg) 
\ge \int_{\X^{N}} \frac{w(z_1)w(z_{N+1})}{(B_{N+1})^2} \bigg( \prod_{n=2}^{N+1} q(\ud z_n) \bigg) \\
&\quad \ge \int_{\X^{N}} \frac{w(z_1)w(z_{N+1})}{(2\hat{w}+\sum_{i=2}^{N}w(z_i))^2} \bigg( \prod_{n=2}^{N+1} q(\ud z_n) \bigg) 
\ge \int_{\X} \frac{w(z_1)w(z_{N+1})}{(2\hat{w}+N-1)^2}q(\ud z_{N+1}) \\
&\quad =  \frac{w(z_1)}{(2\hat{w}+N-1)^2},
\end{align*}
where the last inequality follows from Jensen's inequality.
Let us define $\nu=\mu|_{\Sp}$, then by Hölder's inequality
\begin{align*}
\Big(\int_\Sp \Big|\frac{\pi(x)}{\sqrt{q(x)}}\sqrt{q(x)}\Big| \nu(\ud x) \Big)^2\le \int_\Sp \Big|\frac{\pi(x)}{\sqrt{q(x)}}\Big|^2 \nu(\ud x)\int_\Sp\Big|\sqrt{q(y)}\Big|^2 \nu(\ud y),
\end{align*}
hence
\begin{align*}
\int_\Sp w(x) \pi(\ud x)=\int_\Sp \frac{\pi(x)^2}{q(x)} \nu(\ud x)\ge \frac{\Big(\int_\Sp \pi(x)\nu(\ud x) \Big)^2}{\int_\Sp q(x) \nu(\ud x)}=\frac{1}{\int_\Sp q(x) \mu(\ud x)} \ge 1.
\end{align*}
Rest of the claims follow from Lemma \ref{lemma:i-sir-rejection-inequalities} and the definition of $\epsilon'$.
\end{proof}

By summing up Lemma \ref{lemma:continuous-isir-rejection-derivative}, we have that the extended derivative function $\epsilon':[1,\infty)\to \R$ is negative step function with positive steps on each integer and $\epsilon'(\lambda)=O(\lambda^{-2})$. 
We can also give bounds for the holding probability:

\begin{lemma}\label{lemma:i-sir-probability-inequality-2}
If $\#(\Sp)>1$, then for all $x\in \Sp$ and $\lambda\ge 1$
\begin{align*}
\pi(\braces{x})&<P_\lambda(x,\braces{x})\le \pi(\braces{x})+(1-\pi(\braces{x}))\frac{2\hat{w}}{2\hat{w}+\lambda-1}, \\
\int_{\Sp}\pi(\braces{x})\pi(\ud x)&<\int_{\Sp}P_\lambda(x,\braces{x})\pi(\ud x)\le \int_{\Sp}\pi(\braces{x}) \pi(\ud x)+\Big(1-\int_{\Sp}\pi(\braces{x})\pi(\ud x)\Big)\frac{2\hat{w}}{2\hat{w}+\lambda-1}
\end{align*}
and for $\psi:[1,\infty)\to \R$,
$$\psi(\lambda)=\frac{\int_{\Sp}[P_\lambda(x,\braces{x})-\pi(\braces{x})]\pi(\ud x)}{1-\int_{\Sp}\pi(\braces{x})\pi(\ud x)}$$
holds
\begin{align*}
0<\psi(\lambda)\le \frac{2\hat{w}}{2\hat{w}+\lambda-1}, \qquad 1< \frac{1+\psi(\lambda)}{1-\psi(\lambda)}\le \frac{4\hat{w}+\lambda-1}{\lambda-1},
\end{align*}
where all the upper bounds require Assumption \ref{a:bounded-weights} to hold.
\end{lemma}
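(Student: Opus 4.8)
The plan is to establish the pointwise inequalities for a fixed $x\in\Sp$ first, treating atoms and non-atoms separately, and then obtain the integrated statement and the bounds on $\psi$ by integrating over $\Sp$ and manipulating the resulting ratio. For a non-atom $x$ we have $\pi(\braces{x})=0$ and $P_\lambda(x,\braces{x})=\epsilon(\lambda,x)$, so both the strict positivity $0<P_\lambda(x,\braces{x})$ and the upper bound $P_\lambda(x,\braces{x})\le \frac{2\hat{w}}{2\hat{w}+\lambda-1}$ are exactly the pointwise bounds already proved in Lemma \ref{lemma:lower-and-upper-bound-for-rejection}. Thus the only genuine work is the atomic case.

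For an atom $x$, I would use Lemma \ref{lemma:discrete-i-sir-transition} to write $P_\lambda(x,\braces{x})=\angles{g_x|P_\lambda g_x}_\pi$ with $g_x=\pi(\braces{x})^{-1/2}\charfun{\uarg=x}$, and then centre $g_x$. Writing $\bar g_x=g_x-\pi(g_x)$ and using $\pi(g_x)^2=\pi(\braces{x})$, $\|g_x\|_\pi^2=1$, together with $P_\lambda 1=1$, $\pi(\bar g_x)=0$ and $\pi$-stationarity, a short expansion gives the key identity $P_\lambda(x,\braces{x})-\pi(\braces{x})=\angles{\bar g_x|P_\lambda\bar g_x}_\pi$, where $\bar g_x\in L_0^2(\pi)$ has $\|\bar g_x\|_\pi^2=1-\pi(\braces{x})$. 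Since $\#(\Sp)>1$ forces $\pi(\braces{x})<1$, we have $\bar g_x\in L_{0,+}^2(\pi)$, so Theorem \ref{theorem:i-sir-convexity}\,(i) yields $\angles{\bar g_x|P_\lambda\bar g_x}_\pi>0$, i.e.\ the strict lower bound. For the upper bound I would apply Proposition \ref{prop:isir-minorisation} to split $P_\lambda=c\,\Pi+(1-c)Q$ with $c=\frac{\lambda-1}{2\hat{w}+\lambda-1}$, $\Pi$ the projection onto constants and $Q$ a $\pi$-reversible Markov kernel; since $\Pi\bar g_x=0$, non-negativity of the Dirichlet form of $Q$ gives $\angles{\bar g_x|P_\lambda\bar g_x}_\pi=(1-c)\angles{\bar g_x|Q\bar g_x}_\pi\le (1-c)\|\bar g_x\|_\pi^2$, and $1-c=\frac{2\hat{w}}{2\hat{w}+\lambda-1}$ together with $\|\bar g_x\|_\pi^2=1-\pi(\braces{x})$ produces exactly the stated upper bound.

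With the pointwise bounds in hand, integrating $P_\lambda(x,\braces{x})-\pi(\braces{x})$ against $\pi$ over $\Sp$ gives the integrated display: the lower bound is strict because the integrand is strictly positive on $\Sp$ and $\pi(\Sp)=1$, and the upper bound follows from $\int_\Sp(1-\pi(\braces{x}))\pi(\ud x)=1-\int_\Sp\pi(\braces{x})\pi(\ud x)$. Dividing by $1-\int_\Sp\pi(\braces{x})\pi(\ud x)$, which is positive precisely because $\#(\Sp)>1$, turns these into $0<\psi(\lambda)\le \frac{2\hat{w}}{2\hat{w}+\lambda-1}$; in particular $\psi(\lambda)<1$ for $\lambda>1$. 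The final ratio bound is then obtained exactly as in the last display of the proof of Lemma \ref{lemma:lower-and-upper-bound-for-rejection}, using that $y\mapsto \frac{1+y}{1-y}$ is strictly increasing on $[0,1)$ to get $1<\frac{1+\psi(\lambda)}{1-\psi(\lambda)}\le \frac{4\hat{w}+\lambda-1}{\lambda-1}$. The main obstacle is the atomic case, specifically the centring identity $P_\lambda(x,\braces{x})-\pi(\braces{x})=\angles{\bar g_x|P_\lambda\bar g_x}_\pi$ and the minorisation-based quadratic-form bound; once these are in place, everything else is routine integration and the increasing-ratio manipulation.
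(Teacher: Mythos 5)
Your proof is correct, but it takes a genuinely different route from the paper's for the pointwise bounds. The paper's proof is two lines: the strict lower bound $P_\lambda(x,\braces{x})>\pi(\braces{x})$ comes from combining strict monotonicity of $\lambda\mapsto P_\lambda(x,\braces{x})$ (Theorem \ref{theorem:i-sir-convexity-3}) with the limit $P_\lambda(x,\braces{x})\to\pi(\braces{x})$ (Lemma \ref{lemma:holding-rejection-limit}), and the upper bound (together with everything else) is read off directly from the two-sided kernel bound of Lemma \ref{lemma:i-sir-probability-inequality} applied with $A=\braces{x}$, which handles atoms and non-atoms uniformly with no case distinction. You instead split into atoms and non-atoms: for non-atoms you cite the rejection-rate bounds of Lemma \ref{lemma:lower-and-upper-bound-for-rejection}, and for atoms you derive the centring identity $P_\lambda(x,\braces{x})-\pi(\braces{x})=\angles{\bar g_x|P_\lambda\bar g_x}_\pi$ (a centred refinement of Lemma \ref{lemma:discrete-i-sir-transition}), obtain the strict lower bound from positivity of the quadratic form (Theorem \ref{theorem:i-sir-convexity}\,(i), valid since $\#(\Sp)>1$ gives $\|\bar g_x\|_\pi^2=1-\pi(\braces{x})>0$), and obtain the upper bound from the minorisation split $P_\lambda=c\pi+(1-c)Q_\lambda$ of Lemma \ref{lemma:i-sir-properties} together with $\angles{\bar g_x|Q_\lambda\bar g_x}_\pi\le\|\bar g_x\|_\pi^2$. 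Both arguments are sound; all the facts you invoke are established in the paper, and the integration and increasing-ratio steps coincide with the paper's. What the paper's route buys is brevity, since Lemma \ref{lemma:i-sir-probability-inequality} already contains the upper bound verbatim and no atom/non-atom split is needed; what yours buys is a more direct lower bound (no limiting argument) and the identity expressing the excess holding probability $P_\lambda(x,\braces{x})-\pi(\braces{x})$ exactly as a stationary autocovariance of $\bar g_x$, from which both bounds flow in a unified spectral way — your re-derivation of the upper bound is in effect the special case $A=\braces{x}$ of Lemma \ref{lemma:i-sir-probability-inequality}, recast as non-negativity of the Dirichlet form of the residual kernel.
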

\begin{proof}
We have $P_\lambda(x,\braces{x})>\pi(\braces{x})$ by Lemma \ref{lemma:holding-rejection-limit} and Theorem \ref{theorem:i-sir-convexity-3}.
Rest of the inequalities follow from each other and Lemma \ref{lemma:i-sir-probability-inequality} in Appendix \ref{app:ergodicity-auxiliaries} when $A=\braces{x}$.
\end{proof}

\subsection{Asymptotic variance}

When Assumption \ref{a:bounded-weights} holds, we can give upper and lower bounds for the asymptotic variance:

\begin{lemma}\label{lemma:lower-and-upper-bound-for-asymptotic-variance}
If Assumption \ref{a:bounded-weights} holds, $\lambda>1$ and $f\in L_{0,+}^2(\pi)$, then
\begin{align*}
\mathrm{var}_\pi(f)<\mathrm{var}(P_\lambda,f)\le \frac{4\hat{w}+\lambda-1}{\lambda-1}\mathrm{var}_\pi(f).
\end{align*}
\end{lemma}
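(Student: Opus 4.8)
The plan is to work entirely on the centred space $L_0^2(\pi)$, on which $f$ already lives, and to combine the positivity of $P_\lambda$ (Theorem \ref{theorem:i-sir-convexity}) with the minorisation (Proposition \ref{prop:isir-minorisation}). Since $f\in L_{0,+}^2(\pi)$ we have $\pi(f)=0$, so $\var_\pi(f)=\|f\|_\pi^2>0$ and, in stationarity, $\cov\big(f(X_0),f(X_k)\big)=\angles{f|P_\lambda^k f}_\pi$. Hence the asymptotic variance is by definition the series
$$\var(P_\lambda,f)=\|f\|_\pi^2+2\sum_{k=1}^\infty \angles{f|P_\lambda^k f}_\pi,$$
which is finite and well-defined by the uniform ergodicity ensured through Proposition \ref{prop:isir-minorisation}.

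The central device is the minorisation decomposition. By Proposition \ref{prop:isir-minorisation} (as used in the proof of Lemma \ref{lemma:i-sir-properties}) we may write $P_\lambda(x,A)=c\,\pi(A)+(1-c)Q(x,A)$ with $c=\frac{\lambda-1}{2\hat{w}+\lambda-1}\in(0,1)$ and $Q$ a $\pi$-reversible Markov kernel. For $g\in L_0^2(\pi)$ the constant part annihilates $g$, since $\int g\,\ud\pi=0$; thus $P_\lambda g=(1-c)Qg$, and because $Q$ is $\pi$-invariant it preserves $L_0^2(\pi)$, so one obtains inductively
$$P_\lambda^k f=(1-c)^k Q^k f, \qquad k\ge 1.$$
As $Q$ is a self-adjoint Markov operator, $\|Q\|_{L^2(\pi)}\le 1$, whence $\angles{f|P_\lambda^k f}_\pi=(1-c)^k\angles{f|Q^k f}_\pi\le (1-c)^k\|f\|_\pi^2$. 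Since $0<1-c<1$, this majorant is summable.

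For the \emph{upper bound} I would sum the geometric majorant,
$$\var(P_\lambda,f)\le \|f\|_\pi^2\Big(1+2\sum_{k=1}^\infty (1-c)^k\Big)=\|f\|_\pi^2\,\frac{2-c}{c},$$
and substitute $c=\frac{\lambda-1}{2\hat{w}+\lambda-1}$ to get $\frac{2-c}{c}=\frac{4\hat{w}+\lambda-1}{\lambda-1}$, the claimed constant. For the \emph{lower bound} I would use that $P_\lambda$ is positive semidefinite: Theorem \ref{theorem:i-sir-convexity} gives $\angles{g|P_\lambda g}_\pi\ge 0$ for every $g\in L^2(\pi)$, and positivity of a self-adjoint operator forces $\angles{f|P_\lambda^k f}_\pi\ge 0$ for all $k$ (even powers equal $\|P_\lambda^{k/2}f\|_\pi^2$, and odd powers $k=2m+1$ equal $\angles{P_\lambda^{m}f\,|\,P_\lambda\, P_\lambda^m f}_\pi\ge 0$). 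Every term of the series is therefore non-negative, and the $k=1$ term is \emph{strictly} positive by Theorem \ref{theorem:i-sir-convexity}\,(i) because $\|f\|_\pi>0$; hence $\var(P_\lambda,f)\ge \|f\|_\pi^2+2\angles{f|P_\lambda f}_\pi>\|f\|_\pi^2=\var_\pi(f)$.

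The main obstacle is not any single estimate but ensuring the operator manipulations are legitimate: that the series representation is valid (granted by uniform ergodicity), that $Q$ restricts to an $L^2(\pi)$-contraction preserving $L_0^2(\pi)$, and that the power identity $P_\lambda^k f=(1-c)^k Q^k f$ holds on the centred subspace. Once these are in place the bounds are immediate. The same conclusion can alternatively be read off from the functional calculus applied to $\var(P_\lambda,f)=\angles{f|(I+P_\lambda)(I-P_\lambda)^{-1}f}_\pi$ (with $I-P_\lambda$ invertible by Lemma \ref{lemma:i-sir-properties}), using monotonicity of $t\mapsto \frac{1+t}{1-t}$ on the spectrum $[0,1-c]$ of $P_\lambda$ restricted to $L_0^2(\pi)$.
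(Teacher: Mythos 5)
Your proof is correct, and it splits naturally into a part that matches the paper and a part that genuinely differs. The lower bound argument is essentially the paper's own: both of you expand the asymptotic variance through the series $\|f\|_\pi^2 + 2\sum_{k\ge 1}\angles{f|P_\lambda^k f}_\pi$, use $\pi$-reversibility to see that even powers contribute $\angles{P_\lambda^j f|P_\lambda^j f}_\pi\ge 0$ and odd powers contribute $\angles{P_\lambda^j f|P_\lambda P_\lambda^j f}_\pi\ge 0$, and obtain strictness from $\angles{f|P_\lambda f}_\pi>0$ via Theorem \ref{theorem:i-sir-convexity}. The upper bound is where you diverge: the paper does not prove it in-house, but instead cites Proposition 32 of the supplement of \citep{andrieu-lee-vihola} combined with the minorisation Lemma \ref{lemma:i-sir-probability-inequality}, whereas you give a self-contained derivation. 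Your key step --- that on $L_0^2(\pi)$ the decomposition $P_\lambda = c\pi + (1-c)Q$, with $c=\frac{\lambda-1}{2\hat{w}+\lambda-1}$ and $Q$ a $\pi$-reversible Markov kernel (exactly the decomposition recorded in the proof of Lemma \ref{lemma:i-sir-properties}), yields $P_\lambda^k f=(1-c)^k Q^k f$ and hence $\angles{f|P_\lambda^k f}_\pi\le (1-c)^k\|f\|_\pi^2$ --- is sound: the induction works because $Q$ is $\pi$-invariant so $Q^jf$ stays centred, and $\|Q\|_{L^2(\pi)}\le 1$ by Jensen's inequality; the geometric sum then gives $\frac{2-c}{c}=\frac{4\hat{w}+\lambda-1}{\lambda-1}$, precisely the claimed constant. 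What your route buys is a proof entirely internal to the paper's own lemmas (it effectively reconstructs the content of the cited external proposition, and also yields absolute convergence of the covariance series as a by-product); what the paper's route buys is brevity. One small imprecision in your closing functional-calculus remark: the spectrum of $P_\lambda$ restricted to $L_0^2(\pi)$ is only \emph{contained in} $[0,1-c]$, not necessarily equal to it, though this does not affect the bound.
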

\begin{proof}
The transition probability $P_\lambda$ is $\pi$-reversible by Lemma \ref{lemma:i-sir-properties}, hence $\angles{f|P_\lambda g}_\pi=\angles{P_\lambda f|g}_\pi$ for all $g\in L^2(\pi)$, it also holds that $\angles{g|P_\lambda g}_\pi\ge 0$ and $\angles{f|P_\lambda f}_\pi>0$ by Theorem \ref{theorem:i-sir-convexity}. Therefore
$$\angles{f|(I-P_\lambda)^{-1}f}_\pi=\sum_{k=0}^\infty \angles{f|P_\lambda^k f}_\pi=\sum_{j=0}^\infty \angles{P_\lambda^jf|P_\lambda^jf}_\pi+\sum_{k=0}^\infty \angles{P_\lambda^kf|P_\lambda P_\lambda^kf}_\pi> \angles{f|f}_\pi$$
and thus
$$\mathrm{var}(P_\lambda,f)=2\angles{f|(I-P_\lambda)^{-1}f}_\pi-\angles{f|f}_\pi > \angles{f|f}_\pi=\mathrm{var}_\pi(f).$$
The upper bound can be found by combining \cite[Proposition 32 in the supplement]{andrieu-lee-vihola} with Lemma \ref{lemma:i-sir-probability-inequality} in Appendix \ref{app:ergodicity-auxiliaries}.
\end{proof}

\section{Approximate i-SIR in the case of atoms}\label{sec:approximate}

We will show how approximation of the i-SIR kernel (Definition \ref{def:approximate-isir} in Section \ref{sec:efficiency}) is derived in the general case.
Let us denote by $\hat{P}_c$ the transition probability in Lemma \ref{lemma:single-proposal-chain-asymptotic-variance} in Appendix \ref{app:asymptotic-variance-results}, where $c$ is the holding probability (rejection probability when $\pi$ has no atoms), then it also holds that
\begin{equation}\label{equation:defining-approximations}
\int \hat{P}_c(x,\braces{x}) \pi(\ud x)=\int c+(1-c)\pi(\braces{x}) \pi(\ud x).
\end{equation}
Finding the approximate kernel is done by solving $c$ from \eqref{equation:defining-approximations}, where $\hat{P}_c$ is replaced with i-SIR transition probability $P_\lambda$. Then the single proposal Markov chain $\hat{P}_c$ with the corresponding holding probability $c$ is used as the approximate. 
The solution $c$ to \eqref{equation:defining-approximations} with the replacement is
\begin{align}\label{function:problem-solution}
\begin{split}
\psi(\lambda)&=\frac{\int_{\Sp}[P_\lambda(x,\braces{x})-\pi(\braces{x})]\pi(\ud x)}{1-\int_{\Sp}\pi(\braces{x})\pi(\ud x)} \\
&=\frac{\int_{\Sp_0}\epsilon(\lambda,x)\pi(\ud x)+\int_{\Sp_+}[P_\lambda(x,\braces{x})-\pi(\braces{x})]\pi(\ud x)}{1-\int_{\Sp_+}\pi(\braces{x})\pi(\ud x)}.
\end{split}
\end{align}
Thus, we obtain the following:
\begin{definition}
 \label{def:approximate-isir-2}
The approximate i-SIR transition $\tilde{Q}_{\lambda}$ of $P_\lambda$ is defined as
$$
\tilde{Q}_{\lambda}(x,A) = \big(1-\psi(\lambda)\big) \pi(A) + \psi(\lambda) \delta_x(A),
$$
where $\psi$ is given in \eqref{function:problem-solution}.
\end{definition}

\begin{proposition}
\label{proposition:approximate-i-sir-asymptotic-variance-2}
If \ref{a:bounded-weights} holds, then for any $f\in L^2(\pi)$ such that $\var_\pi(f)>0$
\begin{enumerate}[(i)]
\item \label{item:approx-asvar-2} $\displaystyle
\var(\tilde{Q}_{\lambda}, f) = \frac{1+\psi(\lambda)}{1-\psi(\lambda)} \var_\pi(f).
$
\item \label{item:approx-bounds-2} $\var_\pi(f) \le \var(\tilde{Q}_{\lambda}, f) \le \frac{4 \hat{w} + \lambda - 1}{\lambda - 1} \var_\pi(f)$.
\item \label{item:approx-decreasing-and-convex-2} $\lambda \mapsto \var(\tilde{Q}_{\lambda}, f)$ is strictly convex and decreasing.
\end{enumerate}
\end{proposition}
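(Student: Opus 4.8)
The plan is to recognise that the kernel $\tilde{Q}_\lambda$ is, by construction, a single-proposal chain of exactly the form analysed in Lemma \ref{lemma:single-proposal-chain-asymptotic-variance}, with holding probability $c=\psi(\lambda)$, and then to read off the three conclusions from the properties of $\psi$ already accumulated in Section \ref{sec:upper-lower-bounds}. This mirrors the proof of Proposition \ref{proposition:approximate-i-sir-asymptotic-variance} for $\tilde{P}_\lambda$, with the rejection probability $\epsilon$ replaced throughout by the more general $\psi$. Since the hypothesis $\var_\pi(f)>0$ forces $f$ to be non-constant $\pi$-almost surely, the support satisfies $\#(\Sp)>1$, which is precisely the condition under which the $\psi$-lemmas apply; I would state this reduction at the outset.

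For \eqref{item:approx-asvar-2} I would first note that Lemma \ref{lemma:i-sir-probability-inequality-2} gives $0<\psi(\lambda)\le \tfrac{2\hat{w}}{2\hat{w}+\lambda-1}<1$, so $\psi(\lambda)\in[0,1)$ and $\tilde{Q}_\lambda(x,A)=(1-\psi(\lambda))\pi(A)+\psi(\lambda)\delta_x(A)$ is a valid kernel of the single-proposal type. Applying Lemma \ref{lemma:single-proposal-chain-asymptotic-variance} with $c=\psi(\lambda)$ then yields $\var(\tilde{Q}_\lambda,f)=\tfrac{1+\psi(\lambda)}{1-\psi(\lambda)}\var_\pi(f)$ directly. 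For \eqref{item:approx-bounds-2} I would substitute this identity and invoke the two-sided bound $1<\tfrac{1+\psi(\lambda)}{1-\psi(\lambda)}\le \tfrac{4\hat{w}+\lambda-1}{\lambda-1}$ from Lemma \ref{lemma:i-sir-probability-inequality-2} (whose upper estimate uses Assumption \ref{a:bounded-weights}, which is in force), multiplying through by the positive constant $\var_\pi(f)$.

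For \eqref{item:approx-decreasing-and-convex-2} I would again use \eqref{item:approx-asvar-2} together with Lemma \ref{lemma:convexity-of-the-estimate-2}, which asserts that $\lambda\mapsto \tfrac{1+\psi(\lambda)}{1-\psi(\lambda)}$ is strictly convex and decreasing on $(1,\infty)$; multiplication by the positive constant $\var_\pi(f)$ preserves both properties, giving the claim.

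I do not expect a genuine obstacle here: all the analytic work has been front-loaded into the earlier lemmas, and the proof is essentially bookkeeping. The one step deserving care is confirming that the atomic contributions are correctly subsumed, rather than only the non-atomic rejection probability $\epsilon$. This is already handled, however, by the decomposition $\int_{\Sp}P_\lambda(x,\{x\})\,\pi(\ud x)=\int_{\Sp_0}\epsilon(\lambda,x)\,\pi(\ud x)+\int_{\Sp_+}P_\lambda(x,\{x\})\,\pi(\ud x)$ recorded before Lemma \ref{lemma:convexity-of-the-estimate-2}, in combination with Theorem \ref{theorem:i-sir-convexity-3}, which supplies the positivity, convexity, and strict monotonicity of the holding probability at atoms that Lemma \ref{lemma:convexity-of-the-estimate-2} needs. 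Hence the only real choice in writing the proof is selecting the correct $\psi$-analogues of the lemmas cited for $\tilde{P}_\lambda$.
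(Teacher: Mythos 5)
Your proposal is correct and follows exactly the route the paper takes: part \eqref{item:approx-asvar-2} from Lemma \ref{lemma:single-proposal-chain-asymptotic-variance} with $c=\psi(\lambda)$, part \eqref{item:approx-bounds-2} from Lemma \ref{lemma:i-sir-probability-inequality-2}, and part \eqref{item:approx-decreasing-and-convex-2} from Lemma \ref{lemma:convexity-of-the-estimate-2}. Your explicit remark that $\var_\pi(f)>0$ forces $\#(\Sp)>1$, so that the $\psi$-lemmas apply, is a detail the paper leaves implicit but is a welcome addition rather than a deviation.
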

\begin{proof}
Case \eqref{item:approx-asvar-2} follows from Lemma \ref{lemma:single-proposal-chain-asymptotic-variance} in Appendix \ref{app:asymptotic-variance-results}, case \eqref{item:approx-bounds-2} from Lemma \ref{lemma:i-sir-probability-inequality-2} and case \eqref{item:approx-decreasing-and-convex-2} from Lemma \ref{lemma:convexity-of-the-estimate-2}.
\end{proof}

We note that Definition \ref{def:approximate-isir} and Proposition \ref{proposition:approximate-i-sir-asymptotic-variance} in Section \ref{sec:efficiency} correspond with Definition \ref{def:approximate-isir-2}  and Proposition \ref{proposition:approximate-i-sir-asymptotic-variance-2} when Assumption \ref{a:non-atomic} holds.

We use the following shorthand notation
\begin{equation}
V_f(\lambda)=\mathrm{var}(P_\lambda,f), \qquad
G_f(\lambda)=\frac{1+\epsilon(\lambda)}{1-\epsilon(\lambda)}\mathrm{var}_\pi(f), \qquad
H_f(\lambda)=\frac{1+\psi(\lambda)}{1-\psi(\lambda)}\mathrm{var}_\pi(f).
\label{eq:v-g-h}
\end{equation}
Both functions $G_f$ and $H_f$ are well defined (Lemma \ref{lemma:convexity-of-the-estimate} and \ref{lemma:convexity-of-the-estimate-2}) and have similar properties compared to the asymptotic variance $V_f$ when \ref{a:bounded-weights} holds, namely they are strictly convex decreasing functions and they all behave between the same bounds (Theorem \ref {thm:isir-asvar-properties}, Proposition \ref{proposition:approximate-i-sir-asymptotic-variance} and \ref{proposition:approximate-i-sir-asymptotic-variance-2}).
These imply that to some extent we can use $G_{f}$ and $H_{f}$ as approximates of $V_f$. Nevertheless, the smaller $\hat{w}$ gets, the closer $G_f$ and $H_f$ are to the asymptotic variance, and when $\hat{w}=1$, they are the same:

\begin{proposition}\label{proposition:asymptotic-variance-connection}
If $\pi=q$ almost surely, then the following holds for all $x\in \Sp$ and $\lambda\ge 1$ 
\begin{align*}
P_\lambda(x,A)&=b_\lambda\charfun{x\in A}+(1-b_\lambda)\pi(A), \qquad
\epsilon(\lambda,x)=b_\lambda, \qquad
\epsilon(\lambda)=b_\lambda,
\end{align*}
where $b_\lambda=\frac{1}{\floor{\lambda}}-\frac{\lambda-\floor{\lambda}}{(\floor{\lambda}+1)\floor{\lambda}}$ and for $f\in L_{0,+}^2(\pi)$
$$\mathrm{var}(P_\lambda,f)=G_{f}(\lambda)=H_{f}(\lambda)=\frac{1 + b_\lambda}{1-b_\lambda}\mathrm{var}_\pi(f).$$
\end{proposition}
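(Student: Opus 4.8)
The plan is to exploit the fact that $\pi=q$ collapses the importance weights to a constant, so that the categorical resampling becomes uniform and $P_\lambda$ reduces to a single-proposal chain of the form appearing in Lemma \ref{lemma:single-proposal-chain-asymptotic-variance}.

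First I would observe that under $\pi=q$ the weight $w(x)=c\,\pi(x)/q(x)$ equals the constant $c$ for every $x\in\Sp$. Since the proposal variables $z_2,\dots,z_N\sim q$ lie in $\Sp$ almost surely and $z_1=x\in\Sp$, every weight appearing in $P_N(z_1,\ud y)$ equals $c$, so $w(z_i)/\sum_j w(z_j)=1/N$. The transition \eqref{def:isir-transition-probability} then becomes
$$
P_N(x,A)=\frac{1}{N}\charfun{x\in A}+\frac{N-1}{N}\pi(A),
$$
the term $i=1$ contributing $\tfrac1N\delta_x(A)$ and each of the $N-1$ terms $i\ge 2$ contributing $\tfrac1N\int\charfun{z_i\in A}\,q(\ud z_i)=\tfrac1N\pi(A)$ because $q=\pi$. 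The same computation gives $\epsilon(N,x)=\epsilon(N)=1/N=b_N$ for integer $N$.

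Next I would pass to non-integer $\lambda$ via the interpolation identity \eqref{equation:continuous-versions}. Writing $N=\floor{\lambda}$ and $\beta(\lambda)=N+1-\lambda$, the coefficient of $\charfun{x\in A}$ in $P_\lambda(x,A)$ is $\beta(\lambda)/N+(1-\beta(\lambda))/(N+1)$; a short simplification shows this equals $(2N+1-\lambda)/(N(N+1))=b_\lambda$, and since the two coefficients sum to one, the coefficient of $\pi(A)$ is $1-b_\lambda$. The identical interpolation applied to $\epsilon(\cdot,x)$ and $\epsilon(\cdot)$ yields $\epsilon(\lambda,x)=\epsilon(\lambda)=b_\lambda$, establishing the three displayed kernel and rejection identities. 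For the variance, I would note that $P_\lambda(x,A)=b_\lambda\delta_x(A)+(1-b_\lambda)\pi(A)$ is exactly a single-proposal chain with holding probability $b_\lambda$, so Lemma \ref{lemma:single-proposal-chain-asymptotic-variance} gives $\var(P_\lambda,f)=\frac{1+b_\lambda}{1-b_\lambda}\var_\pi(f)$ for $\lambda>1$. Since $\epsilon(\lambda)=b_\lambda$, the definition of $G_f$ in \eqref{eq:v-g-h} gives $G_f(\lambda)=\frac{1+b_\lambda}{1-b_\lambda}\var_\pi(f)$ at once. Finally, using $P_\lambda(x,\{x\})=b_\lambda+(1-b_\lambda)\pi(\{x\})$ I would compute $P_\lambda(x,\{x\})-\pi(\{x\})=b_\lambda\big(1-\pi(\{x\})\big)$, so that integrating against $\pi$ the factor $1-\int_\Sp\pi(\{x\})\pi(\ud x)$ cancels with the denominator in $\psi(\lambda)$, giving $\psi(\lambda)=b_\lambda$ and hence $H_f(\lambda)=\frac{1+b_\lambda}{1-b_\lambda}\var_\pi(f)$.

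The computation is essentially mechanical; the only point requiring care is this last step, where $G_f$ is built from the rejection probability $\epsilon(\lambda)$ while $H_f$ is built from the atom-corrected holding functional $\psi(\lambda)$. In the presence of atoms these two quantities differ in general, and the coincidence $\psi(\lambda)=\epsilon(\lambda)=b_\lambda$ obtained here relies precisely on the form of $P_\lambda(x,\{x\})$ being affine in $\pi(\{x\})$, which makes the atom contribution factor out cleanly. I would also flag that the variance equalities are asserted for $\lambda>1$, where $b_\lambda<1$ and $\var(P_\lambda,f)$ is finite; the endpoint $\lambda=1$ gives $b_1=1$ and the degenerate identity kernel $P_1=\delta_x$.
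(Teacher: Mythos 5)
Your proof is correct and follows essentially the same route as the paper's: reduce to constant weights so that $P_N(x,A)=\tfrac{1}{N}\charfun{x\in A}+\tfrac{N-1}{N}\pi(A)$ and $\epsilon(N,x)=\epsilon(N)=1/N$, interpolate to non-integer $\lambda$ to get the coefficients $b_\lambda$ and $1-b_\lambda$, and invoke Lemma \ref{lemma:single-proposal-chain-asymptotic-variance} for the variance. The only difference is that you spell out the computation $\psi(\lambda)=b_\lambda$ (which the paper leaves implicit in ``the definitions''), and your closing remarks on the atom cancellation and the degenerate endpoint $\lambda=1$ are accurate but not needed.
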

\begin{proof}
Since $\pi=q$ almost surely, we have $w=1$ almost surely in $\Sp$. Let $N\in \N_+$, then
\begin{align*}
P_N(x,A)
=\frac{1}{N}\charfun{x\in A}+\frac{N-1}{N}\pi(A), \qquad
\epsilon(N,x)&=\frac{1}{N}, \qquad \epsilon(N)=\frac{1}{N}.
\end{align*}
It holds for $\lambda\ge 1$ and $\beta(\lambda)=\floor{\lambda}+1-\lambda$ that
\begin{align*}
\beta(\lambda)\frac{1}{\floor{\lambda}}+\big(1-\beta(\lambda)\big)\frac{1}{\floor{\lambda}+1}=b_\lambda, \qquad
\beta(\lambda)\frac{\floor{\lambda}-1}{\floor{\lambda}}+\big(1-\beta(\lambda)\big)\frac{\floor{\lambda}}{\floor{\lambda}+1}=1-b_\lambda.
\end{align*}
Combining these with the definitions and Lemma \ref{lemma:single-proposal-chain-asymptotic-variance} in Appendix \ref{app:asymptotic-variance-results}, we obtain the claim.
\end{proof}

Namely if $\pi=q$ almost surely, then Algorithm \ref{alg:continuous-isir} gives the Markov chain in Lemma \ref{lemma:single-proposal-chain-asymptotic-variance} in Appendix \ref{app:asymptotic-variance-results}. 

\section{Experiments}
\label{sec:experiments}

We investigate three questions in our experiments: the quality of the approximation of the asymptotic variance, the minimisation based on it, and the behaviour of the adaptive i-SIR algorithm in practice. Section \ref{sec:discrete-experiments} focuses on the two first questions, investigating the behaviour of i-SIR in finite state spaces $\Sp\subset\R$, where we can either calculate numerically or estimate efficiently the asymptotic variance of the i-SIR from its transition probability. 
Sections \ref{sec:mixture-example} and \ref{sec:logistic} investigate the use of the adaptive i-SIR in $\Sp=\R^d$: Section \ref{sec:mixture-example} presents a simple mixture example and Section \ref{sec:logistic} an example on Bayesian logistic regression.

\subsection{Quality of asymptotic variance approximations in a finite state space}
\label{sec:discrete-experiments}

We first compare the behaviour of the asymptotic variance $V_f$ of the i-SIR algorithm with respect to the approximate asymptotic variances $G_f$ and $H_f$ given in \eqref{eq:v-g-h}, the corresponding loss functions and their minimisers. We consider different cost functions of type $c(\lambda)=a+\lambda$ for $a> 0$, different target and proposal distributions and test functions. 

In a finite state space \ref{a:bounded-weights} always holds, and the asymptotic variance of a Markov chain can be calculated in principle using the spectral decomposition of its transition probability (Lemma \ref{lemma:asymptotic-variance-of-finite-i-SIR} in Appendix \ref{app:asymptotic-variance-results}). In the case of i-SIR, the transition and rejection probabilities are not directly available, and direct calculation is too costly for large state space (and/or large number of proposals). Therefore, we consider two cases:
\begin{enumerate}[(i)]
\item Direct calculations ($\Sp$ small;  Experiments \ref{ex:1}--\ref{ex:4}).
\item Monte Carlo approximations ($\Sp$ large; Experiment \ref{ex:5}; see Proposition \ref{proposition:discrete-isir-matrix-and-rejection} in Appendix \ref{app:discrete-case-isir}).
\end{enumerate}

We let $\lambda_f = \arg\min_{\lambda>1} c(\lambda) V_f(\lambda)$, and similarly $\lambda_{G_f}$ and $\lambda_{H_f}$ stand for the minimisers  of the loss and the approximate loss functions using $G_f$ and $H_f$ in place of $V_f$, respectively. We define the following suboptimality factors:
\begin{equation*}
SO_{G}^{f}=\frac{V_f(\lambda_{G_f})}{V_f(\lambda_{f})} \qquad\text{and}\qquad 
SO_{H}^{f}=\frac{V_f(\lambda_{H_f})}{V_f(\lambda_{f})},
\end{equation*}
which measure the effective loss of using the minimisers of the approximate losses instead of the true loss minimiser $\lambda_f$.
For any (non-constant) $f\in L^2(\pi)$, we define its standardised version $\hat{f}(x)={\mathrm{var}^{-1/2}_\pi(f)}\big(f(x)-\pi(f)\big)$. Note that the minimisers $\lambda_f$, $\lambda_G$ and $\lambda_H$ are unaffected by the standardisation, and that $H_{\hat{f}}$ and $G_{\hat{f}}$ do not depend on the test function at all. We denote 
$\hat{G}=G_{\hat{f}}$, $\hat{H}=H_{\hat{f}}$, $\lambda_{\hat{G}}=\lambda_{G_{\hat{f}}}$ and $\lambda_{\hat{H}}=\lambda_{H_{\hat{f}}}$.

If the number of states $ n = \#(\Sp)=2$, then $H_f = V_f$ by Proposition \ref{proposition:two-state-isir} in Appendix \ref{app:discrete-case-isir}, but for $n \ge 3$, $V_f$  generally differs from $G_f$ and $H_f$. If the weight upper bound $\hat{w}$ is close to one, then Proposition \ref{proposition:asymptotic-variance-connection} and the upper and lower bounds in Section \ref{sec:upper-lower-bounds} imply that all these functions are close to each other. But for large $\hat{w}$, these functions can differ more.

It is relatively easy to see from Algorithm \ref{alg:continuous-isir} that the i-SIR chain can get stuck to states $s\in \Sp$ for which $w(s)$ is large. We denote by $Y_{u}^M=\braces{s\in \Sp:w(s)\ge M}$ and $Y_{l}^m=\braces{s\in \Sp:w(s)\le m}$ the superlevel and sublevel sets of $w$, and consider the standardised versions of the following test functions: the identity function $f(x)=x$, the inverse weight function $g(x) = w^{-1}(x)$, the function 
$$h(x)=
\begin{cases}
x, &x\notin Y_{u}^M \\
\sum_{z\notin Y_{u}^M}z\frac{\pi(z)}{1-\pi(Y_{u}^M)}, & x\in Y_{u}^M,
\end{cases}
$$
the superlevel indicator $k(x)= \charfun{x \in Y_{u}^M}$ and the sublevel indicator $l(x) = \charfun{x\in Y_{l}^m}$.
The function $\hat{f}$ admits its smallest and largest values at the endpoints of $\Sp$, and for $\hat{g}$, these occur at the largest and smallest weight, respectively.
Function $\hat{h}$ is similar to $\hat{f}$, but $\hat{h}(s)=0$ whenever $w(s)\ge M$. 
In all the experiments, we calculate the asymptotic variance, its upper and lower bounds (Section \ref{sec:upper-lower-bounds}), and the approximations $\hat{G}$ and $\hat{H}$ for $\lambda\in [2,150]$ with a spacing of $0.01$.
In Experiments \ref{ex:1}--\ref{ex:5} we have omitted from the tables results related to $\hat{H}$, since in these examples $\lambda_{\hat{H}}$ was equal to $\lambda_{\hat{G}}$ with the tested $a$ values.

\experiment
\label{ex:1}

We consider the state space $\Sp=\braces{1,2,3,4,5}$ and target $\pi$ and proposal $q$ shown in Figure \ref{figure:example-1-prob-mass-function}. We choose $M=\hat{w}$ and $m=\min_{s\in \Sp}w(s)$, hence $Y_{u}^M=\braces{4}$ and $Y_{l}^m=\braces{3}$. 
The results of the experiment are illustrated in Figure \ref{figure-experiment-1}; note the logarithmic scale in \ref{figure:example-1-asymptotic-variance} and \ref{figure:example-1-asymptotic-variance-with-c}.
The weight upper bound is $\hat{w}\approx 1.76$, so the lower and upper bounds are not far off from the functions in Figure \ref {figure:example-1-asymptotic-variance} and \ref {figure:example-1-asymptotic-variance-with-c}, although there are some differences for test functions $\hat{k}$ and $\hat{l}$, which are concentrated on the extreme weights. 
The suboptimality factors in Table \ref{table:example-1-minimum} show that their values stay close to $1$ for all test functions, even when $a$ increases. 

\begin{figure}
\begin{minipage}[b]{0.26\textwidth}
\begin{subfigure}[b]{\textwidth}
\includegraphics{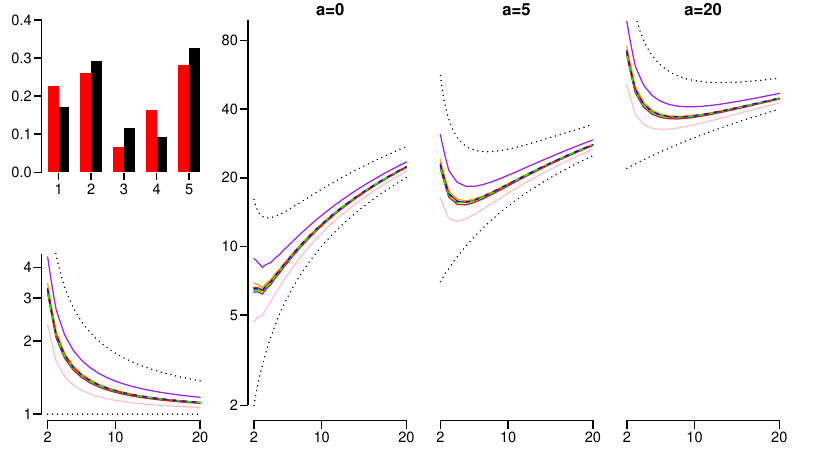}
\caption{}\label{figure:example-1-prob-mass-function}
\end{subfigure}\\
\begin{subfigure}[b]{\textwidth}
\includegraphics{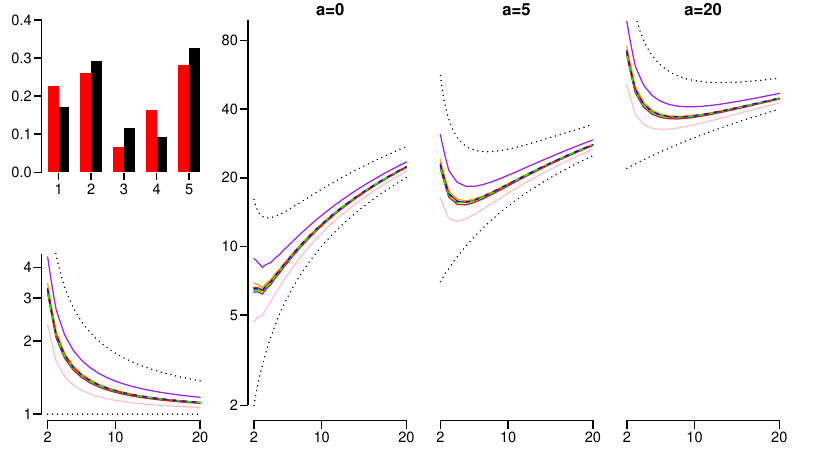}
\caption{}\label{figure:example-1-asymptotic-variance}
\end{subfigure}
\end{minipage}
\begin{subfigure}[b]{0.64\textwidth}
\includegraphics{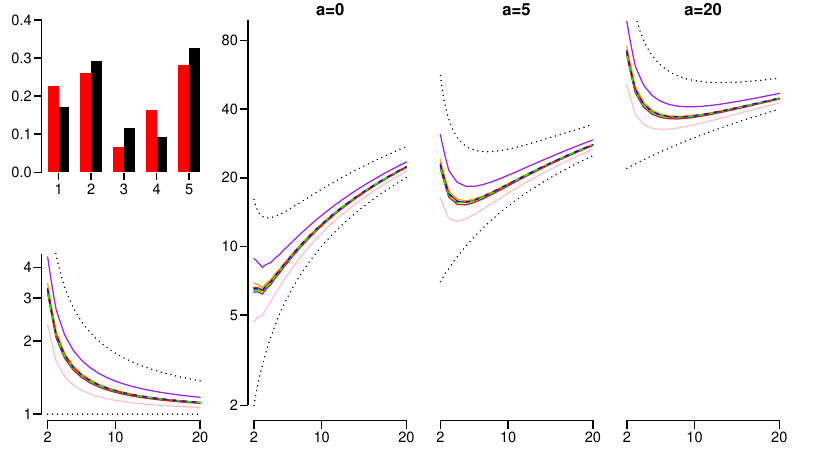}
\caption{}\label{figure:example-1-asymptotic-variance-with-c}
\end{subfigure}
\caption{Experiment \ref{ex:1}: \subref{figure:example-1-prob-mass-function} probability mass functions $\pi$ (red) and $q$ (black). \subref{figure:example-1-asymptotic-variance} functions $V_{\hat{f}}$ (red), $V_{\hat{g}}$ (orange), $V_{\hat{h}}$ (brown), $V_{\hat{k}}$ (purple), $V_{\hat{l}}$ (pink), $\hat{G}$ (green dashed), $\hat{H}$ (blue dashed) with upper/lower bounds (black dotted). \subref{figure:example-1-asymptotic-variance-with-c} loss i.e.~the functions in \subref{figure:example-1-asymptotic-variance} multiplied by the cost $c(\lambda) = a + \lambda$.
}\label{figure-experiment-1}
\end{figure}

\begin{table}
\Small{
\tiny
\begin{tabular}[t]{|c|c|c|c|c|c|c|c|c|c|c|c|}
\hline
\rowcolor{gray!30}
a&$\lambda_{\hat{G}}$&$\lambda_{\hat{f}}$&$SO_G^{\hat{f}}$&$\lambda_{\hat{g}}$
&$SO_G^{\hat{g}}$&$\lambda_{\hat{h}}$&$SO_G^{\hat{h}}$&$\lambda_{\hat{k}}$&$SO_G^{\hat{k}}$&$\lambda_{\hat{l}}$&$SO_G^{\hat{l}}$\\
\hline
0&3&3&1&3&1&3&1&3&1&2&1.06\\
\hline
0.1&3&3&1&3&1&3&1&3&1&2&1.04\\
\hline
1&3&3&1&3&1&3&1&4&1.02&3&1\\
\hline
2&4&4&1&4&1&4&1&4&1&3&1.04\\
\hline
5&5&5&1&5&1&5&1&6&1.00&4&1.03\\
\hline
10&6&6&1&6&1&6&1&7&1.01&5&1.01\\
\hline
20&8&8&1&8&1&8&1&9&1.01&6&1.01\\
\hline
\end{tabular}
}
\vspace{0.1 cm}
\caption{Experiment \ref{ex:1}: Minimisers and their corresponding suboptimality factors.}\label{table:example-1-minimum}
\end{table}

\experiment
\label{ex:2}

Next, we consider state space $\Sp=\braces{1,2,\ldots,6}$ and the probability mass functions $\pi,q:\Sp\to [0,1]$ shown in Figure \ref{figure:example-1.2-prob-mass-function}, where states $1$ to $4$ have the same mass as in Experiment \ref{ex:1}. 
As in Experiment \ref{ex:1} we choose $M=\hat{w}$ and $m=\min_{s\in \Sp}w(s)$, hence $Y_{u}^M=\braces{6}$ and $Y_{l}^m=\braces{5}$. 
The weights can take much larger values $w(6)=\hat{w}\approx 430.99$ than in Experiment \ref{ex:1}, so the lower and upper bounds are much more apart, and the shape of the asymptotic variance varies among the test functions. The approximations $\hat{H}$ and $\hat{G}$ are fairly good for $\hat{h}$ and $\hat{l}$, but have a noticeable discrepancy for $\hat{f}$, $\hat{g}$ and $\hat{k}$. This reflects also in the discrepancy of the minimisers, which is sometimes ten-fold, and the suboptimality factors shown in Table \ref{table:example-1.2-minimum}.

\begin{figure}
\begin{minipage}[b]{0.26\textwidth}
\begin{subfigure}[b]{\textwidth}
\includegraphics{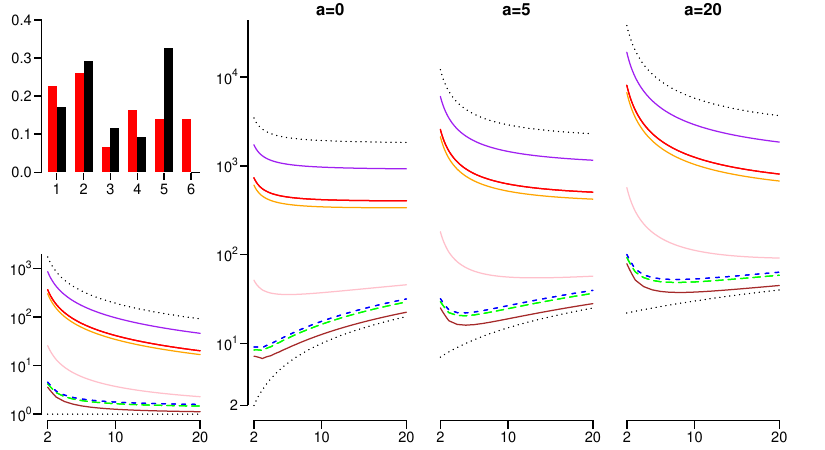}
\caption{}\label{figure:example-1.2-prob-mass-function}
\end{subfigure}\\
\begin{subfigure}[b]{\textwidth}
\includegraphics{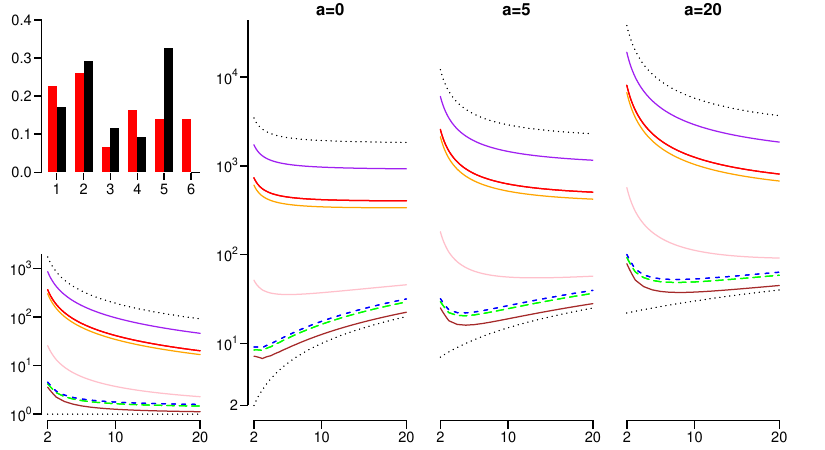}
\caption{}\label{figure:example-1.2-asymptotic-variance}
\end{subfigure}
\end{minipage}
\begin{subfigure}[b]{0.64\textwidth}
\includegraphics{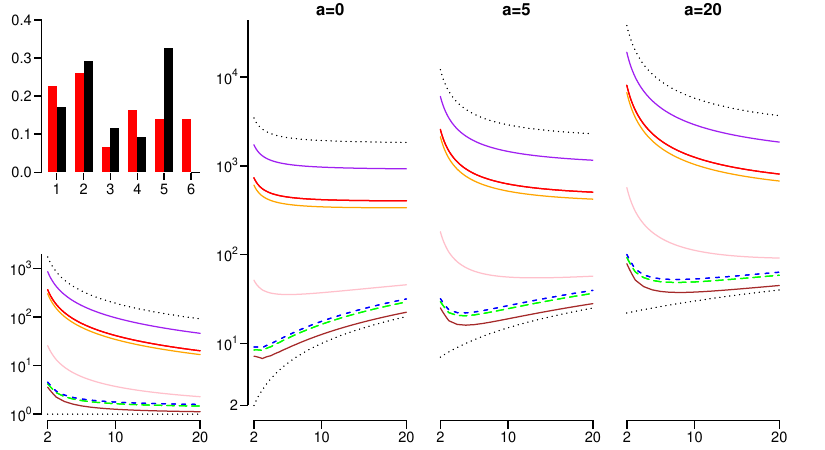}
\caption{}\label{figure:example-1.2-asymptotic-variance-with-c}
\end{subfigure}
\caption{Experiment \ref{ex:2}: \subref{figure:example-1.2-prob-mass-function} probability mass functions $\pi$ (red) and $q$ (black). \subref{figure:example-1.2-asymptotic-variance} functions $V_{\hat{f}}$ (red), $V_{\hat{g}}$ (orange), $V_{\hat{h}}$ (brown), $V_{\hat{k}}$ (purple), $V_{\hat{l}}$ (pink), $\hat{G}$ (green dashed), $\hat{H}$ (blue dashed) with upper/lower bounds (black dotted). \subref{figure:example-1.2-asymptotic-variance-with-c} loss i.e.~the functions in \subref{figure:example-1.2-asymptotic-variance} multiplied by the cost $c(\lambda) = a + \lambda$.}
\end{figure}

\begin{table}
\Small{
\tiny
\begin{tabular}[t]{|c|c|c|c|c|c|c|c|c|c|c|c|}
\hline
\rowcolor{gray!30}
a&$\lambda_{\hat{G}}$&$\lambda_{\hat{f}}$&$SO_G^{\hat{f}}$&$\lambda_{\hat{g}}$
&$SO_G^{\hat{g}}$&$\lambda_{\hat{h}}$&$SO_G^{\hat{h}}$&$\lambda_{\hat{k}}$&$SO_G^{\hat{k}}$&$\lambda_{\hat{l}}$&$SO_G^{\hat{l}}$\\
\hline
0&3&21&1.37&19&1.35&3&1&33&1.41&6&1.13\\
\hline
0.1&3&22&1.40&20&1.39&3&1&34&1.45&6.15&1.15\\
\hline
1&3&29&1.75&26&1.73&3&1&45&1.84&8&1.32\\
\hline
2&4&35&1.70&32&1.68&4&1&55&1.80&9.7&1.24\\
\hline
5&5&49&2.00&44.85&1.96&5&1&78&2.16&13.1&1.31\\
\hline
10&6&66&2.37&60&2.31&6&1&104&2.63&17.48&1.39\\
\hline
20&8&91&2.68&82&2.59&8&1&143&3.07&23.82&1.39\\
\hline
\end{tabular}
}
\vspace{0.1 cm}
\caption{Experiment \ref{ex:2}: Minimisers and their corresponding suboptimality factors.}\label{table:example-1.2-minimum}
\end{table}

\experiment
\label{ex:3}

We consider again the same state space $\Sp=\braces{1,2,\ldots, 6}$. The probability mass functions $\pi,q:\Sp\to [0,1]$ shown in Figure \ref{figure:example-1.3-prob-mass-function} are the same as in Experiments \ref{ex:1} and \ref{ex:2} for states $1$ to $4$, and $\hat{w}$ is same as in Experiment \ref{ex:2}, but $\pi(6)$ and $q(6)$ are $1000$ times smaller than in Experiment \ref{ex:2}. We choose $M$ and $m$ as before, hence $Y_u^M=\braces{6}$ and $Y_l^m=\braces{3}$.
The results in Figure \ref{figure-experiment-1.3} and Table \ref{table:example-1.3-minimum} are similar to Experiment \ref{ex:1} for most test functions, but the results for $\hat{k}$ resemble those in Experiment \ref{ex:2}: as $a$ increases, the suboptimality factor increases.

\begin{figure}
\begin{minipage}[b]{0.26\textwidth}
\begin{subfigure}[b]{\textwidth}
\includegraphics{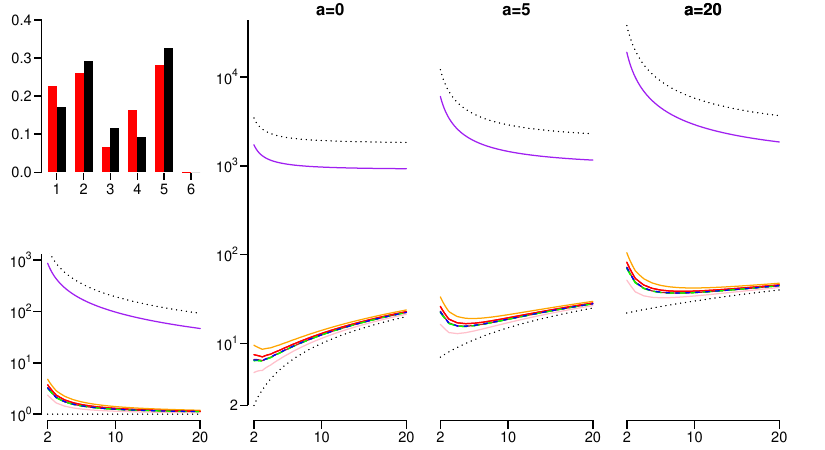}
\caption{}\label{figure:example-1.3-prob-mass-function}
\end{subfigure}\\
\begin{subfigure}[b]{\textwidth}
\includegraphics{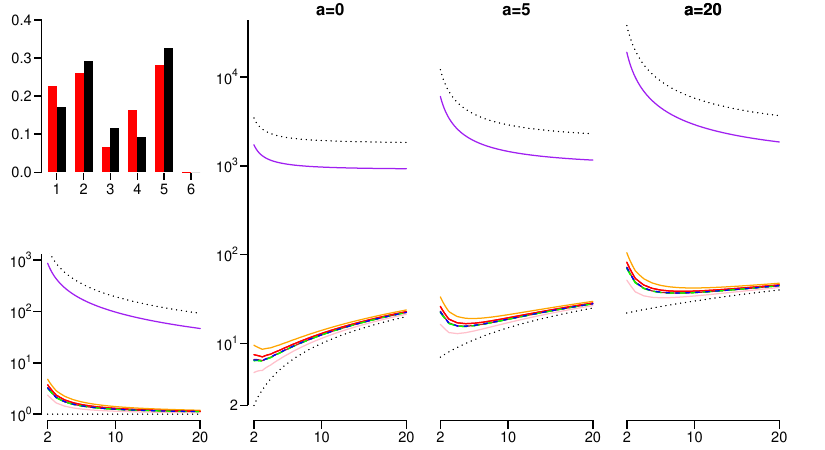}
\caption{}\label{figure:example-1.3-asymptotic-variance}
\end{subfigure}
\end{minipage}
\begin{subfigure}[b]{0.64\textwidth}
\includegraphics{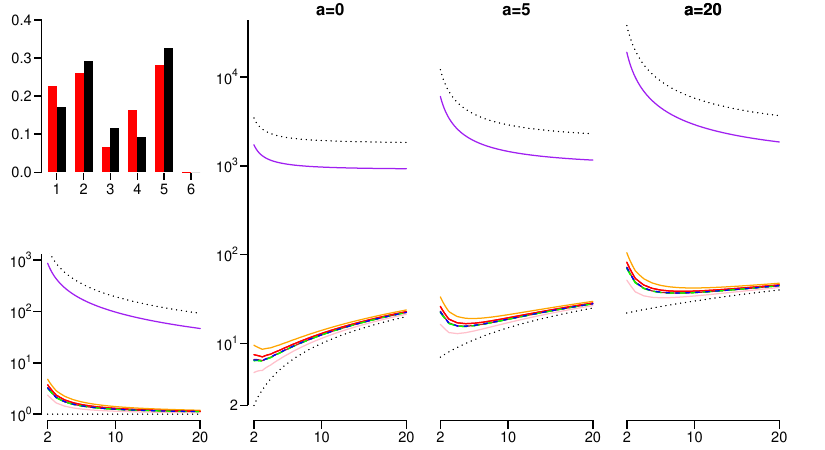}
\caption{}\label{figure:example-1.3-asymptotic-variance-with-c}
\end{subfigure}
\caption{Experiment \ref{ex:3}: \subref{figure:example-1.3-prob-mass-function} probability mass functions $\pi$ (red) and $q$ (black). \subref{figure:example-1.3-asymptotic-variance} functions $V_{\hat{f}}$ (red), $V_{\hat{g}}$ (orange), $V_{\hat{h}}$ (brown), $V_{\hat{k}}$ (purple), $V_{\hat{l}}$ (pink), $\hat{G}$ (green dashed), $\hat{H}$ (blue dashed) with upper/lower bounds (black dotted). \subref{figure:example-1.3-asymptotic-variance-with-c} loss i.e.~the functions in \subref{figure:example-1.3-asymptotic-variance} multiplied by the cost $c(\lambda) = a + \lambda$.}\label{figure-experiment-1.3}
\end{figure}

\begin{table}
\Small{
\tiny
\begin{tabular}[t]{|c|c|c|c|c|c|c|c|c|c|c|c|}
\hline
\rowcolor{gray!30}
a&$\lambda_{\hat{G}}$&$\lambda_{\hat{f}}$&$SO_G^{\hat{f}}$&$\lambda_{\hat{g}}$
&$SO_G^{\hat{g}}$&$\lambda_{\hat{h}}$&$SO_G^{\hat{h}}$&$\lambda_{\hat{k}}$&$SO_G^{\hat{k}}$&$\lambda_{\hat{l}}$&$SO_G^{\hat{l}}$\\
\hline
0&3&3&1&3&1&3&1&30&1.41&2&1.06\\
\hline
0.1&3&3&1&3&1&3&1&32&1.45&2&1.04\\
\hline
1&3&3&1&3&1&3&1&43&1.82&3&1\\
\hline
2&4&4&1&4&1&4&1&52&1.79&3&1.04\\
\hline
5&5&5&1&5&1&5&1&73&2.14&4&1.03\\
\hline
10&6&6&1&6&1&6&1&98&2.60&5&1.01\\
\hline
20&8&9&1.00&9&1.00&8&1&136&3.02&6&1.01\\
\hline
\end{tabular}
}
\vspace{0.1 cm}
\caption{Experiment \ref{ex:3}: Minimisers and their corresponding suboptimality factors.}\label{table:example-1.3-minimum}
\end{table}

\experiment
\label{ex:4}

Here, we let $\Sp=\braces{1,2,3,4,5}$ and consider the probability mass functions $\pi,q:\Sp\to [0,1]$ shown in Figure \ref{figure:example-1.4-prob-mass-function}. The proposal $q$ is the same as in Experiment \ref{ex:1} but $\pi$ is modified: we divide the mass from state $5$ evenly to the other states in a way that $w(5)=1/d$, where $d$ is $\hat{w}$ of Experiment \ref{ex:2} and \ref{ex:3}. The weights are small $w(4)=\hat{w}\approx 2.50$. We choose $M$ and $m$ as before, hence $Y_u^M=\braces{4}$ and $Y_l^m=\braces{5}$. With test functions $\hat{g}$ and $\hat{l}$, which take their extremal values at state 5, the results differ from earlier experiments:  with large $a$, the cost function minimisers are much smaller than with approximations. These functions admit their extreme values at $5$, which the i-SIR chain does not visit often. For the rest of the test functions, the losses looks similar to the approximate losses as the results in Table \ref{table:example-1.4-minimum} confirm.

\begin{figure}
\begin{minipage}[b]{0.26\textwidth}
\begin{subfigure}[b]{\textwidth}
\includegraphics{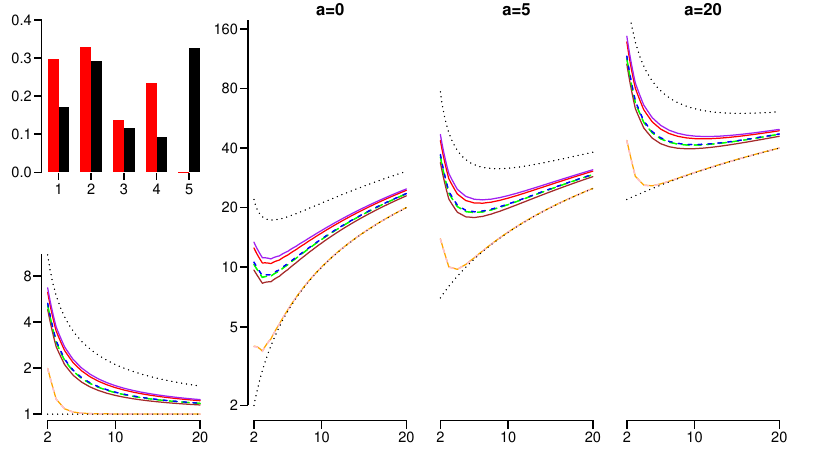}
\caption{}\label{figure:example-1.4-prob-mass-function}
\end{subfigure}\\
\begin{subfigure}[b]{\textwidth}
\includegraphics{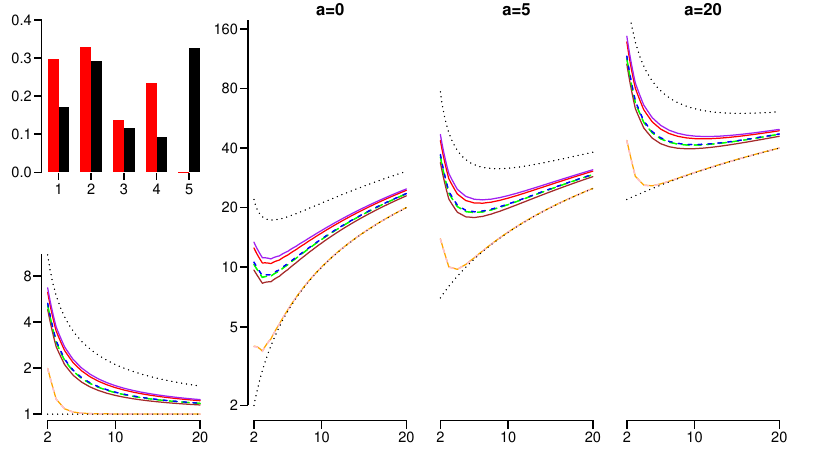}
\caption{}\label{figure:example-1.4-asymptotic-variance}
\end{subfigure}
\end{minipage}
\begin{subfigure}[b]{0.64\textwidth}
\includegraphics{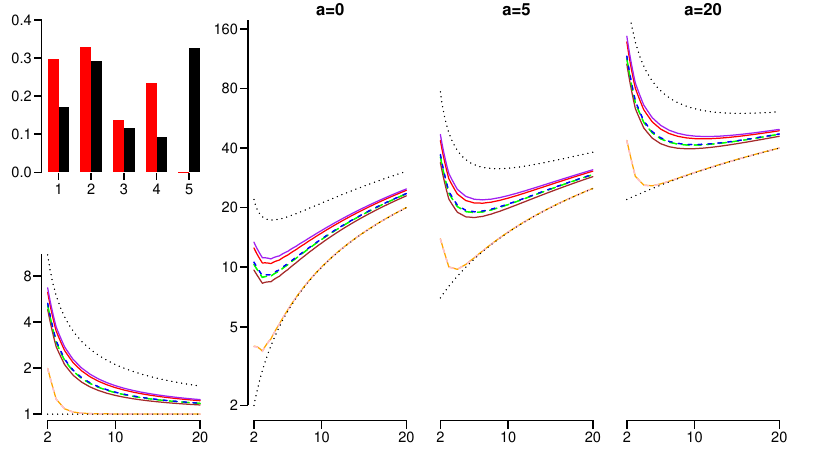}
\caption{}\label{figure:example-1.4-asymptotic-variance-with-c}
\end{subfigure}
\caption{Experiment \ref{ex:4}: \subref{figure:example-1.4-prob-mass-function} probability mass functions $\pi$ (red) and $q$ (black). \subref{figure:example-1.4-asymptotic-variance} functions $V_{\hat{f}}$ (red), $V_{\hat{g}}$ (orange), $V_{\hat{h}}$ (brown), $V_{\hat{k}}$ (purple), $V_{\hat{l}}$ (pink dashed), $\hat{G}$ (green dashed), $\hat{H}$ (blue dashed) with upper/lower bounds (black dotted). \subref{figure:example-1.4-asymptotic-variance-with-c} loss i.e.~the functions in \subref{figure:example-1.4-asymptotic-variance} multiplied by the cost $c(\lambda) = a + \lambda$.}
\end{figure}

\begin{table}
\Small{
\tiny
\begin{tabular}[t]{|c|c|c|c|c|c|c|c|c|c|c|c|}
\hline
\rowcolor{gray!30}
a&$\lambda_{\hat{G}}$&$\lambda_{\hat{f}}$&$SO_G^{\hat{f}}$&$\lambda_{\hat{g}}$
&$SO_G^{\hat{g}}$&$\lambda_{\hat{h}}$&$SO_G^{\hat{h}}$&$\lambda_{\hat{k}}$&$SO_G^{\hat{k}}$&$\lambda_{\hat{l}}$&$SO_G^{\hat{l}}$\\
\hline
0&3&4&1.01&3&1&3&1&4&1.01&3&1\\
\hline
0.1&3&4&1.02&3&1&3&1&4&1.02&3&1\\
\hline
1&4&4&1&3&1.08&4&1&5&1.01&3&1.08\\
\hline
2&5&5&1&3&1.15&5&1&5&1&3&1.15\\
\hline
5&6&7&1.00&4&1.14&6&1&7&1.01&4&1.15\\
\hline
10&8&8&1&4&1.19&7&1.00&9&1.00&4&1.19\\
\hline
20&10&11&1.00&5&1.17&9&1.00&11&1.00&5&1.17\\
\hline
\end{tabular}
}
\vspace{0.1 cm}
\caption{Experiment \ref{ex:4}: Minimisers and their corresponding suboptimality factors.}\label{table:example-1.4-minimum}
\end{table}

\experiment
\label{ex:5}

Contrary to the previous, purely artificial examples, this example is a discretisation of a continuous state model.
We consider $\Sp$ which is a discretisation of $[-3,3]$ to 61 equally spaced states, and $\pi$ and $q$ correspond to discretisation of two normal distributions $N(0,1/4)$ and $N(0,1)$, respectively, illustrated in Figure \ref{figure:example-2-prob-mass-function}.

For integers $N\ge 2$ we have used Monte Carlo estimation to estimate the transition and rejection probabilities (i.e.~\eqref{eq:transition-probability-as-expectation}  and \eqref{eq:rejection-probability-as-expectation} of Proposition \ref{proposition:discrete-isir-matrix-and-rejection} in Appendix \ref{app:discrete-case-isir}) by simulating $Z_1^N,\dots,Z_{100,000}^N\sim \mathrm{Multinom}(N-1,Q)$. We calculate both $P_N$ and $\epsilon(N)$ using the same $(Z_i^N)$, and also create new $Z_i^{N+1}$ conditional on  $Z_i^N$. 
The estimates $P_\lambda$ and $\epsilon(\lambda)$ for non-integer $\lambda\ge 1$ are derived from these. 

The weight upper bound $\hat{w}\approx 2.00$ is fairly small. We choose $Y_{u}^M$ and $Y_{l}^m$, where $M=1.9$ and $m=0.2$, for which $\pi(Y_{u}^M)\approx 0.23$ and $\pi(Y_{l}^m)\approx 0.012$. Figures \ref{figure:example-2-asymptotic-variance} and \ref{figure:example-2-asymptotic-variance-with-c} show that the asymptotic variances have some variation, but Table \ref{table:example-2-minimum} indicates that all the suboptimality factors are relatively small. With $\hat{g}$, the factors are the highest, but as $a$ grows, they get smaller.

\begin{figure}
\begin{minipage}[b]{0.26\textwidth}
\begin{subfigure}[b]{\textwidth}
\includegraphics{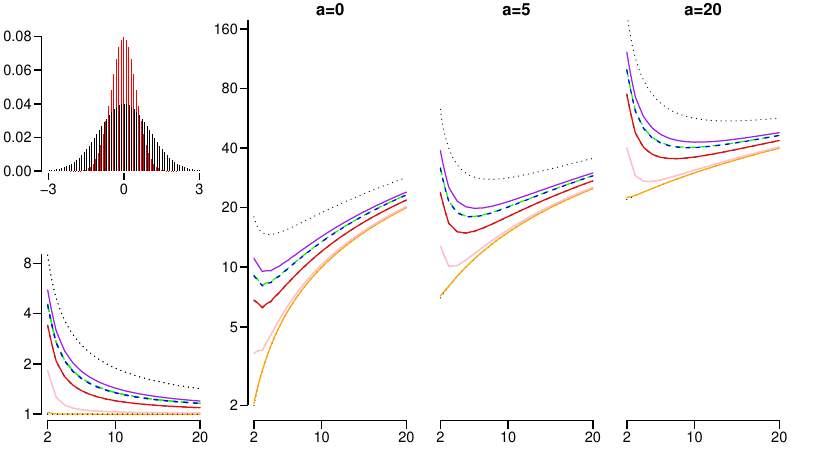}
\caption{}\label{figure:example-2-prob-mass-function}
\end{subfigure}\\
\begin{subfigure}[b]{\textwidth}
\includegraphics{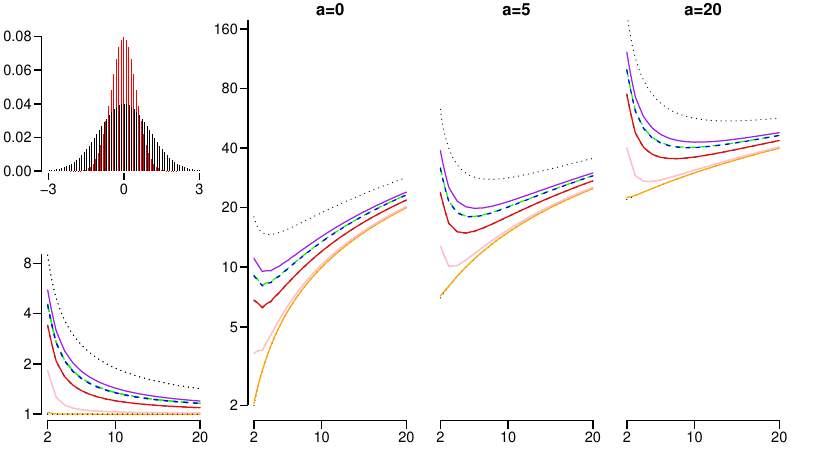}
\caption{}\label{figure:example-2-asymptotic-variance}
\end{subfigure}
\end{minipage}
\begin{subfigure}[b]{0.64\textwidth}
\includegraphics{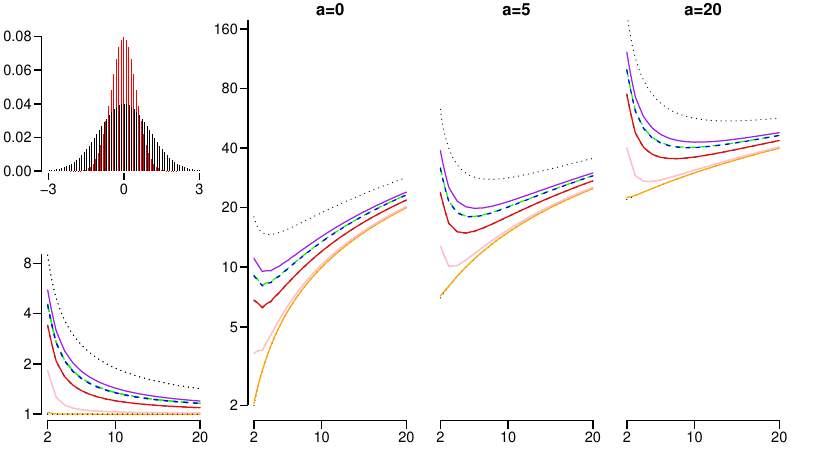}
\caption{}\label{figure:example-2-asymptotic-variance-with-c}
\end{subfigure}
\caption{Experiment 5: \subref{figure:example-2-prob-mass-function} probability mass functions $\pi$ (red) and $q$ (black). \subref{figure:example-2-asymptotic-variance} functions $V_{\hat{f}}$ (red), $V_{\hat{g}}$ (orange), $V_{\hat{h}}$ (brown dashed), $V_{\hat{k}}$ (purple), $V_{\hat{l}}$ (pink), $\hat{G}$ (green dashed), $\hat{H}$ (blue dashed) with upper/lower bounds (black dotted). \subref{figure:example-2-asymptotic-variance-with-c} loss i.e.~the functions in \subref{figure:example-2-asymptotic-variance} multiplied by the cost $c(\lambda) = a + \lambda$.}
\end{figure}

\begin{table}
\Small{
\tiny
\begin{tabular}[t]{|c|c|c|c|c|c|c|c|c|c|c|c|}
\hline
\rowcolor{gray!30}
a&$\lambda_{\hat{G}}$&$\lambda_{\hat{f}}$&$SO_G^{\hat{f}}$&$\lambda_{\hat{g}}$
&$SO_G^{\hat{g}}$&$\lambda_{\hat{h}}$&$SO_G^{\hat{h}}$&$\lambda_{\hat{k}}$&$SO_G^{\hat{k}}$&$\lambda_{\hat{l}}$&$SO_G^{\hat{l}}$\\
\hline
0&3&3&1&2&1.47&3&1&3&1&2&1.04\\
\hline
0&3&3&1&2&1.47&3&1&3&1&2&1.04\\
\hline
0.1&3&3&1&2&1.45&3&1&3&1&2&1.02\\
\hline
1&4&3&1.01&2&1.63&3&1.01&4&1&3&1.12\\
\hline
2&4&4&1&2&1.47&4&1&5&1.02&3&1.08\\
\hline
5&6&5&1.02&2&1.54&5&1.02&6&1&3&1.16\\
\hline
10&7&6&1.01&2&1.39&6&1.01&8&1.01&4&1.13\\
\hline
20&9&8&1.01&2&1.29&8&1.01&10&1.01&5&1.11\\
\hline
\end{tabular}
}
\vspace{0.1 cm}
\caption{Experiment 5: Minimisers and their corresponding suboptimality factors.}\label{table:example-2-minimum}
\end{table}

\subsection{Mixture distribution}
\label{sec:mixture-example}

We run our adaptive i-SIR (Algorithm \ref{alg:adaptive-isir}) on a 7-dimensional Gaussian mixture target example from \citep{cardoso-samsonov-thin-moulines-olsson}. The modes of the target are $m_1=(1,\ldots,1)^T$ and $m_2=(-2,0,\ldots,0)^T$ and the components have identity covariance The proposal has zero mean and identity shape student's $t$ distribution with $\nu=3$ degrees of freedom. We consider the same two test functions as in \citep{cardoso-samsonov-thin-moulines-olsson}: $f_1(x) = x_1$ and $f_2(x) = \charfun{x\in A} - \charfun{x \in B}$, with sets $A = [-2,6]\times[-1,1]^6$ and $B = [0.75,1.25]\times[1,2]\times[-0.1,0.1]^5$.

The methods are implemented in the Julia programming language \citep{julia}, using its native parallelisation to the i-SIR proposal/density evaluation. The number of threads is set to four, which is the number of performance cores in the Macbook Pro 2020 with Apple M1 processor, where the tests are run. We first do pilot runs of $n=10,000$ iterations with i-SIR having fixed $N_i = 2^i+1$ for $i\in\{2,\ldots,13\}$. The runs are timed, and a linear regression model is fit to the times $T_i = a + b N_i + \epsilon_i$, where $\epsilon_i$ are the residuals (Figure \ref{fig:mixture-timing-adaptive}). The cost function is formed by scaling the fitted model: $c(\lambda) = (a/b) + \lambda$. After this, we run the adaptive i-SIR for the model. Figure \ref{fig:mixture-timing-adaptive} (middle) shows the evolution of the adapted $\lambda_k$ and the trace plot of the first coordinate of the samples, after a 10\% burn-in.

\begin{figure}
   \includegraphics{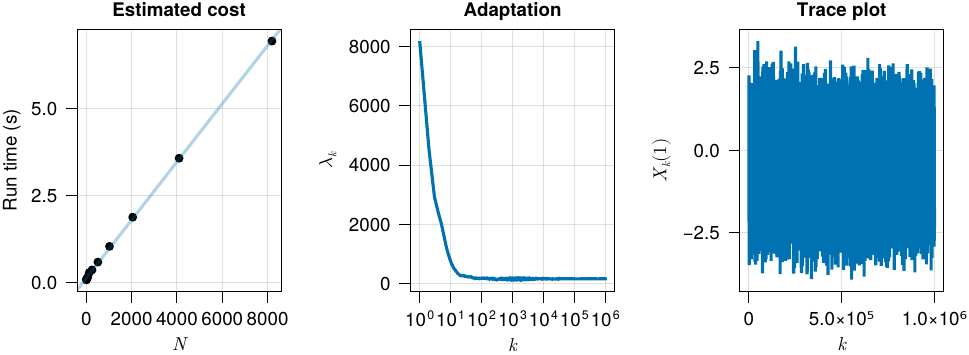}
   \caption{Average running time $T_i$ for different $N_i$ and the fitted regression line (left), the evolution of $\lambda_k$ (middle) and the trace plot of the first coordinate of adaptive i-SIR output (right) in the mixture example of Section \ref{sec:mixture-example}.}
   \label{fig:mixture-timing-adaptive}
\end{figure}

Then, we run i-SIRs with $N$ fixed to each $N_i$ and the final value of the adaptation for one million iterations. We use the initial sequence estimator to estimate the asymptotic variance \citep{geyer-practical}. Figure \ref{fig:mixture-example} (left) shows the integrated autocorrelation time (IACT) estimates for the test functions and based on our approximation. The IACT of $f_2$ behaves close to the approximation, whereas the IACT of $f_1$ becomes close to the approximation only for higher number of proposals $N$. Figure \ref{fig:mixture-example} (right) the estimated inverse relative efficiencies \citep{glynn-whitt}, that is, average running time per iteration times asymptotic variance. Note that both the $x$ and $y$ axes are logarithmic. The vertical line indicates the value of the number of proposals corresponding to the final value of adaptation. The adaptation minimises the approximate IRE, and approximately minimises the IRE of the second test function. For the first test function, a greater number of proposals would be optimal. However, the loss of efficiency is not large.

\begin{figure}
\includegraphics{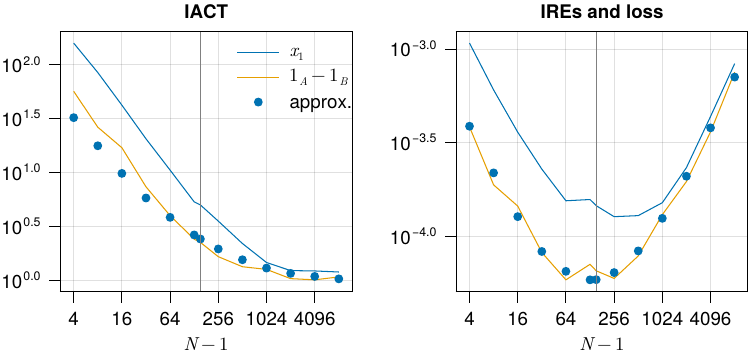}
\caption{The estimated integrated autocorrelation times (left) and the inverse relative efficiencies and the loss (right) in the mixture example of Section \ref{sec:mixture-example}. The vertical grey line indicates the terminal value of adaptation.}
\label{fig:mixture-example}
\end{figure}

\subsection{Logistic regression}
\label{sec:logistic}

Our final example is a Bayesian logistic regression model on the Wisconsin Diagnostic Breast Cancer dataset (doi:10.24432/C5DW2B, \citep{street-wolberg-mangasarian}), consisting of 569 patients' data, with a binary outcome variable (benign/malignant), and with 30 covariates plus intercept. We use a Gaussian prior $\mathrm{pr}(x)=N(x; 0, 20 I_{31})$ for the regression coefficients. We first find the Laplace approximation $\hat{\pi}(x) = N(x; \mu,\Sigma)$ of the posterior, and use a (`defensive') mixture $q(x) = 0.1 \mathrm{pr}(x) + 0.9 \hat{\pi}(x)$ as the proposal; this guarantees bounded weights $w(x) = \pi(x)/q(x)$ as long as the likelihood is bounded. The test functions are posterior density values $f_1(x) = \pi(x)$ and $f_2(x) = \| x - \mu \|$.

We use the same implementation as in Section \ref{sec:mixture-example}, the same number of preliminary samples, but the actual simulations are now ten million iterations long. Figure \ref{fig:logistic} illustrates the results similar to Figure \ref{fig:mixture-timing-adaptive} (middle) and \ref{fig:mixture-example}. Also here, the adaptive algorithm appears to minimise the approximate IRE, and the IACT and IRE of the first test function follows the same shape as the approximation. The IRE of $f_2$ seems to be roughly minimised, but because there remains fluctuations in the estimated IACT of $f_2$, no clear conclusions can be drawn.

\begin{figure}
\includegraphics{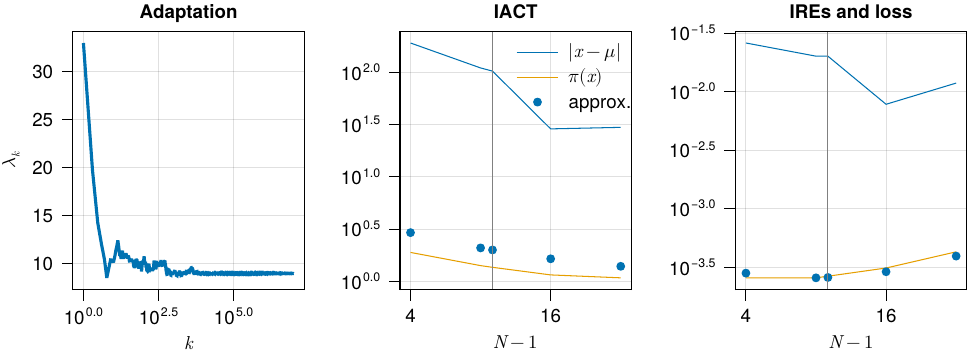}
\caption{The evolution of $\lambda_k$ (left), the estimated integrated autocorrelation times (middle) and the inverse relative efficiencies and the loss (right) in the logistic regression example of Section \ref{sec:logistic}. The vertical grey line indicates the terminal value of adaptation.}
\label{fig:logistic}
\end{figure}

\section{Discussion}
\label{sec:discussion}

We presented an i-SIR algorithm with adaptive number of proposals, based on an acceptance rate based asymptotic variance approximation. The approximate asymptotic has similar properties (convexity and monotonicity) as the true i-SIR asymptotic variance, and our experiments suggest that it often leads to an appropriate choice of the number of proposals. We focused on the adaptation with an affine cost, which is assumed to be known. Our method can accommodate other forms of (smooth) cost functions, which might be appropriate for instance when run times have a large variation --- because the i-SIR selection step requires to wait for the slowest sampling process to finish. We estimated the cost based on a pilot run with a range of number of proposals, by simply fitting a linear model to the observed running times. It might be possible to develop an online strategy to learn and/or refine the cost function based on observed run times.

We used an approximate asymptotic variance based on modelling i-SIR as `lazy perfect sampling.' The approximation was derived for the non-atomic case \ref{a:non-atomic}, but our experiments suggest that it is close to an approximation specifically tailored for discrete state spaces (cf. Section \ref{sec:approximate}). While the approximation appeared to work well in many settings, it can be a bit crude, because sometimes the i-SIR chain can get stuck for longer time than anticipated by the i-SIR approximation. One approach to generalise our approximation might be to use a similar `slowed down' perfect sampling, but in a semi-Markov manner: replacing the geometric holding times with some other, heavier tailed distribution. This could potentially account better for the excess `stickiness' due to varying weights. Another possible approach is to investigate the combination of asymptotic variance upper bound based criteria \citep[see][]{cardoso-samsonov-thin-moulines-olsson}, that is, to estimate the running mean of the weights to estimate the upper bound $W_i \le \hat{w}$ use some combination of the bounds in the adaptation. 

We presented our algorithm for the simplest possible variant of the i-SIR, and could be elaborated and extended in a number of ways. For instance, it is possible to use develop lower variance estimators of $\epsilon$ and $\epsilon'$ by exchangeability of $Y_k^{2:\bar{N}_kj}$: the estimates \eqref{eq:epsilon-est} and \eqref{eq:depsilon-est} remain valid if the last item $Y_k^{\bar{N}_k}$ is changed to each of the proposals. It is possible to slightly increase the acceptance rate of the i-SIR by the forced-move variant \citep{liu,chopinsingh2015}. We believe that our method could be used also with the forced-move extension, but we are unaware which theoretical properties remain valid.

Instead of using only the current value of the i-SIR Markov chain, it is possible to use the entire weighted sample in estimation; see for instance \citep{cardoso-samsonov-thin-moulines-olsson}. It might be possible to modify our adaptation to this setting, but using all proposals leads to different efficiency, and this must be accounted for by developing a new asymptotic variance proxy. The i-SIR can be extended to non-independent proposals \citep{mendes-scharth-kohn}; see also \citep{karppinen-vihola}, who also suggest covariance adaptation in this context. This setting also requires a different approximate asymptotic variance. Finally, adapting the proposal distribution, for instance with ideas from importance sampling \citep{bugallo-elvira-martino-luengo-miguez-djuric}, together with the number of proposals might be possible.

The approximate asymptotic variance which we use as a basis for adaptation leads to a loss function whose minimum does not depend on the test function $f$. We believe that this is often favourable (when we care about multiple test functions), but if we focus on one specific test function $f$, it may be possible to devise another function specific strategy for instance based on asymptotic correlations (the mean square jumping distance).

\section*{Acknowledgements}

The authors were supported by the Research Council of Finland (grants 346311 \& 364216), and wish to acknowledge CSC -- IT Center for Science, Finland, for computational resources.

\bibliographystyle{abbrvnat}
\bibliography{refs}

\appendix

\section{Details about the setup}\label{app:setup}
In our setup we have a sigma-finite measure $\mu$ and probability measures $\Pi$ and $\mathbb{Q}$ on a measurable space $(\X,\mathcal{X})$ such that $\Pi \ll \mathbb{Q} \ll \mu$. The Radon-Nikodym derivatives of $\Pi$ and $\mathbb{Q}$ with respect to $\mu$ shall be denoted by $\pi$ and $q$. These derivatives will be chosen so that $\pi,q\in [0,\infty)$ and $q(x)>0$ if $\pi(x)>0$, which can be done. The support of $\pi$ is denoted by $\Sp$.
The Radon-Nikodym derivative of $\Pi$ with respect to $\mathbb{Q}$ will be denoted by $w$ i.e. $\pi=wq$ $\mu$-almost everywhere, which can be chosen to be
$$
w(x)=
\begin{cases}
\frac{\pi(x)}{q(x)}, &x\in \Sp \\
0	&x\notin \Sp.
\end{cases}
$$
For this weight function holds $w\in [0,\infty)$. We say that $x\in \X$ is an atom for measure $\mu$ when $\mu(\braces{x})>0$. For $\mathbb{Q}$ this condition is equivalent to $q(x)>0$. Note that $\mathbb{Q}\ll \mu$ implies that if $x$ is not an atom for $\mu$, then it is neither for $\mathbb{Q}$. However  an atom for $\mu$ doesn't require to be an atom for $\mathbb{Q}$. Similar conditions also hold for $\Pi$ with respect to $\mathbb{Q}$.
Note that in general $q(x)=\mathbb{Q}(\braces{x})$ doesn't hold for an atom of $\mathbb{Q}$, but for instance when $\mu$ restricted to the atoms of $\mathbb{Q}$ is counting measure, then $q(x)=\mathbb{Q}(\braces{x})$.
Especially when $\mu$ is Lebesgue measure/counting measure on $\R$ we can have the usual probability density functions/probability mass functions as the Radon-Nikodym derivatives of $\pi$ and $q$.

\section{Asymptotic variance results}
\label{app:asymptotic-variance-results} 

Recall that the asymptotic variance of a Markov chain $P$ and function $f\in L_0^2(\pi)$ is 
\begin{equation}
  \mathrm{var}(P,f) = 2 \sum_{k=0}^\infty \angles{f|P^k f}_\pi-\|f\|_\pi^2,
\end{equation}
whenever the sum on the right is well-defined \citep[cf.][]{tierney-note}.

\begin{lemma}\label{lemma:single-proposal-chain-asymptotic-variance}
   Let $\pi$ be a probability distribution on a general state space $E$, and let $\epsilon\in(0,1)$. Consider the Markov transition probability:
   $$
      P_\epsilon(x,A) = \epsilon \charfun{x\in A} + (1-\epsilon)\pi(A).
   $$
   The asymptotic variance of a function $f\in L_0^2(\pi)$ is:
   $$
      \mathrm{var}(P_\epsilon, f) = \frac{1 + \epsilon}{1-\epsilon} \|f\|_\pi^2
   $$
\end{lemma}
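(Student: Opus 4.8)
The plan is to exploit the special structure of $P_\epsilon$, which acts very simply on centred functions. First I would observe that for any $f\in L_0^2(\pi)$ we have $\pi(f)=0$, so computing the action of $P_\epsilon$ on $f$ gives
$$
P_\epsilon f(x) = \int f(y)\, P_\epsilon(x,\ud y) = \epsilon f(x) + (1-\epsilon)\pi(f) = \epsilon f(x).
$$
In other words, every centred function is an eigenfunction of $P_\epsilon$ with eigenvalue $\epsilon$. An immediate induction then yields $P_\epsilon^k f = \epsilon^k f$ for all $k\ge 0$, and consequently
$$
\angles{f|P_\epsilon^k f}_\pi = \epsilon^k \angles{f|f}_\pi = \epsilon^k \|f\|_\pi^2.
$$

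Next I would substitute this into the asymptotic variance formula recalled at the start of Appendix \ref{app:asymptotic-variance-results}, namely $\mathrm{var}(P_\epsilon,f) = 2\sum_{k=0}^\infty \angles{f|P_\epsilon^k f}_\pi - \|f\|_\pi^2$. Since $\epsilon\in(0,1)$, the resulting geometric series converges, $\sum_{k=0}^\infty \epsilon^k = (1-\epsilon)^{-1}$, which in particular confirms that the sum defining the asymptotic variance is well-defined. It then remains only to simplify
$$
\mathrm{var}(P_\epsilon,f) = \frac{2}{1-\epsilon}\|f\|_\pi^2 - \|f\|_\pi^2 = \frac{2-(1-\epsilon)}{1-\epsilon}\|f\|_\pi^2 = \frac{1+\epsilon}{1-\epsilon}\|f\|_\pi^2,
$$
which is the claimed identity.

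The argument involves no genuine obstacle: the only point requiring any care is the interchange of the summation with the inner product, but this is justified because the eigenfunction relation reduces everything to an explicit convergent geometric series with nonnegative terms, so no delicate limiting argument is needed. I would simply remark that the convergence of $\sum_k \epsilon^k$ guarantees the sum in the definition of $\mathrm{var}(P_\epsilon,f)$ is finite and equals the stated value.
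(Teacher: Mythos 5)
Your proof is correct, but it takes a genuinely different route from the paper's. You exploit the spectral structure: on $L_0^2(\pi)$ the kernel acts as $P_\epsilon f = \epsilon f$, so every centred function is an eigenfunction with eigenvalue $\epsilon$, and the series formula $\mathrm{var}(P_\epsilon,f)=2\sum_{k\ge 0}\angles{f|P_\epsilon^k f}_\pi-\|f\|_\pi^2$ (which the paper indeed recalls, with the caveat that the sum be well-defined) collapses to an explicit geometric series. The paper instead argues probabilistically: it constructs the stationary chain explicitly via auxiliary uniforms and i.i.d.\ draws from $\pi$, computes the pairwise covariances $\E[g(X_m)g(X_n)]=\epsilon^{n-m}$ from that coupling, and then evaluates $\lim_{n\to\infty} n\cdot\mathrm{var}\big(n^{-1}\sum_{i=1}^n f(X_i)\big)$ directly, invoking Kronecker's lemma to dispose of the boundary terms. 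Your argument is substantially shorter and aligns more directly with the covariance-series definition of $\mathrm{var}(P_\lambda,f)$ given in the paper's main text; what the paper's longer construction buys is a derivation from the variance-of-ergodic-averages characterisation that does not lean on the operator series identity at all. One small quibble: your closing remark about "interchange of the summation with the inner product" is not really the issue --- no interchange occurs in your argument, since each term $\angles{f|P_\epsilon^k f}_\pi=\epsilon^k\|f\|_\pi^2$ is computed exactly and the only analytic fact used is convergence of $\sum_k \epsilon^k$, which is precisely the well-definedness condition attached to the recalled formula.
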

\begin{proof}
Let us have i.i.d. random variables $U_1,\dots,U_n\sim Unif(0,1)$ and $X_0,Z_1,\dots,Z_n\sim \pi$ that are independent of each other. Let
$$X_n=X_{n-1}I(U_n\le \epsilon)+Z_{n}I(U_n> \epsilon)$$
for $n\ge 1$, hence $X_n\sim \pi$. Let us standardise the function $f$ and denote $g(x)=\frac{f(x)}{\sqrt{\|f\|_\pi^2}}$. We have for all $n\ge 1$, that
\begin{align*}
g(X_n)=g(X_{n-1})\charfun{U_n\le \epsilon}+g(Z_n)\charfun{U_n>\epsilon},
\end{align*}
hence $g(X_n)$ are identically distributed and it holds for all $m\ge 0$ that $\E[g(X_m)]=0$ and $\E[g(X_m)^2]=1$.
By induction we get for $0\le m<n$, that
\begin{align*}
g(X_n)=g(X_{m})\charfun{R_{m+1:n}}+\charfun{n-m\ge 2}\sum_{i=m+1}^{n-1}g(Z_i)\charfun{A_{i} \cap R_{i+1:n}}+g(Z_n)\charfun{A_n},
\end{align*}
where
\begin{align*}
R_{j:k}=\braces{U_j\le \epsilon,\dots,U_k\le \epsilon}, \qquad  A_{j}=\braces{U_j> \epsilon}.
\end{align*}
Therefore
\begin{align*}
&\mathbb{E}[g(X_m)g(X_n)] \\
&\quad=\mathbb{E}[g(X_m)g(X_m)\charfun{R_{m+1:n}}]+\charfun{n-m\ge 2}\sum_{i=m+1}^{n-1}\mathbb{E}[g(X_m)g(Z_i)\charfun{A_i\cap R_{i+1:n}}] \\
&\qquad+\mathbb{E}[g(X_m)g(Z_n)\charfun{A_n}] \\
&\quad=\mathbb{E}[g(X_m)^2]\mathbb {E}[\charfun{R_{m+1:n}}] =\mathbb{E}[g(X_m)^2]\epsilon^{n-m}=\epsilon^{n-m}
\end{align*}
and it follows that
\begin{align*}
n\cdot \mathrm{var}\Big(\frac{\sum_{i=1}^n g(X_i)}{n}\Big)
&=\frac{1}{n}\mathbb{E}\Big[\Big(\sum_{i=1}^n g(X_i)\Big)^2\Big] =\frac{1}{n}\sum_{i=1}^n \sum_{j=1}^n\mathbb{E}[g(X_i)g(X_j)] \\
&=\frac{1}{n}\Big(\sum_{i=1}^n \mathbb{E}[g(X_i)^2]+2\sum_{i=1}^{n-1} \sum_{j=i+1}^n \mathbb{E}[g(X_i)g(X_j)]\Big) \\
&=1+\frac{2}{n}\sum_{i=1}^{n-1}\sum_{j=i+1}^n\epsilon^{j-i}
=1+\frac{2}{n}\sum_{i=1}^{n}(n-i)\epsilon^i.
\end{align*}
We have $\sum_{i=1}^\infty \epsilon^i=\frac{\epsilon}{1-\epsilon}$, hence by Kronecker
\begin{align*}
\sum_{i=1}^{n}\frac{i\epsilon^i}{n}\xrightarrow{n\to \infty} 0.
\end{align*}
Thus
\begin{align*}
\lim_{n\to \infty}n\cdot \mathrm{var}\Big(\frac{\sum_{i=1}^n g(X_i)}{n}\Big) &=1+2\sum_{i=1}^\infty \epsilon^i =\frac{1+\epsilon}{1-\epsilon}
\end{align*}
and it follows that
\begin{align*}
\lim_{n\to \infty}n\cdot \mathrm{var}\Big(\frac{\sum_{i=1}^n f(X_i)}{n}\Big) &=\frac{1+\epsilon}{1-\epsilon}\|f\|_\pi^2.
\end{align*}
Since $(X_n)_{n=0}^\infty$ is Markov chain with Markov transition probability $P_\epsilon$ and initial distribution $\pi$, we have
\begin{equation*}
\mathrm{var}(P_\epsilon,f) =\frac{1+\epsilon}{1-\epsilon}\|f\|_\pi^2.	\qedhere
\end{equation*}
\end{proof}

When the state space is finite, we have the following formulation:

\begin{lemma}\label{lemma:discrete-asymptotic-variance-form}
Let $P=[P_{i,j}]_{i,j=1,\dots,n}$ be a transition probability matrix, such that $P$ is irreducible, aperiodic and $\pi$-reversible with respect to some probability vector $\pi\in \R^n$. Let $(X_k)_{k=0}^\infty$ be a Markov chain in state space $\mathbb{S}=(s_1,\dots,s_n)$ with transition probability $P$ and initial distribution $\pi$, then it holds for $f:\mathbb{S}\to \R$ that
$$\mathrm{var}(P,\bar{f})=\sum_{i\in \braces{1,\dots,n}\setminus m}\angles{u_i|\bar{f}}_{\pi}^2\frac{1+\lambda_i}{1-\lambda_i},$$
where $\lambda_1,\dots,\lambda_n$ are eigenvalues and $u_1,\dots,u_n$ eigenvectors of matrix $P$ with $\lambda_{m}=1$ for some $m\in \braces{1,\dots,n}$, $|\lambda_i|<1$ for all $i\neq m$, $u_m=(1,\dots,1)$, where the vectors $u_1,\dots,u_n$ form an orthonormal basis for $(\R^n,\angles{\cdot|\cdot}_{\pi})$.
\end{lemma}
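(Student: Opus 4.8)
The plan is to exploit that $\pi$-reversibility makes $P$ self-adjoint on the finite-dimensional Hilbert space $(\R^n,\angles{\cdot|\cdot}_\pi)$, diagonalise it via the spectral theorem, and then evaluate the defining series for the asymptotic variance term by term against the eigenbasis.

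First I would record that reversibility, i.e.\ $\pi_i P_{ij}=\pi_j P_{ji}$, is precisely the statement $\angles{g|Ph}_\pi=\angles{Pg|h}_\pi$ for all $g,h\in\R^n$, so $P$ is self-adjoint with respect to $\angles{\cdot|\cdot}_\pi$. The spectral theorem then furnishes real eigenvalues $\lambda_1,\dots,\lambda_n$ and a $\angles{\cdot|\cdot}_\pi$-orthonormal eigenbasis $u_1,\dots,u_n$. Since $P$ is stochastic, $P\mathbf{1}=\mathbf{1}$, so the constant vector is an eigenvector with eigenvalue $1$; irreducibility and aperiodicity together with the Perron--Frobenius theorem guarantee that this eigenvalue is simple and that every other eigenvalue satisfies $|\lambda_i|<1$. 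This justifies the labelling $\lambda_m=1$, $u_m=\mathbf{1}$ (suitably normalised) and $|\lambda_i|<1$ for $i\neq m$, and in particular ensures $\lambda_i\in(-1,1)$, so each factor $\tfrac{1+\lambda_i}{1-\lambda_i}$ is well-defined and positive.

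Next I would expand the centred function in the eigenbasis, $\bar f=\sum_{i=1}^n c_i u_i$ with $c_i=\angles{u_i|\bar f}_\pi$. The coefficient in the constant direction vanishes because $c_m\propto\angles{\mathbf{1}|\bar f}_\pi=\pi(\bar f)=0$. Using $P^k u_i=\lambda_i^k u_i$ and orthonormality gives $\angles{\bar f|P^k\bar f}_\pi=\sum_{i\neq m}c_i^2\lambda_i^k$, while Parseval yields $\|\bar f\|_\pi^2=\sum_{i\neq m}c_i^2$. Substituting into $\mathrm{var}(P,\bar f)=2\sum_{k=0}^\infty\angles{\bar f|P^k\bar f}_\pi-\|\bar f\|_\pi^2$ and interchanging the finite sum over $i$ with the geometric sum over $k$ produces $\sum_{i\neq m}c_i^2\bigl(\tfrac{2}{1-\lambda_i}-1\bigr)=\sum_{i\neq m}c_i^2\tfrac{1+\lambda_i}{1-\lambda_i}$, which is exactly the claim.

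The only point requiring care is the convergence and the interchange of summation. Since there are finitely many eigenvalues, all with $|\lambda_i|<1$, each geometric series $\sum_{k\ge0}\lambda_i^k$ converges absolutely to $(1-\lambda_i)^{-1}$, the resulting double sum is absolutely summable, and Tonelli permits the interchange; this simultaneously confirms that the defining series for $\mathrm{var}(P,\bar f)$ is well-defined, so no hypothesis beyond $|\lambda_i|<1$ is needed. I expect this to be the main (though mild) obstacle, as everything else is a direct consequence of self-adjointness and the Perron--Frobenius structure of $P$.
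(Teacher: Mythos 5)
Your proposal is correct and follows essentially the same route as the paper: both arguments diagonalise $P$ in a $\pi$-orthonormal eigenbasis (the paper via the symmetrised matrix $\Pi^{1/2}P\Pi^{-1/2}$ and the eigenstructure facts cited from Levin--Peres--Wilmer, you via the spectral theorem plus Perron--Frobenius), observe that the coefficient of $\bar f$ in the constant direction vanishes, and sum the resulting geometric series in $\angles{\bar f|P^k\bar f}_\pi=\sum_{i\neq m}\angles{u_i|\bar f}_\pi^2\lambda_i^k$ to get $\tfrac{2}{1-\lambda_i}-1=\tfrac{1+\lambda_i}{1-\lambda_i}$. Your coordinate-free eigenbasis expansion is just a streamlined version of the paper's explicit matrix computation, so there is no substantive difference to flag.
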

\begin{proof}
We obtain the given eigenvalues and eigenvectors by \cite[Lemma 12.1, Lemma 12.2]{levin-peres-wilmer}. 
By these properties $P$ is diagonalizable and
$$P^k=CD^kC^{-1},$$
where $D=\mathrm{diag}(\lambda_1,\dots,\lambda_n)$ and $C=[u_1,\dots,u_n]$.
Let us denote
$$\Pi=\mathrm{diag}(\pi_1,\dots,\pi_n),$$
then
$$\Pi^{1/2}P^k\Pi^{-1/2}=\Pi^{1/2}CD^kC^{-1}\Pi^{-1/2}=QD^kQ^{-1},$$
where $Q=\Pi^{1/2}C=[\Pi_j^{1/2}u_{ij}]_{i,j=1,\dots,n}$. We can easily verify $Q^TQ=I$ i.e. $Q^T=Q^{-1}$, since 
$$\angles{u_i|u_j}_\pi=\begin{cases}
1 & i=j \\
0 & i\neq j.
\end{cases}
$$
We have
\begin{align*}
\angles{\bar{f}|P^k\bar{f}}_\pi=\bar{f}^T\Pi P^k \bar{f}=\bar{f}^T\Pi^{1/2}QD^kQ^{-1} \Pi^{1/2}\bar{f}=\bar{f}^T\Pi^{1/2}QD^kQ^{T} \Pi^{1/2}\bar{f}=e^TD^ke,
\end{align*}
where $e=Q^T\Pi^{1/2}\bar{f}$ and thus
$$e=Q^T\Pi^{1/2}\bar{f}=C^T\Pi^{1/2}\Pi^{1/2}\bar{f}=C^T\Pi\bar{f}=[\angles{u_i|\bar{f}}_\pi]_{i=1,\dots,n},$$
i.e. $e_i=\angles{u_i|\bar{f}}_\pi$, where $e_m=0$, since $\bar{f}\in u_m^{\perp}$ and 
$$\|\bar{f}\|_\pi^2=\angles{\bar{f}|\bar{f}}_\pi=e^Te=\sum_{i=1}^n e_i^2=\sum_{i=1}^n\angles{u_i|\bar{f}}_\pi^2=\sum_{i\in \braces{1,\dots n}\setminus m}\angles{u_i|\bar{f}}_\pi^2.$$ 
Let us define diagonal matrix $R=\mathrm{diag}(\lambda_1,\dots,\lambda_{m-1},0,\lambda_{m+1},\dots,\lambda_n)$, then
$$\sum_{k=0}^\infty R^k=\mathrm{diag}\Big(\frac{1}{1-\lambda_1},\dots,\frac{1}{1-\lambda_{m-1}},0,\frac{1}{1-\lambda_{m+1}},\dots, \frac{1}{1-\lambda_{n}}\Big),$$
and
\begin{align*}
\sum_{k=0}^\infty \angles{\bar{f}|P^k\bar{f}}=\sum_{k=0}^\infty e^TD^ke=\sum_{k=0}^\infty  e^TR^ke=\sum_{i\in\braces{1,\dots,n}\setminus  m} \angles{u_i|\bar{f}}_\pi^2\frac{1}{1-\lambda_i}.
\end{align*}
Therefore
\begin{align*}
\mathrm{var}(P,\bar{f}) &=2\sum_{k=0}^\infty \angles{\bar{f}| P^k \bar{f}}_\pi-\|\bar{f}\|_\pi^2=\sum_{i\in\braces{1,\dots,n}\setminus  m} \angles{u_i|\bar{f}}_\pi^2\Big[\frac{2}{1-\lambda_i}-1\Big] \\
&=\sum_{i\in\braces{1,\dots,n}\setminus  m} \angles{u_i|\bar{f}}_\pi^2\frac{1+\lambda_i}{1-\lambda_i}.	\qedhere
\end{align*}
\end{proof}

\begin{lemma}\label{lemma:asymptotic-variance-of-finite-i-SIR}
Let $\mathbb{S}=\braces{s_1,\dots,s_n}$, $f:\mathbb{S}\to \R$ and $\lambda>1$. It holds for Algorithm \ref{alg:continuous-isir} with $\mathrm{supp}(\pi)=\mathbb{S}$ that
$$\mathrm{var}(P_\lambda,\bar{f})=\sum_{i\in \braces{1,\dots,n}\setminus m}\angles{u_i|\bar{f}}_{\pi}^2\frac{1+\lambda_i}{1-\lambda_i},$$
where $\lambda_1,\dots,\lambda_n$ are eigenvalues and $u_1,\dots,u_n$ eigenvectors of matrix $P_\lambda$ with $\lambda_{m}=1$ for some $m\in \braces{1,\dots,n}$, $|\lambda_i|<1$ for all $i\neq m$, $u_m=(1,\dots,1)$ and the vectors $u_1,\dots,u_n$ form the orthonormal eigenbasis of $P_\lambda$ in the space $(\R^n,\angles{\cdot|\cdot}_{\pi})$.
\end{lemma}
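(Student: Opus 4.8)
The plan is to recognise Lemma~\ref{lemma:asymptotic-variance-of-finite-i-SIR} as a direct specialisation of the general finite-state formula in Lemma~\ref{lemma:discrete-asymptotic-variance-form}. The work therefore reduces to verifying that the finite-state i-SIR kernel $P_\lambda$ satisfies the three hypotheses of that lemma: that it is a genuine $n\times n$ transition probability matrix on $\Sp$, that it is $\pi$-reversible, and that it is irreducible and aperiodic.

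First I would check that $P_\lambda$ restricts to a stochastic matrix on $\Sp$. Since $\mathrm{supp}(\pi)=\Sp$, every state carries positive weight $w(s_i)>0$, whereas any proposal drawn by Algorithm~\ref{alg:continuous-isir} that falls outside $\Sp$ has zero weight and hence zero selection probability. Because the retained candidate $Y_k^1=X_{k-1}\in\Sp$ always has positive weight, the categorical draw can never select an out-of-support proposal, so the chain remains in $\Sp$ and the numbers $P_\lambda(s_i,\{s_j\})$ assemble into a well-defined transition probability matrix $P_\lambda=[P_\lambda(s_i,\{s_j\})]_{i,j}$.

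Next, $\pi$-reversibility is immediate from Lemma~\ref{lemma:i-sir-properties}, which asserts that $P_\lambda$ is $\pi$-reversible for every $\lambda\ge 1$. For irreducibility and aperiodicity I would invoke the strong minorisation of Proposition~\ref{prop:isir-minorisation}: in a finite state space Assumption~\ref{a:bounded-weights} holds automatically, so for $\lambda>1$ we have $P_\lambda(s_i,\{s_j\})\ge \frac{\lambda-1}{2\hat{w}+\lambda-1}\,\pi(\{s_j\})>0$ for all $i,j$, using $\pi(\{s_j\})>0$. The strict positivity of the off-diagonal entries gives one-step communication between all states (irreducibility), and the strict positivity of the diagonal entries $P_\lambda(s_i,\{s_i\})$ gives aperiodicity.

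With these three hypotheses in place, the conclusion follows by applying Lemma~\ref{lemma:discrete-asymptotic-variance-form} to $P=P_\lambda$ and the centred function $\bar{f}$, which yields the stated spectral expansion together with the described eigenstructure ($\lambda_m=1$, $u_m=(1,\dots,1)$, $|\lambda_i|<1$ for $i\neq m$, and an orthonormal eigenbasis of $(\R^n,\angles{\cdot|\cdot}_\pi)$). I do not expect a genuine obstacle here; the only point requiring care is the bookkeeping in the first step --- that the chain truly lives on the finite set $\Sp$ even when $q$ charges states outside $\Sp$ --- after which the result is a clean corollary of the general lemma.
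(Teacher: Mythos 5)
Your proposal is correct and follows essentially the same route as the paper: both verify $\pi$-reversibility via Lemma~\ref{lemma:i-sir-properties}, deduce irreducibility and aperiodicity from the minorisation bound (your Proposition~\ref{prop:isir-minorisation} is exactly the paper's Lemma~\ref{lemma:i-sir-probability-inequality}), and then apply Lemma~\ref{lemma:discrete-asymptotic-variance-form}. Your additional check that the chain genuinely lives on $\Sp$ even when $q$ charges states outside the support is a sensible piece of bookkeeping that the paper leaves implicit, but it does not change the argument.
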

\begin{proof}
In the finite discrete case Assumption \ref{a:bounded-weights} holds. Algorithm \ref{alg:continuous-isir} forms $\pi$-reversible Markov chain by Lemma \ref{lemma:i-sir-properties}, and $P_\lambda(x,\braces{y})>0$ for all $x,y\in \Sp$ by Lemma \ref{lemma:i-sir-probability-inequality} in Appendix \ref{app:ergodicity-auxiliaries}, hence $P_\lambda$ is irreducible and aperiodic. Therefore, the claim follows from Lemma \ref{lemma:discrete-asymptotic-variance-form}.
\end{proof}

\section{Results about real sequences and their sums and differences}\label{app:differences}
The following lemma is well known and is easily verified by Cauchy-Schwarz:
\begin{lemma}[Sedrakyan's inequality]\label{lemma:sedrakyan-inequality}
Let $n\in \N_+$, $a_1,\dots,a_n\in \R$ and $b_1,\dots,b_n>0$, then
$$\sum_{k=1}^{n}\frac{a_k^2}{b_k}\ge \frac{(\sum_{k=1}^{n}a_k)^2}{\sum_{l=1}^{n}b_l},$$
where the equality holds if and only if $(a_1,\dots,a_n)$ and $(b_1,\dots,b_n)$ are linearly dependent.
\end{lemma}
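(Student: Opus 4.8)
The plan is to deduce the inequality directly from the Cauchy--Schwarz inequality, as the statement anticipates. First I would set $x_k = a_k/\sqrt{b_k}$ and $y_k=\sqrt{b_k}$ for $k=1,\dots,n$; these are well defined because each $b_k>0$. The point is that these products recover the $a_k$: indeed $x_ky_k=a_k$, so that $\sum_{k=1}^n x_ky_k=\sum_{k=1}^n a_k$, while $\sum_{k=1}^n x_k^2=\sum_{k=1}^n a_k^2/b_k$ and $\sum_{k=1}^n y_k^2=\sum_{l=1}^n b_l$.

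Next I would apply the Cauchy--Schwarz inequality $\big(\sum_{k=1}^n x_ky_k\big)^2\le\big(\sum_{k=1}^n x_k^2\big)\big(\sum_{k=1}^n y_k^2\big)$ to these vectors, which gives
$$\Big(\sum_{k=1}^n a_k\Big)^2 \le \Big(\sum_{k=1}^n \frac{a_k^2}{b_k}\Big)\Big(\sum_{l=1}^n b_l\Big).$$
Since $\sum_{l=1}^n b_l>0$, dividing both sides by this quantity yields the asserted inequality.

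For the equality clause I would invoke the equality case of Cauchy--Schwarz, namely that equality holds precisely when $(x_k)_k$ and $(y_k)_k$ are linearly dependent. Because $y_k=\sqrt{b_k}>0$ for all $k$, the vector $(y_k)_k$ is nonzero, so linear dependence is equivalent to $x_k=cy_k$ for a single constant $c$, that is $a_k/\sqrt{b_k}=c\sqrt{b_k}$, or $a_k=cb_k$ for all $k$. This is exactly the condition that $(a_1,\dots,a_n)$ and $(b_1,\dots,b_n)$ are linearly dependent.

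There is no genuine obstacle in this argument, which is why the paper calls it easily verified; the only detail meriting attention is the equality characterisation, where one must observe that the strict positivity of the $b_k$ forces $(y_k)_k\neq 0$ and hence reduces the abstract linear-dependence condition back to the proportionality $a_k=cb_k$, matching the dependence of the original vectors.
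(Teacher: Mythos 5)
Your proof is correct and follows exactly the route the paper intends: the paper proves this lemma simply by remarking that it is ``easily verified by Cauchy--Schwarz,'' and your substitution $x_k=a_k/\sqrt{b_k}$, $y_k=\sqrt{b_k}$ together with the equality case of Cauchy--Schwarz (using $b_k>0$ to reduce linear dependence to proportionality $a_k=cb_k$) is precisely that verification.
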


The following two lemmas use the following basic property: If $n\in \N_+$ and $c_1,\dots,c_{n+1}\in \R$, then
\begin{equation}\label{equation:sequence-sum-equality}
\sum_{k=1}^{n+1}\sum_{j=1,j\neq k}^{n+1}c_j=\sum_{k=1}^{n+1}\sum_{j=1}^{n+1}c_j \charfun{j\neq k}=\sum_{j=1}^{n+1}c_j \sum_{k=1}^{n+1} \charfun{j\neq k}=n\sum_{j=1}^{n+1}c_j. \qedhere
\end{equation}

\begin{lemma}\label{lemma:corollary-of-sedrakyan}
Let $n\in \N_+$, $a_1,\dots,a_{n+1}\in \R$, $b_1,\dots,b_{n+1}\ge 0$, $\sum_{j=1}^{n+1} b_j>0$, then
$$
\frac{1}{n}\sum_{k\in J}\frac{\big(\sum_{i=1,i\neq k}^{n+1}b_ia_i\big)^2}{\sum_{j=1,j\neq k}^{n+1}b_j}\ge 
\frac{\big(\sum_{i=1}^{n+1}b_ia_i\big)^2}{\sum_{j=1}^{n+1} b_j},
$$
where $J=\bigbraces{k\in \braces{1,2,\dots,{n+1}}:\sum_{i=1,i\neq k}^{n+1}b_i>0}$. Equality holds if and only if $A=(A_1,\dots,A_{n+1})$ and $B=(B_1,\dots,B_{n+1})$ are linearly dependent,
where $A_k=\sum_{j=1,j\neq k}^{n+1}b_ja_j$ and $B_k=\sum_{j=1,j\neq k}^{n+1} b_j$.
\end{lemma}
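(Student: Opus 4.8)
The plan is to deduce this as a direct consequence of Sedrakyan's inequality (Lemma \ref{lemma:sedrakyan-inequality}) applied to the summands indexed by $k\in J$, after first disposing of the indices outside $J$. Writing $A_k=\sum_{j=1,j\neq k}^{n+1}b_ja_j$ and $B_k=\sum_{j=1,j\neq k}^{n+1}b_j$ as in the statement, the left-hand side equals $\frac{1}{n}\sum_{k\in J}A_k^2/B_k$, and $J$ is precisely the set of indices $k$ with $B_k>0$, so every quotient appearing is well-defined.

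First I would observe that the indices $k\notin J$ contribute nothing to the sums we will need. Indeed, $k\notin J$ means $B_k=\sum_{j\neq k}b_j=0$; since every $b_j\ge 0$ and $\sum_{j=1}^{n+1}b_j>0$, this forces $b_k>0$ and $b_j=0$ for all $j\neq k$, whence $A_k=\sum_{j\neq k}b_ja_j=0$ as well. Consequently $\sum_{k\in J}A_k=\sum_{k=1}^{n+1}A_k$ and $\sum_{k\in J}B_k=\sum_{k=1}^{n+1}B_k$, and by the symmetric-sum identity \eqref{equation:sequence-sum-equality} (applied with $c_j=b_ja_j$ and with $c_j=b_j$) these equal $n\sum_{j=1}^{n+1}b_ja_j$ and $n\sum_{j=1}^{n+1}b_j$, respectively.

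Next I would apply Lemma \ref{lemma:sedrakyan-inequality} to the families $(A_k)_{k\in J}$ and $(B_k)_{k\in J}$, the latter being strictly positive on $J$. This gives
$$
\sum_{k\in J}\frac{A_k^2}{B_k}\ge \frac{\big(\sum_{k\in J}A_k\big)^2}{\sum_{k\in J}B_k}=\frac{\big(n\sum_{j}b_ja_j\big)^2}{n\sum_{j}b_j}=n\,\frac{\big(\sum_{j}b_ja_j\big)^2}{\sum_{j}b_j},
$$
and dividing through by $n$ yields exactly the claimed inequality.

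For the equality case, Lemma \ref{lemma:sedrakyan-inequality} gives equality precisely when the families $(A_k)_{k\in J}$ and $(B_k)_{k\in J}$ are linearly dependent. The only point requiring care is to upgrade this to linear dependence of the full vectors $A=(A_1,\dots,A_{n+1})$ and $B=(B_1,\dots,B_{n+1})$: since $A_k=B_k=0$ for every $k\notin J$, any relation $\alpha A_k+\beta B_k=0$ holding on $J$ holds automatically on all of $\{1,\dots,n+1\}$, and conversely a relation on the full index set restricts to one on $J$. Hence dependence of the restricted families is equivalent to dependence of $A$ and $B$, which completes the characterisation. I expect no serious obstacle here; the entire argument is bookkeeping, with the two mild subtleties being the vanishing of $A_k$ on the excluded indices and the transfer of the equality condition between the restricted and full index sets.
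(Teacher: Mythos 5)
Your proof is correct and follows essentially the same route as the paper: apply Sedrakyan's inequality (Lemma \ref{lemma:sedrakyan-inequality}) to the families $(A_k)_{k\in J}$ and $(B_k)_{k\in J}$, identify the full sums via \eqref{equation:sequence-sum-equality}, and transfer the equality characterisation using the fact that $A_k=B_k=0$ for $k\notin J$. The only difference is that you spell out why $A_k$ vanishes off $J$ (namely $B_k=0$ forces $b_j=0$ for all $j\neq k$), a detail the paper asserts without justification, so your write-up is if anything slightly more complete.
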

\begin{proof}
By Sedrakyan's inequality (Lemma \ref{lemma:sedrakyan-inequality})
\begin{align*}
\frac{1}{n}\sum_{k\in J}\frac{A_k^2}{B_k} 
&\ge  \frac{1}{n}\frac{\big(\sum_{k\in J}A_k\big)^2}{\sum_{l\in J}B_l}
=  \frac{1}{n}\frac{\big(\sum_{k=1}^{n+1}A_k\big)^2}{\sum_{l=1}^{n+1}B_l},
\end{align*}
for which equality holds if and only if $(A_k)_{k\in J}$ and $(B_k)_{k\in J}$ are linearly dependent which is equivalent to $A$ and $B$ being linearly dependent, since $A_k=0=B_k$ for $k\notin J$. Therefore, the claim follows from \eqref{equation:sequence-sum-equality}.
\end{proof}

\begin{lemma}\label{lemma:linearly-independent-condition}
Let $n\in \N_+$, $a_1,\dots,a_{n+1}\in \R$, $b_1,\dots,b_{n+1}\ge 0$, $A_k=\sum_{j=1,j\neq k}^{n+1}b_ja_j$ and $B_k=\sum_{j=1,j\neq k}^{n+1} b_j$. Then $A=(A_1,\dots,A_{n+1})$ and $B=(B_1,\dots,B_{n+1})$ are linearly independent if $a_i\neq a_j$ and $b_i,b_j>0$ for some $i,j\in \braces{1,2,\dots,n+1}$.
\end{lemma}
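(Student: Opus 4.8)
The plan is to argue by contradiction: assuming $A$ and $B$ are linearly dependent, I would show that the coefficients of any vanishing linear combination must both vanish. First I would rewrite the two vectors using the total sums $S = \sum_{j=1}^{n+1} b_j$ and $T = \sum_{j=1}^{n+1} b_j a_j$, so that $B_k = S - b_k$ and $A_k = T - b_k a_k$ for every $k$. This reformulation is the key simplification, since it splits each coordinate into a global constant minus a single term depending only on the index $k$.

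Suppose then that $sA_k + tB_k = 0$ for all $k$ with $(s,t) \neq (0,0)$. Substituting the expressions above gives $b_k(s a_k + t) = sT + tS$ for every $k$; crucially the right-hand side is a constant $C$ independent of $k$. The main step is then a self-referential summation: summing $b_k(s a_k + t)$ over $k = 1, \dots, n+1$ returns exactly $sT + tS = C$, whereas summing the constant right-hand side gives $(n+1)C$. Hence $C = (n+1)C$, and since $n \geq 1$ this forces $C = 0$, so $b_k(s a_k + t) = 0$ for all $k$.

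To finish I would invoke the hypothesis: for the two indices $i,j$ with $b_i, b_j > 0$ we obtain $s a_i + t = 0$ and $s a_j + t = 0$, whose difference is $s(a_i - a_j) = 0$; since $a_i \neq a_j$ this yields $s = 0$. Feeding $s = 0$ back into the assumed relation leaves $tB_k = 0$ for all $k$, and because $B_i = S - b_i \geq b_j > 0$ we conclude $t = 0$, contradicting $(s,t) \neq (0,0)$. Therefore $A$ and $B$ are linearly independent.

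I expect the only subtle point to be the summation identity that pins down $C = 0$; once that is in place, the two special indices $i,j$ make the conclusion immediate, and no case analysis on the signs of $s$ or $t$ is needed.
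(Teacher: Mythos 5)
Your proof is correct and rests on the same key idea as the paper's: summing the dependence relation over all $n+1$ indices forces the constant term to vanish (the paper phrases this via its identity $\sum_{k}\sum_{j\neq k}c_j = n\sum_j c_j$, your version is the observation $C=(n+1)C$), after which the two indices $i,j$ with $b_i,b_j>0$ and $a_i\neq a_j$ give the contradiction. One point in your favour: you treat a general vanishing combination $sA+tB=0$ with $(s,t)\neq(0,0)$ and eliminate $s$ and $t$ separately, whereas the paper normalizes the dependence to $cA+B=0$ with $c\neq 0$, a normalization that tacitly excludes the degenerate possibilities $A=0$ or $B=0$; your closing step using $B_i\ge b_j>0$ covers these cases explicitly.
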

\begin{proof}
Assume $A$ and $B$ are linearly dependent i.e. $cA+B=0$ for some $c\neq 0$, then by \eqref{equation:sequence-sum-equality}
$$0=\sum_{k=1}^{n+1}(cA_k+B_k)=n\sum_{l=1}^{n+1}b_l(ca_l+1)$$
and thus for all $k\in \braces{1,2,\dots,n+1}$
$$b_k(ca_k+1)=\sum_{l=1}^{n+1}b_l(ca_l+1)-\sum_{m=1,m\neq k}^{n+1}b_m(ca_m+1)=-(cA_k+B_k)
=0,$$
hence for each $k$ has to hold either $b_k=0$ or $ca_k+1=0$.
Since $a_i\neq a_j$ and $b_i,b_j>0$, we get a contradiction.
\end{proof}

The final result is about differences of partial sums of sequences.

\begin{lemma}\label{lemma:differences}
Let $N \in \N_+$, $a_1,a_2,\dots \in \R$, $b_1,b_2,\dots\in \R$, $A_n=\sum_{i=1}^n a_i$ and $B_n=\sum_{i=1}^n b_i$, then
\begin{align*}
B_{N+1}B_{N+2}-2B_{N}B_{N+2}+B_{N}B_{N+1}=(b_{N+1}-b_{N+2})B_{N+1}+2b_{N+1}b_{N+2}
\end{align*}
and
\begin{align*}
&B_{N+1}B_{N+2}A_{N}-2B_{N}B_{N+2}A_{N+1}+B_{N}B_{N+1}A_{N+2} \\
&=2b_{N+2}(B_{N+1}A_N-B_{N}A_{N+1})-B_{N+1}B_{N}(a_{N+1}-a_{N+2})+B_{N+1}A_{N}(b_{N+1}-b_{N+2}).
\end{align*}
\end{lemma}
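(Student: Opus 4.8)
The plan is to verify both identities by direct expansion, since they are purely algebraic statements about arbitrary real sequences with no positivity, convergence, or measure-theoretic content. The only tools needed are the defining recursions $B_{n+1}=B_n+b_{n+1}$ and $A_{n+1}=A_n+a_{n+1}$. Concretely, I would substitute $B_{N+1}=B_N+b_{N+1}$, $B_{N+2}=B_N+b_{N+1}+b_{N+2}$, and, for the second identity, $A_{N+1}=A_N+a_{N+1}$, $A_{N+2}=A_N+a_{N+1}+a_{N+2}$, so that every term on both sides becomes a polynomial in the base quantities $B_N,A_N$ and the four increments $b_{N+1},b_{N+2},a_{N+1},a_{N+2}$. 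It then suffices to expand both sides and check that the resulting polynomials coincide.

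For the first identity, the key structural point is that the coefficient pattern $(1,-2,1)$ forces the highest-order term to cancel: the $B_N^2$ contributions are $B_N^2(1-2+1)=0$. After this cancellation only terms that are linear and quadratic in the increments survive, and a short collection shows that the left-hand side reduces to $B_N b_{N+1}-B_N b_{N+2}+b_{N+1}^2+b_{N+1}b_{N+2}$. Expanding the right-hand side $(b_{N+1}-b_{N+2})(B_N+b_{N+1})+2b_{N+1}b_{N+2}$ produces exactly the same expression, which closes this case.

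The second identity is handled in the same manner, the only difference being the larger number of monomials to track. Again the top-order term $B_N^2A_N$ cancels because its coefficients are $1-2+1=0$, reflecting the second-difference structure in the $B$'s. I would expand the three products on the left and collect the surviving monomials, which are exactly those of the form $B_N^2 a_\bullet$, $B_N A_N b_\bullet$, $B_N b_\bullet a_\bullet$ and $A_N b_\bullet b_\bullet$. On the right I would use the simplification $B_{N+1}A_N-B_N A_{N+1}=b_{N+1}A_N-B_N a_{N+1}$ together with the factored terms $B_{N+1}B_N(a_{N+1}-a_{N+2})$ and $B_{N+1}A_N(b_{N+1}-b_{N+2})$, and then match the two polynomials term by term.

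There is no conceptual obstacle; the identities hold for every real sequence. The only genuine difficulty is bookkeeping, since the second identity generates on the order of twenty monomials before cancellation, so the principal risk is an arithmetic slip when collecting coefficients. To control this I would organise the expansion by grouping monomials according to their total degree in the increments and verify the $(1,-2,1)$ cancellation of the leading term first; this removes the bulkiest contributions and leaves only a linear-and-quadratic remainder to match, keeping the verification manageable.
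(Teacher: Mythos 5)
Your proposal is correct and takes essentially the same route as the paper: a direct algebraic verification using the recursions $B_{n+1}=B_n+b_{n+1}$ and $A_{n+1}=A_n+a_{n+1}$, and your intermediate reduction of the first identity to $B_Nb_{N+1}-B_Nb_{N+2}+b_{N+1}^2+b_{N+1}b_{N+2}$ is accurate. The only cosmetic difference is that the paper avoids the full monomial expansion of the second identity by first substituting $A_{N+1}=A_N+a_{N+1}$, $A_{N+2}=A_N+a_{N+1}+a_{N+2}$ and applying the already-proved first identity to the coefficient of $A_N$, which shortens the bookkeeping you describe.
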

\begin{proof}
These are straightforward, since $B_{n+1}=B_n+b_{n+1}$ and $A_{n+1}=A_n+a_{n+1}$:
\begin{align*}
B_{N+1}B_{N+2}-2B_{N}B_{N+2}+B_{N}B_{N+1} 
&=B_{N+2}b_{N+1}-B_{N}b_{N+2}\\
&=B_{N+1}(b_{N+1}-b_{N+2})+2b_{N+1}b_{N+2}
\end{align*}
and thus
\begin{align*}
&B_{N+1}B_{N+2}A_{N}-2B_{N}B_{N+2}A_{N+1}+B_{N}B_{N+1}A_{N+2} \\
&=B_{N+1}A_{N}(b_{N+1}-b_{N+2})+2b_{N+1}b_{N+2}A_{N} \\
&\quad-2B_{N}B_{N+2}a_{N+1} +B_{N}B_{N+1}(a_{N+1} +a_{N+2})  \\
&=B_{N+1}A_{N}(b_{N+1}-b_{N+2})+2(B_{N+1}-B_{N})b_{N+2}A_{N}   \\
&\quad-2B_{N}(B_{N+1}+b_{N+2})a_{N+1}+B_{N}B_{N+1}a_{N+1} +B_{N}B_{N+1}a_{N+2}  \\
&=B_{N+1}A_{N}(b_{N+1}-b_{N+2})+2B_{N+1}b_{N+2}A_{N} -2B_{N}b_{N+2}A_{N+1}  \\
&\quad-B_{N}B_{N+1}(a_{N+1} -a_{N+2}).		\qedhere
\end{align*}
\end{proof}

\section{Details about modification}\label{app:modification}
In the following lemma the assumption in case \eqref{enum:generalisation-1} is generalisation of sequential convexity and matches this if $a_{i}=i$ for all $i$.

\begin{lemma}\label{lemma:generalisation-to-continuous-function}
Let $(a_i)_{i\in \mathbb{Z}}\in \R$ be strictly increasing and let us denote $d_i=a_{i+1}-a_i$. Let $(b_i)_{i\in \Z}\in \R$ and $f:\R\to \R$,
\begin{align*}
f(x)=\Big(1-\frac{x-a_{g(x)}}{a_{g(x)+1}-a_{g(x)}}\Big)b_{g(x)}+\frac{x-a_{g(x)}}{a_{g(x)+1}-a_{g(x)}}b_{g(x)+1},
\end{align*}
where
$$
g(x)=\inf\braces{i\in \R:x\ge a_i}.
$$
The function $f$ is continuous and for all $i\in \Z$ holds $f(a_i)=b_i$.
Additionally, if for all $n\in \Z$ holds
\begin{enumerate}[(i)]
\item $\frac{d_{n+1}}{d_n+d_{n+1}}b_n-b_{n+1}+\frac{d_n}{d_n+d_{n+1}}b_{n+2}\ge 0$ $(\le)$, then $f$ is convex (concave). \label{enum:generalisation-1}
\item $b_n-b_{n+1}\ge  0$ $(>)$, then $f$ is decreasing (strictly). \label{enum:generalisation-2}
\item $b_n-b_{n+1}\le  0$ $(<)$, then $f$ is increasing (strictly). \label{enum:generalisation-3}
\item $b_n\ge 0$ $(b_n>0)$, then $f$ is non-negative (positive). \label{enum:generalisation-4}
\item $b_n\le 0$ $(b_n<0)$, then $f$ is non-positive (negative). \label{enum:generalisation-5}
\end{enumerate}
\end{lemma}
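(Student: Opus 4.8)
The plan is to recognise $f$ as the continuous piecewise-linear interpolant of the data $(a_i,b_i)_{i\in\Z}$ and to read off all five properties from the behaviour of its slopes. First I would observe that on each half-open interval $[a_i,a_{i+1})$ the index $g(x)$ is constant and equal to $i$, so that there $f$ coincides with the affine function $x\mapsto\big(1-\theta_i(x)\big)b_i+\theta_i(x)b_{i+1}$, where $\theta_i(x)=(x-a_i)/d_i\in[0,1)$. Each such piece is affine, hence continuous, and the two pieces adjacent to a node $a_i$ both take the value $b_i$ there (the right piece at $x=a_i$ gives $b_i$, and the left piece on $[a_{i-1},a_i]$ at $x=a_i$ gives $b_i$ as well); thus $f$ is continuous on $\R$ and $f(a_i)=b_i$. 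I record the slope of the $i$-th piece as $s_i=(b_{i+1}-b_i)/d_i$.

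The monotonicity and sign statements \eqref{enum:generalisation-2}--\eqref{enum:generalisation-5} follow directly from this local description. For \eqref{enum:generalisation-2}, $b_n\ge b_{n+1}$ for all $n$ is equivalent to $s_i\le0$ for all $i$, so every piece is non-increasing; since $f$ is continuous, $f$ is non-increasing on $\R$, and strictly so when $b_n>b_{n+1}$ (each piece is then strictly decreasing). Case \eqref{enum:generalisation-3} is symmetric. For \eqref{enum:generalisation-4} and \eqref{enum:generalisation-5}, the value $f(x)$ is a convex combination of $b_{g(x)}$ and $b_{g(x)+1}$, so the sign (and the strict sign) of $f$ is inherited from that of the $b_n$.

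The substantive case is \eqref{enum:generalisation-1}, for which I would first show that the stated three-term inequality is exactly the monotonicity of the slope sequence. A direct computation gives
\begin{equation*}
\frac{d_n d_{n+1}}{d_n+d_{n+1}}\,(s_{n+1}-s_n)=\frac{d_{n+1}}{d_n+d_{n+1}}b_n-b_{n+1}+\frac{d_n}{d_n+d_{n+1}}b_{n+2},
\end{equation*}
and since the prefactor on the left is positive, the hypothesis $\ge 0$ (resp.\ $\le0$) for all $n$ is equivalent to $s_{n+1}\ge s_n$ (resp.\ $s_{n+1}\le s_n$) for all $n$, i.e.\ the slopes are non-decreasing (resp.\ non-increasing). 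It then remains to conclude that a continuous piecewise-linear function with non-decreasing slopes is convex (and with non-increasing slopes, concave).

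The one point requiring care — and the main obstacle — is this last implication, which I would establish through the secant criterion rather than through differentiability. Fix $x<y<z$ and let $m$ be the index with $y\in[a_m,a_{m+1})$. Writing $f(y)-f(x)=\sum_i s_i\ell_i$ with $\ell_i\ge0$ the length of the overlap of $[x,y]$ with the $i$-th piece, the secant slope $(f(y)-f(x))/(y-x)$ is a convex combination of slopes $s_i$ with $i\le m$, hence is at most $s_m$; likewise $(f(z)-f(y))/(z-y)$ is a convex combination of slopes $s_i$ with $i\ge m$, hence is at least $s_m$. Therefore
\begin{equation*}
\frac{f(y)-f(x)}{y-x}\le s_m\le \frac{f(z)-f(y)}{z-y},
\end{equation*}
which is precisely the convexity inequality at $x<y<z$; the concave case follows by applying this argument to $-f$. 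Combining this with the slope identity above yields \eqref{enum:generalisation-1} and completes the proof.
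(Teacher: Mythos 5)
Your proof is correct, and for the substantive part \eqref{enum:generalisation-1} it takes a genuinely different route from the paper. You reduce the three-term hypothesis, via the identity
\begin{equation*}
\frac{d_n d_{n+1}}{d_n+d_{n+1}}\,(s_{n+1}-s_n)=\frac{d_{n+1}}{d_n+d_{n+1}}b_n-b_{n+1}+\frac{d_n}{d_n+d_{n+1}}b_{n+2},
\qquad s_i=\frac{b_{i+1}-b_i}{d_i},
\end{equation*}
to monotonicity of the slope sequence, and then conclude convexity of the piecewise-linear interpolant by the three-chord (secant) criterion, splitting each secant at the piece containing the middle point; both steps are sound, and the telescoping decomposition $f(y)-f(x)=\sum_i s_i\ell_i$ is justified by the continuity you establish first. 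The paper instead works directly with the three-term inequality: it proves by induction two families of extrapolation inequalities (\eqref{eq:help-convex-1} and \eqref{eq:help-convex-2} in its proof) showing that every node $(a_i,b_i)$ lies in the closed half-space bounded by the line extending any fixed segment, and then deduces the convexity inequality from convexity of that half-space --- essentially a supporting-line argument. Your route is shorter and more elementary, and the slope identity makes transparent that the paper's hypothesis is exactly sequential convexity adapted to uneven spacing, a reformulation the paper never states explicitly; the paper's half-space formulation, on the other hand, packages the geometric picture it reuses verbatim for the concave case. Your treatment of continuity and of cases \eqref{enum:generalisation-2}--\eqref{enum:generalisation-5} coincides with the paper's. (Both proofs also share the same implicit reading of $g$: the displayed $\inf\{i\in\R:x\ge a_i\}$ must be understood as the index with $a_{g(x)}\le x<a_{g(x)+1}$, and the well-definedness of $f$ on all of $\R$ implicitly requires the nodes not to accumulate; neither point is a defect of your argument relative to the paper's.)
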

\begin{proof}
Clearly $f$ is continuous on the intervals $[a_i,a_{i+1})$, $f(a_i)=b_i$ and $\lim_{x\uparrow a_{i+1}}f(x)=b_{i+1}$ for all $i\in \Z$, hence $f:\R\to \R$ is continuous.

Let us assume \eqref{enum:generalisation-1} (convex version) holds, then
\begin{align*}
b_i&\ge \frac{d_{i-2}+d_{i-1}}{d_{i-2}}b_{i-1}-\frac{d_{i-1}}{d_{i-2}}b_{i-2} \\
&\ge \frac{d_{i-3}+d_{i-2}+d_{i-1}}{d_{i-3}}b_{i-2}-\frac{d_{i-2}+d_{i-1}}{d_{i-3}}b_{i-3}.
\end{align*}
Thus, by induction for any $m\in \N$, $m\ge 2$ it holds that
\begin{align}\label{eq:help-convex-1}
\begin{split}
b_i&\ge \frac{d_{i-m}+\dots+d_{i-1}}{d_{i-m}}b_{i+1-m}-\frac{d_{i+1-m}+\dots+d_{i-1}}{d_{i-m}}b_{i-m} \\
&=\frac{(a_{i}-a_{i-m})b_{i+1-m}-(a_{i}-a_{i+1-m})b_{i-m}}{a_{i+1-m}-a_{i-m}}
\end{split}
\end{align}
and in similar fashion we have
\begin{align}\label{eq:help-convex-2}
\begin{split}
b_i&\ge \frac{(a_{i+m}-a_{i})b_{i+m-1}-(a_{i+m-1}-a_{i})b_{i+m}}{a_{i+m}-a_{i+m-1}}.
\end{split}
\end{align}
Let us denote $\angles{x|y}=x_1y_1+x_2y_2$ for $x,y\in \R^2$ and let $w^i=(a_i,b_i) \in \R^2$. For every $x\in [a_i,a_{i+1}]$ we have that $(x,f(x))\in J(w^i,w^{i+1})$, where we denote by $J$ the line segment. 
Let $y,z\in \R$, $y<z$ and $t\in (0,1)$, then 
$$c_t:=(1-t)y+tz=(1-s)a_k+sa_{k+1}$$
for some $s\in [0,1)$ and $k\in \Z$. The vector $v=(b_{k+1}-b_{k},a_{k}-a_{k+1})$ is perpendicular to the line segment $J(w^k,w^{k+1}) \ni (c_t,f(c_t))$. Let us define convex half space that contains this line segment on its boundary
$$A=\braces{u\in \R^2:\angles{u|v}\le \angles{w^k|v}}.$$
Clearly $\angles{w^k|v}=a_kb_{k+1}-a_{k+1}b_k=(w^{k+1}|v)$ and by \eqref{eq:help-convex-1} we have for all $i>k+1$ that
$$\angles{w^i|v}\le a_i(b_{k+1}-b_{k})+\frac{(a_{i}-a_{k})b_{k+1}-(a_{i}-a_{k+1})b_{k}}{a_{k+1}-a_{k}}(a_{k}-a_{k+1})=(w^k|v).$$
Similarly, by \eqref{eq:help-convex-2} we have for all $i<k$ that $\angles{w^i|v}\le \angles{w^k|v}$. Together these imply $w^i\in A$ for every $i\in \Z$. If $x<f(c_t)$, then $(c_t,x)\notin A$, since
\begin{align*}
\angles{(c_t,x)\big|v}&=(b_{k+1}-b_{k})c_t+(a_{k}-a_{k+1})x
>(b_{k+1}-b_{k})c_t+(a_{k}-a_{k+1})f(c_t) \\
&=a_kb_{k+1}-a_{k+1}b_k=\angles{w^k|v}.
\end{align*}
Since $A$ is convex, we have $(y,f(y)),(z,f(z))\in A$ and $((1-t)y+tz,(1-t)f(y)+tf(z))\in A$, hence $(1-t)f(y)+tf(z)\ge f((1-t)y+tz)$. Thus, we obtain convexity. Concavity follows similarly.

Assume $b_n\ge b_{n+1}$ and $x<y$. If $a_k\le x<y<a_{k+1}$ for some $k$, then
$$f(x)-f(y)=\frac{y-x}{a_{k+1}-a_{k}}b_{k}+\frac{x-y}{a_{k+1}-a_{k}}b_{k+1}\ge 0$$
and $f(x)\ge f(a_{k+1})$ is obvious, thus the claim follows. Similarly, we obtain strict inequality if $b_i>b_{i+1}$, which proves \eqref{enum:generalisation-2}. We obtain \eqref{enum:generalisation-3} similarly.
Both \eqref{enum:generalisation-4} and \eqref{enum:generalisation-5} are clear by the definition of $f$.
\end{proof}

\begin{lemma}\label{lemma:upper-bound-for-functions-of-certain-type}
Let $1\le a<\infty$. If $g:\N_+\to \R$ is a function such that for all $\N\in \N_+$
$$0\le g(N)\le \frac{2a-1}{2a+N-2},$$
then for $h:[1,\infty)\to \R$,
$$h(\lambda)=(\floor{\lambda}+1-\lambda)g(\floor{\lambda})+(\lambda-\floor{\lambda})g(\floor{\lambda}+1)$$
it holds for all $\lambda\ge 1$ that
$$0\le h(\lambda)\le \frac{2a}{2a+\lambda-1}.$$
\end{lemma}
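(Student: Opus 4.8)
The plan is to exploit the fact that, on each unit interval, $h$ is merely the affine interpolation of two consecutive values of $g$. Writing $N=\floor{\lambda}\in\N_+$ and $t=\lambda-N\in[0,1)$, so that $\floor{\lambda}+1-\lambda=1-t$ and $\lambda-\floor{\lambda}=t$, we have $h(\lambda)=(1-t)g(N)+t\,g(N+1)$. The lower bound $h(\lambda)\ge 0$ is then immediate: the interpolation weights $1-t,t$ and the values $g(N),g(N+1)$ are all non-negative by hypothesis. In particular, at integer $\lambda=N$ the second term vanishes and $h(N)=g(N)$, the $t=0$ case of what follows.

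For the upper bound, first I would substitute the hypothesised bounds $g(N)\le\frac{2a-1}{2a+N-2}$ and $g(N+1)\le\frac{2a-1}{2a+N-1}$, reducing the claim to the purely arithmetic inequality
$$(1-t)\frac{2a-1}{2a+N-2}+t\,\frac{2a-1}{2a+N-1}\le\frac{2a}{2a+N+t-1}$$
for all $N\in\N_+$, $a\ge 1$, and $t\in[0,1]$. Setting $D_1=2a+N-2$ and $D_2=2a+N-1=D_1+1$, both strictly positive since $D_1\ge 2a-1\ge 1$, the left-hand side collapses via $D_2-D_1=1$ to $(2a-1)\frac{D_2-t}{D_1D_2}$, while the right-hand side is $\frac{2a}{D_2+t}$.

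The final step is a cross-multiplication, legitimate because every denominator is positive. Using $D_1D_2=D_2^2-D_2$ and cancelling, the inequality reduces to
$$-(2a-1)\,t^2\le D_2(D_2-2a)=D_2(N-1),$$
the last equality holding because $D_2-2a=N-1$. Since $2a-1\ge 1>0$ the left-hand side is non-positive, whereas the right-hand side is non-negative for every $N\in\N_+$; hence the inequality holds and the proof is complete. I expect the main, and essentially only, obstacle to be spotting this one-line reduction: a naive attempt to compare $h$ with $\frac{2a}{2a+\lambda-1}$ through convexity fails, because the affine interpolant of the bounds on $g$ lies \emph{above} the continuous bound $\frac{2a-1}{2a+\lambda-2}$ (a convex function sits below its chords), so no sandwich is available. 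What rescues the estimate is precisely the telescoping $D_2=D_1+1$ that is available within a single unit interval, which produces the clean numerator $D_2-t$ and the decisive slack $D_2(N-1)\ge 0$.
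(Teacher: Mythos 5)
Your proof is correct and follows essentially the same route as the paper: both substitute the term-wise bounds on $g(\floor{\lambda})$ and $g(\floor{\lambda}+1)$, combine the affine interpolation into the single expression $(2a-1)\,\frac{2a+2\floor{\lambda}-\lambda-1}{(2a+\floor{\lambda}-2)(2a+\floor{\lambda}-1)}$, and then compare it with $\frac{2a}{2a+\lambda-1}$ by elementary algebra. The only difference is in the final verification: the paper factors the comparison into the two monotone inequalities $\frac{2a-1}{2a+N-2}\le\frac{2a}{2a+N-1}$ and $\frac{2a+2N-\lambda-1}{2a+N-1}\le\frac{2a+N-1}{2a+\lambda-1}$, whereas you cross-multiply once and reduce to $-(2a-1)t^2\le D_2(N-1)$ --- the same computation organised slightly differently.
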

\begin{proof}
If $\lambda\ge 1$, $\lambda\in [N,N+1)$, where $N\in \N_+$
\begin{align*}
h(\lambda)&=(N+1-\lambda) g(N)+(\lambda-N)g(N+1) \\
&\le (N+1-\lambda)\frac{2a-1}{2a+N-2}+(\lambda-N)\frac{2a-1}{2a+N-1} \\
&=\frac{2a-1}{2a+N-2}\frac{2a+2N-\lambda-1}{2a+N-1} 
\le \frac{2a}{2a+N-1}\frac{2a+N-1}{2a+\lambda-1} =\frac{2a}{2a+\lambda-1}.	\qedhere
\end{align*}
\end{proof}

\begin{lemma}\label{lemma:extended-function}
Let $b> -1$ and $f:\N_+\to \R$,
$f(N)=(b+N)^{-1} $,
then for $k:[1,\infty)\to \R,$
$$k(\lambda)=(\floor{\lambda}+1-\lambda)f(\floor{\lambda})+(\lambda-\floor{\lambda})f(\floor{\lambda}+1)$$
it holds for all $\lambda\ge 1$ that
\begin{align*}
k(\lambda)&=\frac{b+2\floor{\lambda}-\lambda+1}{(b+\floor{\lambda}+1)(b+\floor{\lambda})}.
\end{align*}
\end{lemma}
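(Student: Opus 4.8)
The statement is a direct algebraic identity, so the plan is simply to evaluate $k(\lambda)$ in closed form by substituting the definition of $f$ and simplifying. First I would fix $\lambda \ge 1$ and abbreviate $N = \floor{\lambda}$, so that $f(\floor{\lambda}) = (b+N)^{-1}$ and $f(\floor{\lambda}+1) = (b+N+1)^{-1}$. Since $\lambda \ge 1$ forces $N \ge 1$, and $b > -1$, both denominators satisfy $b+N \ge b+1 > 0$ and $b+N+1 > 0$, so the expressions are well-defined and no division issues arise. With this notation the definition of $k$ reads $k(\lambda) = (N+1-\lambda)(b+N)^{-1} + (\lambda-N)(b+N+1)^{-1}$.

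Next I would place the two fractions over the common denominator $(b+N)(b+N+1)$, obtaining
\[
  k(\lambda) = \frac{(N+1-\lambda)(b+N+1) + (\lambda-N)(b+N)}{(b+N)(b+N+1)}.
\]
The one simplification worth isolating is to collect the numerator according to powers of $\lambda$. The coefficient of $\lambda$ is $-(b+N+1) + (b+N) = -1$, so the entire $\lambda$-dependence of the numerator collapses to the single term $-\lambda$. The $\lambda$-independent part is $(N+1)(b+N+1) - N(b+N)$, which expands and telescopes to $b + 2N + 1$. Hence the numerator equals $b + 2N + 1 - \lambda = b + 2\floor{\lambda} - \lambda + 1$, and substituting back in $N = \floor{\lambda}$ gives exactly the claimed formula.

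There is no genuine obstacle here beyond careful bookkeeping in the expansion; the whole computation is an affine identity in $\lambda$ on the interval $[N, N+1)$ with $N = \floor{\lambda}$ held fixed, and the grouping by powers of $\lambda$ makes the cancellation transparent. I would present the argument in this order precisely to avoid expanding the quadratic terms twice, letting the coefficient-of-$\lambda$ cancellation do the work.
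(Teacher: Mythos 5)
Your proof is correct and is essentially the same as the paper's: both put the two terms over the common denominator $(b+\floor{\lambda})(b+\floor{\lambda}+1)$ and simplify the resulting numerator to $b+2\floor{\lambda}-\lambda+1$, differing only in how the cancellation is bookkept (you collect coefficients of $\lambda$, the paper regroups the interpolation weights). The additional remark on well-definedness of the denominators is a harmless bonus.
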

\begin{proof}
We have
\begin{align*}
k(\lambda)&=(\floor{\lambda}+1-\lambda)\frac{1}{b+\floor{\lambda}}+(\lambda-\floor{\lambda})\frac{1}{b+\floor{\lambda}+1} \\
&=(\floor{\lambda}+1-\lambda)\frac{b+\floor{\lambda}+1}{(b+\floor{\lambda}+1)(b+\floor{\lambda})}+(\lambda-\floor{\lambda})\frac{b+\floor{\lambda}}{(b+\floor{\lambda}+1)(b+\floor{\lambda})} \\
&=\frac{b+\floor{\lambda}+1}{(b+\floor{\lambda}+1)(b+\floor{\lambda})}+\frac{\floor{\lambda}-\lambda}{(b+\floor{\lambda}+1)(b+\floor{\lambda})} =\frac{b+2\floor{\lambda}-\lambda+1}{(b+\floor{\lambda}+1)(b+\floor{\lambda})}.	\qedhere
\end{align*}
\end{proof}

\section{Auxiliaries for the ergodicity of i-SIR}\label{app:ergodicity-auxiliaries}
Here we list some properties related to the transition probability of Algorithm \ref{alg:continuous-isir}. The total variation distance between two probabilities $\mu$ and $\nu$ is denoted by
$$d_\tv(\mu,\nu) = \frac{1}{2}\sup_{\|f\|_\infty \le 1} |\mu(f)-\nu(f)| = \sup_{A} | \mu(A) - \nu(A)|.$$

\begin{lemma}\label{lemma:transition-probability-inequality}
Let $M,N\in \N_+$, $M<N$, then
\begin{align*}
|P_M(x,A)-P_{N}(x,A)|\le \frac{N-M}{N-1}.
\end{align*}
\end{lemma}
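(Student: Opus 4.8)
The plan is to realise both transition kernels through a single collection of shared proposals and to bound the total variation distance by the probability that a natural coupling of the two selection steps disagrees. Since $|P_M(x,A)-P_N(x,A)|\le d_\tv\big(P_M(x,\uarg),P_N(x,\uarg)\big)$ for every measurable $A$, it suffices to control this one coupling uniformly in $A$.

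First I would set $Y^1=x$ and draw $Y^2,\ldots,Y^N\sim q$ independently, so that $Y^1,\ldots,Y^M$ serve as the proposals of $P_M$ and all $N$ of them as the proposals of $P_N$. Writing $S_k=\sum_{j=1}^k w(Y^j)$, the $M$-chain selects index $i\in\{1,\ldots,M\}$ with probability $w(Y^i)/S_M$ and the $N$-chain selects $i\in\{1,\ldots,N\}$ with probability $w(Y^i)/S_N$. Because $S_N\ge S_M$, for each $i\le M$ we have $w(Y^i)/S_N\le w(Y^i)/S_M$, so the maximal coupling of the two categorical laws (embedding the $M$-law into $\{1,\ldots,N\}$ by padding with zeros) agrees with conditional probability $\sum_{i=1}^M w(Y^i)/S_N=S_M/S_N$. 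Hence the two outputs coincide unless the $N$-chain picks one of the extra indices, which has conditional probability $\sum_{i=M+1}^N w(Y^i)/S_N$, giving
\begin{equation*}
d_\tv\big(P_M(x,\uarg),P_N(x,\uarg)\big)\le \E\!\left[\frac{\sum_{i=M+1}^N w(Y^i)}{\sum_{j=1}^N w(Y^j)}\right].
\end{equation*}
The same bound also follows without invoking a coupling, by expanding $P_N(x,A)-P_M(x,A)$ as an expectation and observing that its two contributions, $\sum_{i=1}^M w(Y^i)\charfun{Y^i\in A}\big(S_N^{-1}-S_M^{-1}\big)\le 0$ and $\sum_{i=M+1}^N w(Y^i)S_N^{-1}\charfun{Y^i\in A}\ge 0$, have opposite signs and are each bounded in modulus by the right-hand integrand.

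It then remains to estimate this expectation, and here I would follow the exchangeability argument already used in Lemma \ref{lemma:lower-and-upper-bound-for-rejection}. The variables $Y^2,\ldots,Y^N$ are exchangeable (the fixed first coordinate $Y^1=x$ is excluded from the symmetry), so the $N-M$ terms $\E[w(Y^i)/S_N]$, $i\in\{M+1,\ldots,N\}$, all equal $\E[w(Y^2)/S_N]$, and likewise the $N-1$ terms for $j\in\{2,\ldots,N\}$ are equal. Since $\sum_{j=2}^N w(Y^j)\le S_N$, summing the latter yields $(N-1)\,\E[w(Y^2)/S_N]\le 1$, whence $\E[w(Y^2)/S_N]\le 1/(N-1)$ and the expectation above is at most $(N-M)/(N-1)$, as claimed.

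The main point to get right is precisely this bookkeeping around the fixed coordinate $Y^1=x$: exchangeability may be applied only among $Y^2,\ldots,Y^N$, which is exactly what forces the denominator $N-1$ rather than $N$. A minor technical point is the convention $0/0=0$ on the null event $\{S_N=0\}$; there both integrands vanish, so the estimate is unaffected.
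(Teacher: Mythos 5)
Your proposal is correct, and both of your routes lead to the same intermediate bound $\E\big[\sum_{i=M+1}^N w(Y^i)/S_N\big]$ that the paper's proof also passes through, after which your exchangeability argument (symmetry only among $Y^2,\ldots,Y^N$, forcing the denominator $N-1$) is identical to the paper's concluding step. Where you differ is in how that intermediate bound is reached: the paper works purely inside the integral, writing the difference as $\frac{B_MC_N-B_NC_M}{C_MC_N}$ with $B_n=\sum_{i\le n}w(z_i)\charfun{z_i\in A}$, $C_n=\sum_{i\le n}w(z_i)$, expanding the numerator as $\sum_{i\le M}\sum_{j>M}w(z_i)w(z_j)[\charfun{z_i\in A}-\charfun{z_j\in A}]$ and bounding each bracket by $1$, whereas your primary argument couples the two selection steps on shared proposals and reads off the disagreement probability $\sum_{i>M}w(Y^i)/S_N$ of the maximal coupling of the two categorical laws. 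The coupling view is more conceptual: it bounds $d_\tv\big(P_M(x,\uarg),P_N(x,\uarg)\big)$ uniformly in $A$ in one stroke and makes transparent why the bound is a "probability of using a fresh proposal." The paper's algebraic route (which your alternative opposite-signs decomposition essentially reproduces, just organised as two signed terms rather than one combined fraction) stays entirely within elementary integral manipulations and requires no coupling machinery. Your handling of the technical edge cases (the fixed coordinate $Y^1=x$ excluded from the symmetry, and the $0/0=0$ convention on $\{S_N=0\}$) is sound.
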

\begin{proof}
Let us denote $z_1=x$, $B_n=\sum_{i=1}^nw(z_i)\charfun{z_i\in A}$ and $C_n=\sum_{i=1}^nw(z_i)$, then
\begin{align*}
&|P_M(x,A)-P_{N}(x,A)| \\
&\quad=\bigg|\int_{\W{M}\times \X^{N-M}}\frac{B_MC_N-B_NC_M}{C_MC_N}\bigg( \prod_{n=2}^{N}q(\ud z_n)\bigg)\bigg| \\
&\quad=\bigg|\int_{\W{M}\times \X^{N-M}}\frac{\sum_{i=1}^M\sum_{j=M+1}^Nw(z_i)w(z_j)[\charfun{z_i\in A}-\charfun{z_j\in A}]}{C_MC_N}\bigg( \prod_{n=2}^{N}q(\ud z_n)\bigg)\bigg| \\
&\quad\le \int_{\W{M}\times \X^{N-M}}\frac{\sum_{i=1}^M\sum_{j=M+1}^Nw(z_i)w(z_j)}{C_MC_N}\bigg( \prod_{n=2}^{N}q(\ud z_n)\bigg) \\
&\quad=\int_{\W{M}\times \X^{N-M}}\frac{\sum_{j=M+1}^Nw(z_j)}{C_N}\bigg( \prod_{n=2}^{N}q(\ud z_n)\bigg)
=\int_{\W{N}}\frac{\sum_{j=M+1}^Nw(z_j)}{C_N}\bigg( \prod_{n=2}^{N}q(\ud z_n)\bigg) \\
&\quad=\frac{N-M}{N-1}\int_{\W{N}}\frac{\sum_{i=2}^{N}w(z_i)}{C_N}\bigg( \prod_{n=2}^{N}q(\ud z_n)\bigg) \le \frac{N-M}{N-1},
\end{align*}
where the last equality follows since $z_2,\dots,z_{N}$ are exchangeable under the integral.
\end{proof}

\begin{lemma}\label{lemma:auxiliary-1}
It holds for all $\lambda,\lambda'\in [1,\infty)$ and $x\in \Sp$ that
$$
     d_\tv\big(P_\lambda(x,\uarg), P_{\lambda'}(x,\uarg)\big) \le \frac{6| \lambda - \lambda'|}{\max\braces{\lambda,\lambda'}}.
$$
\end{lemma}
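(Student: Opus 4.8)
The plan is to combine two elementary estimates: the trivial bound $d_\tv\le 1$, valid for any two probability measures, together with a ``Lipschitz-type'' estimate $d_\tv\big(P_\lambda(x,\uarg),P_{\lambda'}(x,\uarg)\big)\le (\lambda'-\lambda)/\floor{\lambda}$ that I would extract from Lemma \ref{lemma:transition-probability-inequality}. Throughout I fix $x\in\Sp$ and a measurable test set $A$, use $d_\tv\big(P_\lambda(x,\uarg),P_{\lambda'}(x,\uarg)\big)=\sup_A|P_\lambda(x,A)-P_{\lambda'}(x,A)|$, and assume without loss of generality that $1\le\lambda\le\lambda'$, so that $\max\braces{\lambda,\lambda'}=\lambda'$.

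First I would establish the Lipschitz-type bound. By the definition of the continuous relaxation in \eqref{equation:continuous-versions}, the map $t\mapsto P_t(x,A)$ is the continuous, piecewise-linear interpolation of the integer values $P_N(x,A)$: on $[N,N+1]$ it equals $\beta(t)P_N(x,A)+(1-\beta(t))P_{N+1}(x,A)$ with $\beta(t)=N+1-t$. Hence it is differentiable off the integers with one-sided slope $P_{\floor{t}+1}(x,A)-P_{\floor{t}}(x,A)$, and the fundamental theorem of calculus gives
\[
P_{\lambda'}(x,A)-P_\lambda(x,A)=\int_\lambda^{\lambda'}\big(P_{\floor{t}+1}(x,A)-P_{\floor{t}}(x,A)\big)\,\ud t .
\]
Applying Lemma \ref{lemma:transition-probability-inequality} to the consecutive integers $\floor{t}<\floor{t}+1$ (legitimate since $\floor{t}\ge 1$ for $t\ge\lambda\ge 1$) bounds the integrand in absolute value by $1/\floor{t}$; as $\floor{t}\ge\floor{\lambda}$ for $t\ge\lambda$, this is at most $1/\floor{\lambda}$. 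Integrating yields $|P_{\lambda'}(x,A)-P_\lambda(x,A)|\le(\lambda'-\lambda)/\floor{\lambda}$ uniformly in $A$, and therefore $d_\tv\big(P_\lambda(x,\uarg),P_{\lambda'}(x,\uarg)\big)\le(\lambda'-\lambda)/\floor{\lambda}$.

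It then remains to convert the two bounds $(\lambda'-\lambda)/\floor{\lambda}$ and $1$ into the claimed $6(\lambda'-\lambda)/\lambda'$ by splitting into two regimes. If $\lambda'\le 6\floor{\lambda}$, then $1/\floor{\lambda}\le 6/\lambda'$, so the Lipschitz branch already gives $d_\tv\le 6(\lambda'-\lambda)/\lambda'$. Otherwise $\lambda'>6\floor{\lambda}$, and using the elementary inequality $\floor{\lambda}\ge\lambda/2$ (valid for all $\lambda\ge 1$) this forces $\lambda'>3\lambda$, whence $\lambda'-\lambda>\tfrac{2}{3}\lambda'$ and $6(\lambda'-\lambda)/\lambda'>4\ge 1\ge d_\tv$. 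In both cases the desired bound holds.

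The only genuinely delicate point is the first step: one must confirm that $t\mapsto P_t(x,A)$ is exactly the piecewise-linear interpolant with the stated one-sided slopes (immediate from \eqref{equation:continuous-versions}) and that Lemma \ref{lemma:transition-probability-inequality} may be invoked pointwise under the integral. The regime split is then routine; note that the constant $6$ carries ample slack (the Lipschitz branch only requires $\lambda'\le 6\floor{\lambda}$, and the trivial branch has the crude margin $4\ge 1$), which is precisely why no finer case analysis on $\floor{\lambda'}-\floor{\lambda}$ is necessary.
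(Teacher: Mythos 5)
Your proof is correct, but it is organised differently from the paper's. Both arguments rest on the same key ingredient, Lemma \ref{lemma:transition-probability-inequality}, yet they deploy it differently. The paper splits into two cases: when $\lambda,\lambda'$ lie in the same integer interval it uses the interpolation form directly to get the bound $|\lambda'-\lambda|/N \le 2|\lambda'-\lambda|/\lambda'$, and when they lie in different intervals it chains a triangle inequality through the integer points $N+1$ and $N'$ (invoking Lemma \ref{lemma:transition-probability-inequality} also for non-consecutive integers) and then works through a sequence of algebraic estimates to arrive at the constant $6$. You instead prove a single uniform Lipschitz estimate $d_\tv\big(P_\lambda(x,\uarg),P_{\lambda'}(x,\uarg)\big)\le (\lambda'-\lambda)/\floor{\lambda}$ by writing $P_{\lambda'}(x,A)-P_\lambda(x,A)$ as the integral of the piecewise-constant slope $P_{\floor{t}+1}(x,A)-P_{\floor{t}}(x,A)$ (legitimate, since $t\mapsto P_t(x,A)$ is continuous and piecewise affine, hence absolutely continuous on compacts) and bounding each slope by $1/\floor{t}\le 1/\floor{\lambda}$ via Lemma \ref{lemma:transition-probability-inequality} for consecutive integers only; you then convert to the stated form by a dichotomy, using the Lipschitz branch when $\lambda'\le 6\floor{\lambda}$ and the trivial bound $d_\tv\le 1$ together with $\floor{\lambda}\ge\lambda/2$ otherwise. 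Your route avoids the paper's case analysis across interval boundaries and its algebraic manipulations, at the price of the regime-split trick; the paper's route, in exchange, never needs the trivial bound $d_\tv \le 1$ and tracks how the estimate degrades explicitly in each configuration of $\lambda,\lambda'$. Both yield the same constant, and all steps in your argument (the FTC representation, the applicability of Lemma \ref{lemma:transition-probability-inequality} with $M=\floor{t}\ge 1$, and the elementary inequalities in the dichotomy) check out.
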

\begin{proof}
Let $N,N'\in \N_+$. 
If $N\le \lambda < \lambda' \le N+1$, then $1/N\le 2/\lambda'$ and
\begin{align*}
&d_\tv\big( P_{\lambda'}(x,\uarg) ,P_{\lambda}(x,\uarg) \big) \\
&\quad=\sup_A|P_{\lambda'}(x,A)-P_{\lambda}(x,A)|
 =\sup_A|(\lambda-\lambda')P_{N}(x,A)+(\lambda'-\lambda)P_{N+1}(x,A)| \\
&\quad=|\lambda'-\lambda|\sup_A|P_{N+1}(x,A)-P_{N}(x,A)| 
\le \frac{1}{N}|\lambda'-\lambda|,
\end{align*}
where the last inequality follows from Lemma \ref{lemma:transition-probability-inequality}.
If $N<\lambda \le N +1\le N'<\lambda' \le N'+1$, then by the previous inequality, Lemma \ref{lemma:transition-probability-inequality} and triangle inequality
\begin{align*}
&d_\tv\big( P_{N'}(x,\uarg), P_{N}(x,\uarg)\big) \\
&\quad \le d_\tv\big(P_{\lambda'}(x,\uarg),P_{N'}(x,\uarg) \big)+d_\tv\big(P_{N'}(x,\uarg),P_{N+1}(x,\uarg)\big)+d_\tv\big( P_{N+1}(x,\uarg), P_{\lambda}(x,\uarg)\big) \\
&\quad \le |\lambda'-N'|\frac{1}{N'}+|N'-(N+1)|\frac{1}{N'-1}+|N+1-\lambda|\frac{1}{N}\le \frac{\lambda'-(N+1)}{N'-1}+\frac{N+1-\lambda}{N} \\
&\quad =\frac{N(\lambda'-\lambda)+(N'-1-N)(N+1-\lambda)}{N(N'-1)} \quad \le \frac{N(\lambda'-\lambda)+N'-(N+1)}{N(N'-1)} \\
&\quad \le \frac{N(\lambda'-\lambda)+\lambda'-\lambda}{N(N'-1)}
= \frac{(N+1)(\lambda'-\lambda)}{N(N'-1)} \le \frac{2(\lambda'-\lambda)}{N'-1}\le\frac{6(\lambda'-\lambda)}{N'+1}
\le \frac{6(\lambda'-\lambda)}{\lambda'}.
\end{align*}
Therefore, we obtain the claim.
\end{proof}

\begin{lemma}\label{lemma:i-sir-probability-inequality}
If Assumption \ref{a:bounded-weights} holds, then for all $x\in \Sp$ and $\lambda\ge 1$
\begin{align*}
\frac{\lambda-1}{2\hat{w}+\lambda-1}\pi(A)&\le P_\lambda(x,A)\le \pi(A)+\pi(A\C)\frac{2\hat{w}}{2\hat{w}+\lambda-1}.
\end{align*}
\end{lemma}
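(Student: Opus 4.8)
The plan is to prove only the lower (minorisation) bound
$$P_\lambda(x,A) \ge \frac{\lambda-1}{2\hat{w}+\lambda-1}\pi(A)$$
for every measurable $A$, since the upper bound then follows by complementation: applying the minorisation to $A\C$ and using $P_\lambda(x,A)=1-P_\lambda(x,A\C)$ together with $\pi(A\C)=1-\pi(A)$ gives
$$P_\lambda(x,A) \le 1-\frac{\lambda-1}{2\hat{w}+\lambda-1}\pi(A\C) = \pi(A)+\pi(A\C)\frac{2\hat{w}}{2\hat{w}+\lambda-1}.$$
If $\pi(A)=0$ the minorisation is trivial (the right-hand side vanishes and $P_\lambda\ge 0$), so in the interpolation step below I may assume $\pi(A)>0$.

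First I would establish the minorisation for integer $N$, in the sharper form $P_N(x,A) \ge \frac{N-1}{2\hat{w}+N-2}\pi(A)$ for $N\ge 2$ (the case $N=1$ is trivial, since $P_1(x,\uarg)=\delta_x$). Dropping the nonnegative $i=1$ summand in \eqref{def:isir-transition-probability} and using that $z_2,\dots,z_N$ are exchangeable under the integral collapses the remaining $N-1$ terms, giving, with $z_1=x$,
$$P_N(x,A)\ge (N-1)\int \frac{w(z_2)\charfun{z_2\in A}}{w(z_1)+w(z_2)+\sum_{j=3}^N w(z_j)}\bigg(\prod_{n=2}^N q(\ud z_n)\bigg).$$
Bounding the first two weights by $w(z_1)+w(z_2)\le 2\hat{w}$ and then integrating out $z_3,\dots,z_N$ by Jensen's inequality (the map $t\mapsto (2\hat{w}+t)^{-1}$ is convex and $\E_q[w]=1$, so the $N-2$ fresh weights contribute $N-2$ to the denominator; for $N=2$ there is nothing to integrate) yields $P_N(x,A)\ge \frac{N-1}{2\hat{w}+N-2}\int_A w(z_2)q(\ud z_2)$, and $w(z_2)q(\ud z_2)=\pi(\ud z_2)$ gives $\int_A w\,\ud q=\pi(A)$ as required.

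For general $\lambda$ I would pass through the interpolation $P_\lambda=\beta(\lambda)P_{\floor{\lambda}}+(1-\beta(\lambda))P_{\floor{\lambda}+1}$. Writing $N=\floor{\lambda}$ and $g(N)=\big(\pi(A)-P_N(x,A)\big)/\pi(A)$, the integer bound reads exactly $g(N)\le \frac{2\hat{w}-1}{2\hat{w}+N-2}$, and since $P_\lambda$ is the stated convex combination, the normalised deficiency $\big(\pi(A)-P_\lambda(x,A)\big)/\pi(A)$ is precisely the linear interpolate $\beta(\lambda)g(N)+(1-\beta(\lambda))g(N+1)$. Applying the interpolation estimate of Lemma \ref{lemma:upper-bound-for-functions-of-certain-type} with $a=\hat{w}$ (recall $\hat{w}\ge 1$) then gives $\big(\pi(A)-P_\lambda(x,A)\big)/\pi(A)\le \frac{2\hat{w}}{2\hat{w}+\lambda-1}$, which rearranges to the claimed minorisation.

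The main obstacle is twofold. First, the naive integer constant $\frac{N-1}{2\hat{w}+N-1}$ is too weak: because $\lambda\mapsto \frac{\lambda-1}{2\hat{w}+\lambda-1}$ is concave, linear interpolation of the integer bounds would fall strictly below it, and only the sharper denominator $2\hat{w}+N-2$ creates enough slack — exactly the gap encoded in Lemma \ref{lemma:upper-bound-for-functions-of-certain-type}. Second, $g(N)$ can be negative (for instance when $x\in A$ and $\pi(A)$ is small), so Lemma \ref{lemma:upper-bound-for-functions-of-certain-type} cannot be invoked verbatim; however, the computation in its proof that yields $h(\lambda)\le \frac{2\hat{w}}{2\hat{w}+\lambda-1}$ uses only the upper bounds on $g(N)$ and $g(N+1)$ and the nonnegativity of the interpolation weights $\beta(\lambda)$ and $1-\beta(\lambda)$, and hence carries over unchanged.
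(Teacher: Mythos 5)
Your proof is correct, and its core is the same as the paper's: the sharpened integer bound $P_N(x,A)\ge \frac{N-1}{2\hat{w}+N-2}\pi(A)$ obtained by dropping the $i=1$ term, exchangeability, and Jensen's inequality, followed by interpolation between $\floor{\lambda}$ and $\floor{\lambda}+1$. Where you genuinely differ is in the organisation. The paper proves the two bounds separately: for the lower bound it interpolates by a short direct computation (weakening both integer denominators to $2\hat{w}+N-1$, so the numerators combine to $\lambda-1$, then using $N\le\lambda$), while for the upper bound it redoes the integral estimate for the complement and only then invokes Lemma \ref{lemma:upper-bound-for-functions-of-certain-type}. You instead prove only the minorisation, routing the interpolation through Lemma \ref{lemma:upper-bound-for-functions-of-certain-type} applied to the normalised deficiency $\big(\pi(A)-P_N(x,A)\big)/\pi(A)$, and get the upper bound for free by complementation. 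Since $1-\frac{N-1}{2\hat{w}+N-2}=\frac{2\hat{w}-1}{2\hat{w}+N-2}$ and $1-\frac{\lambda-1}{2\hat{w}+\lambda-1}=\frac{2\hat{w}}{2\hat{w}+\lambda-1}$, your interpolation inequality and the paper's are exactly complementary, so the two arguments are mathematically equivalent; what yours buys is the elimination of the paper's second (duplicated) integral computation. Two details you handle that the paper glosses over: you dispose of the degenerate case $\pi(A)=0$ before dividing, and you correctly note that the hypothesis $g\ge 0$ in Lemma \ref{lemma:upper-bound-for-functions-of-certain-type} is not needed for its upper-bound conclusion --- a point the paper itself silently relies on, since its own implicit $g(N)=\big(P_N(x,A)-\pi(A)\big)/\pi(A\C)$ can likewise be negative (and is undefined when $\pi(A\C)=0$). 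Finally, your observation that the cruder constant $\frac{N-1}{2\hat{w}+N-1}$ could not survive linear interpolation, by concavity of $\lambda\mapsto\frac{\lambda-1}{2\hat{w}+\lambda-1}$, correctly identifies why the denominator $2\hat{w}+N-2$ is essential; the paper's computation exploits exactly the same slack.
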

\begin{proof}
Let us denote $z_1=x$. Let $N\in \N_+$, $N\ge 2$, then by dropping the first term $i=1$ from \eqref{def:isir-transition-probability} an application of Jensen's inequality yields \cite[e.g.][]{andrieu-lee-vihola}
\begin{align*}
P_N(z_1,A)
&\ge \frac{N-1}{2\hat{w}+N-2}\pi(A),
\end{align*}
which also holds for $N=1$, since  $P_1(z_1,A)\ge 0$. If $\lambda\ge 1$, $N\in \N_+$, $N\le \lambda<N+1$, then
\begin{align*}
P_\lambda(z_1,A)&\ge (N+1-\lambda)\pi(A)\frac{N-1}{2\hat{w}+N-2}+(\lambda-N)\pi(A)\frac{N}{2\hat{w}+N-1} \\
&\ge \pi(A)(N+1-\lambda)\frac{N-1}{2\hat{w}+N-1}+\pi(A)(\lambda-N)\frac{N}{2\hat{w}+N-1} \\
&= \pi(A)\frac{\lambda-1}{2\hat{w}+N-1}\ge \pi(A)\frac{\lambda-1}{2\hat{w}+\lambda-1}.
\end{align*}
Let $N\in N_+$, $N\ge 2$, then
\begin{align*}
&P_N(z_1,A) \\
&\quad =\int_{\X^{N-1}}\frac{\sum_{i=1}^Nw(z_i)\charfun{z_i\in A}}{\sum_{k=1}^Nw(z_k)} \prod_{n=2}^Nq(\ud z_n) 
=1-\int_{\X^{N-1}}\frac{\sum_{i=1}^Nw(z_i)\charfun{z_i\notin A}}{\sum_{k=1}^Nw(z_k)} \prod_{n=2}^Nq(\ud z_n) \\
&\quad =1-(N-1)\int_{\X^{N-1}}\frac{w(z_2)\charfun{z_2\notin A}}{\sum_{k=1}^Nw(z_k)} \prod_{n=2}^Nq(\ud z_n) \\
&\quad \le 1-(N-1)\int_{\X}\frac{w(z_2)\charfun{z_2\notin A}}{w(z_1)+w(z_2)+N-2} q(\ud z_2)
\le 1-(N-1)\int_{ \X}\frac{w(z_2)\charfun{z_2\notin A}}{2\hat{w}+N-2} q(\ud z_2) \\
&\quad = 1-(N-1)\frac{\pi(A\C)}{2\hat{w}+N-2} 
= \pi(A)+\pi(A\C)\frac{2\hat{w}-1}{2\hat{w}+N-2},
\end{align*}
where the first inequality follows from Jensen's inequality. The case $N=1$ is clear. It follows by Lemma \ref{lemma:upper-bound-for-functions-of-certain-type} in Appendix \ref{app:modification}, that for all $\lambda\ge 1$
\begin{equation*}
0\le P_\lambda(z_1,\braces{A})\le \pi(A)+\pi(A\C)\frac{2\hat{w}}{2\hat{w}+\lambda-1}.	\qedhere
\end{equation*}
\end{proof}

\begin{lemma}\label{lemma:auxiliary-2}
Let $s>1$. If Assumption \ref{a:bounded-weights} holds, then there exist constants $C<\infty$ and $\rho\in[0,1)$ such that for all $x\in \Sp$, $\lambda \in [s,\infty)$ and $k\ge 0$ holds
$$
d_\tv(P_\lambda^k(x, \uarg), \pi)
\le C \rho^k.
$$
\end{lemma}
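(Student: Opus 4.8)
The plan is to derive the bound directly from the uniform minorisation established in Lemma~\ref{lemma:i-sir-probability-inequality}, namely
$$P_\lambda(x,A) \ge \delta(\lambda)\,\pi(A), \qquad \delta(\lambda) = \frac{\lambda-1}{2\hat{w}+\lambda-1},$$
valid for every $x\in\Sp$ and $\lambda\ge 1$. Since $s>1$ we have $\delta(s)>0$, and a Doeblin-type minorisation of this form is well known to force geometric (here uniform) convergence to $\pi$; the only substantive point is to check that the resulting rate can be chosen \emph{independently} of $\lambda$ on $[s,\infty)$.

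First I would record the one-step contraction. Writing $\delta=\delta(\lambda)$, the minorisation lets us decompose $P_\lambda$ as a mixture
$$P_\lambda(x,A) = \delta\,\pi(A) + (1-\delta)\,R_\lambda(x,A),$$
where the residual $R_\lambda(x,A) = (1-\delta)^{-1}\big(P_\lambda(x,A)-\delta\pi(A)\big)$ is a genuine Markov kernel by the minorisation. A standard split-chain coupling then applies: couple a chain started at $x$ with a stationary chain by letting both regenerate from $\pi$ with probability $\delta$ at each step and sampling the residuals otherwise, so that the coupling time $T$ satisfies $\P(T>k)=(1-\delta)^k$, and the coupling inequality yields
$$d_\tv\big(P_\lambda^k(x,\uarg),\pi\big) \le \P(T>k) = \big(1-\delta(\lambda)\big)^k.$$
(Equivalently this is the classical uniform-ergodicity estimate for a whole-space small set, which could be cited directly.) I should note that iterating the one-step minorisation is legitimate because the chain remains in $\Sp$: the selection step picks a state of positive weight almost surely, and positive weight entails membership in $\Sp$, so $P_\lambda(x,\Sp)=1$ for $x\in\Sp$.

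Finally I would pass to uniformity in $\lambda$. The map $\lambda\mapsto \delta(\lambda)=1-\frac{2\hat{w}}{2\hat{w}+\lambda-1}$ is increasing, so $1-\delta(\lambda)$ is decreasing and
$$1-\delta(\lambda) \le 1-\delta(s) = \frac{2\hat{w}}{2\hat{w}+s-1} =: \rho$$
for every $\lambda\ge s$; since $s>1$ and $\hat{w}<\infty$ we have $\rho\in[0,1)$. Combining this with the contraction bound gives
$$d_\tv\big(P_\lambda^k(x,\uarg),\pi\big) \le \big(1-\delta(\lambda)\big)^k \le \rho^k$$
uniformly in $x\in\Sp$, $\lambda\ge s$ and $k\ge 0$, which is the claim with $C=1$. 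The main (and essentially only) obstacle is precisely this uniformity: it could in principle fail if the minorisation constant degenerated as $\lambda\to\infty$, but in fact $\delta(\lambda)\to 1$, so the worst rate occurs at the left endpoint $\lambda=s$, and monotonicity of $\delta$ pins the uniform rate to $\rho=1-\delta(s)$.
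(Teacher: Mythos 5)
Your proof is correct, and it rests on exactly the same foundation as the paper's: the minorisation of Lemma \ref{lemma:i-sir-probability-inequality} and the mixture decomposition $P_\lambda = \delta\,\pi + (1-\delta)\,(\text{residual kernel})$. The difference is only in how the geometric contraction is extracted from that decomposition. The paper fixes the \emph{uniform} constant $c=\frac{s-1}{2\hat{w}+s-1}$ at the outset (valid for all $\lambda\ge s$ by monotonicity of the minorisation constant), writes $P_\lambda = c\pi+(1-c)Q_\lambda$, and then argues analytically: it shows via the Hahn decomposition that any probability kernel is a contraction on signed measures in total variation, and iterates this to get $d_\tv(P_\lambda^k(x,\uarg),\pi)\le(1-c)^k$. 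You instead keep the $\lambda$-dependent constant $\delta(\lambda)$, invoke the classical split-chain/Doeblin coupling to get $(1-\delta(\lambda))^k$, and only then use monotonicity of $\delta$ to pass to the uniform rate $\rho = \frac{2\hat{w}}{2\hat{w}+s-1}$ --- the same $\rho$ and the same $C=1$ as the paper obtains. The coupling route is shorter if one is willing to cite the standard uniform-ergodicity estimate, while the paper's route is self-contained and avoids constructing the bivariate coupled process. Your side remark that $P_\lambda(x,\Sp)=1$ for $x\in\Sp$ (so the one-step minorisation can indeed be iterated) is a legitimate point of care that the paper leaves implicit by restricting throughout to $x\in\Sp$.
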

\begin{proof}
By the lower bound in Lemma \ref{lemma:i-sir-probability-inequality} we have for all $x\in \Sp$ and $\lambda\ge s$ a constant $c=\frac{s-1}{2\hat{w}+s-1}\in (0,1)$ such that
\begin{align*}
P_\lambda(x,A)=c\pi(A)+(1-c)Q_\lambda(x,A),
\end{align*}
where $Q_\lambda$ is probability kernel.
For probability kernel $R$ and signed measure $\nu$ on a measurable space $(\X,\mathcal{X})$ we have by Hahn decomposition theorem $\nu$-positive set $B^+$ and $\nu$-negative set $B^-$ for which $\X=B^+\cup B^-$ and if $G\in \mathcal{X}$, then
\begin{align*}
|\nu R(G)|&=|\int\nu(\ud x) R(x,G)|=|\int_{B^+}\nu(\ud x) R(x,G)+\int_{B^-}\nu(\ud x) R(x,G)| \\
&\le \max\Bigbraces{\Big|\int_{B^+}\nu(\ud x) R(x,G)\Big|,\Big|\int_{B^-}\nu(\ud x) R(x,G)\Big|}\le \max\braces{|\nu(B^+)|,|\nu(B^-)|} \\
&=\sup_{A}|\nu(A)|
\end{align*}
and thus
$$\sup_{A}|\nu R(A)|\le \sup_{A}|\nu(A)|,$$
which ensures for $x\in \Sp$ and $k\ge 1$
\begin{align*}
&\sup_A|P_\lambda^k(x,A)-\pi(A)| \\
&\quad=\sup_A|\delta_x P_\lambda^k(A)-\pi(A)|= \sup_A|\delta_x P_\lambda^{k-1} P_\lambda(A)-\pi P_\lambda^{k-1}P_\lambda(A)| \\
&\quad=\sup_A|c\pi(A)+(1-c)\delta_x P_\lambda^{k-1}Q_\lambda(A)-c\pi(A)-(1-c)\pi P_\lambda^{k-1}Q_\lambda(A)| \\
&\quad=(1-c)\sup_A|\delta_x P_\lambda^{k-1}Q_\lambda(A)-\pi P_\lambda^{k-1}Q_\lambda(A)|\le (1-c)\sup_A|\delta_x P_\lambda^{k-1}(A)-\pi P_\lambda^{k-1}(A)| \\
&\quad\le \dotsi \\
&\quad\le (1-c)^k\sup_A|\delta_x Q_\lambda(A)-\pi Q_\lambda(A)|\le (1-c)^k\sup_A|\delta_x(A)-\pi(A)|\le (1-c)^k.	\qedhere
\end{align*}
\end{proof}

\section{Cost function}\label{app:cost-function}
The following holds for affine cost function:
\begin{proposition}\label{proposition:lower-and-upper-bound-cost-function}
Let $c,u:(1,\infty)\to \R$, $c(\lambda)=a+b\lambda$ and $u(\lambda)=c(\lambda)\frac{d+\lambda-1}{\lambda-1}$, where $a\ge 0$, $b>0$ and $d>0$, then
\begin{enumerate}[(i)]
\item $u$ is strictly convex positive function, \label{enum:upper-bound-function-1}
\item $\lim_{\lambda\to \infty}u(\lambda)=\lim_{\lambda\to \infty}c(\lambda)=\infty$ and $\lim_{\lambda\to 1_+}u(\lambda)=\infty$, \label{enum:upper-bound-function-2}
\item $\lim_{\lambda\to \infty}[u(\lambda)-c(\lambda)]=bd$, \label{enum:upper-bound-function-3}
\item the minimum value of $u$ is $2\sqrt{bd(a+b)}+a+b+bd$ attained at $\sqrt{d(a/b+1)}+1$ and the function $c$ attains this value at
$2\sqrt{d(a/b+1)}+d+1$. \label{enum:upper-bound-function-4}
\end{enumerate}
\end{proposition}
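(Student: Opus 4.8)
The organising idea I would use is the substitution $t = \lambda - 1 \in (0,\infty)$, which renders $u$ completely transparent. Writing $c(\lambda) = (a+b) + bt$ and $\frac{d+\lambda-1}{\lambda-1} = \frac{d+t}{t} = 1 + \frac{d}{t}$, I would expand
\[
u(\lambda) = \big[(a+b) + bt\big]\Big(1 + \frac{d}{t}\Big) = (a+b+bd) + bt + \frac{(a+b)d}{t}.
\]
Since $a \ge 0$, $b > 0$ and $d > 0$, every term on the right is positive for $t > 0$, which gives positivity in \eqref{enum:upper-bound-function-1}; and because $\lambda \mapsto t$ is an affine shift, $u''(\lambda) = \frac{2(a+b)d}{t^3} > 0$, which gives strict convexity and completes \eqref{enum:upper-bound-function-1}.

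The claims in \eqref{enum:upper-bound-function-2} and \eqref{enum:upper-bound-function-3} then read off directly from this form. As $\lambda \to \infty$ we have $t \to \infty$, so the term $bt$ forces $u(\lambda) \to \infty$, while $c(\lambda) = a + b\lambda \to \infty$ is immediate; as $\lambda \to 1_+$ we have $t \to 0_+$, so the term $\frac{(a+b)d}{t}$ forces $u(\lambda) \to \infty$. For \eqref{enum:upper-bound-function-3}, subtracting $c(\lambda) = (a+b) + bt$ from the displayed form leaves $u(\lambda) - c(\lambda) = bd + \frac{(a+b)d}{t}$, whose limit as $t \to \infty$ is $bd$.

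For \eqref{enum:upper-bound-function-4} I would minimise $t \mapsto bt + \frac{(a+b)d}{t}$ over $t > 0$. Either the first-order condition $b - \frac{(a+b)d}{t^2} = 0$ or the AM--GM inequality shows the minimum of this part equals $2\sqrt{b(a+b)d} = 2\sqrt{bd(a+b)}$, attained at $t^\ast = \sqrt{(a+b)d/b} = \sqrt{d(a/b+1)}$. Adding back the constant $a+b+bd$ yields the minimum value $2\sqrt{bd(a+b)} + a + b + bd$ at $\lambda^\ast = t^\ast + 1 = \sqrt{d(a/b+1)}+1$. To locate where the strictly increasing function $c$ attains this same value, I would solve $a + b\lambda = 2\sqrt{bd(a+b)} + a + b + bd$ for $\lambda$; using the identity $b\sqrt{d(a/b+1)} = \sqrt{bd(a+b)}$ this rearranges to $\lambda = 2\sqrt{d(a/b+1)} + d + 1$.

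I do not anticipate a genuine obstacle here: once the substitution $t = \lambda - 1$ is made, all four parts reduce to elementary single-variable calculus and algebra. The only point demanding slight care is the bookkeeping in \eqref{enum:upper-bound-function-4}, in particular the repeated use of $b\sqrt{d(a/b+1)} = \sqrt{bd(a+b)}$ to match the stated closed forms for the minimiser, the minimum value, and the abscissa at which $c$ reaches it.
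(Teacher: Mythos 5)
Your proof is correct and follows essentially the same route as the paper's: both arguments reduce to elementary calculus on the explicit formula for $u$, establishing strict convexity via a positive second derivative and locating the minimiser from the first-order condition (your substitution $t=\lambda-1$ and the optional AM--GM step are cosmetic streamlinings of the same computation). All the algebra, including the identity $b\sqrt{d(a/b+1)}=\sqrt{bd(a+b)}$ used to match the stated closed forms, checks out.
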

\begin{proof}
Let $g_1,g_2:(1,\infty)\to \R$, 
$g_1(\lambda)=\frac{d+\lambda-1}{\lambda-1}$
and
$g_2(\lambda)=\lambda\frac{d+\lambda-1}{\lambda-1}$,
then
\begin{align*}
\frac{\partial}{\partial \lambda}g_1(\lambda)&=-\frac{d}{(\lambda-1)^2}, 
\qquad \frac{\partial^2}{\partial \lambda^2}g_1(\lambda)=\frac{2d}{(\lambda-1)^3}, \\
\frac{\partial}{\partial \lambda}g_2(\lambda)&=1-\frac{d}{(\lambda-1)^2}, 
\qquad \frac{\partial^2}{\partial \lambda^2}g_2(\lambda)=\frac{2d}{(\lambda-1)^3}.
\end{align*}
Therefore
\begin{align*}
u(\lambda)=ag_1(\lambda)+bg_2(\lambda)>0, 
\qquad
\frac{\partial}{\partial \lambda}u(\lambda)=b-\frac{d(b+a)}{(\lambda-1)^2}, 
\qquad
\frac{\partial^2}{\partial \lambda^2}u(\lambda)=\frac{2d(b+a)}{(\lambda-1)^3}>0,
\end{align*}
from which we obtain \eqref{enum:upper-bound-function-1}. The minimum value of $u$ can be solved from $\frac{\partial}{\partial \lambda}u(\lambda)$, which is at $\lambda_m=1+\sqrt{\frac{d(b+a)}{b}}$, thus the minimum value 
$$m=u(\lambda_m)=(a+b)\bigg(\sqrt{\frac{bd}{d(a+b)}}+1\bigg)+b\bigg(d+\sqrt{\frac{d(b+a)}{b}}\bigg)=2\sqrt{bd(a+b)}+a+b+bd$$
and if $c(\lambda)=m$, then $\lambda=\frac{m-a}{b}$ and we have \eqref{enum:upper-bound-function-4}. 
We obtain \eqref{enum:upper-bound-function-3} from
\begin{align*}
u(\lambda)-c(\lambda)=\frac{d(a+b\lambda)}{\lambda-1}\xrightarrow{\lambda\to \infty}bd,
\end{align*}
and \eqref{enum:upper-bound-function-2} is clear by the definitions.
\end{proof}

\begin{lemma}\label{lemma:cost-function-with-asymptotic-variance}
Let Assumption \ref{a:bounded-weights} hold, $f\in L_{0,+}^2(\pi)$ and $\lambda_1,\lambda_2\in (1,2]$, $\lambda_1<\lambda_2$, then
$$\lambda_1\cdot  \mathrm{var}(P_{\lambda_1},f)-\lambda_2\cdot \mathrm{var}(P_{\lambda_2},f)>0.$$
\end{lemma}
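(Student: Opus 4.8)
The plan is to exploit the special structure of $P_\lambda$ on the interval $(1,2]$, where the interpolation in \eqref{equation:continuous-versions} collapses to a \emph{lazy} version of $P_2$. Concretely, for $\lambda\in(1,2)$ one has $\floor{\lambda}=1$ and $\beta(\lambda)=2-\lambda$, and since $P_1(x,\uarg)=\delta_x$ acts as the identity $I$ on $L^2(\pi)$, the convex combination reads $P_\lambda=(2-\lambda)I+(\lambda-1)P_2$; this also holds trivially at $\lambda=2$. The decisive consequence is the operator identity
$$
I-P_\lambda=(\lambda-1)(I-P_2),\qquad \lambda\in(1,2],
$$
which turns the resolvent of $P_\lambda$ into a scalar rescaling of the resolvent of $P_2$.

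Next I would feed this into the resolvent representation of the asymptotic variance already used in the proof of Lemma \ref{lemma:lower-and-upper-bound-for-asymptotic-variance}, namely $\mathrm{var}(P_\lambda,f)=2\angles{f|(I-P_\lambda)^{-1}f}_\pi-\|f\|_\pi^2$. By Lemma \ref{lemma:i-sir-properties}, $I-P_2$ is invertible on $L_0^2(\pi)$ under Assumption \ref{a:bounded-weights}, so $(I-P_\lambda)^{-1}=(\lambda-1)^{-1}(I-P_2)^{-1}$ and hence
$$
\mathrm{var}(P_\lambda,f)=\frac{2A}{\lambda-1}-\|f\|_\pi^2,\qquad A:=\angles{f|(I-P_2)^{-1}f}_\pi .
$$
Here $A>\|f\|_\pi^2>0$ by the strict lower-bound argument inside Lemma \ref{lemma:lower-and-upper-bound-for-asymptotic-variance} (applied at $\lambda=2$), which uses $f\in L_{0,+}^2(\pi)$. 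Thus on $(1,2]$ the asymptotic variance has the explicit form $c_1/(\lambda-1)-c_2$ with positive constants $c_1=2A$ and $c_2=\|f\|_\pi^2$.

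Finally I would multiply by $\lambda$ and read off monotonicity. Writing $\lambda\,\mathrm{var}(P_\lambda,f)=2A\frac{\lambda}{\lambda-1}-\|f\|_\pi^2\lambda$ and using $\frac{\lambda}{\lambda-1}=1+\frac{1}{\lambda-1}$, both summands are strictly decreasing on $(1,2]$ (the first because $A>0$, the second because $\|f\|_\pi^2>0$). For $1<\lambda_1<\lambda_2\le 2$ the shared constant cancels and one gets
$$
\lambda_1\,\mathrm{var}(P_{\lambda_1},f)-\lambda_2\,\mathrm{var}(P_{\lambda_2},f)
=2A\Big(\tfrac{1}{\lambda_1-1}-\tfrac{1}{\lambda_2-1}\Big)+\|f\|_\pi^2(\lambda_2-\lambda_1)>0,
$$
since $\lambda_1-1<\lambda_2-1$ makes the first bracket strictly positive and $\lambda_2-\lambda_1>0$ makes the second term strictly positive. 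This is exactly the claim.

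The argument is almost entirely structural, so there is no serious computational obstacle; the only points requiring care are justifying the resolvent representation of $\mathrm{var}(P_\lambda,f)$ and the invertibility of $I-P_2$ on $L_0^2(\pi)$, both of which are already available through Lemma \ref{lemma:i-sir-properties} and the proof of Lemma \ref{lemma:lower-and-upper-bound-for-asymptotic-variance}. The one genuinely essential ingredient is the collapse $P_1=I$, which holds \emph{only} because $\floor{\lambda}=1$ throughout $(1,2]$; on a general interval $(N,N+1]$ the identity $I-P_\lambda=(\lambda-N)(I-P_{N+1})$ fails, so the lazy-chain reduction is unavailable there and a different (and presumably harder) argument would be needed.
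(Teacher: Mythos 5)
Your proof is correct, and it takes a genuinely different route from the paper's, although both rest on the same structural observation that $P_1=I$ makes $P_\lambda=(2-\lambda)I+(\lambda-1)P_2$ on $(1,2]$. You exploit this through the resolvent: $I-P_\lambda=(\lambda-1)(I-P_2)$ gives $(I-P_\lambda)^{-1}=(\lambda-1)^{-1}(I-P_2)^{-1}$ on $L_0^2(\pi)$, hence the exact closed form $\var(P_\lambda,f)=\tfrac{2A}{\lambda-1}-\|f\|_\pi^2$ with $A=\angles{f|(I-P_2)^{-1}f}_\pi$, after which the claim is elementary algebra. The paper instead encodes the same collapse as linearity of the Dirichlet form, $\mathcal{E}_{\lambda_1}(g)=(\lambda_1-\lambda_2)\mathcal{E}_{2}(g)+\mathcal{E}_{\lambda_2}(g)$, and then works with the variational representation $\var(P_{\lambda_1},f)=2\sup_{g}[2\angles{f|g}_\pi-\mathcal{E}_{\lambda_1}(g)]-\angles{f|f}_\pi$ (Lemma \ref{lemma:inverse}), lower-bounding the supremum by the particular test function $h_{\lambda_2}=(I-P_{\lambda_2})^{-1}f$ and invoking the monotonicity of Dirichlet forms from Theorem \ref{theorem:i-sir-convexity}; it never produces a closed-form variance. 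Your approach buys more: an explicit formula, and in fact strict monotonicity (and convexity) of $\lambda\mapsto\lambda\,\var(P_\lambda,f)$ on the whole interval $(1,2]$, with only Lemma \ref{lemma:i-sir-properties} (invertibility at $\lambda=2$) and the Neumann-series identity already used in Lemma \ref{lemma:lower-and-upper-bound-for-asymptotic-variance} as inputs. The paper's variational technique is less explicit but more robust: it only needs inequalities between Dirichlet forms rather than an exact operator identity, which is the style of argument that powers Theorem \ref{theorem:asymptotic-variance-results} and would survive perturbations of the kernel where your collapse is unavailable. Two minor remarks: your step $A>\|f\|_\pi^2$ is more than you need ($A\ge\|f\|_\pi^2>0$ already suffices, and even $A\ge 0$ together with $\|f\|_\pi^2>0$ closes the argument); and your closing caveat is accurate — for $\lambda\in(N,N+1]$ with $N\ge 2$ one only gets $I-P_\lambda=(N+1-\lambda)(I-P_N)+(\lambda-N)(I-P_{N+1})$, which is not a scalar multiple of a fixed operator, so the restriction to $(1,2]$ is essential to your method exactly as it is to the paper's.
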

\begin{proof}
Let $g\in L_0^2(\pi)$, then
\begin{align*}
\mathcal{E}_{\lambda_1}(g)&=(\lambda_1-\lambda_2)\mathcal{E}_{2}(g)+\mathcal{E}_{\lambda_2}(g).
\end{align*}
By Lemma \ref{lemma:i-sir-properties}, we can set $h_{\lambda_2}=(I-P_{\lambda_2})^{-1}f$, for which 
$$\mathrm{var}(P_{\lambda_2},f)=2\angles{f|h_{\lambda_2}}_\pi-\angles{f|f}_\pi,
\qquad \mathcal{E}_{\lambda_2}(h_{\lambda_2})=\angles{h_{\lambda_2}|f}_\pi=\angles{f|h_{\lambda_2}}_\pi,
$$
and by Lemma \ref{lemma:inverse} in Appendix \ref{app:hilbert-spaces}
$$\mathrm{var}(P_{\lambda_1},f)=2\sup_{g\in L_0^2(\pi)}[2\angles{f|g}_\pi-\mathcal{E}_{\lambda_1}(g)]-\angles{f|f}_\pi\ge 2[2\angles{f|h_{\lambda_2}}_\pi-\mathcal{E}_{\lambda_1}(h_{\lambda_2})]-\angles{f|f}_\pi.$$
Therefore
\begin{align*}
&\lambda_1\cdot  \mathrm{var}(P_{\lambda_1},f)-\lambda_2 \cdot \mathrm{var}(P_{\lambda_2},f) \\
&\quad \ge \lambda_1 2[2\angles{f|h_{\lambda_2}}_\pi-\mathcal{E}_{\lambda_1}(h_{\lambda_2})]-\lambda_2 2 \angles{f|h_{\lambda_2}}_\pi+(\lambda_2-\lambda_1) \angles{f|f}_\pi \\
&\quad =\lambda_1 2[2\mathcal{E}_{\lambda_2}(h_{\lambda_2})-(\lambda_1-\lambda_2)\mathcal{E}_{2}(h_{\lambda_2})-\mathcal{E}_{\lambda_2}(h_{\lambda_2})]-2\lambda_2 \mathcal{E}_{\lambda_2}(h_{\lambda_2})+(\lambda_2-\lambda_1) \angles{f|f}_\pi \\
&\quad =\lambda_1 2(\lambda_2-\lambda_1)\mathcal{E}_{2}(h_{\lambda_2})-2(\lambda_2-\lambda_1) \mathcal{E}_{\lambda_2}(h_{\lambda_2})+(\lambda_2-\lambda_1) \angles{f|f}_\pi \\
&\quad \ge 2(\lambda_2-\lambda_1)[\mathcal{E}_{2}(h_{\lambda_2})-\mathcal{E}_{\lambda_2}(h_{\lambda_2})]+(\lambda_2-\lambda_1) \angles{f|f}_\pi \\
&\quad \ge (\lambda_2-\lambda_1) \angles{f|f}_\pi>0,
\end{align*}
where the second last inequality follows from Theorem \ref{theorem:i-sir-convexity} and the fact that $\lambda_1>1$.
\end{proof}

For Algorithm \ref{alg:continuous-isir}, derivatives of holding probability and $\psi$ (given in \eqref{function:problem-solution}) are defined for non-integer $\lambda>1$,
\begin{align*}
\frac{\partial}{\partial \lambda}P_\lambda(x,\braces{x})&=P_{\floor{\lambda+1}}(x,\braces{x})-P_{\floor{\lambda}}(x,\braces{x}), \qquad \psi'(\lambda)=\psi(\floor{\lambda+1})-\psi(\floor{\lambda}).
\end{align*}

\begin{lemma}\label{lemma:cost-function-with-approximate}
Let $g,h:(1,\infty)\to \R$,
$$g(\lambda)=\lambda\frac{1+\epsilon(\lambda)}{1-\epsilon(\lambda)}, \qquad h(\lambda)=\lambda\frac{1+\psi(\lambda)}{1-\psi(\lambda)},$$
then $g$ is strictly decreasing and strictly convex when $\lambda\in (1,2]$, same holds for $h$ if $\#(\Sp)>1$.
\end{lemma}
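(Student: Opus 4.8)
The plan is to exploit that the floor $\floor{\lambda}$ is constant on $(1,2)$, which forces both $\epsilon$ and $\psi$ to be affine there and reduces $g$ and $h$ to explicit rational functions that can be differentiated by hand. The whole point is that affineness together with the boundary values $\epsilon(1)=\psi(1)=1$ makes $1-\epsilon(\lambda)$ and $1-\psi(\lambda)$ vanish \emph{linearly} at $\lambda=1$, which is exactly what produces a clean form.

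First I would treat $g$. By the interpolation identity \eqref{equation:continuous-versions}, for $\lambda\in[1,2]$ we have $\floor{\lambda}=1$ and $\beta(\lambda)=2-\lambda$, so $\epsilon(\lambda)=(2-\lambda)\epsilon(1)+(\lambda-1)\epsilon(2)$. Since $\epsilon(1)=1$, this reads $\epsilon(\lambda)=1-c(\lambda-1)$ with $c:=1-\epsilon(2)$. By positivity and strict monotonicity of $\epsilon$ (Theorem \ref{theorem:i-sir-convexity-2}) we have $0<\epsilon(2)<\epsilon(1)=1$, hence $c\in(0,1)$ and in particular $1-\epsilon(\lambda)=c(\lambda-1)>0$ on $(1,2]$. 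Substituting gives $g(\lambda)=\lambda\,\frac{2-c(\lambda-1)}{c(\lambda-1)}=\frac{2}{c}+\frac{2}{c(\lambda-1)}-\lambda$, so that $g'(\lambda)=-\frac{2}{c(\lambda-1)^2}-1<0$ and $g''(\lambda)=\frac{4}{c(\lambda-1)^3}>0$ for every $\lambda\in(1,2]$. This settles the claim for $g$.

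Next I would run the identical argument for $h$ with $\psi$ in place of $\epsilon$, and this is where the hypothesis $\#(\Sp)>1$ enters. Assuming $\#(\Sp)>1$, the denominator $1-\int_\Sp \pi(\braces{x})\pi(\ud x)$ is strictly positive, so $\psi$ is well-defined. Applying the decomposition $P_\lambda=\beta(\lambda)P_{\floor{\lambda}}+(1-\beta(\lambda))P_{\floor{\lambda}+1}$ to the singleton $\braces{x}$ and integrating over $\Sp$ shows that $\lambda\mapsto\int_\Sp P_\lambda(x,\braces{x})\pi(\ud x)$ is affine on $[1,2]$ (as $\floor{\lambda}=1$ there), hence so is $\psi$; using $P_1(x,\braces{x})=1$ one checks $\psi(1)=1$, so $\psi(\lambda)=1-c'(\lambda-1)$ with $c':=1-\psi(2)$. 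Strict monotonicity of $\psi$ from Lemma \ref{lemma:convexity-of-the-estimate-2} gives $c'\in(0,1)$, and the same two derivatives as above yield $h'<0$ and $h''>0$ on $(1,2]$.

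I do not expect a genuine obstacle here: the only content is the affineness observation, after which the computation is routine. The one point worth stating carefully is that $\#(\Sp)>1$ is precisely what guarantees that $\psi$ is defined and strictly decreasing, so that $c'>0$ and the argument for $h$ goes through verbatim.
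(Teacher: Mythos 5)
Your proof is correct and follows essentially the same route as the paper's: both exploit that on $[1,2]$ the interpolation \eqref{equation:continuous-versions} makes $\epsilon$ (resp.\ $\psi$) affine with $\epsilon(1)=\psi(1)=1$, use Theorem \ref{theorem:i-sir-convexity-2} (resp.\ Lemma \ref{lemma:convexity-of-the-estimate-2}) to get $\epsilon(2),\psi(2)\in(0,1)$, and then verify the sign of the first and second derivatives. The only difference is cosmetic: you substitute the affine formula first to obtain the explicit function $\frac{2}{c}+\frac{2}{c(\lambda-1)}-\lambda$ before differentiating, while the paper differentiates the quotient and substitutes afterwards.
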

\begin{proof}
We have for $\lambda>1$, that $0<\epsilon(\lambda)<1$, and $\epsilon$ is decreasing by Theorem \ref{theorem:i-sir-convexity-2}. Same properties hold for $\psi$ when $\#(\Sp)>1$ by Lemma \ref{lemma:convexity-of-the-estimate-2}.
We have for $\lambda\in (1,2)$, that
$$g'(\lambda)=\frac{2\lambda \epsilon'(\lambda)+1-\epsilon(\lambda)^2}{(1-\epsilon(\lambda))^2},$$
where
\begin{align*}
2\lambda\epsilon'(\lambda)+1-\epsilon(\lambda)^2
&=2\lambda\big(\epsilon(2)-1\big)+1-\big((2-\lambda)+(\lambda-1) \epsilon(2)\big)^2 \\
&=2\lambda\big(\epsilon(2)-1\big)+2(\lambda-1)\big(1-\epsilon(2)\big)-(\lambda-1)^2\big(1-\epsilon(2)\big)^2 \\
&=-2\big(1-\epsilon(2)\big)-(\lambda-1)^2\big(1-\epsilon(2)\big)^2<0,
\end{align*}
and
\begin{align*}
g''(\lambda)&=\frac{(1-\epsilon(\lambda))[2\epsilon'(\lambda)-2\epsilon(\lambda)\epsilon'(\lambda)]+2\epsilon'(\lambda)[2\lambda \epsilon'(\lambda)+1-\epsilon(\lambda)^2]}{(1-\epsilon(\lambda))^3} \\
&=-\frac{2\epsilon'(\lambda)\big[-(1-\epsilon(\lambda))^2+2\big(1-\epsilon(2)\big)+(\lambda-1)^2\big(1-\epsilon(2)\big)^2\big]}{(1-\epsilon(\lambda))^3} \\
&\ge-\frac{2\epsilon'(\lambda)\big[\big(1-\epsilon(2)\big)+(\lambda-1)^2\big(1-\epsilon(2)\big)^2\big]}{(1-\epsilon(\lambda))^3}>0.
\end{align*}
Since $g$ is continuous, we get the claim. The proof is similar for $h$.
\end{proof}

When we combine the affine cost function with functions in \eqref{eq:v-g-h}, we obtain the following proposition:

\begin{proposition}\label{propostion:cost-function-with-asymptotic-variance-1}
Let $c:(1,\infty)\to \R$, $c(\lambda)=a+b\lambda$, where $a\ge 0$ and $b>0$. Let Assumption \ref{a:bounded-weights} hold and $f\in L_{0,+}^2(\pi)$, then $\lambda\mapsto c(\lambda) V_f(\lambda)$, $\lambda\mapsto c(\lambda) G_f(\lambda)$ and $\lambda\mapsto c(\lambda) H_f(\lambda)$ are strictly decreasing when $\lambda\in (1,2]$.
\end{proposition}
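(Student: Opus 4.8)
The plan is to exploit the affine structure of the cost by splitting each product additively. Writing $\Phi$ for any one of $V_f$, $G_f$, $H_f$, we have
$$
c(\lambda)\,\Phi(\lambda) = a\,\Phi(\lambda) + b\,\lambda\Phi(\lambda),
$$
so it suffices to show that on $(1,2]$ the map $\lambda\mapsto\Phi(\lambda)$ is (weakly) decreasing while $\lambda\mapsto\lambda\Phi(\lambda)$ is \emph{strictly} decreasing. Granting both, since $a\ge 0$ the first summand is non-increasing and since $b>0$ the second is strictly decreasing, and the sum of a non-increasing and a strictly decreasing function is strictly decreasing. Concretely, for $1<\lambda_1<\lambda_2\le 2$ one has $a\Phi(\lambda_1)\ge a\Phi(\lambda_2)$ and $b\lambda_1\Phi(\lambda_1)> b\lambda_2\Phi(\lambda_2)$, which add to $c(\lambda_1)\Phi(\lambda_1)>c(\lambda_2)\Phi(\lambda_2)$.

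First I would record the plain monotonicity of $\Phi$ itself: $V_f$ is strictly decreasing on $(1,\infty)$ by Theorem \ref{theorem:i-sir-asymptotic-variance}; $G_f$ equals $\mathrm{var}_\pi(f)$ times $\lambda\mapsto\frac{1+\epsilon(\lambda)}{1-\epsilon(\lambda)}$, strictly decreasing by Lemma \ref{lemma:convexity-of-the-estimate}; and $H_f$ equals $\mathrm{var}_\pi(f)$ times $\lambda\mapsto\frac{1+\psi(\lambda)}{1-\psi(\lambda)}$, strictly decreasing by Lemma \ref{lemma:convexity-of-the-estimate-2}. Next I would invoke the strict monotonicity of $\lambda\Phi(\lambda)$ on $(1,2]$: the case $\Phi=V_f$ is exactly Lemma \ref{lemma:cost-function-with-asymptotic-variance}, while the cases $\Phi=G_f$ and $\Phi=H_f$ follow from Lemma \ref{lemma:cost-function-with-approximate}, since $\lambda G_f(\lambda)=\mathrm{var}_\pi(f)\,g(\lambda)$ and $\lambda H_f(\lambda)=\mathrm{var}_\pi(f)\,h(\lambda)$ with $\mathrm{var}_\pi(f)>0$ because $f\in L_{0,+}^2(\pi)$.

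One point to verify is that the statements for $\psi$ (hence $H_f$ and $h$) in Lemma \ref{lemma:convexity-of-the-estimate-2} and Lemma \ref{lemma:cost-function-with-approximate} require $\#(\Sp)>1$. This is automatic under the standing hypothesis $f\in L_{0,+}^2(\pi)$: if $\pi$ were concentrated at a single point, every centred function would vanish $\pi$-almost surely and $L_{0,+}^2(\pi)$ would be empty, so $\|f\|_\pi>0$ forces $\#(\Sp)>1$. With both monotonicity ingredients in hand for each of the three choices of $\Phi$, the additive argument above yields strict decrease of $\lambda\mapsto c(\lambda)\Phi(\lambda)$ on $(1,2]$.

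I do not expect a genuine obstacle here: the entire analytic content is already carried by the two auxiliary lemmas, namely Lemma \ref{lemma:cost-function-with-asymptotic-variance} (the variational/Dirichlet-form comparison giving strict decrease of $\lambda V_f(\lambda)$) and Lemma \ref{lemma:cost-function-with-approximate} (the explicit first- and second-derivative computations for $g$ and $h$). The present proposition is then just the elementary observation that adding a non-negative multiple of a decreasing function to a positive multiple of a strictly decreasing function preserves strict decrease.
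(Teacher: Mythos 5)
Your proof is correct and follows essentially the same route as the paper: the paper's own (one-line) proof cites exactly the five results you use — Theorem \ref{theorem:i-sir-asymptotic-variance}, Lemmas \ref{lemma:convexity-of-the-estimate} and \ref{lemma:convexity-of-the-estimate-2} for the decrease of $\Phi$, and Lemmas \ref{lemma:cost-function-with-asymptotic-variance} and \ref{lemma:cost-function-with-approximate} for the strict decrease of $\lambda\Phi(\lambda)$ on $(1,2]$ — combined via the affine splitting $c(\lambda)\Phi(\lambda)=a\Phi(\lambda)+b\lambda\Phi(\lambda)$ that you spell out. Your additional observation that $f\in L_{0,+}^2(\pi)$ forces $\#(\Sp)>1$, so the hypotheses of the $\psi$-lemmas are automatically met, is a worthwhile detail the paper leaves implicit.
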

\begin{proof}
Follows from Theorem \ref{theorem:i-sir-asymptotic-variance}, Lemma \ref{lemma:convexity-of-the-estimate} and \ref{lemma:convexity-of-the-estimate-2} combined with Lemma \ref{lemma:cost-function-with-asymptotic-variance} and \ref{lemma:cost-function-with-approximate}.
\end{proof}

\section{Finite discrete case for i-SIR}\label{app:discrete-case-isir}
Estimating the asymptotic variance of i-SIR can be challenging. In the finite discrete case asymptotic variance can be solved via the formula in Lemma \ref{lemma:asymptotic-variance-of-finite-i-SIR} in Appendix \ref{app:asymptotic-variance-results}, since the transition probability matrix $P_\lambda$ as well as the eigenvalues and eigenvectors can be solved, even tough this can be extremely slow for large $\lambda$ when $\Sp$ has many states. However, in the finite case transition probability matrix $P_N$ can be expressed with the help of multinomial distribution:

\begin{proposition}\label{proposition:discrete-isir-matrix-and-rejection}
Let us have Algorithm \ref{alg:isir} with finite discrete $\pi$, $\Sp=\braces{s_1,s_2,\dots,s_n}$, $n\in \N_+$ and let us denote $U=\mathrm{supp}(q)\setminus \Sp$.
\begin{enumerate}[(i)]
\item If $U=\emptyset$, let $Q=(q(\braces{s_1}),\dots,q(\braces{s_n}))$. \label{enum:case-1-condition}
\item If $U\neq \emptyset$, let $Q=(q(\braces{s_1}),\dots,q(\braces{s_n}),q(U))$. \label{enum:case-2-condition}
\end{enumerate}
For both cases \eqref{enum:case-1-condition} and \eqref{enum:case-2-condition} we have for $N\in \N$, $N\ge 2$ and $i,j\in \braces{1,\dots,n}$
\begin{align}
P_N(s_i,\braces{s_j})&=\E\Big[(\charfun{i=j}+Z_j)\frac{w(s_j)}{w(s_i)+\sum_{k=1}^{n} Z_kw(s_k)}\Big], \label{eq:transition-probability-as-expectation} \\
\epsilon(N,s_i)&=\E\Big[\frac{w(s_i)}{w(s_i)+\sum_{k=1}^{n} Z_kw(s_k)}\Big], \label{eq:rejection-probability-as-expectation}\\
\epsilon(N)&=\E\Big[\frac{w(Y)}{w(Y)+\sum_{k=1}^{n} Z_kw(s_k)}\Big], \nonumber
\end{align}
where $Z\sim \mathrm{Multinom}(N-1,Q)$ and $Y\sim \pi$ are independent.
\end{proposition}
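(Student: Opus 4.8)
The plan is to start from the integral definitions of $P_N(s_i,\{s_j\})$ and $\epsilon(N,s_i)$ and recast them as expectations over the $N-1$ independent proposals. Writing $Y^1 = s_i$ and $Y^2,\dots,Y^N \overset{\text{i.i.d.}}{\sim} q$, equation \eqref{def:isir-transition-probability} gives
$$
P_N(s_i,\{s_j\}) = \E\bigg[\frac{\sum_{l=1}^N w(Y^l)\charfun{Y^l = s_j}}{\sum_{m=1}^N w(Y^m)}\bigg],
$$
and the definition of $\epsilon(N,z_1)$ gives $\epsilon(N,s_i) = \E\big[w(s_i)/\sum_{m=1}^N w(Y^m)\big]$. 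This first step is just a change of notation from Lebesgue--Stieltjes integral to expectation.

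Next I would introduce the counts $Z_k = \#\{m\in\{2,\dots,N\} : Y^m = s_k\}$ for $k=1,\dots,n$, and in case \eqref{enum:case-2-condition} the additional count $Z_{n+1} = \#\{m : Y^m \in U\}$ of proposals falling outside $\Sp$. The decisive observation is that every $x\notin\Sp$ satisfies $w(x)=0$ (by the setup in Appendix \ref{app:setup}), so proposals landing in $U$ contribute nothing to any weight sum. Hence $\sum_{m=2}^N w(Y^m) = \sum_{k=1}^n w(s_k) Z_k$, which makes the denominator $w(s_i) + \sum_{k=1}^n w(s_k) Z_k$. Likewise the numerator $\sum_{l=1}^N w(Y^l)\charfun{Y^l = s_j}$ equals $w(s_j)\big(\charfun{i=j} + Z_j\big)$, since among the $N$ items the current copy $Y^1 = s_i$ equals $s_j$ exactly when $i=j$, while the remaining proposals supply $Z_j$ further copies.

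Then I would invoke the elementary fact that the counts of $N-1$ i.i.d. categorical draws are multinomial: $(Z_1,\dots,Z_n)\sim\mathrm{Multinom}(N-1,Q)$ in case \eqref{enum:case-1-condition}, and $(Z_1,\dots,Z_n,Z_{n+1})\sim\mathrm{Multinom}(N-1,Q)$ in case \eqref{enum:case-2-condition}. Because all three integrands depend on $Z$ only through $Z_1,\dots,Z_n$, the extra category $Z_{n+1}$ may be marginalised out, so in both cases the expectations coincide with those taken over $Z\sim\mathrm{Multinom}(N-1,Q)$; this yields \eqref{eq:transition-probability-as-expectation} and \eqref{eq:rejection-probability-as-expectation}. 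Finally, for the average rejection probability I would integrate \eqref{eq:rejection-probability-as-expectation} against $\pi$, writing $\epsilon(N) = \sum_{i=1}^n \pi(\{s_i\})\,\epsilon(N,s_i) = \E[\epsilon(N,Y)]$ with $Y\sim\pi$ independent of $Z$, which produces the last display after pulling the $\pi$-expectation inside.

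The only genuine obstacle is bookkeeping around the set $U$: one must verify carefully that proposals with $w=0$ neither enter the weight sums nor the index-selection probability, and that the two definitions of $Q$ are precisely what make the multinomial reduction valid in both cases. Everything else is a direct translation of the definitions together with the multinomial representation of i.i.d.\ categorical counts, so none of the convexity or ergodicity machinery from the earlier sections is needed here.
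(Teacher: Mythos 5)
Your proof is correct and follows essentially the same route as the paper's: both reduce the $N-1$ i.i.d.\ proposals to their multinomial count vector, use $w\equiv 0$ outside $\Sp$ to absorb the set $U$ into a weightless extra category, and then integrate the conditional rejection probability against $\pi$ for the final display. The only (cosmetic) difference is that you rewrite the integrand of \eqref{def:isir-transition-probability} directly as a function of the counts, while the paper conditions on the count events $\braces{Z_1=v}$ and applies the law of total probability.
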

\begin{proof}
Let us first take $U=\emptyset$ and $Q=(q(\braces{s_1}),\dots,q(\braces{s_n}))$. For Algorithm \ref{alg:isir} we have $Y_1^{1}=X_{0}$, $Y_1^{2:N}\sim q$ and $I \sim \mathrm{Categorical}(W_1^{1:N})$, where $Y^{2:N}$ are independent of each other. Therefore
\begin{equation*}
Z_1=\Big(\sum_{j=2}^N\charfun{Y_1^j=s_1},\dots, \sum_{j=2}^N\charfun{Y_1^j=s_n}\Big)\sim \mathrm{Multinom}(N-1,q).
\end{equation*}
Let $v\in \R^n$ and let us define
\begin{equation*}
A_1^{v}=\braces{Z_1=v}, \qquad B=\Bigbraces{v\in \N^n:\sum_{k=1}^n v_k=N-1}.
\end{equation*}
Then for $i,j\in \braces{1,\dots,n}$
\begin{align*}
\P(X_1=s_j\mid X_0=s_i)&=\sum_{v\in B}\P(X_1=s_j,A_1^v \mid X_0=s_i) \\
&=\sum_{v\in B}\P(X_1=s_j \mid X_0=s_i,A_1^v)\P(A_1^v)\\
&=\sum_{v\in B}\sum_{k=1}^N\P(X_1=s_j,I=k \mid X_0=s_i,A_1^v)\P(A_1^v) \\
&=\begin{cases}
\sum_{v\in B}(1+v_i)\frac{w(s_i)}{w(s_i)+\sum_{k=1}^nv_kw(s_k)}\P(A_1^v), & i=j \\
\sum_{v\in B}v_j\frac{w(s_j)}{w(s_i)+\sum_{k=1}^nv_kw(s_k)}\P(A_1^v), &i\neq j.
\end{cases}
\end{align*}
For the rejection
\begin{align*}
\P(I=1 \mid X_0=s_i)&=\sum_{v\in B}\P(I=1,A_1^v \mid X_0=s_i)=\sum_{v\in B}\P(I=1 \mid X_0=s_i,A_1^v)\P(A_1^v) \\
&=\sum_{v\in B}\frac{w(s_i)}{w(s_i)+\sum_{k=1}^nv_kw(s_k)}\P(A_1^v).
\end{align*}
Since $(X_i)$ is Markov chain we obtain the case for $U=\emptyset$. The case $U\neq \emptyset$ follows similarly when we have instead $Q=(q(\braces{s_1}),\dots,q(\braces{s_n}),q(U))$, $v\in \R^{n+1}$,
\begin{align*}
Z_1&=\Big(\sum_{j=2}^N\charfun{Y_1^j=s_1},\dots, \sum_{j=2}^N\charfun{Y_1^j=s_n}, \sum_{j=2}^N\charfun{Y_1^j\notin \Sp}\Big)\sim \mathrm{Multinom}(N-1,Q), \\
A_1^{v}&=\braces{Z_1=v}, \qquad B=\Bigbraces{v\in \N^{n+1}:\sum_{k=1}^{n+1} v_k=N-1},
\end{align*}
since $w(s)=0$ for all $s\notin \Sp$.
\end{proof}
Proposition \ref{proposition:discrete-isir-matrix-and-rejection} can be easily generalised to Algorithm \ref{alg:continuous-isir} and gives possibilities to estimate the probability transition matrix, holding probability and rejection probabilities with multinomial distribution. We have also the following special case:

\begin{proposition}\label{proposition:two-state-isir}
If $\#(\Sp)=2$, i.e. $\Sp=\braces{s_1,s_2}$, then it holds for $f\in L_{0,+}^2(\pi)$ that
$$H_{f}(\lambda)=\mathrm{var}(P_\lambda,f)=\var_\pi(f)\frac{P_\lambda(s_1,\braces{s_1})+P_\lambda(s_2,\braces{s_2})}{2-(P_\lambda(s_1,\braces{s_1})+P_\lambda(s_2,\braces{s_2}))}.$$
\end{proposition}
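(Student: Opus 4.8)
The plan is to exploit that a two-state $\pi$-reversible chain is completely determined by its two holding probabilities, and to evaluate both $\mathrm{var}(P_\lambda,f)$ and $H_f(\lambda)$ explicitly in terms of them. Write $\pi_i=\pi(\braces{s_i})$ and $h_i=P_\lambda(s_i,\braces{s_i})$ for $i\in\braces{1,2}$, so that the off-diagonal entries are $P_\lambda(s_1,\braces{s_2})=1-h_1$ and $P_\lambda(s_2,\braces{s_1})=1-h_2$, and $\pi$-reversibility reads $\pi_1(1-h_1)=\pi_2(1-h_2)$. Since $f\in L_{0,+}^2(\pi)$ we have $\pi(f)=0$, so $\bar f=f$ and $\var_\pi(f)=\|f\|_\pi^2>0$.

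First I would compute the asymptotic variance via Lemma \ref{lemma:asymptotic-variance-of-finite-i-SIR}. The $2\times 2$ stochastic matrix $P_\lambda$ has eigenvalues $1$ (with eigenvector $(1,1)$) and $\lambda_2=\mathrm{tr}(P_\lambda)-1=h_1+h_2-1$; for $\lambda>1$ the chain is irreducible and aperiodic, so $|\lambda_2|<1$. As $f$ is $\pi$-orthogonal to the constants, it lies in the one-dimensional span of the remaining normalised eigenvector $u_2$, so the norm identity in Lemma \ref{lemma:asymptotic-variance-of-finite-i-SIR} reduces to $\angles{u_2|f}_\pi^2=\|f\|_\pi^2=\var_\pi(f)$. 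The same lemma then yields
$$
\mathrm{var}(P_\lambda,f)=\var_\pi(f)\frac{1+\lambda_2}{1-\lambda_2}=\var_\pi(f)\frac{h_1+h_2}{2-(h_1+h_2)},
$$
which is the right-hand equality of the claim.

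It then remains to show $H_f(\lambda)=\mathrm{var}(P_\lambda,f)$. Since $H_f(\lambda)=\frac{1+\psi(\lambda)}{1-\psi(\lambda)}\var_\pi(f)$ by \eqref{eq:v-g-h}, it suffices to prove $\psi(\lambda)=\lambda_2=h_1+h_2-1$. Both states are atoms, so $\int_{\Sp}\pi(\braces{x})\pi(\ud x)=\pi_1^2+\pi_2^2=1-2\pi_1\pi_2$ and $\int_{\Sp}P_\lambda(x,\braces{x})\pi(\ud x)=\pi_1h_1+\pi_2h_2$. Substituting these into the definition of $\psi$ in \eqref{function:problem-solution} and clearing the denominator $2\pi_1\pi_2$, the target identity $\psi(\lambda)=h_1+h_2-1$ reduces to
$$
\pi_1h_1+\pi_2h_2-(\pi_1^2+\pi_2^2)=2\pi_1\pi_2(h_1+h_2-1),
$$
which I would verify by writing $h_1=1-r/\pi_1$ and $h_2=1-r/\pi_2$ with $r=\pi_1(1-h_1)=\pi_2(1-h_2)$ supplied by reversibility; both sides then collapse to $2\pi_1\pi_2-2r$.

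The computation is entirely elementary; the only point requiring care is the bookkeeping in the last step, where reversibility must be invoked to convert the weighted sum $\pi_1h_1+\pi_2h_2$ into the symmetric quantity $h_1+h_2$ appearing in the asymptotic variance. Combining the two displays gives $H_f(\lambda)=\mathrm{var}(P_\lambda,f)=\var_\pi(f)\frac{h_1+h_2}{2-(h_1+h_2)}$, which is exactly the asserted formula.
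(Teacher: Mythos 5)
Your proof is correct and follows essentially the same route as the paper's: both rest on Lemma \ref{lemma:asymptotic-variance-of-finite-i-SIR} for the spectral formula and on the reversibility identity $\pi(\braces{s_1})P_\lambda(s_1,\braces{s_2})=\pi(\braces{s_2})P_\lambda(s_2,\braces{s_1})$ to handle $\psi$. The only (cosmetic) differences are that you obtain $\angles{u_2|f}_\pi^2=\var_\pi(f)$ by orthogonality to constants and the eigenvalue by the trace, where the paper writes out $v_2$ explicitly, and that you prove $\psi(\lambda)=h_1+h_2-1$ and invoke the spectral formula once, where the paper simplifies the ratio $\frac{1+\psi}{1-\psi}$ directly.
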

\begin{proof}
We have $P_\lambda$ is $\pi$-reversible by Lemma \ref{lemma:i-sir-properties}. We can solve eigenvalues $\lambda_1=1$, $\lambda_2=P_\lambda(s_1,\braces{s_1})+P_\lambda(s_2,\braces{s_2})-1$ and eigenvectors
$v_1=(1,1)$,
$v_2=(-\sqrt{\frac{\pi(\braces{s_2})}{\pi(\braces{s_1})}},\sqrt{\frac{\pi(\braces{s_1})}{\pi(\braces{s_2})}})$
to match conditions in Lemma \ref{lemma:asymptotic-variance-of-finite-i-SIR} in Appendix \ref{app:asymptotic-variance-results} and thus
\begin{align*}
\mathrm{var}(P_\lambda,f)
&=\angles{v_2|f}_\pi^2\frac{P_\lambda(s_1,\braces{s_1})+P_\lambda(s_2,\braces{s_2})}{2-(P_\lambda(s_1,\braces{s_1})+P_\lambda(s_2,\braces{s_2}))} \\
&=\var_\pi(f)\frac{P_\lambda(s_1,\braces{s_1})+P_\lambda(s_2,\braces{s_2})}{2-[P_\lambda(s_1,\braces{s_1})+P_\lambda(s_2,\braces{s_2})]},
\end{align*}
since
\begin{align*}
\angles{v_2|f}_\pi^2&=\bigg(-\sqrt{\frac{\pi(\braces{s_2})}{\pi(\braces{s_1})}}f(s_1)\pi(\braces{s_1})+\sqrt{\frac{\pi(\braces{s_1})}{\pi(\braces{s_2})}}f(s_2)\pi(\braces{s_2})\bigg)^2 \\
&=\pi(\braces{s_2})\pi(\braces{s_1})\big(f(s_1)-f(s_2)\big)^2,
\end{align*}
$\E_\pi[f]=0$ and
\begin{align*}
\var_\pi(f)
&=\pi(\braces{s_1})[\pi(\braces{s_1})+\pi(\braces{s_2})]f(s_1)^2+\pi(\braces{s_2})[\pi(\braces{s_1})+\pi(\braces{s_2})]f(s_2)^2 \\
&=2\pi(\braces{s_1})\pi(\braces{s_2})f(s_1)f(s_2)+\pi(\braces{s_1})\pi(\braces{s_2})[f(s_1)-f(s_2)]^2+\sum_{i=1}^2\pi(\braces{s_i})^2f(s_i)^2 \\
&=\pi(\braces{s_1})\pi(\braces{s_2})[f(s_1)-f(s_2)]^2.
\end{align*}
We have
$1-\pi(\braces{s_1})^2-\pi(\braces{s_2})^2 
=2\pi(\braces{s_1})\pi(\braces{s_2})$ and by reversibility
\begin{align*}
&-1+\sum_{i=1}^2 \pi(\braces{s_i})P_\lambda(s_i,\braces{s_i})
=-\pi(\braces{s_1})P_\lambda(s_1,\braces{s_2})-\pi(\braces{s_2})P_\lambda(s_2,\braces{s_1}) \\
&\quad=-[\pi(\braces{s_1})+\pi(\braces{s_2})][\pi(\braces{s_1})P_\lambda(s_1,\braces{s_2})+\pi(\braces{s_2})P_\lambda(s_2,\braces{s_1})] \\
&\quad=-2\pi(\braces{s_1})\pi(\braces{s_2})[P_\lambda(s_1,\braces{s_2})+P_\lambda(s_2,\braces{s_1})],
\end{align*}
therefore
\begin{align*}
H_{f}(\lambda)&=\frac{1-2\pi(\braces{s_1})^2-2\pi(\braces{s_2})^2+\sum_{i=1}^2 \pi(\braces{s_i})P_\lambda(s_i,\braces{s_i})]}{1-\pi(\braces{s_1})P_\lambda(s_1,\braces{s_1})-\pi(\braces{s_2})P_\lambda(s_2,\braces{s_2})}\mathrm{var}_\pi(f) \\
&=\frac{2-P_\lambda(s_1,\braces{s_2})-P_\lambda(s_2,\braces{s_1})]}{P_\lambda(s_1,\braces{s_2})+P_\lambda(s_2,\braces{s_1})}\mathrm{var}_\pi(f)=\frac{P_\lambda(s_1,\braces{s_1})+P_\lambda(s_2,\braces{s_2})}{2-P_\lambda(s_1,\braces{s_1})-P_\lambda(s_2,\braces{s_2})]}\mathrm{var}_\pi(f)
\end{align*}
and we obtain the claim.
\end{proof}

\section{Some properties on Hilbert spaces and reversible Markov chains}\label{app:hilbert-spaces}
On most functional analysis books interesting operator properties are shown for the complex Hilbert space. However as mentioned in \citep{bogachev-smolyanov} many of the results also hold in the real case for self-adjoint operators and can be obtained via complexification. We shall denote by $\|\cdot\|$ the norm induced by the inner product $\angles{\cdot|\cdot}$ of a Hilbert space $\mathcal{H}$ that is either real or complex.

The following lemma is combination of Theorems from \citep[p.~330--331]{rudin}:
\begin{lemma}\label{lemma:positive-square-root}
Let $A$ be a bounded operator on a complex Hilbert space $\mathcal{H}$ satisfying $\angles{f|Af}\ge 0$ for all $f\in \mathcal{H}$. Then it holds that $A^*=A$ and there exists square root $A^{1/2}$ for which $\angles{f|A^{1/2}f}\ge 0$ for all $f\in \mathcal{H}$. Additionally if $A$ is invertible, so is $A^{1/2}$.
\end{lemma}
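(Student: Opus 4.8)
The plan is to prove the three assertions --- self-adjointness, existence of a positive square root, and preservation of invertibility --- in that order, since each builds on the previous. First I would show $A^*=A$. The hypothesis $\angles{f|Af}\ge 0$ in particular forces $\angles{f|Af}$ to be real for every $f\in\mathcal{H}$, so that $\angles{f|Af}=\overline{\angles{f|Af}}=\angles{Af|f}=\angles{f|A^*f}$, whence $\angles{f|(A-A^*)f}=0$ for all $f$. The crucial point --- and the only place where the complex scalar field is genuinely needed --- is that on a complex Hilbert space the vanishing of the quadratic form $f\mapsto\angles{f|Bf}$ forces $B=0$. This follows from the complex polarisation identity, which recovers the full sesquilinear form $\angles{g|Bf}$ from the four values $\angles{h|Bh}$ at $h=f\pm g$ and $h=f\pm ig$. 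Applying this with $B=A-A^*$ gives $A=A^*$.

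Next I would construct $A^{1/2}$. The cleanest route is the continuous functional calculus for the bounded self-adjoint operator $A$: positivity of the quadratic form together with self-adjointness yields the spectral inclusion $\sigma(A)\subseteq[0,\|A\|]$, and $t\mapsto\sqrt{t}$ is continuous there, so one sets $A^{1/2}$ to be the image of this function. Since $\sqrt{t}\ge 0$ on $\sigma(A)$, the spectral mapping theorem gives $\sigma(A^{1/2})\subseteq[0,\infty)$, and self-adjointness of $A^{1/2}$ then delivers $\angles{f|A^{1/2}f}\ge 0$; by construction $(A^{1/2})^2=A$. An equivalent, more elementary route (essentially Rudin's) normalises to $\|A\|\le 1$, writes $A=I-B$ with $0\le B\le I$, and defines $A^{1/2}$ through the power series for $\sqrt{1-x}$ evaluated at $B$, whose operator-norm convergence follows from absolute convergence of the series at $x=1$ together with $\|B\|\le 1$. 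In either construction $A^{1/2}$ is a norm limit of polynomials in $A$, so it commutes with every bounded operator commuting with $A$.

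Finally, for invertibility, suppose $A$ is invertible. Because $A^{1/2}$ is a limit of polynomials in $A$, it commutes with $A^{-1}$, and then $A^{1/2}(A^{-1}A^{1/2})=A^{-1}A=I$, together with the analogous computation on the other side, shows that $A^{-1}A^{1/2}$ is a two-sided inverse of $A^{1/2}$. (Alternatively, invertibility of $A$ forces $\sigma(A)\subseteq[\delta,\|A\|]$ for some $\delta>0$, on which $\sqrt{t}$ is bounded away from $0$, so $A^{1/2}$ is invertible directly from the functional calculus.)

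I expect the first step to be the main obstacle: extracting self-adjointness from mere positivity of the quadratic form is exactly the assertion that fails over the reals, so it must be argued through the complex polarisation identity rather than by any real-linear manipulation. Once self-adjointness and the spectral inclusion $\sigma(A)\subseteq[0,\|A\|]$ are in hand, the existence and positivity of the square root are standard functional calculus, and the invertibility claim reduces to the one-line commutation argument above.
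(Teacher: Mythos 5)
Your proof is correct. The paper does not actually write out an argument for this lemma---its ``proof'' is a citation of the relevant theorems in Rudin's \emph{Functional Analysis} (pp.~330--331)---and your proposal is a faithful, self-contained reconstruction of precisely that material: self-adjointness of $A$ extracted from positivity of the quadratic form via the complex polarisation identity (the step that genuinely requires the complex scalar field, as you correctly emphasise), existence and positivity of $A^{1/2}$ via the continuous functional calculus or Rudin's power-series construction, and invertibility of $A^{1/2}$ via the commutation argument (or, equivalently, the spectral bound $\sigma(A)\subseteq[\delta,\|A\|]$ with $\delta>0$).
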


\begin{lemma}\label{lemma:positive-definite-included}
Let $A$ be a bounded self-adjoint operator on a Hilbert space $\mathcal{H}$, satisfying $\angles{f|Af}\ge 0$ for all $f\in \mathcal{H}$ and such that the inverse $A^{-1}$ exists. Then it also holds that $\angles{f|Af}=0$ if and only if $\|f\|=0$.
\end{lemma}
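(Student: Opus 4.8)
The plan is to prove the nontrivial implication $\angles{f|Af}=0 \Rightarrow \|f\|=0$; the converse is immediate, since $f=0$ gives $Af=0$ and hence $\angles{f|Af}=0$. The key observation is that the form $B(f,g):=\angles{f|Ag}$ is a positive semidefinite Hermitian (resp.\ symmetric bilinear) form: self-adjointness of $A$ gives $B(g,f)=\angles{g|Af}=\overline{\angles{f|Ag}}=\overline{B(f,g)}$, and the hypothesis $\angles{f|Af}\ge 0$ says exactly $B(f,f)\ge 0$. Crucially, I do \emph{not} yet know that $B$ is definite---that is precisely the conclusion---so I would invoke the Cauchy--Schwarz inequality valid for merely semidefinite forms,
$$|B(f,g)|^2 \le B(f,f)\,B(g,g), \qquad f,g\in\mathcal{H}.$$

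First I would recall (or reprove in one line) this semidefinite Cauchy--Schwarz in the degenerate case we need: expanding $B(f+tg,f+tg)\ge 0$ and choosing $t$ with an appropriate phase and small magnitude shows that if $B(f,f)=0$, the term linear in $t$ must vanish, forcing $B(f,g)=0$ for every $g$. Thus from $\angles{f|Af}=0$ we conclude $\angles{f|Ag}=0$ for all $g\in\mathcal{H}$. The final step then uses invertibility: since $A^{-1}$ exists, I would take $g=A^{-1}f$, which yields $\angles{f|AA^{-1}f}=\angles{f|f}=\|f\|^2=0$, hence $f=0$.

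The only delicate point---and the main obstacle---is that the ordinary inner-product Cauchy--Schwarz proof divides by $B(f,f)$ and so cannot be applied verbatim; the case $B(f,f)=0$ is exactly the one we care about, so I must use the form of the argument (the phase/magnitude choice above) that never divides by $B(f,f)$. An alternative route sidesteps this entirely: by Lemma \ref{lemma:positive-square-root} there is a self-adjoint square root $A^{1/2}$, which is invertible because $A$ is, and then $\angles{f|Af}=\angles{A^{1/2}f|A^{1/2}f}=\|A^{1/2}f\|^2$, so $\angles{f|Af}=0$ forces $A^{1/2}f=0$ and hence $f=0$. I would nonetheless prefer the semidefinite Cauchy--Schwarz argument, since Lemma \ref{lemma:positive-square-root} is stated only for complex Hilbert spaces and would require passing through complexification in the real case, whereas the form argument applies uniformly whether $\mathcal{H}$ is real or complex.
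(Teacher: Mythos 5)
Your proof is correct, and it takes a genuinely different route from the paper's. The paper proves the contrapositive ($\|f\|>0 \Rightarrow \angles{f|Af}>0$) via the operator square root: in the complex case it invokes Lemma \ref{lemma:positive-square-root} to write $\angles{f|Af}=\angles{A^{1/2}f|A^{1/2}f}$, with strict positivity coming from invertibility of $A^{1/2}$; the real case is then handled by complexifying $\mathcal{H}$ and $A$, verifying that $A_{\mathbb{C}}$ inherits positivity and self-adjointness, and invoking the complex result. You instead apply the Cauchy--Schwarz inequality for merely positive semidefinite Hermitian (in the real case, symmetric bilinear) forms to $B(f,g)=\angles{f|Ag}$: from $B(f,f)=0$ you conclude $B(f,g)=0$ for all $g$ --- correctly handling the degenerate case by the phase-and-small-magnitude expansion of $B(f+tg,f+tg)\ge 0$ rather than dividing by $B(f,f)$ --- and then the single choice $g=A^{-1}f$ gives $\|f\|^2=0$. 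Your argument is more elementary and self-contained: it needs no spectral theory and no complexification, treats real and complex spaces uniformly, and isolates exactly where invertibility enters (in fact it shows $\angles{f|Af}=0\Rightarrow Af=0$ for any positive self-adjoint $A$, with invertibility used only in the last line). What the paper's route buys is economy within its appendix: the square-root lemma and the complexification device are reused in Lemma \ref{lemma:invertible-requirement}, so one set of machinery serves several results, whereas your form-theoretic argument, while cleaner here, would not by itself dispatch that other lemma.
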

\begin{proof}
It is clear by Cauchy-Schwarz that $\|f\|=0$ implies $\angles{f|Af}=0$. For the reverse it is enough to show that $\|f\|>0$ implies $\angles{f|Af}>0$.
If $\mathcal{H}$ is complex Hilbert space, then by Theorem \ref{lemma:positive-square-root} there exists invertible square root $A^{1/2}$ for which $(A^{1/2})^*=A^{1/2}$ and thus for $f\in \mathcal{H}$, $\|f\|>0$
$$\angles{f|Af}=\angles{f|A^{1/2}A^{1/2}f}=\angles{A^{1/2}f|A^{1/2}f}>0,$$
where the strict inequality follows since $A^{1/2}$ is invertible.
In the case of Real Hilbert space $\mathcal{H}$ let us do complexification of the Hilbert space, then $A_{\mathbb{C}}(g+ih)=Ag+iAh$, $g,h\in \mathcal{H}$. Since $A$ is self-adjoint and $\angles{f|Af}\ge 0$ for all $f\in \mathcal{H}$
\begin{align*}
\angles{g+ih|A_{\mathbb{C}}(g+ih)}_{\mathbb{C}}&=\angles{g|Ag}-i\angles{g|Ah}+i\angles{h|Ag}+\angles{h|Ah} \\
&=\angles{g|Ag}-i\angles{Ag|h}+i\angles{h|Ag}+\angles{h|Ah} \\
&=\angles{g|Ag}+\angles{h|Ah}\ge 0.
\end{align*}
Thus, it holds that $\angles{f|A_\mathbb{C}f}_{\mathbb{C}}\ge 0$ for all $f\in \mathcal{H}_\mathbb{C}$ for which the earlier result holds and therefore for all $f\in \mathcal{H}$, $\|f\|>0$
\begin{equation*}
\angles{f|Af}=\angles{f+i0|A_\mathbb{C}(f+i0)}_\mathbb{C}>0.	\qedhere
\end{equation*}
\end{proof}

\begin{lemma}\label{lemma:invertible-requirement}
Let $A$ be a bounded self-adjoint operator on a Hilbert space $\mathcal{H}$, satisfying $\angles{f|Af}\ge 0$ for all $f\in \mathcal{H}$, then $A$ is invertible if there exists such $c>0$ that
$$\|Af\|\ge c\|f\|$$
for all $f\in \mathcal{H}$.
\end{lemma}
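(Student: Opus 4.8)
The plan is to show that the lower bound $\|Af\|\ge c\|f\|$ together with self-adjointness forces $A$ to be a bijection of $\mathcal{H}$ onto itself with a bounded inverse; note that the positivity hypothesis $\angles{f|Af}\ge 0$ is not actually needed for this particular conclusion. The argument splits into three ingredients: injectivity, closedness of the range, and density of the range. The first two follow directly from the lower bound, and the third is where self-adjointness enters decisively.

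First I would observe that the lower bound makes $A$ injective: if $Af=0$, then $c\|f\|\le \|Af\|=0$, so $\|f\|=0$. Next I would show that the range $R(A)$ is closed. Given a sequence $(f_n)$ with $Af_n\to g$, the bound $\|f_n-f_m\|\le c^{-1}\|Af_n-Af_m\|$ shows that $(f_n)$ is Cauchy, hence $f_n\to f$ for some $f\in\mathcal{H}$, and continuity of $A$ gives $Af=g$, so $g\in R(A)$.

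The main step is to deduce that $R(A)$ is dense. Here I would invoke the standard Hilbert-space identity $\overline{R(A)}=N(A^*)^{\perp}$, that is, the orthogonal complement of the closure of the range equals the kernel of the adjoint. Since $A$ is self-adjoint, $A^*=A$, and by injectivity $N(A)=\{0\}$; therefore $\overline{R(A)}=\{0\}^{\perp}=\mathcal{H}$. Combined with the closedness established above, this yields $R(A)=\mathcal{H}$, so $A$ is surjective as well as injective, hence bijective. Finally, the inverse is automatically bounded: for any $g\in\mathcal{H}$, writing $g=Af$ gives $\|A^{-1}g\|=\|f\|\le c^{-1}\|Af\|=c^{-1}\|g\|$, so $\|A^{-1}\|\le c^{-1}$.

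The only genuinely delicate point is the passage from ``bounded below'' to ``surjective,'' which fails for general bounded operators (for instance, the unilateral shift is bounded below but not surjective); it is precisely self-adjointness, through $\overline{R(A)}=N(A^*)^{\perp}$ and $A^*=A$, that closes this gap. One should take care to apply this identity in the appropriate (real or complex) setting; since the surrounding results are phrased for real Hilbert spaces, I would either cite the real-case version of the range/kernel identity directly or, as elsewhere in this appendix, pass to the complexification exactly as in the proof of Lemma \ref{lemma:positive-definite-included} and then restrict back.
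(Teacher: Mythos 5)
Your proof is correct, but it takes a genuinely different route from the paper. The paper's proof is essentially a citation-plus-transfer argument: it invokes the complex-Hilbert-space version of the result \citep[Theorem 12.12]{rudin} and then handles the real case by complexification, verifying that $A_{\mathbb{C}}(f+ig)=Af+iAg$ inherits self-adjointness, positivity, and the lower bound $\|A_{\mathbb{C}}h\|\ge c\|h\|$, exactly in the style of the neighbouring Lemma \ref{lemma:positive-definite-included}. You instead give a self-contained argument: injectivity and closedness of the range from the lower bound, then surjectivity from $\overline{R(A)}=N(A^*)^{\perp}=N(A)^{\perp}=\mathcal{H}$, and finally $\|A^{-1}\|\le c^{-1}$. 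This is the standard proof of the fact the paper cites, and it has two advantages: the range/kernel identity $R(A)^{\perp}=N(A^*)$ follows from the Riesz representation theorem in real Hilbert spaces just as in complex ones, so no complexification is needed at all (your hedge on this point can be resolved in favour of the direct real-case argument); and you correctly observe that the positivity hypothesis $\angles{f|Af}\ge 0$ is never used, so your argument proves a slightly more general statement for arbitrary bounded self-adjoint operators. What the paper's route buys in exchange is brevity and uniformity: the complexification machinery is already set up for Lemma \ref{lemma:positive-definite-included}, so delegating the analytic content to \citet{rudin} keeps the appendix short. Your remark that self-adjointness is precisely what upgrades ``bounded below'' to ``invertible'' (with the unilateral shift as counterexample) is a useful clarification that the paper leaves implicit.
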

\begin{proof}
Result holds for complex Hilbert space \citep[e.g.][Theorem 12.12]{rudin}. In the case of Real Hilbert space $\mathcal{H}$ let us do complexification as in proof of Lemma \ref{lemma:positive-definite-included}. Then $\angles{f|A_\mathbb{C}f}\ge 0$ for all $f\in \mathcal{H}_\mathbb{C}$, and by Lemma \ref{lemma:positive-square-root} $A_\mathbb{C}$ is self-adjoint. Additionally for $f,g\in \mathcal{H}$
\begin{align*}
\angles{f+ig|f+ig}_\mathbb{C}&=\angles{f|f}-i\angles{f|g}+i\angles{g|f}+\angles{g|g} \\
&=\angles{f|f}-i\angles{f|g}+i\angles{f|g}+\angles{g|g} \\
&=\angles{f|f}+\angles{g|g},
\end{align*}
hence
\begin{align*}
\angles{A(f+ig)|A(f+ig)}_\mathbb{C}&=\angles{Af|Af}+\angles{Ag|Ag}\ge c^2\angles{f|f}+c^2\angles{g|g}=c^2\angles{f+ig|f+ig}.
\end{align*}
By the complex case result $A_\mathbb{C}:\mathcal{H}_\mathbb{C}\to \mathcal{H}_\mathbb{C}$ is invertible and therefore $A:\mathcal{H}\to \mathcal{H}$ is invertible.
\end{proof}

\begin{lemma}\label{lemma:inverse}
Let $A$ be a bounded self-adjoint operator on a Hilbert space $\mathcal{H}$, satisfying $\angles{f|Af}\ge 0$ for all $f\in \mathcal{H}$ and such that the inverse $A^{-1}$ exists. Then 
$$\angles{f|A^{-1}f}=\sup_{g\in \mathcal{H}}[2\angles{f|g}-\angles{g|Ag}],$$ 
where the supremum is attained uniquely with $g=A^{-1}f$ up to $\mathcal{H}$-equivalence class.
\end{lemma}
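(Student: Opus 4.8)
The plan is to recognise the right-hand side as a concave quadratic functional and to maximise it by completing the square. Write $\Phi(g)=2\angles{f|g}-\angles{g|Ag}$ and set $g_*=A^{-1}f$, which exists by hypothesis. The natural guess is that $g_*$ is the maximiser and that the maximal value is $\angles{f|A^{-1}f}$: indeed, using $Ag_*=f$ and the self-adjointness of $A$ (hence of $A^{-1}$),
$$\Phi(g_*)=2\angles{f|A^{-1}f}-\angles{A^{-1}f|f}=\angles{f|A^{-1}f}.$$

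First I would substitute $g=g_*+h$ for an arbitrary $h\in\mathcal{H}$ and expand, exploiting $A=A^*$ and the identity $Ag_*=f$. The terms linear in $h$ are $2\angles{f|h}-\angles{g_*|Ah}-\angles{h|Ag_*}$; applying self-adjointness in the form $\angles{g_*|Ah}=\angles{Ag_*|h}=\angles{f|h}$ and $\angles{h|Ag_*}=\angles{h|f}$, these collapse to zero, leaving
$$\Phi(g_*+h)=\angles{f|A^{-1}f}-\angles{h|Ah}.$$
(In the complex case one reads $2\angles{f|g}$ as $\angles{f|g}+\angles{g|f}$ so that $\Phi$ is real-valued, or equivalently passes to the complexification exactly as in the proof of Lemma \ref{lemma:positive-definite-included}; the computation is unchanged.)

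Since $\angles{h|Ah}\ge 0$ by hypothesis, the display immediately yields $\Phi(g)\le \angles{f|A^{-1}f}$ for every $g$, with equality when $h=0$, i.e. $g=A^{-1}f$. Hence the supremum equals $\angles{f|A^{-1}f}$ and is attained at $g_*$. For the uniqueness claim I would invoke Lemma \ref{lemma:positive-definite-included}: because $A$ is positive, self-adjoint and invertible, $\angles{h|Ah}=0$ forces $\|h\|=0$, so equality in $\Phi(g)\le\angles{f|A^{-1}f}$ can hold only when $h$ is a null vector, that is, for $g=A^{-1}f$ up to its $\mathcal{H}$-equivalence class.

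There is no genuine obstacle here beyond the bookkeeping of the expansion; the one point that requires care is precisely the uniqueness (strictness) assertion, which is where invertibility of $A$ is essential. Without it, $\angles{h|Ah}=0$ need not imply $h=0$, and the maximiser would be determined only modulo the kernel of $A$; it is exactly Lemma \ref{lemma:positive-definite-included} that upgrades positivity plus invertibility to strict positivity of the quadratic form on nonzero vectors.
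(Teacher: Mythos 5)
Your proof is correct and is essentially the paper's own argument: the expansion of $\Phi(g_*+h)$ you carry out is precisely the completing-the-square identity $\angles{f|A^{-1}f}-[2\angles{f|g}-\angles{g|Ag}]=\angles{A^{-1}f-g|A(A^{-1}f-g)}$ that the paper states directly, and both arguments settle uniqueness by invoking Lemma \ref{lemma:positive-definite-included}.
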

\begin{proof}
As in \cite[Lemma 16]{andrieu-vihola-order} equalities follow from noticing
\begin{equation}\label{equation:inverse-formula}
\angles{f|A^{-1}f}-[2\angles{f|g}-\angles{g|Ag}]=\angles{A^{-1}f-g,A(A^{-1}f-g)}\ge 0,
\end{equation}
where the equality holds for $g=A^{-1}f$. By Lemma \ref{lemma:positive-definite-included}, equation \eqref{equation:inverse-formula} is equal to zero if and only if $\|A^{-1}f-g\|=0$.
\end{proof}

\begin{assumption}\label{a:operator-injective}
Let $J\subset \R$ and $(A_\lambda)_{\lambda\in J}$ be collection of bounded self-adjoint operators on Hilbert space $\mathcal{H}$ for which $\angles{f|A_\lambda f}\ge 0$ for all $f\in H$, the inverse $A_\lambda^{-1}$ exists and $\lambda \mapsto \angles{f|A_\lambda f}$ is injective for $\|f\|>0$ when $\lambda\in J$.
\end{assumption}

\begin{lemma}\label{lemma:not-equal-inverse-general}
If \ref{a:operator-injective} holds, $x,y\in J$, $x\neq y$ and $f\in \mathcal{H}$, then $\|A_{x}^{-1}f-A_{y}^{-1}f\|>0$ if $\|f\|>0$.
\end{lemma}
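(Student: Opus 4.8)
The plan is to argue by contradiction, using the scalar injectivity built into Assumption \ref{a:operator-injective}. Suppose the conclusion fails, so that $\|A_x^{-1}f - A_y^{-1}f\| = 0$ for some $f$ with $\|f\|>0$; this means $A_x^{-1}f = A_y^{-1}f$ as elements of $\mathcal{H}$ (up to the $\mathcal{H}$-equivalence class). I would denote this common element by $g := A_x^{-1}f = A_y^{-1}f$, and then apply $A_x$ and $A_y$ respectively to the two representations to obtain $A_x g = f = A_y g$.

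The next step is to check that $\|g\|>0$, so that the injectivity hypothesis is applicable to $g$. This is immediate: since $f = A_x g$ and $\|f\|>0$, we cannot have $g=0$ (otherwise $A_x g = 0 = f$, contradicting $\|f\|>0$), hence $\|g\|>0$.

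Finally, pairing the vector identity $A_x g = A_y g$ with $g$ yields $\angles{g|A_x g} = \angles{g|f} = \angles{g|A_y g}$. Since $\|g\|>0$, Assumption \ref{a:operator-injective} guarantees that $\lambda \mapsto \angles{g|A_\lambda g}$ is injective on $J$; but the displayed equalities show this map takes the same value at the distinct points $x$ and $y$, a contradiction. This completes the argument.

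The proof is essentially mechanical and I do not anticipate a serious obstacle. The only point needing a little care is the reduction from the norm statement to a genuinely nonzero vector $g$: one must confirm that $g$ itself has positive norm before invoking the \emph{scalar} injectivity hypothesis, since that hypothesis is only assumed for $\|g\|>0$. Note also that self-adjointness and positivity of the $A_\lambda$ are not actually used here — only invertibility of $A_x$ and $A_y$ together with the injectivity assumption — so no further properties from Assumption \ref{a:operator-injective} need be brought in.
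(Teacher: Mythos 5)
Your proof is correct, and it reaches the contradiction by a slightly more direct route than the paper. Both arguments are by contradiction and both terminate in the same place: the vector $g=A_y^{-1}f$ has $\|g\|>0$, and the injectivity of $\lambda\mapsto\angles{g|A_\lambda g}$ is violated at the distinct points $x,y$. The difference is how the identity $\angles{g|A_x g}=\angles{g|A_y g}$ is obtained. The paper never passes to the vector equality $A_x^{-1}f=A_y^{-1}f$; instead it works entirely with inner products, writing
\begin{align*}
\angles{A_y^{-1}f|A_yA_y^{-1}f}=\angles{f|A_y^{-1}f}=\angles{A_x^{-1}f|A_xA_y^{-1}f}=\angles{A_y^{-1}f|A_xA_y^{-1}f},
\end{align*}
and the middle equality genuinely uses self-adjointness of $A_x$ (to move $A_x$ across the inner product onto $A_x^{-1}f$). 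You instead use that $\|v\|=0$ forces $v=0$ in a Hilbert space, set $g:=A_x^{-1}f=A_y^{-1}f$, apply the operators to get $A_xg=f=A_yg$, and pair with $g$. This buys you something: as you correctly observe, self-adjointness and positivity of the $A_\lambda$ are never needed, only invertibility of $A_x,A_y$ and the scalar injectivity hypothesis, so your argument proves a marginally more general statement. The paper's formulation has the cosmetic advantage of staying within inner-product manipulations (the style used throughout its Appendix on Hilbert spaces), but your reduction is cleaner, and your care in checking $\|g\|>0$ before invoking the injectivity hypothesis is exactly the point where a sloppy version of this argument would break.
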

\begin{proof}
Let us make an assumption that $\|A_{x}^{-1}f-A_{y}^{-1}f\|=0$. Since $A_\lambda$ is self-adjoint for all $\lambda\in J$
\begin{align*}
\angles{A_y^{-1}f,A_yA_y^{-1}f}
&=\angles{f,A_y^{-1}f}
=\angles{A_x^{-1}f,A_xA_y^{-1}f} 
=\angles{A_y^{-1}f,A_xA_y^{-1}f},
\end{align*}
where the assumption is used on the last equality. We have contradiction, since $\|A_y^{-1}f\|>0$ and $\lambda \mapsto \angles{A_y^{-1}f|A_x A_y^{-1}f}$ is injective by \ref{a:operator-injective}.
\end{proof}

\begin{lemma}\label{lemma:not-equal-inverse-general-corollary}
If \ref{a:operator-injective} holds and $x,y\in J$, $x\neq y$, $f\in \mathcal{H}$, $\|f\|>0$, then there is no such $g\in \mathcal{H}$, that both equations
$$\sup_{h\in \mathcal{H}}[2\angles{f|h}-\angles{h|A_xh}]=2\angles{f|g}-\angles{g|A_xg}$$
and
$$\sup_{h\in \mathcal{H}}[2\angles{f|h}-\angles{h|A_yh}]=2\angles{f|g}-\angles{g|A_yg}$$
hold.
\end{lemma}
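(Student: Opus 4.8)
The plan is to derive a contradiction from the uniqueness clause of Lemma \ref{lemma:inverse}, combined with Lemma \ref{lemma:not-equal-inverse-general}. The statement is essentially the conjunction of these two preceding results, so the argument should be short.

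First I would note that Assumption \ref{a:operator-injective} guarantees that both $A_x$ and $A_y$ satisfy the hypotheses of Lemma \ref{lemma:inverse}: each is a bounded self-adjoint operator with $\angles{h|A_\lambda h}\ge 0$ for all $h$ and a well-defined inverse. Hence Lemma \ref{lemma:inverse} applies to each, and tells us that the supremum in the first displayed equation is attained \emph{uniquely} (up to $\mathcal{H}$-equivalence class) at $h=A_x^{-1}f$, while the supremum in the second equation is attained uniquely at $h=A_y^{-1}f$.

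The core of the argument is then a one-line contradiction. Suppose some $g\in\mathcal{H}$ satisfied both displayed equations simultaneously. By the uniqueness in Lemma \ref{lemma:inverse}, satisfying the first equation forces $\|g-A_x^{-1}f\|=0$, and satisfying the second forces $\|g-A_y^{-1}f\|=0$. The triangle inequality then gives
$$
\|A_x^{-1}f-A_y^{-1}f\| \le \|A_x^{-1}f-g\| + \|g-A_y^{-1}f\| = 0.
$$
But since $\|f\|>0$ and $x\neq y$, Lemma \ref{lemma:not-equal-inverse-general} yields $\|A_x^{-1}f-A_y^{-1}f\|>0$, which is the desired contradiction, so no such $g$ exists.

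I do not anticipate any genuine obstacle, as all the heavy lifting is done in Lemma \ref{lemma:inverse} and Lemma \ref{lemma:not-equal-inverse-general}. The single point requiring care is the reading of ``the supremum is attained at $g$'': because Lemma \ref{lemma:inverse} asserts uniqueness only up to the $\mathcal{H}$-equivalence class (identifying elements whose difference has zero norm), attainment must be interpreted as $\|g-A_x^{-1}f\|=0$ rather than literal equality $g=A_x^{-1}f$. With that convention in place the triangle-inequality step is immediate and the proof closes cleanly.
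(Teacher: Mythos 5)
Your proof is correct and follows exactly the route the paper takes: the paper's own proof is the one-liner ``Follows from Lemma \ref{lemma:inverse} and \ref{lemma:not-equal-inverse-general},'' and your write-up simply fills in those details — uniqueness of the supremizer (up to norm-zero difference) from Lemma \ref{lemma:inverse} applied to $A_x$ and $A_y$, then the triangle inequality contradicting $\|A_x^{-1}f-A_y^{-1}f\|>0$ from Lemma \ref{lemma:not-equal-inverse-general}. Your attention to the equivalence-class reading of uniqueness is also the right way to interpret the paper's statement.
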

\begin{proof}
Follows from Lemma \ref{lemma:inverse} and \ref{lemma:not-equal-inverse-general}.
\end{proof}

We have the following result for collection of $\pi$-reversible Markov kernels:

\begin{theorem}\label{theorem:asymptotic-variance-results}
Let $J\subset \R$ be and $(Q_\lambda)_{\lambda\in J}$ be collection of $\pi$-reversible Markov kernels and the inverse operator $(I-Q_\lambda)^{-1}$ exists for $(I-Q_\lambda):L_0^2(\pi)\to L_0^2(\pi)$. Let us define for fixed $h\in L_0^2(\pi)$ functions $u_h,v_h:J\to \R$, $u_h(\lambda)=\angles{h|Q_\lambda h}_\pi$, $v_h(\lambda)=\mathrm{var}(Q_\lambda,h)$. 

Let us state the following assumptions:
\begin{enumerate}[(a)]
\item $u_g$ is decreasing (increasing) function for all $g\in L_{0,+}^2(\pi)$. \label{enum-asymptotic-variance-condition-1}
\item $u_g$ is strictly decreasing (increasing) function for all $g\in L_{0,+}^2(\pi)$. \label{enum-asymptotic-variance-condition-2}
\item $u_g$ is convex function for all $g\in L_{0,+}^2(\pi)$, where $J$ is interval. \label{enum-asymptotic-variance-condition-3}
\item $u_g$ is strictly convex function for all $g\in L_{0,+}^2(\pi)$, where $J$ is interval. \label{enum-asymptotic-variance-condition-4}
\end{enumerate}

For $f\in L_{0,+}^2(\pi)$ it holds that
\begin{enumerate}[(i)]
\item $v_f$ is decreasing (increasing) function if \eqref{enum-asymptotic-variance-condition-1} holds.
\item $v_f$ is strictly decreasing (increasing) function if \eqref{enum-asymptotic-variance-condition-2} holds.
\item $v_f$ is convex function if \eqref{enum-asymptotic-variance-condition-3} holds.
\item $v_f$ is strictly convex function if \eqref{enum-asymptotic-variance-condition-4} holds.
\item $v_f$ is strictly convex and strictly decreasing (increasing) function if \eqref{enum-asymptotic-variance-condition-2} and \eqref{enum-asymptotic-variance-condition-3} hold. \label{enum-asymptotic-variance-result-5}
\end{enumerate}
\end{theorem}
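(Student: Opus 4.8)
The plan is to reduce all five statements to a single variational representation of the asymptotic variance and then exploit that a pointwise supremum preserves monotonicity and convexity. Since each $Q_\lambda$ is $\pi$-reversible, it is a bounded self-adjoint operator on the Hilbert space $L_0^2(\pi)$ with $\angles{g|(I-Q_\lambda)g}_\pi\ge 0$, and $(I-Q_\lambda)^{-1}$ exists by hypothesis; moreover $\mathrm{var}(Q_\lambda,f)=2\angles{f|(I-Q_\lambda)^{-1}f}_\pi-\|f\|_\pi^2$. First I would apply Lemma \ref{lemma:inverse} with $A=I-Q_\lambda$, and use $\angles{g|(I-Q_\lambda)g}_\pi=\|g\|_\pi^2-u_g(\lambda)$, to obtain
\begin{equation*}
v_f(\lambda)=2\sup_{g\in L_0^2(\pi)}\big[2\angles{f|g}_\pi-\|g\|_\pi^2+u_g(\lambda)\big]-\|f\|_\pi^2 ,
\end{equation*}
where for each fixed $g$ the bracketed function differs from $u_g$ only by a $\lambda$-independent constant. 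By Lemma \ref{lemma:inverse} the supremum is attained uniquely at $g_\lambda=(I-Q_\lambda)^{-1}f$, which lies in $L_{0,+}^2(\pi)$ since $\|f\|_\pi>0$ and $I-Q_\lambda$ is invertible.

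With this representation, parts (i) and (iii) are immediate: under \eqref{enum-asymptotic-variance-condition-1} (resp.\ \eqref{enum-asymptotic-variance-condition-3}) every bracketed function is decreasing (resp.\ convex) in $\lambda$---the term with $g=0$ being constant---so their pointwise supremum $v_f$ is decreasing (resp.\ convex). For the strict monotonicity in (ii), fix $x<y$ and evaluate the supremum defining $v_f(x)$ at the maximiser $g_y$ of $v_f(y)$: since $g_y\in L_{0,+}^2(\pi)$, assumption \eqref{enum-asymptotic-variance-condition-2} makes $u_{g_y}$ strictly decreasing, so $u_{g_y}(x)>u_{g_y}(y)$ and hence $v_f(x)\ge 2[2\angles{f|g_y}_\pi-\|g_y\|_\pi^2+u_{g_y}(x)]-\|f\|_\pi^2>v_f(y)$. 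Part (iv) is analogous: writing $\phi(g)=2\angles{f|g}_\pi-\|g\|_\pi^2$ and taking $g_z$ the maximiser at $z=tx+(1-t)y$, strict convexity of $u_{g_z}$ from \eqref{enum-asymptotic-variance-condition-4} gives $v_f(z)=2[\phi(g_z)+u_{g_z}(z)]-\|f\|_\pi^2<2t[\phi(g_z)+u_{g_z}(x)]+2(1-t)[\phi(g_z)+u_{g_z}(y)]-\|f\|_\pi^2\le t v_f(x)+(1-t)v_f(y)$, using that each inner expression is bounded above by the corresponding supremum. The increasing variants of (i)--(ii) are entirely symmetric.

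The main obstacle is part (v): here we only assume \emph{convexity} of $u_g$ (condition \eqref{enum-asymptotic-variance-condition-3}) yet must conclude \emph{strict} convexity of $v_f$. This is precisely the i-SIR regime, where $\lambda\mapsto\angles{f|P_\lambda f}_\pi$ is piecewise linear and fails to be strictly convex, so the clean argument of (iv) is unavailable. The strictly decreasing conclusion is already supplied by (ii). For strict convexity I would again take $g_z$ the maximiser at $z=tx+(1-t)y$; mere convexity of $u_{g_z}$ yields only the weak bound $v_f(z)\le 2t[\phi(g_z)+u_{g_z}(x)]+2(1-t)[\phi(g_z)+u_{g_z}(y)]-\|f\|_\pi^2\le t v_f(x)+(1-t)v_f(y)$. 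The crucial point is that the last inequality is in fact strict: equality there would require $g_z$ to simultaneously maximise the variational problems at both $x$ and $y$, which Lemma \ref{lemma:not-equal-inverse-general-corollary} forbids for $x\neq y$. That lemma applies because Assumption \ref{a:operator-injective} holds under the present hypotheses---$\lambda\mapsto\angles{f|(I-Q_\lambda)f}_\pi=\|f\|_\pi^2-u_f(\lambda)$ is injective on $L_{0,+}^2(\pi)$ exactly because \eqref{enum-asymptotic-variance-condition-2} makes $u_f$ strictly monotone. Since $t,1-t>0$ and at least one of the two inner inequalities is strict, we obtain $v_f(z)<t v_f(x)+(1-t)v_f(y)$, establishing (v), and the increasing variant follows symmetrically. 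I expect verifying that Assumption \ref{a:operator-injective} is met (so that Lemma \ref{lemma:not-equal-inverse-general-corollary} can be invoked) to be the only delicate bookkeeping, since everything else propagates mechanically through the supremum.
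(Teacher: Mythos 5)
Your proposal is correct and follows essentially the same route as the paper's proof: the variational representation $v_f(\lambda)=2\sup_{g}[2\angles{f|g}_\pi-\mathcal{E}_\lambda(g)]-\|f\|_\pi^2$ from Lemma \ref{lemma:inverse}, evaluation at the (unique) maximisers $h_\lambda=(I-Q_\lambda)^{-1}f$ for the strict statements, and, for part (v), verifying Assumption \ref{a:operator-injective} from the strict monotonicity in (b) so that Lemma \ref{lemma:not-equal-inverse-general-corollary} rules out a common maximiser at $x\neq y$, which is exactly how the paper upgrades mere convexity to strict convexity of $v_f$. The only cosmetic difference is that for parts (i) and (iii) you invoke the fact that a pointwise supremum of decreasing (resp.\ convex) functions is decreasing (resp.\ convex), where the paper argues via the maximiser; the two are interchangeable here.
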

\begin{proof}
Since $Q_\lambda$ is $\pi$-reversible, we have $I-Q_\lambda:L_0^2(\pi)\to L_0^2(\pi)$ is self-adjoint.
For $g\in L_0^2(\pi)$, let us denote the Dirichlet form by 
$$\mathcal{E}_\lambda(g)=\angles{g|(I-Q_\lambda)g}_\pi=\angles{g|g}_\pi-u_g(\lambda),$$
which is non-negative for the Markov kernel, and the function $\lambda\mapsto -\mathcal{E}_\lambda(g)$ preserves conditions in \eqref{enum-asymptotic-variance-condition-1}--\eqref{enum-asymptotic-variance-condition-4}.
Let us denote $h_\lambda=(I-Q_\lambda)^{-1}f$. By Lemma \ref{lemma:inverse}
$$2\angles{f|h_\lambda}_\pi-\mathcal{E}_\lambda(h_\lambda)=\sup_{g\in L_0^2(\pi)}[2\angles{f|g}_\pi-\mathcal{E}_\lambda(g)]\ge 2\angles{f|g}_\pi-\mathcal{E}_\lambda(g)$$
for all $g\in L_0^2(\pi)$ and
$$\mathrm{var}(Q_\lambda,f)=2\angles{f|(I-Q_\lambda)^{-1}f}_\pi-\angles{f|f}_\pi=2\sup_{g\in L_0^2(\pi)}[2\angles{f|g}_\pi-\mathcal{E}_{\lambda}(g)]-\angles{f|f}_\pi.$$
Let $x,y\in J$. When \eqref{enum-asymptotic-variance-condition-1} holds with the decreasing case, and $x<y$ (or \eqref{enum-asymptotic-variance-condition-1} holds with the increasing case, and $x>y$), then
\begin{align*}
\mathrm{var}(Q_{x},f)-\mathrm{var}(Q_{y},f)
&=2\Big(\sup_{g\in L_0^2(\pi)}[2\angles{f|g}_\pi-\mathcal{E}_{x}(g)]-\sup_{h\in L_0^2(\pi)}[2\angles{f|h}_\pi-\mathcal{E}_{y}(h)]\Big) \\
&\ge2\Big(\big(2\angles{f|h_{y}}_\pi-\mathcal{E}_{x}(h_{y})\big)-\big(2\angles{f|h_{y}}_\pi-\mathcal{E}_{y}(h_{y})\big)\Big) \\
&=2\big(-\mathcal{E}_{x}(h_{y})+\mathcal{E}_{y}(h_{y})\big)\ge 0,
\end{align*}
where the last inequality is strict if \eqref{enum-asymptotic-variance-condition-2} holds.
When \eqref{enum-asymptotic-variance-condition-3} holds, $x<y$ and $t\in (0,1)$, then
\begin{align*}
&\mathrm{var}(Q_{tx+(1-t)y},f) \\
&=2\Big(2\angles{f|h_{tx+(1-t)y}}_\pi-\mathcal{E}_{tx+(1-t)y}(h_{tx+(1-t)y})\Big) -\angles{f|f}_\pi\\
&\le2\Big(t[2\angles{f|h_{tx+(1-t)y}}_\pi-\mathcal{E}_{x}(h_{tx+(1-t)y})]+(1-t)[2\angles{f|h_{tx+(1-t)y}}_\pi-\mathcal{E}_{y}(h_{tx+(1-t)y})]\Big)-\angles{f|f}_\pi \\
&\le2\sup_{g\in L_0^2(\pi)}\Big(t[2\angles{f|g}_\pi-\mathcal{E}_{x}(g)]+(1-t)[2\angles{f|g}_\pi-\mathcal{E}_{y}(g)]\Big)-\angles{f|f}_\pi \\
&\le2\Big(t\sup_{g\in L_0^2(\pi)}[2\angles{f|g}_\pi-\mathcal{E}_{x}(g)]+(1-t)\sup_{h\in L_0^2(\pi)}[2\angles{f|h}_\pi-\mathcal{E}_{y}(h)]\Big)-\angles{f|f}_\pi \\
&=t \cdot \mathrm{var}(Q_x,f)+(1-t)\cdot \mathrm{var}(Q_y,f),
\end{align*}
where the first inequality is strict if \eqref{enum-asymptotic-variance-condition-4} holds. 
However, if \eqref{enum-asymptotic-variance-condition-2} and \eqref{enum-asymptotic-variance-condition-3} hold, then also \ref{a:operator-injective} holds and
\begin{align*}
&\mathrm{var}(Q_{tx+(1-t)y},f)  \\
&\le 2\Big(t[2\angles{f|h_{tx+(1-t)y}}_\pi-\mathcal{E}_{x}(h_{tx+(1-t)y})]+(1-t)[2\angles{f|h_{tx+(1-t)y}}_\pi-\mathcal{E}_{y}(h_{tx+(1-t)y})]\Big)-\angles{f|f}_\pi \\
&< 2\Big(t\sup_{g\in L_0^2(\pi)}[2\angles{f|g}_\pi-\mathcal{E}_{x}(g)]+(1-t)\sup_{h\in L_0^2(\pi)}[2\angles{f|h}_\pi-\mathcal{E}_{y}(h)]\Big)-\angles{f|f}_\pi \\
&=t \cdot \mathrm{var}(Q_x,f)+(1-t)\cdot \mathrm{var}(Q_y,f),
\end{align*}
where the strict inequality follows from Lemma \ref{lemma:not-equal-inverse-general-corollary}.
\end{proof}

\end{document}